\documentclass[11pt]{article}
\usepackage[T1]{fontenc}
\usepackage{lipsum}
\usepackage{amssymb,amsmath,amsthm}
\usepackage{geometry}
\usepackage{booktabs, multirow, rotating, threeparttable} 
\usepackage{changepage}
\usepackage{enumerate}
\usepackage{setspace}
\def\qed{\rule{2mm}{2mm}}

\usepackage[toc,title,titletoc,header]{appendix}
\usepackage[bottom]{footmisc} 
\usepackage[pdfborder={0 0 0}]{hyperref}
\hypersetup{
 colorlinks=true,linkcolor=magenta, citecolor=blue, filecolor=magenta, 
 linkbordercolor={0 1 1}, citebordercolor={1 0 0},
}

\usepackage{graphicx}
\usepackage{pgfplots}
\pgfplotsset{compat=1.17}
\usepackage{xcolor}
\definecolor{babyblue}{rgb}{0.54, 0.81, 0.94}

\usepackage[authoryear]{natbib}
\newcommand{\myref}[2]{\hyperref[#1]{#2}}
\numberwithin{equation}{section}

\usepackage{apptools}

\usepackage{algorithm}
\usepackage{algpseudocode}

\usepackage{afterpage} 

\usepackage{booktabs, multirow, rotating, threeparttable} 

\usepackage{chngcntr}
\counterwithin{figure}{section}

\newtheorem{theorem}{Theorem}
\newtheorem{lemma}{Lemma}
\newtheorem{corollary}{Corollary}
\newtheorem{assumption}{Assumption}
\newtheorem{example}{Example}
\newtheorem{remark}{Remark}
\newtheorem{proposition}{Proposition}
\newcounter{assumptionM}
\newcounter{assumptionA}
\def\theassumptionM{M.\arabic{assumptionM}}
\def\theassumptionA{A.\arabic{assumptionA}}
\usepackage{etoolbox} 
\AtEndEnvironment{remark}{~\qed}%

\usepackage{bm}
\usepackage{lineno}
\setpagewiselinenumbers
\allowdisplaybreaks

\title{Testing Sign Agreement}

\author{Deborah Kim\thanks{I am deeply grateful to Ivan Canay, Joel Horowitz, Eric Auerbach, and Federico Bugni for their invaluable guidance on this project. I have also benefited from the comments and suggestions of Yong Cai, Filip Obradovic, Ahnaf Rafi, Neal Myungkou Shin, and Amilcar Velez, as well as from participants at numerous seminars, workshops, and conferences where earlier versions of this paper were presented. This research was supported in part by the computational resources and staff contributions provided by the high performance computing facilities at the University of Warwick. The previous \href{https://drive.google.com/open?id=1hkERE82w_aF9Vx_SdnL1OErTzSmIkmT4&usp=drive_fs}{JMP version} of this article is available on the author's website. All errors are my own.}\\
Department of Economics \\ 
University of Warwick\\ 
\url{deborah.kim@warwick.ac.uk}}

\begin{document}
\maketitle
\vspace{-10pt}
\begin{spacing}{1.1}
\begin{abstract}
This article considers the problem of testing sign agreement among a finite number of parameters. This problem arises in empirical settings such as detecting treatment effects with opposite signs across subgroups, outcomes, or time periods, and testing instrument validity for local average treatment effects. For the null hypothesis that the parameters are either all non-negative or all non-positive, I propose two novel tests: a least favorable test and a conditional test. The least favorable test uses a worst-case null critical value, while the conditional test first screens components with large positive or negative estimates and then tests the remaining sign-unresolved components conditional on the screening event. Unlike existing tests, both procedures accommodate arbitrary dependence among estimators; in the special case of independent estimators, the critical values depend only on the dimension and testing levels. We show that both tests control asymptotic size uniformly over a large class of nonparametric distributions. Local asymptotic power analysis reveals a tradeoff: the least favorable test is more powerful near boundary configurations where sign restrictions bind, whereas the conditional test is more powerful when some components are well separated from zero. Simulation evidence supports these theoretical predictions in finite samples. 
\end{abstract}
\end{spacing}

\noindent\textit{KEYWORDS: union of moment inequalities, uniform asymptotic validity, bootstrap}\\
\noindent\textit{JEL classification codes: C12, C31, C35, C36}

\section{Introduction}
This article studies the problem of testing whether a finite collection of parameters share a common sign. Writing
$\mu=(\mu_1,\ldots,\mu_k)$ for the vector of parameters of interest, which may
include regular estimands such as OLS, instrumental-variable, two-way
fixed-effects, or difference-in-differences coefficients, the hypothesis of \emph{sign agreement} is
\begin{align}\label{def:intro-null}
    H_0:\ \mu_j\geq 0 \text{ for all } j=1,\ldots,k
    \quad\text{or}\quad
    \mu_j\leq 0 \text{ for all } j=1,\ldots,k,
\end{align}
tested against the alternative that the parameters disagree in sign:
\begin{align*}
    H_1:\ \text{there exist } i\neq j \text{ such that }
    \mu_i>0 \text{ and } \mu_j<0.
\end{align*}
Rejection is direct evidence that the $k$ parameters do not all point the same way---that some are positive while others are negative. 
 
The hypothesis in \eqref{def:intro-null} arises across many empirical settings. When the $k$ coordinates index post-treatment horizons, sign agreement asks whether the dynamic response reverses over event time, separating effects that merely strengthen or attenuate from those whose short-run and long-run signs differ: immediate gains that are later offset, or short-run costs that yield delayed benefits. When the coordinates instead index subgroups of a treated population, sign agreement is the absence of what the biostatistics literature calls \emph{qualitative interaction}---the treatment helping one group while
harming another. It also underlies a testable implication of instrument validity for the local average treatment effect \citep{ImbensAngrist1994ECTA-LATE,Kitagawa2021JoE}. For an ordered outcome, it characterizes whether two distributions are ordered by first-order stochastic dominance. In the labor-market discrimination model of \citet{Bharadwaj/Deb/Renou:2024}, dominance between the outcome distributions of two groups is the configuration consistent with taste-based discrimination, so a rejection establishes that the distributions cross and points instead to statistical or no discrimination. Section~\ref{sec:examples} develops these examples.

A feature common to these settings is that the $k$ estimators are typically correlated---across horizons or by construction when a discretized outcome yields multinomial cell frequencies. Despite this breadth of applications, existing procedures developed for testing sign agreement remain confined to restrictive settings. Developed primarily in biostatistics as qualitative interaction tests, most procedures assume that the components of the underlying data vector are statistically independent (\citealt{GailSimon1985Biometrics}, \citealt{Silvapulle2001Biometrics},
\citealt{PiantadosiGail1993}, \citealt{LiChan2006detecting}, and
\citealt*{Zhao2019JASA}). The test of \cite{RussekSimon1993Biometrics}
permits arbitrary correlation, but only when $k=2$. The recent procedures in economics of \citet*{BrinchMogstadWiswall2017JPE}, \cite{kowalski2023RES}, and \citet*{MillerMolinariStoye2024WP} likewise treat the two-dimensional case.
These restrictions rule out leading applications---for instance, detecting
opposite treatment effects across several correlated outcomes.

We develop two tests of sign agreement---a least favorable test and a
conditional test---that accommodate an arbitrary correlation structure among the $k$ estimators and any finite $k$. Our main result establishes that both are uniformly asymptotically valid over a large class of nonparametric distributions under a standard uniform integrability condition: their finite-sample size is controlled at the nominal level in large samples. The uniform asymptotic validity is stronger than the pointwise asymptotic validity, which is a common requirement for tests. Because the asymptotic distribution of the test statistic is discontinuous in the underlying parameters, pointwise asymptotics can understate size in finite samples \citep{mikusheva2007uniform,AndrewsGuggenberger2009ECTA-hybrid,%
AndrewsGuggenberger2010ET}. Both tests are consistent against every fixed alternative and we compare them through their local asymptotic power. Neither dominates: the least favorable test is more powerful near boundary configurations where the sign restriction binds, whereas the conditional test is more powerful when several components are well separated from zero. To our knowledge, these are the first uniform validity results for tests of sign agreement over a nonparametric class; existing results establish finite-sample validity only within restricted families of normal distributions, or pointwise asymptotic validity.

A challenge in building a uniformly asymptotically valid sign agreement test is that, under the composite null, the limiting distribution of the test statistic is not pinned down and depends on a nuisance parameter that cannot be estimated uniformly over the null. A critical value based on the distribution at any single value of the nuisance parameter would fail to control size uniformly. To circumvent this, we adopt the least favorable approach, taking the largest critical value attainable over all nuisance configurations consistent with the null, while keeping the test nonconservative. We implement this approach with three test statistics---a maximum, a sum of squares, and a quasi-likelihood ratio---which differ in how they measure the extent of sign disagreement. Critical values are computed by parametric bootstrap, and a nonparametric bootstrap version is also available. The resulting family of tests subsumes the tests of \citet{GailSimon1985Biometrics}
and \citet{PiantadosiGail1993} under independent estimators, and that of
\citet{RussekSimon1993Biometrics} with two parameters.

The conditional test is designed to improve the power against a wide range of alternatives by exploiting the observed signs of the estimates. It proceeds in two steps. Step~I screens each coordinate, classifying it as strong evidence of a positive estimate, strong evidence of a negative estimate, or sign-unresolved, using a threshold calibrated at a tuning level $\tau\in(0,\alpha)$. Step~II applies a contingent rule: it does not reject when all screened coordinates point the same way, rejects when both signs appear, and otherwise tests only the sign-unresolved coordinates---with a one-sided test when some signs are resolved, or a conditional version of the least favorable test when none are. The test is \emph{conditional} because this second-step critical value is computed from the distribution of the statistic conditional on the Step~I screening event, so both the statistic and its critical value adapt to what the data reveal. When the estimators are independent, these conditional critical values are available in tabulated form, requiring no bootstrap or numerical optimization. Screening is what drives the comparison. When several estimates carry strong directional evidence, screening removes them and the second step effectively operates in a lower dimension; the conditional test then overtakes the least favorable test. When the estimates sit near the boundary and little is screened, the two nearly coincide and the least favorable test's larger effective level prevails.

The construction of the conditional test is closely related to ideas from
selective inference. Its screening-then-conditioning logic parallels selective inference, in which a statistic is evaluated conditional on a data-dependent selection event: the rectangular screening region of Step~I is such an event, and we exploit the truncated-Gaussian structure it induces \citep{lee2016AoS-post-selection-LASSO}. The problems differ, however: selective inference typically concerns a parameter or model selected using the data, whereas our null hypothesis is fixed in advance and screening is used only to sharpen the subsequent test.

The paper is also related to the literature on one-sided moment inequalities,
where information about the signs or slackness of moments is used to improve
upon worst-case inference, including generalized moment selection and related
refinements \citep*{Hansen2005JBES,AndrewsSoares2010ECTA,bugni2010ECTA,%
Canay2010JoE,andrewsbarwick2012ecta,RomanoShaikhWolf2014ECTA,%
chernozhukov2019RES}. Since the sign-agreement null is the union of the two
one-sided hypotheses, one-sided moment-inequality procedures can also be applied in both directions and combined through the intersection--union principle to obtain generic tests of sign agreement; see Remark~\ref{rem:LF-IUT-comparison}. The tests developed here instead exploit the union structure directly. The LF test directly calibrates the test statistic over the union of the two orthants, while the conditional test additionally uses data-dependent directional screening to adapt both the statistic and the critical value to the realized configuration. This distinction is important because, unlike in a one-sided moment-inequality problem, the distribution of our sign-agreement statistic is not stochastically monotone in the nuisance mean vector, so the
usual moment-selection refinements do not directly carry over to our setting.

A simulation study supports these theoretical findings. Across $k=3,5,$ and $8$ and the three correlation designs, the empirical maximum null rejection probabilities of the LF and conditional tests remain close to the nominal significance level. The power comparison follows the pattern predicted by the theoretical results: when only one coordinate provides strong directional information, the two proposed tests have similar power, whereas the conditional test increasingly outperforms the LF test as more coordinates become well separated from zero. We also compare the proposed tests with intersection--union adaptations of the one-sided moment-inequality procedures of \citet{Cox/Shi:2023} and \citet*{RomanoShaikhWolf2014ECTA}. Both proposed tests are competitive with these benchmarks. The test by \citet{Cox/Shi:2023} can be more powerful at moderate signal strengths but is often overtaken as the signals strengthen, whereas the test by \citet*{RomanoShaikhWolf2014ECTA} is generally less powerful except under some of the strongest dense alternatives. Overall, no
procedure is uniformly more powerful across the alternatives considered.

We illustrate the proposed tests in two applications, whose parameter estimators are correlated. First, we revisit the dynamic effect of unilateral divorce laws on divorce rates \citep{Wolfers2006AER}. Testing sign agreement across eight post-reform horizons asks whether the response reverses sign over event time; the tests reject under the fixed-effects specification and find suggestive evidence of sign reversal. Second, we test the validity of the college-proximity instrument of \citet{card1993NBER} for the returns to college---the assumptions under which proximity to a college identifies a causal return. Discretizing the outcome casts instrument validity as sign agreement. The tests reject at the $1\%$ level, implying that the two-stage least squares estimand identifies neither the complier nor the defier average effect.

The remainder of the article is organized as follows.
Section~\ref{sec:examples} presents motivating examples.
Section~\ref{sec:setup} fixes notation and defines uniform asymptotic validity.
Section~\ref{sec:LFCtest} develops the least favorable test, and
Section~\ref{sec:conditional-test} the conditional test, together with their
uniform validity results. Section \ref{sec:power-comparison} provides power results. Section~\ref{sec:MonteCarlo-Simulation} reports the
simulation study and Section~\ref{sec:Empirical-Application} the two empirical
illustrations. Section~\ref{sec:Conclusion} concludes. Proofs and further
results are collected in the appendices.

\section{Motivating Examples}\label{sec:examples}
This section presents empirical contexts where the sign agreement hypothesis emerges naturally.

\begin{example}[Sign-reversing Dynamic Treatment Effect]
\rm\label{example:Heterogeneity-RCT}
A treatment effect may reverse sign over time, so that its short-run and
long-run consequences point in opposite directions. The sign agreement
hypothesis provides a framework for examining whether such reversal occurs.
To be concrete, consider a setting with a binary treatment. Let $D \in \{0,1\}$
denote treatment status, with $D=1$ for treated individuals, and let $Y_{1}$
and $Y_{0}$ denote the potential outcomes with and without treatment. Suppose
outcomes are observed at $k$ post-treatment horizons $g_1,\ldots,g_k$, and let
$$\mu_j=\mathbb{E}[Y_{1}(g_j) - Y_{0}(g_j)] \qquad j=1,\ldots,k$$
denote the average treatment effect at horizon $g_j$. The hypothesis in
\eqref{def:intro-null} then states that the dynamic response does not reverse
sign over event time. It distinguishes treatments whose effects merely
strengthen or attenuate from those whose short-run and long-run effects have
opposite signs---a distinction that is often economically substantive, since an
intervention may generate immediate gains that are subsequently offset, or
impose short-run costs that produce delayed benefits. The empirical
illustration in Section~\ref{sec:Empirical-Application} provides a
quasi-experimental application.

The same hypothesis applies across subgroups rather than horizons. Suppose each
individual belongs to one of $k$ groups indexed by an observable discrete
variable $G \in \mathcal{G} \equiv \{g_1, \dots, g_k\}$, and that under full
compliance the average treatment effect for group $g_j$ is
$$\mu_j=\mathbb{E}[Y_{1} - Y_{0} \mid G = g_j] \qquad j=1,\ldots,k.$$
Then \eqref{def:intro-null} states that the group-level effects share a common
sign, so that treatment does not help one group while harming another---the
absence of what the biostatistics literature calls \textit{qualitative
interaction}. Hypotheses of this form arise routinely in clinical trials; see,
among others, \citet{rastogi2008preoperative}, \citet{NEJM2022},
\citet{Jama2019}, and \citet{Lancet2019}.
\end{example}

\begin{example}[Instrument Validity for Local Average Treatment Effect]\label{example:LATE-instrument-validity}\rm
\cite{ImbensAngrist1994ECTA-LATE} demonstrate that under certain assumptions, the canonical two-stage least squares (TSLS) estimand has a causal interpretation as a local average treatment effect (LATE). The sign agreement hypothesis provides a framework to test an implication of these assumptions, particularly when the outcome is discrete. Let \(Y\in\mathcal{Y}\), \(D\in{0,1}\), and \(Z\in{0,1}\) denote the observed outcome, treatment status, and instrument, respectively. Let $Y_d$ denote the potential outcome under treatment status $d$, satisfying $Y=Y_1D+Y_0(1-D).$ Let $D_z$ denote the potential treatment status under instrument value $z$. These potential treatment responses define subpopulations: \textit{compliers} with $(D_0, D_1)=(0,1)$ and \textit{defiers} with $(D_0, D_1)=(1,0)$. Given a nonzero first stage, the LATE interpretation of the TSLS estimand relies on the following assumptions:
\begin{align*}
\begin{aligned}
    &\text{Exogeneity: }(D_0, D_1, Y_{1}, Y_{0}) \text{ and } Z \text{ are independent and }\\
    &\text{Monotonicity: }D_1 \geq D_0 \text{ or }D_1 \leq D_0\text { a.s.}
\end{aligned}
\end{align*}
Under \(D_1\geq D_0\), defiers are absent and the TSLS estimand equals the average treatment effect among compliers. Under \(D_1\leq D_0\), compliers are absent and the estimand equals the average treatment effect among defiers. These assumptions are not directly testable, but they can be refuted through a testable implication. Suppose that the outcome is discretized: let $\mathcal{Y}_1,\ldots,\mathcal{Y}_k$
partition the support of $Y$ into $k$ categories, which may be the support
points of a discrete outcome or bins of a continuous one. Define
\begin{align}\label{eq:instrument-validity-parameters}
    \mu_j&=P\{Y\in\mathcal{Y}_j,\, D=1\,|\,Z=1\}- P\{Y\in\mathcal{Y}_j,\, D=1\,|\,Z=0\} \qquad j=1,...,k\\
    \mu_{k+j}&=P\{Y\in\mathcal{Y}_j,\, D=0\,|\,Z=0\}- P\{Y\in\mathcal{Y}_j,\, D=0\,|\,Z=1\} \qquad j=1,...,k.
\end{align}
The testable implication can then be formulated analogously to \eqref{def:intro-null}:
\begin{align}\label{eq:instrument-validity-null}
    H_0: \mu_j\geq 0 \quad \text{for all }j=1,...,2k \quad\text{or}\quad\mu_j\leq 0 \quad \text{for all }j=1,...,2k;
\end{align}
see \cite{Kitagawa2021JoE}. Because the underlying implication holds for every Borel set,
\eqref{eq:instrument-validity-null} is a valid implication for any choice of partition, and it is exact for discrete $Y$ when the categories are the support points. Rejecting the hypothesis refutes the joint assumptions, implying that the TSLS estimand does not admit the causal interpretation as LATE.\footnote{Several procedures are available for testing instrument validity under the conventional one-sided monotonicity condition $D_1 \geq D_0$, which rules out defiers (e.g., \citealt{Kitagawa2015ECTA}; \citealt{Huber2015REStat-testingLATE}; \citealt{Mourifie2017REStat-testingLATE}; and \citealt*{Machado2019JoE}). These procedures are appropriate when researchers can establish a priori that compliers exist and therefore need only assess the no-defiers restriction. In contrast, we consider the two-sided monotonicity condition, under which either $D_1 \geq D_0$ or $D_0 \geq D_1$; that is, either defiers or compliers are absent.}
\end{example}

\begin{example}[Statistical versus Taste-Based Discrimination] \rm\label{example:discrimination}
Economists distinguish between two forms of labor-market discrimination. Statistical discrimination arises when employers imperfectly observe individual productivity and therefore rely on group-specific signals, whereas taste-based discrimination reflects preferences or animus toward a particular group \citep{Becker1957,Phelps1972}. \citet{Bharadwaj/Deb/Renou:2024} show that, when two groups have the same mean productivity, their observed wage distributions \(G_1\) and \(G_2\) can be rationalized without taste-based discrimination if and only if neither distribution strictly first-order stochastically dominates the other. Strict stochastic dominance therefore provides evidence of taste-based discrimination, whereas the absence of dominance is consistent with statistical discrimination. Suppose that the outcome is discrete with ordered support $w_1<\cdots<w_J$ where $J\geq 3,$ as may occur when wages are reported in intervals or when the outcome is an ordered non-wage measure, such as promotion rank. Define
\begin{align*}
\mu_j=G_1(w_j)-G_2(w_j),
\qquad j=1,\ldots,J-1.
\end{align*}
First-order stochastic dominance can then be characterized by the signs of these differences. The two distributions are weakly ordered by first-order stochastic dominance if and only if the components of \(\mu\) have a common weak sign. Applying the sign-agreement hypothesis in \eqref{def:null-hypo} to \(\mu\) therefore yields a test of the null that the wage distributions are ordered by first-order stochastic dominance—the configuration that, under the maintained assumptions of \citet{Bharadwaj/Deb/Renou:2024}, includes taste-based discrimination. Rejecting the null establishes that the distributions cross and hence that the observed wage distributions are consistent with statistical discrimination or no discrimination. Failure to reject leaves the dominance configuration possible but does not establish either dominance or taste-based discrimination.
\end{example}
Beyond these three applications, sign agreement is relevant in other settings. One arises when a common underlying outcome is captured by multiple measures: rather than aggregating them into an index, researchers may examine whether the corresponding treatment effects have a common sign. The framework can also be used to assess sign agreement across study-specific effects in meta-analysis \citep[e.g.,][]{Zhao2019JASA,SloughTyson2024}. \cite*{MillerMolinariStoye2024WP} discuss further examples involving two target parameters.

A common feature of these examples is that the estimators of $\mu_1,...,\mu_k$ can be correlated. Independence is plausible only when the components are estimated from disjoint subsamples, as in the cross-sectional case of Example~\ref{example:Heterogeneity-RCT}; dynamic treatment effects are
estimated from repeated observations of the same units and are correlated across horizons, while discretizing an outcome, as in Examples~\ref{example:LATE-instrument-validity} and~\ref{example:discrimination}, yields components that are cell
frequencies of a common multinomial, and hence correlated by construction.

\section{Notation and Setup}\label{sec:setup}
This section introduces the notation and formally presents the hypothesis of sign agreement, along with the test statistics. Additionally, section \ref{subsec:uniform-validity} explains the concept of the uniform asymptotic validity in contrast to the pointwise asymptotic validity, a common requirement for tests, and elucidates why pursuing the uniform asymptotic validity is desirable in our problem.

We begin with notation. Let \(\{W_i: i=1, \ldots, n\}\) be a random sample from $P$, and let \(\hat{P}_n\) denote the empirical distribution. Write $\mu(P)=(\mu_1(P),..., \mu_k(P))=\mathbb{E}_P[W_i]\). For each $j=1,...,k$, let \(\sigma_j^2(P)\) denote the variance of the $j$th component of $W_i$, and let \(\Omega(P)\) denote the correlation matrix of \(P\). Sample analogues are defined by replacing \(P\) with \(\hat{P}_n\). In particular, \(\bar{W}_n=(\bar{W}_{1,n},...,\bar{W}_{k,n})=\mu(\hat{P}_n)=(\mu_1(\hat{P}_{n}),...,\mu_k(\hat{P}_{n}))\), \(\hat{\Omega}_n=\Omega(\hat{P}_n)\), and \(S_{j, n}^2=\sigma_j^2(\hat{P}_n)\) for all \(j=1,..., k\); Define a diagonal matrix of sample variances as \(S_n^2=\operatorname{diag}(S_{1, n}^2, .., S_{k, n}^2)\). Let \(\mathbf{R}^k\) denote the \(k\)-dimensional Euclidean space, and define \(\mathbf{R}^k_+=\{x\in\mathbf{R}^k: x_j \geq 0 \text{ for all }j=1,...,k\}\) and \(\mathbf{R}^k_-=\{x\in\mathbf{R}^k: x_j \leq 0 \text{ for all }j=1,...,k\}\). Finally, let \(\mathbf O\) denote the set of \(k\times k\) positive definite correlation matrices, let \(I_k\) denote the $k\times k$ identity matrix, let \(0_k\) denote the \(k\)-dimensional zero vector, and let $[k]$ denote the set of index $\{1,...,k\}.$

We are interested in testing hypotheses:
\begin{equation}\label{def:null-hypo}
\begin{gathered}
H_0: P\in \mathbf{P}_0
\quad \text{versus}\quad
P\in \mathbf{P}\setminus\mathbf{P}_0, \\
\text{where}\quad
\mathbf{P}_0
=
\left\{
P\in\mathbf{P}:
\mathbb{E}_P[W_i]\leq 0_k
\quad\text{or}\quad
\mathbb{E}_P[W_i]\geq 0_k
\right\}
\end{gathered}
\end{equation}
where $\mathbf{P}$ is a large set of nonparametric distributions satisfying conditions in Theorems \ref{thm:LFtest-uniform-validity} and \ref{thm:conditional-uniform-validity} and the inequality applies elementwise. Throughout this article, $k$ is greater than or equal to 2, so $\mathbf P_0$ is nontrivial. For expositional clarity, the main text develops the tests for the sign-agreement hypothesis concerning the population mean vector. This formulation is not essential and extends to a broad range of regular estimands; see Appendix \ref{sec:extension-regular-estimators}.

The two proposed tests use the same test statistics. To introduce them, define three real-valued functions $T^{\ell}:\mathbf{R}^k \times\mathbf{O}\mapsto \mathbf{R}$, for $\ell=m,s,q$, by
\begin{align}\label{def:T-ell-function}
     \begin{aligned}
     T^{m}(x) &= \min\left\{\max_{1\leq j \leq k} \{x_j\}, \max_{1\leq j \leq k} \{-x_j\}\right\}
     \\
     T^s(x) &= \min\left\{\sum_{1\leq j \leq k} x_j^2 I\{x_j\geq0\}, \sum_{1\leq j \leq k} x_j^2 I\{x_j\leq0\}\right\}\\
     T^q(x, \Omega) &= \inf_{\mu\in\mathbf R^k_+\cup \mathbf R^k_-} (x-\mu)' \Omega^{-1}(x-\mu)
\end{aligned}
\end{align}
where the second argument is suppressed for $\ell=m,s$. The superscripts $m$, $s$, and $q$ stand for maximum, sum of squares, and quasi-likelihood ratio (QLR), respectively. 

All three functions share the property that  $T^{\ell}(x)>0$ whenever $x\in\mathbf{R}^k$ contains components with opposite signs. They differ, however, in how they quantify the extent of this sign disagreement. The function \(T^{m}(x)\) is determined by the smaller of the largest positive component and the largest negative component in absolute value. The function \(T^{s}(x)\) is determined by the smaller of the sums of squared positive and negative components. Finally, \(T^{q}(x,\Omega)\) measures the squared Mahalanobis distance from \(x\) to the nearest vector whose components have a common sign. Interestingly, when $k=2$ and \(\Omega=I_2\), the three functions satisfy $T^s(x)=T^q(x,I_2)=\bigl(T^m(x)\bigr)^2$ whenever the components of $x$ have opposite signs. Accordingly, they generate the same family of level curves, as illustrated in Figure \ref{fig:Test-Statistic-Level-Curve}.

Using these functions, we define the test statistics by evaluating them at the vector of studentized sample means, $\sqrt{n}S^{-1}_n\bar{W}_n=(\tfrac{\sqrt{n}\bar{W}_{1,n}}{S_{1,n}},...,\tfrac{\sqrt{n}\bar{W}_{k,n}}{S_{k,n}})$, and the sample correlation $\hat{\Omega}_n$. For each $\ell\in\{m,s,q\}$, define
\begin{align}\label{def:test-statistic}
    T^\ell_n = T^\ell(\sqrt{n}S^{-1}_n\bar{W}_n, \hat\Omega_n).
\end{align}
Large values of this test statistic provide evidence against the null hypothesis. These statistics encompass several test statistics used in the related literature. In particular, \cite{GailSimon1985Biometrics} and \cite{PiantadosiGail1993} use $T^s_n$ and $T^m_n$, respectively, for testing the sign agreement hypothesis. In the context of testing one-sided moment inequalities, \cite{Song2012JBES} employs $T^m_n$, the properties of which are further studied by \cite{Kim2023ET}.

\begin{figure}[tb]
    \begin{center}
        \begin{tikzpicture}[scale=0.90] 
        \fill[gray!40] (-3,-3) rectangle (0,0);
        \fill[gray!40] (0,0) rectangle (3,3);
    
        \draw[-latex] (-3,0) -- (3,0) node[right] {};
        \draw[-latex] (0,-3) -- (0,3) node[above] {};

        \draw[gray, thick, dashed] (1,-1) -- (3,-1) node[above] {$1$};
        \draw[gray, thick, dashed] (1,-1) -- (1,-3) node[above] {};
        \draw[gray, thick, dashed] (2,-2) -- (3,-2) node[above] {$2$};
        \draw[gray, thick, dashed] (2,-2) -- (2,-3) node[above] {};
    
        \draw[gray, thick, dashed] (-1,1) -- (-1,3) node[above] {};
        \draw[gray, thick, dashed] (-1,1) -- (-3,1) node[above] {$1$};
        \draw[gray, thick, dashed] (-2,2) -- (-3,2) node[above] {$2$};
        \draw[gray, thick, dashed] (-2,2) -- (-2,3) node[above] {};
    
        \node at (2.5, 0.5) {$H_0$};
        \end{tikzpicture}
    \end{center}
    \caption{Level curves of $T^{\ell}(x)$ in \eqref{def:T-ell-function} for $x\in\mathbf{R}^2$. \footnotesize{The dashed curves represent the level curves of $T^{\ell}(x)$ in $\mathbf{R}^2.$ The two L-shaped curves marked by $1$ correspond to the set of values satisfying $T^m(x)=T^s(x)=T^q(x,I_2)=1$. Similarly, the two L-shaped curves marked by `2' correspond to the set of values satisfying $T^m(x)=2$ or $T^s(x)=T^q(x,I_2)=4$.}}
    \label{fig:Test-Statistic-Level-Curve}
\end{figure}

\subsection{Uniform versus Pointwise Asymptotic Validity}\label{subsec:uniform-validity}
Consider a generic test $\phi_n=\phi(W_1,\ldots,W_n)$. Even when the null hypothesis is true, sampling randomness may lead the test to reject, resulting in a Type I error. Ideally, researchers wish to control the probability of such an error below a significance level $\alpha$ for every sample size $n$. This requirement is formalized as
\begin{align}\label{def:finite-sample-size}
    \sup_{P\in \mathbf{P}_0} \mathbb{E}_P[\phi_n] \leq \alpha \quad \text{ for all }n\geq 1.
\end{align}
The left-hand side, which is the largest Type I error probability over the null, is called the \textit{finite-sample size} of the test. Because the requirement in \eqref{def:finite-sample-size} is difficult to attain beyond cases where the exact underlying distribution $P$ is known, researchers consider its asymptotic counterpart. A test $\phi_n$ is \textit{uniformly asymptotically valid} or \textit{uniformly asymptotically of level $\alpha$}, over $P\in \mathbf{P}_0$ if
\begin{align}\label{def:uniform-validity}
    \limsup_{n\to\infty} \sup_{P\in\mathbf{P}_0}\mathbb{E}_P[\phi_n] \leq \alpha.
\end{align}
Uniform asymptotic validity ensures that the finite-sample size of the test is controlled less than or close to $\alpha$ in large samples.

Uniform asymptotic validity is stronger than the \textit{pointwise asymptotic validity} in the sense that it implies the latter. Specifically, a test $\phi_n$ is \textit{pointwise asymptotically valid} or \textit{pointwise asymptotically of level $\alpha$}, over $P\in \mathbf{P}_0$ if 
\begin{align*}
    \limsup_{n\to\infty} \mathbb{E}_P[\phi_n] \leq \alpha \text{ for all }P\in\mathbf{P}_0.
\end{align*}
The limit is taken for each distribution $P$, so the pointwise asymptotic validity does not guarantee the finite-sample size control even in large samples. For instance, there may exist a sequence of distributions $\{P_n\}^\infty_{n=1}\in\mathbf{P}_0$ such that $ \mathbb{E}_{P_n}[\phi_n] >\alpha$ for all $n\geq1$,
despite $\phi_n$ being pointwise asymptotically valid. 

The distinction between pointwise asymptotic validity and uniform asymptotic validity is largely technical in many econometric applications: pointwise asymptotically valid tests often become uniformly asymptotically valid under reasonable assumptions on $\mathbf{P}$. However, this is not the case here. Tests that satisfy only the pointwise asymptotic validity may lead to size distortion under empirically relevant distributions, when the limiting distribution of the test statistic is discontinuous, as in our problem. Consequently, such test can be highly misleading even in large samples. For this reason, emphasis has been placed on uniform asymptotic validity in less-well behaved problems, including inference in partially identified models and one-sided moment inequalities tests; see, among others, \cite{ImbensManski2004confidence}, \cite{mikusheva2007uniform} and \textcolor{blue}{Andrews and Guggenberger} (\citeyear{AndrewsGuggenberger2009ECTA-hybrid}, \citeyear{AndrewsGuggenberger2010ET}), \textcolor{blue}{Romano and Shaikh (\citeyear{RomanoShaikh2008JSPI}, \citeyear{RomanoShaikh2010ECTA}}).

\section{Least Favorable Test}\label{sec:LFCtest}
This section proposes the least favorable (LF) test and presents one of the main results. Section \ref{subsec:LF-approach} describes a challenge in constructing such a test and the strategy used to circumvent it. This strategy underlies both the LF test and the conditional test proposed in Section \ref{sec:conditional-test}. Section \ref{subsec:LFC-critical-values} formally defines the critical value and the LF test. Section \ref{subsec:LF-result} presents uniform asymptotic validity of the LF test and related results.

\subsection{Least Favorable Approach}\label{subsec:LF-approach}
The construction of a uniformly asymptotically valid test based on the test statistic $T^{\ell}_n$ in \eqref{def:test-statistic} requires a critical value $\hat{c}^{\ell}_n$ satisfying
\begin{align}\label{def:section3-uniform-validity}
    \limsup_{n\to\infty} \sup_{P\in\mathbf{P}_0} P\left\{T^{\ell}_n > \hat{c}^{\ell}_n\right\} \leq \alpha
\end{align}
for $\alpha\in(0,1)$. A seemingly natural choice for $\hat{c}^{\ell}_n$ is an estimator of the $1-\alpha$ quantile of the asymptotic distribution of $T^{\ell}_n$ under $P$. This choice, however, is infeasible because the distribution cannot be consistently estimated uniformly over $\mathbf{P}_0$.

To elucidate this challenge, we borrow the framework of \citet*{RomanoShaikhWolf2014ECTA} and \cite{CanayShaikh2017ARE}. For \(\theta_1\in\mathbf{R}^k\) and $P$, define the distribution function
\begin{align}\label{def:J_n-section3}
    J^{\ell}_n(x, \theta_1,P)= P\left\{T^{\ell}(\sqrt{n}S^{-1}_{n} (\bar{W}_n-\mu(P)) + S^{-1}_n\theta_1)\leq x\right\}.
\end{align}
For fixed $\theta_1$, this distribution can be approximated by standard methods. In particular, if $\sqrt{n}S^{-1}_{n} (\bar{W}_n-\mu(P))$ is asymptotically normal, bootstrap methods can approximate this distribution, as discussed in Section \ref{subsec:LFC-critical-values}; continuity of function $T^{\ell}$ then ensures an approximation of $J^{\ell}_n(x, \theta_1,P)$.

The difficulty arises as the distribution of \(T_n^\ell\) corresponds to the nuisance parameter \(\theta_1=\sqrt{n}\mu(P)\):
\begin{align}\label{def:dist-test-stat}
    J^{\ell}_n(x,\sqrt{n}\mu(P),P)
    =P\left\{T^{\ell}(\sqrt{n}S^{-1}_{n} (\bar{W}_n-\mu(P)) + S^{-1}_n\sqrt{n}\mu(P))\leq x\right\} 
    =P\{T^{\ell}_n\leq x\}.
\end{align}
The nuisance parameter $\sqrt{n}\mu(P)$ not only varies with the sample size $n$, but is  not consistently estimable uniformly over $\mathbf{P}_0$. For instance, when $\mu(P)=0_k$, its natural estimator $\sqrt{n}\bar{W}_n$ converges to a nondegenerate normal distribution, whereas the nuisance parameter itself equals $0_k$. This prevents uniform approximation of the distribution of $T^{\ell}_n$. The issue also appears in testing one-sided moment inequalities, as noted by \textcolor{blue}{Andrews and Guggenberger} (\citeyear{AndrewsGuggenberger2009ET}).

We circumvent this challenge adopting the least favorable (LF) approach. The idea is to use the largest quantile attainable over all values of $\sqrt{n}\mu(P)$ satisfying the null. Define 
\begin{align}\label{def:c_n}
    c^{\ell}_n(1-\alpha,P)=\sup\{(J^{\ell}_n)^{-1}(1-\alpha, \theta_1, P): \theta_1\in\mathbf{R}^k_+\cup\mathbf{R}^k_-\}
\end{align}
where $ (J^{\ell}_n)^{-1}(\xi,\theta_1, P)=\inf\{x\in\mathbf{R}:J^{\ell}_n(x,\theta_1,P) \geq \xi\}$ for any $\xi\in(0,1)$. By construction, $ c^{\ell}_n(1-\alpha,P)$ is no smaller than the $1-\alpha$ quantile of $T^{\ell}_n$ under the null, i.e.,
\begin{align*}
    (J^{\ell}_n)^{-1}(1-\alpha, \sqrt{n}\mu(P), P) \leq  c^{\ell}_n(1-\alpha,P) \quad\text{for all }\mu(P)\in \mathbf{R}^k_+\cup\mathbf{R}^k_-.
\end{align*}
Hence, the test that rejects when $T^{\ell}_n> c^{\ell}_n(1-\alpha,P)$ controls the finite-sample size:
\begin{align*}
       \sup_{P\in\mathbf{P}_0} P\left\{T_n >  c^{\ell}_n(1-\alpha,P)\right\} \leq \alpha \text{ at any }n\geq1.
\end{align*}
The next section formally constructs feasible critical values based on $ c^{\ell}_n(1-\alpha,P)$.

The LF approach to testing composite null hypotheses has a long history in statistics and econometrics. Seminal contributions include \cite{Perlman1969AoS} and the work of \textcolor{blue}{Wolak} (\citeyear{wolak1987JASA}, \citeyear{wolak1989ET}, \citeyear{wolak1991ECTA}), among many others. In the literature of testing sign agreement, this approach has been used in a finite sample setting; see, for example, \cite{GailSimon1985Biometrics}, \cite{PiantadosiGail1993}, \cite{RussekSimon1993Biometrics}, \cite{Silvapulle2001Biometrics}, and \citet*{MillerMolinariStoye2024WP}.

While the LF approach is also used in the literature on tests of one-sided moment inequalities, a notable distinction exists. The approach requires maximizing a quantile over the nuisance parameter. In our problem, the optimization in \eqref{def:c_n} does not have a unique solution, and its solutions vary with the distribution \(P\). By contrast, the corresponding optimization for one-sided moment inequalities has the unique constant solution \(0_k\). The root of this distinction lies in the non-monotonicity of the objective function in \eqref{def:c_n}. Specifically, whereas the largest quantile in one-sided moment inequality tests is always attained at the boundary point \(0_k\), the least favorable nuisance value in our problem depends on \(P\). This distinction makes testing sign agreement substantively different from testing one-sided moment inequalities.

\subsection{Least Favorable Critical Values} \label{subsec:LFC-critical-values}
Constructing estimators for $c^{\ell}_n(1-\alpha, P)$ in \eqref{def:c_n} hinges on the ability to consistently estimate the function $J^{\ell}_n(x, \theta_1, P)$ for fixed $\theta_1\in\mathbf{R}^k_+\cup\mathbf{R}^k_-$ and $x\in\mathbf{R}$. For convenience, we restate this distribution function:
\begin{align}\label{def:Jn-long}
    J^{\ell}_n (x, \theta_1, P) =P\{T^{\ell}(\sqrt{n}S^{-1}_n (\bar{W}_n- \mu(P))+ S^{-1}_n \theta_1 ) \leq x\},
\end{align}
Although the centered studentized sample mean $\sqrt{n}S^{-1}_{n} (\bar{W}_n-\mu(P))$ depends on the unknown mean $\mu(P)$, its distribution is asymptotically normal and can therefore be estimated by parametric and nonparametric bootstrap methods.

Our critical value uses the parametric bootstrap. Define the function \(J^\ell\), indexed by \(\theta_1\in\mathbf R^k\) and \(\theta_2\in\bar{\mathbf O}\), by replacing the asymptotically normal component with the normal vector \(\theta_2^{1/2}Z\) where \(Z\sim N(0_k,I_k)\), and by reparameterizing \(S_n^{-1}\theta_1\) as \(\theta_1\):
\begin{align}\label{def:J} 
    J^\ell(x, \theta_1,\theta_2) = P\left\{T^\ell(\theta^{1/2}_2Z+ \theta_1, \theta_2) \leq x\right\}.
\end{align}
Here, the second argument of \(T^\ell\) is suppressed for \(\ell=m,s\). Then, $ J^{\ell}(x, \theta_1,\hat{\Omega}_n)$ provides a consistent estimator of $J^{\ell}_n (x, \theta_1, P)$. Analogously to \(c_n^\ell(1-\alpha,P)\) in \eqref{def:c_n}, define the critical value as the largest $1-\alpha$ quantile attainable over all $\theta_1$ satisfying the null:
\begin{align}\label{def:LF-cv-normal}
\begin{aligned}
     {c}^{\ell}(1-\alpha, \hat\Omega_n) 
    &=\sup\left\{(J^{\ell})^{-1}(1-\alpha, \theta_1, \hat{\Omega}_n) : \theta_1\in\mathbf{R}^k_+ \cup\mathbf{R}^k_-\right\}\\
    &=\sup\left\{(J^{\ell})^{-1}(1-\alpha, \theta_1, \hat{\Omega}_n) : \theta_1\in\mathbf{R}^k_+\right\}
\end{aligned}   
\end{align}
where the second equality follows from symmetry. When $\hat\Omega_n = I_k$, this critical value simplifies to 
\begin{align}
     {c}^{\ell}(1-\alpha, I_k) &=\left\{ 
    \begin{matrix}
        \Phi^{-1}((1-\alpha)^{\frac{1}{k-1}}) & \quad \ell=m,\\
        (1-\alpha)\text{-quantile of }\sum^{k-1}_{r=0}{k-1 \choose r}2^{-(k-1)}\chi^2_r & \quad \ell=s, q
    \end{matrix}\right.\label{def:LF-cv-independent}
\end{align}
where the second line denotes a chi-bar-square mixture that assigns weight \({k-1\choose r}2^{-(k-1)}\) to a chi-square distribution with \(r\) degrees of freedom, and \(\chi_0^2\) is degenerate at zero.

As an alternative to \eqref{def:LF-cv-normal}, we also consider its nonparametric bootstrap counterpart. Let $\{W_1^*,\ldots,W_n^*\}$ be a sample drawn with replacement from the empirical distribution $\hat P_n$, and let $\bar W_n^*$ and $S_n^*$ denote the corresponding sample mean and diagonal matrix of sample standard deviations. Conditional on the data, the distribution of $T^\ell( \sqrt n\,S_n^{*-1}(\bar W_n^*-\bar W_n) +S_n^{*-1}\theta_1 )$ approximates $J_n^\ell(\cdot,\theta_1,P)$ for fixed $\theta_1$; see
\citet{RomanoShaikh2012AoS}. Accordingly, define the nonparametric bootstrap LF critical value by
\begin{gather}
\begin{aligned}
    {c}^{\ell*}_n(1-\alpha,\hat{P}_n) 
    &= \sup\left\{(J^{\ell}_n)^{-1}(1-\alpha, \theta_1, \hat{P}_n): \theta_1\in\mathbf{R}^k_+ \cup\mathbf{R}^k_- \right\}.
\end{aligned}\label{def:LFC-bootstrap-critical-value}
\end{gather}

With the test statistic $T^\ell_n$ in \eqref{def:test-statistic} and two critical values $ c^{\ell}_n(1-\alpha, \hat{\Omega}_n)$ and $ c^{\ell*}_n(1-\alpha, \hat P_n)$ in \eqref{def:LF-cv-normal} and \eqref{def:LFC-bootstrap-critical-value}, we define the LF test by, for each $\ell=m,s,q,$
\begin{align}\label{def:LFtest}
    \phi^{\ell}_n=I\left\{T^{\ell}_n> \hat{c}^{\ell}_n\right\} \quad\text{ where }\quad \hat{c}^{\ell}_n={c}^{\ell}(1-\alpha,\hat{\Omega}_n) \quad\text{or}\quad c^{\ell*}_n(1-\alpha,\hat{P}_n)
\end{align}
Thus, the researcher may choose among the three test statistics $\ell=m,s,q$ and between parametric and nonparametric bootstrap critical values.

Existing procedures developed specifically for sign agreement generally
impose stronger restrictions on the dependence structure. Most assume
$\Omega(P)=I_k$; see \cite{GailSimon1985Biometrics},
\cite{PiantadosiGail1993}, \cite{Silvapulle2001Biometrics}, and
\cite{LiChan2006detecting}. The test of  \cite{RussekSimon1993Biometrics} permits a nonidentity correlation matrix \(\Omega(P)\) but is limited to \(k=2\) with known \(\Omega(P)\). \citet*{Zhao2019JASA} allow dependence only under an equicorrelated Gaussian structure. Hence, none of these procedures applies directly to the empirical setting described in Example \ref{example:Heterogeneity-RCT}.

Relatedly, our parametric-bootstrap critical values encompass existing critical values as special cases. When \(\hat\Omega_n=I_k\), the parametric critical values in \eqref{def:LF-cv-independent} coincide with that of \cite{PiantadosiGail1993} for \(\ell=m\) and those of \cite{RussekSimon1993Biometrics} for \(\ell=s,q\). When \(k=2\), our parametric critical value for \(\ell=s\) with a known correlation matrix coincides with that of \cite{RussekSimon1993Biometrics}, which was later rediscovered by \citet*{MillerMolinariStoye2024WP}. In contrast, our nonparametric-bootstrap critical values are novel. \cite{kowalski2023RES} proposes a heuristic nonparametric bootstrap method for the special case \(k=2\) and \(\Omega(P)=I_k\), but her approach does not extend beyond this setting, as shown by \citet*{MillerMolinariStoye2024WP}.


\begin{remark}
The critical values in \eqref{def:LF-cv-normal} are well defined, as established in Lemmas \ref{lem:parametric-cv-existence}--\ref{lem:argmax-critical-value-all} in the Appendix. In general, however, they do not admit tractable closed-form expressions and must typically be evaluated numerically, because the maximizing values of \(\theta_1\) depend on \(\hat\Omega_n\). For example, when \(k=2\) and \(\hat\Omega_n=I_2\), the supremum is attained in the limiting directions \((\infty,0)\) and \((0,\infty)\) for all $\ell=m,s,q$. When \(\hat\Omega_n=\left(\begin{smallmatrix}1&-1\\-1&1\end{smallmatrix}\right)\), the origin is a maximizer for $\ell=m,s$, and also for the natural continuous extension of $T^q$ to this singular correlation matrix.
\end{remark}

\begin{remark}\label{remark:regularized-correlation}
For \(\ell=q\), if \(\hat\Omega_n\) is not positive definite, replace it by $\tilde\Omega_n=
({\hat\Omega_n+\varepsilon_n I_k})/({1+\varepsilon_n}),$ where \(\varepsilon_n>0\) and \(\varepsilon_n\to0\), in both the test statistic \(T_n^q\) and the critical value \(\hat c_n^q\).
\end{remark}

\begin{remark}
The nonparametric bootstrap critical value in \eqref{def:LFC-bootstrap-critical-value} differs from that obtained by a naive application of the nonparametric bootstrap. Given the nonparametric bootstrap sample $\{W^*_1,...,W^*_n\}$, an analyst might instead compute a bootstrap analog of $T^{\ell}_n$ given as 
\begin{align*}
T^{\ell*}_n=T^{\ell}\left(\left(\tfrac{\sqrt{n}\bar{W}^*_{1,n}}{S^*_{1,n}},...,\tfrac{\sqrt{n}\bar{W}^*_{k,n}}{S^*_{k,n}}\right),\hat\Omega^*_n\right) 
\end{align*}
and use its $1-\alpha$ quantile of $T^{\ell*}_n$ as the critical value. In general, however, this procedure fails to consistently estimate the distribution of $T^{\ell}_n$. Such bootstrap failures for nonsmooth statistics involving maximum or minimum operators has been long recognized (e.g., \cite{bickel1981some}, \citet*{bickel1997resampling} and \cite{andrews2000inconsistency}). In the sign-agreement setting, \citet*{MillerMolinariStoye2024WP} further show for $k=2$ that related heuristic bootstrap procedures can have arbitrary rejection probabilities.
\end{remark}

\begin{remark}\label{rem:LF-IUT-comparison}
Consider $\Omega=I_k$ and the max statistic. A natural
intersection--union test separately tests $H_0^+:\mu\in\mathbf R_+^k$ and $H_0^-:\mu\in\mathbf R_-^k$ at level $\alpha$ and rejects sign agreement only when both component tests reject. Using the usual one-sided max tests, the critical value for each component is
\[
c_{\mathrm{IUT}}^m
=
\Phi^{-1}\left((1-\alpha)^{1/k}\right).
\]
Hence the resulting intersection--union test rejects when $T^m_n>c_{\mathrm{IUT}}^m.$ By contrast, the LF critical value in
\eqref{def:LF-cv-independent} is
\[
c_L^m
=
\Phi^{-1}\left((1-\alpha)^{1/(k-1)}\right)
<
\Phi^{-1}\left((1-\alpha)^{1/k}\right)
=
c_{\mathrm{IUT}}^m.
\]
Thus, although both procedures are valid ways to test sign agreement, the
direct LF construction exploits the union structure to use a strictly
smaller critical value. Its rejection region therefore contains that of
the corresponding intersection--union max test, so it is weakly more
powerful against every alternative, and strictly more powerful whenever
$T_n^m$ has positive probability between the two critical values.
\end{remark}

\subsection{Uniform Asymptotic Validity of the Least Favorable Test}\label{subsec:LF-result}
Our first main result establishes the uniform asymptotic validity of the LF test. This implies that, for sufficiently large $n$, the finite-sample size of the LF test is no greater than or close to the significance level $\alpha$.
\begin{theorem}\label{thm:LFtest-uniform-validity}
Let $W_i, i=1,..., n$, be an i.i.d. sequence of random vectors with distribution $P \in \mathbf{P}$ on $\mathbf{R}^k$. Suppose $\mathbf{P}$ satisfies the following uniform integrability condition:
\begin{align}\label{def:uniform-integrability}
    \lim _{\lambda \rightarrow \infty} \sup _{P \in \mathbf{P}} \mathbb{E}_P\left[\left(\frac{W_{j, 1}-\mu_j(P)}{\sigma_j(P)}\right)^2 I\left\{\left|\frac{W_{j, 1}-\mu_j(P)}{\sigma_j(P)}\right|>\lambda\right\}\right]=0 \quad\text{for}\quad j=1,...,k.
\end{align}
Then, for $\alpha\in(0,\tfrac12)$ and $\ell\in\{m,s\}$, the least favorable test $\phi^{\ell}_n$ in \eqref{def:LFtest} satisfies the uniform asymptotic validity, i.e.,
\begin{align}\label{def:LFtest-uniform-validity}
    \limsup_{n\to\infty} \sup_{P \in \mathbf{P}_0} \mathbb{E}_P [\phi^{\ell}_n] \leq \alpha .
\end{align}
Furthermore, if the smallest eigenvalue of $\Omega(P)$ is bounded away from zero uniformly over $\mathbf{P}$,
$\inf_{P \in \mathbf{P}} \lambda_{\min}(\Omega(P))>0$, then \eqref{def:LFtest-uniform-validity} holds for $\ell=q.$
\end{theorem}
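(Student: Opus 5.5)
The plan is a subsequence argument in the style of \citet*{RomanoShaikhWolf2014ECTA} and \cite{CanayShaikh2017ARE}. Suppose \eqref{def:LFtest-uniform-validity} fails. Then there are $\delta>0$, a subsequence $n_r$ and null distributions $P_{n_r}\in\mathbf P_0$ with $P_{n_r}\{T^\ell_{n_r}>\hat c^\ell_{n_r}\}>\alpha+\delta$. By compactness of the closure $\bar{\mathbf O}$ of the set of correlation matrices, we may pass to a further subsequence along which $\Omega(P_{n_r})\to\Omega^*\in\bar{\mathbf O}$. We may also pass to one along which $\sqrt{n_r}\sigma^{-1}(P_{n_r})\mu(P_{n_r})\to h\in[0,\infty]^k$ or $h\in[-\infty,0]^k$, using the symmetry of $T^\ell$ under $x\mapsto -x$ and the second equality in \eqref{def:LF-cv-normal}.

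Under \eqref{def:uniform-integrability}, a triangular-array Lindeberg CLT gives $\sqrt{n}\sigma^{-1}(P_n)(\bar W_n-\mu(P_n))\Rightarrow N(0,\Omega^*)$. The same condition gives $S_{j,n}/\sigma_j(P_n)\to1$ and $\hat\Omega_n\to\Omega^*$ in probability, uniformly; these are standard facts (e.g.\ Lemma 11.4.3 of Lehmann--Romano).

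Next I would handle infinite coordinates of $h$. The key structural fact is that if $x_j\to+\infty$ for $j\in A$, then $T^m(x)\to\max_{j\notin A}(-x_j)^+$ and $T^s(x)\to\sum_{j\notin A}x_j^2I\{x_j\le0\}$. So the statistic converges in distribution to $T^\ell_\infty:=\lim_{t\to\infty}T^\ell(\Omega^{*1/2}Z+h_t)$, where $h_t$ replaces the infinite entries by $t$. For $\ell=q$ the analogous statement holds because $\Omega^*$ is nonsingular; this is where the eigenvalue assumption enters, as it makes $T^q$ jointly continuous in $(x,\Omega)$ and the projection onto the orthant union well behaved.

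The next step is the limit of the critical value. I would show $\liminf \hat c^\ell_n\ge c^\ell(1-\alpha,\Omega^*)$ in probability. Since $c^\ell(1-\alpha,\Omega)\ge (J^\ell)^{-1}(1-\alpha,\theta_1,\Omega)$ for every finite $\theta_1$, lower semicontinuity of $\Omega\mapsto c^\ell(1-\alpha,\Omega)$ follows as a supremum of quantile maps. Each of these maps is continuous in $\Omega$ at points where $J^\ell(\cdot,\theta_1,\Omega)$ is continuous and strictly increasing at its $1-\alpha$ quantile, which I would take from the existence lemmas (Lemmas \ref{lem:parametric-cv-existence}--\ref{lem:argmax-critical-value-all}). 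The restriction $\alpha<1/2$ should guarantee that this quantile is strictly positive, so it lies where the distributions are continuous; there is no atom at $0$ above level $1/2$. For the nonparametric bootstrap, the same lower bound would come from the uniform bootstrap consistency of \citet{RomanoShaikh2012AoS} applied at each fixed $\theta_1$.

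Finally I would compare the limiting statistic with the least favorable value. The law of $T^\ell_\infty$ is the limit as $t\to\infty$ of $J^\ell(\cdot,h_t,\Omega^*)$, with each $h_t\in\mathbf R^k_+$. Hence its $1-\alpha$ quantile is at most $\liminf_t (J^\ell)^{-1}(1-\alpha,h_t,\Omega^*)\le c^\ell(1-\alpha,\Omega^*)$, provided the limit law is continuous at that quantile. By the portmanteau theorem, $\limsup P_{n_r}\{T^\ell_{n_r}>\hat c^\ell_{n_r}\}\le P\{T^\ell_\infty\ge c^\ell(1-\alpha,\Omega^*)\}\le\alpha$, which is a contradiction.

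The main obstacle is the continuity and positivity of the limiting distribution at $c^\ell(1-\alpha,\Omega^*)$ when $\Omega^*$ is singular (only for $\ell=m,s$) and when $h$ has infinite coordinates. I would first try to show that $T^\ell_\infty$ has no atoms on $(0,\infty)$, because it is a continuous piecewise-quadratic or piecewise-linear function of a Gaussian vector that is nonconstant on each piece. I would also show that $c^\ell(1-\alpha,\Omega^*)>0$ for $\alpha<1/2$ by exhibiting $\theta_1=0$: at $\theta_1=0$, $P\{T^\ell>0\}\ge$ the probability of mixed signs, and this exceeds $\alpha$ unless $\Omega^*$ is degenerate. The degenerate case would be handled separately by a small-$\epsilon$ perturbation of the cutoff.
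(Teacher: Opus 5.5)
\textbf{Where your route differs from the paper's.} When some standardized means stay finite, the paper never compares $T_n^\ell$ with a limit law at $\Omega^*$. Instead it bounds $\hat c_n^\ell$ below by the Gaussian quantile at the \emph{actual} drift $\sqrt n S_n^{-1}\mu(P_n)$, which lies in the null orthant. It then combines Lemma \ref{lem:conv-ip-to-Jn} with Lemma A.1(vi) of \citet{RomanoShaikh2012AoS}. That argument needs only uniform closeness of the two CDFs, not continuity of a limit law at the critical value. Positivity of the critical value is used only in the case where all standardized means diverge.

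Your route is lower semicontinuity of $\Omega\mapsto c^\ell(1-\alpha,\Omega)$ followed by a comparison with the reduced limit $T^\ell_\infty$. It is a workable alternative, but for $\ell\in\{s,q\}$ it needs $c^\ell(1-\alpha,\Omega^*)>0$ in \emph{every} case. Since $T^\ell_\infty\ge 0$, a zero critical value would give $P\{T^\ell_\infty\ge c^\ell(1-\alpha,\Omega^*)\}=1$.

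\textbf{The gap: your positivity argument fails.} The witness $\theta_1=0_k$ does not work, and the failure is not confined to degenerate $\Omega^*$. The probability that $\Omega^{*1/2}Z$ has mixed signs tends to $0$ as $\Omega^*$ approaches the all-ones matrix. For example, with $k=2$, correlation $0.99$ and $\alpha=0.05$, it equals $\tfrac12-\arcsin(0.99)/\pi\approx 0.045<\alpha$. Hence $(J^s)^{-1}(0.95,0_2,\Omega^*)=0$ (and likewise for $J^q$) at a positive definite $\Omega^*$ that is admissible even under the eigenvalue condition for $\ell=q$. No ``small-$\epsilon$ perturbation of the cutoff'' can repair a zero critical value.

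\textbf{The fix.} Send one coordinate to infinity: take $\theta_1=(M,0,\ldots,0)'$ and let $M\to\infty$.
\begin{itemize}
\item By lower semicontinuity of quantiles under weak convergence, $c^s(1-\alpha,\Omega)$ is at least the $(1-\alpha)$-quantile of $\sum_{j\ge2}Y_j^2I\{Y_j\le0\}$, where $Y\sim N(0,\Omega_{-1,-1})$.
\item The atom of this variable at zero is $P\{Y\in\mathbf R^{k-1}_+\}\le\tfrac12<1-\alpha$.
\item The bound $P\{\sum_{j\ge2}Y_j^2I\{Y_j\le0\}\le t\}\le\tfrac12+(k-1)\sqrt t/\sqrt{2\pi}$ makes the resulting lower bound positive and uniform over $\bar{\mathbf O}$.
\item For $\ell=q$, add $A^{-1}\succeq (k-1)^{-1}I_{k-1}$ with $A=\Omega_{-1,-1}$.
\end{itemize}
This, not $\theta_1=0$, is where $\alpha<\tfrac12$ enters; it is exactly the paper's Case 1 bound.

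\textbf{Smaller corrections.}
\begin{itemize}
\item When the coordinates in $A$ diverge, $T^m$ converges to $\max_{j\notin A}(-x_j)$ with no positive part. $T^m$ can be negative, and the positive part would create a spurious atom at zero.
\item You do not need positivity or continuity at each individual quantile $(J^\ell)^{-1}(1-\alpha,\theta_1,\Omega)$, and you cannot have it: for $\ell=s$ it is zero when all entries of $\theta_1$ are large. Left quantiles are lower semicontinuous under weak convergence unconditionally, which is all your lower bound on $\hat c_n^\ell$ requires.
\item For the nonparametric bootstrap, use standardized shifts $\theta_1=D(P_n)a$ with $a$ fixed rather than a fixed $\theta_1$. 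Otherwise $S_n^{*-1}\theta_1$ need not converge when the scales drift.
\end{itemize}
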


The uniform integrability condition in \eqref{def:uniform-integrability} is both mild and standard. It is satisfied if any moment higher than the second exists uniformly over $\mathbf{P}$, i.e., for some $\delta>0$,
\begin{align*}
 \sup _{P\in \mathbf{P}} \mathbb{E}_P\left[\left(\frac{W_{j,1}-\mu_j(P)}{\sigma_j( P)}\right)^{2+\delta}\right]<\infty \quad\text{for}\quad j=1,...,k.
\end{align*}
The condition is necessary to invoke a uniform central limit theorem and uniform law of large numbers, fundamental tools in facilitating uniform inference. As such, either this condition or a slightly stronger version stated above is frequently imposed in studies concerning uniform inference; see, among others, \cite{Romano2004SJS-uniformity}, \textcolor{blue}{Romano and Shaikh (\citeyear{RomanoShaikh2008JSPI}, \citeyear{RomanoShaikh2010ECTA}, and \citeyear{RomanoShaikh2012AoS})}, \cite{AndrewsGuggenberger2009ET}, and \cite{AndrewsSoares2010ECTA}.

Theorem \ref{thm:LFtest-uniform-validity} establishes that the asymptotic size of the LF test is no greater than $\alpha$ uniformly over $\mathbf P_0$. This upper bound need not be attained for an arbitrary class $\mathbf P$. To see why, recall that the distribution of the test statistic depends on the nuisance parameter in \eqref{def:dist-test-stat}, whereas the critical value in \eqref{def:LF-cv-normal} is constructed by taking the supremum over all admissible values of $\theta_1$, because this nuisance parameter is not uniformly estimable. If $\mathbf P$ excludes sequences of distributions for which the nuisance parameter approaches a value attaining this supremum, then the critical value may be asymptotically larger than the relevant quantile of the test statistic, and the asymptotic size may be strictly below $\alpha$. The bound is nevertheless sharp and hence the test is nonconservative whenever $\mathbf P$ is sufficiently rich to contain such sequences. The following corollary formally states that allowing location shifts, a mild richness condition, is sufficient for the bound to be sharp.

\begin{corollary}\label{cor:LFCtest-uniform-validity-equality}
Suppose the assumptions of Theorem \ref{thm:LFtest-uniform-validity} hold and that $\mathbf P$ contains all location shifts of some mean-zero distribution with identity correlation matrix and finite, strictly positive marginal variances. Then, for each $\ell=m,s,q$, the LF test in \eqref{def:LFtest} with critical values in \eqref{def:LF-cv-normal} satisfy \eqref{def:LFtest-uniform-validity} with equality.
\end{corollary}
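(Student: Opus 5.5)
Since Theorem \ref{thm:LFtest-uniform-validity} already gives the upper bound $\limsup_n\sup_{P\in\mathbf P_0}\mathbb E_P[\phi^\ell_n]\le\alpha$, it suffices to construct, for every $\epsilon>0$, a sequence $P_n\in\mathbf P_0$ with $\liminf_n \mathbb E_{P_n}[\phi^\ell_n]\ge\alpha-\epsilon$. Let $P^0$ be the mean-zero distribution with $\Omega(P^0)=I_k$ and marginal variances $\sigma_j^2>0$. Let $P_{h}$ denote its location shift by $h\in\mathbf R^k$; $P_h$ has the same variances and correlation. By \eqref{def:LF-cv-normal}, for fixed $\epsilon$ there is $\theta_1^\ast\in\mathbf R^k_+$ with $(J^\ell)^{-1}(1-\alpha,\theta_1^\ast,I_k)\ge c^\ell(1-\alpha,I_k)-\delta$ for small $\delta$. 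The supremum may only be approached at infinity, e.g.\ in the directions $(\infty,0,\dots,0)$ for $\ell=m$ under independence, so I only use a near-maximizer. Take $P_n=P_{h_n}$ with $h_n=\operatorname{diag}(\sigma)\theta_1^\ast/\sqrt n\in\mathbf R^k_+$, so that $P_n\in\mathbf P_0$. These all lie in $\mathbf P$ and satisfy \eqref{def:uniform-integrability} trivially, since standardization removes the shift.

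Under $P_n$, $\sqrt n S_n^{-1}\bar W_n=\sqrt nS_n^{-1}(\bar W_n-h_n)+S_n^{-1}\operatorname{diag}(\sigma)\theta_1^\ast$. The first term is distributed as under $P^0$ and converges to $N(0,I_k)$. Also $S_n\to\operatorname{diag}(\sigma)$ and $\hat\Omega_n\to I_k$ in probability; these limits can be taken under $P^0$, since the shift does not affect $S_n$ or $\hat\Omega_n$. By continuity of $T^\ell$ (jointly in $(x,\Omega)$ near $I_k$ for $\ell=q$) and the continuous mapping theorem, $T^\ell_n\to_d T^\ell(Z+\theta_1^\ast,I_k)$, whose cdf is $J^\ell(\cdot,\theta_1^\ast,I_k)$.

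For the critical value, I would show $c^\ell(1-\alpha,\hat\Omega_n)\to_p c^\ell(1-\alpha,I_k)$. This requires continuity of $\Omega\mapsto c^\ell(1-\alpha,\Omega)$ at $I_k$, which I take to be supplied by the appendix lemmas guaranteeing well-definedness. If only upper semicontinuity is available, that direction is enough: I need $\hat c_n\le c^\ell(1-\alpha,I_k)+\delta$ with probability tending to one.

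Combining these, $\liminf_n P_n\{T^\ell_n>\hat c_n\}\ge 1-J^\ell(c^\ell(1-\alpha,I_k)+\delta,\theta_1^\ast,I_k)$. Then $c^\ell+\delta\le (J^\ell)^{-1}(1-\alpha,\theta_1^\ast,I_k)+2\delta$. Because $J^\ell(\cdot,\theta_1^\ast,I_k)$ is continuous near its $1-\alpha$ quantile, this lower bound tends to $\alpha$ as $\delta\to0$. Continuity there holds because the limit law has no atom there for $\alpha<1/2$; the atom of $T^\ell$ is at $0$ and has mass at least $1/2$ when $\theta_1^\ast\ge0$. Letting $\delta\to0$ gives equality.

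The main obstacle is the behaviour of the critical value: establishing continuity in $\Omega$ at $I_k$ of a supremum over an unbounded set, where maximizers sit at infinity. I would handle this by showing that $J^\ell(x,\theta,\Omega)$ is equicontinuous in $\Omega$ uniformly over $\theta$, via a uniform bound on normal probabilities of the sets $\{T^\ell\le x\}$ under nearby covariances. The bound I would use is total variation between $N(0,\Omega)$ and $N(0,I_k)$, which does not depend on $\theta$. Sup-continuity of quantiles then follows from continuity and strict increase of the limit cdfs near the quantile.
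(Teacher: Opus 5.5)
\paragraph{Main gap: the limit $\delta\to0$.} Your overall structure is right: the upper bound from Theorem \ref{thm:LFtest-uniform-validity}, a lower bound along location shifts, and $\hat c_n^\ell\to c^\ell(1-\alpha,I_k)$ from Lemma \ref{lem:continuity-critical-value-all}. The final limit $\delta\to0$, however, is not justified. Your near-maximizer $\theta_1^\ast=\theta_1^\ast(\delta)$ depends on $\delta$, and, as you note, it may escape to infinity as $\delta\to0$. What you need is $J^\ell(q_\delta+2\delta,\theta_1^\ast(\delta),I_k)\to1-\alpha$, where $q_\delta:=(J^\ell)^{-1}(1-\alpha,\theta_1^\ast(\delta),I_k)$. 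Continuity of each individual cdf at its own quantile says nothing about this, because the cdf itself moves with $\delta$.

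Stripped of the $\delta$'s, your argument shows that for each fixed $\theta\in\mathbf R_+^k$, $\liminf_n\mathbb E_{P_n}[\phi_n^\ell]\ge P\{T^\ell(Z+\theta,I_k)>c^\ell(1-\alpha,I_k)\}$; here you let the critical-value slack go to zero for fixed $\theta$. The missing fact is
\[
\sup_{\theta\in\mathbf R^k_+}P\{T^\ell(Z+\theta,I_k)>c^\ell(1-\alpha,I_k)\}=\alpha .
\]
This is the ``maximizer at infinity'' problem you flagged for $\Omega$, now reappearing in $\theta$, and it is exactly the content of sharpness.

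The paper sidesteps it with the diverging drift $\vartheta_n=(n^{1/4},0,\ldots,0)'$. Along this drift, $T_n^\ell$ converges to the reduced statistic, $\max_{2\le j\le k}(-Z_j)$ or $V_{k-1}=\sum_{j\ge2}Z_j^2I\{Z_j\le0\}$. By the closed form \eqref{def:LF-cv-independent}, the $(1-\alpha)$-quantile of that limit is exactly $c^\ell(1-\alpha,I_k)$, and the limiting cdf is continuous there.

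\paragraph{Two ways to close the gap.} Neither needs the closed form.

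\emph{(a) Uniform equicontinuity in $\theta$.} For $\ell=m$, write $A=\max_j(Z_j+\theta_j)$ and $B=\max_j(-Z_j-\theta_j)$. Then $\{\min(A,B)\in(x,x+\eta]\}\subseteq\{A\in(x,x+\eta]\}\cup\{B\in(x,x+\eta]\}$, and a union bound gives $J^m(x+\eta,\theta,I_k)-J^m(x,\theta,I_k)\le 2k\eta/\sqrt{2\pi}$ for all $\theta$. For $\ell=s$, and for $\ell=q$ since $T^q(\cdot,I_k)=T^s$, let $A=\sum_j(Z_j+\theta_j)_+^2$. On $\{A\in(x,x+\eta]\}$ some summand exceeds $x/k$, and there its density is at most $\sqrt k/(2\sqrt{2\pi x})$. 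Conditioning on the other, independent, summands bounds the increment by a constant times $\eta/\sqrt{x_0}$, uniformly in $\theta$ and $x\ge x_0>0$. Since $q_\delta\ge c^\ell(1-\alpha,I_k)-\delta$ stays bounded away from zero, this yields your claim.

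\emph{(b) Compactness.} By Lemma \ref{lem:argmax-critical-value-all}, the supremum is attained at some $\bar\theta\in[0,\infty]^k$ by the extended law $\bar J^\ell(\cdot,\bar\theta,I_k)$. This law is continuous at its $(1-\alpha)$-quantile $c^\ell(1-\alpha,I_k)$, and it is the weak limit of $J^\ell(\cdot,\theta_j,I_k)$ for finite $\theta_j\to\bar\theta$. Hence $P\{T^\ell(Z+\theta_j,I_k)>c^\ell(1-\alpha,I_k)\}\to\alpha$.

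\paragraph{Comparison with the paper.} Once repaired, your route never uses where the least favorable configuration lies. It should therefore extend to location families with a non-identity correlation matrix. The paper's proof is shorter but relies on the known maximizing direction under independence.

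\paragraph{Two smaller points.} First, your total-variation sketch for continuity in $\Omega$ only yields $c^\ell(1-\alpha,\Omega)\le c^\ell(1-\alpha+\eta,I_k)$. It then needs continuity in the level, which is the same uniformity issue, so you should rely on Lemma \ref{lem:continuity-critical-value-all} instead.

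Second, the atom of $T^\ell(Z+\theta,I_k)$ at zero does not have mass at least $1/2$. It equals $2^{1-k}$ at $\theta=0_k$ and tends to one when all coordinates of $\theta$ are large. Continuity at the quantile instead follows because $c^\ell(1-\alpha,I_k)>0$ (since $2^{-(k-1)}<1-\alpha$), combined with Lemma \ref{lem:strictly_increasing_dist_T}.
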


To our knowledge, Theorem \ref{thm:LFtest-uniform-validity}, together with forthcoming Theorem \ref{thm:conditional-uniform-validity}, provides the first uniform asymptotic validity results for tests of sign agreement over a broad class of nonparametric distributions. Existing results instead establish finite-sample validity within restricted classes of normal distributions. In addition to normality, they typically impose either $\Omega(P)=I_k$ \citep{GailSimon1985Biometrics,PiantadosiGail1993,Silvapulle2001Biometrics}, an equicorrelated $\Omega(P)$ \citep*{Zhao2019JASA} or $k=2$ with a known correlation coefficient \citep{RussekSimon1993Biometrics}. These results do not directly extend to uniform asymptotic validity over a general class of nonparametric distributions. One complication is that the limiting distribution of the test statistics are not continuous, so standard arguments relying on continuity do not apply directly. To address this, we draw on the techniques developed in the literature on one-sided moment inequality tests, particularly \cite{AndrewsSoares2010ECTA}, \cite{AndrewsGuggenberger2009ET}, \citet*{RomanoShaikhWolf2014ECTA}.

Although Theorem \ref{thm:LFtest-uniform-validity} is stated for a vector of
sample means, this formulation is not essential. Appendix
\ref{sec:extension-regular-estimators} extends the LF test,
as well as the conditional test introduced in Section
\ref{sec:conditional-test}, to general regular estimators. In particular, the
same uniform validity conclusions hold when the estimator admits a uniformly
asymptotically linear representation with uniformly integrable standardized
influence functions and consistently estimated marginal scales and correlation
matrix; see Corollary \ref{cor:general-estimator-uniform-validity}. Under
standard regularity conditions, this framework includes commonly used estimands such as OLS, strongly identified IV and two-stage least-squares, and difference-in-differences coefficients.

Finally, a useful simplification is available for the max statistic under nonnegative dependence. In this case, the fixed identity-correlation critical value $c^m(1-\alpha,I_k)$ remains valid, eliminating the need for correlation-specific critical-value calculations. 

\begin{corollary}\label{cor:LFCtest-nonnegative-cv}
Suppose the assumptions of Theorem
\ref{thm:LFtest-uniform-validity} hold. If all elements of $\Omega(P)$ is nonnegative, then a modified least favorable test with a fixed critical value in \eqref{def:LF-cv-normal} with $\ell=m$, $$\tilde\phi^{m}_n := I\{T^{m}_n > c^{m}(1-\alpha, I_k)\},$$
satisfies the uniform asymptotic validity over $\mathbf{P}_0.$
\end{corollary}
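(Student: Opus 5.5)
The plan is to show that, whenever $\Omega(P)$ has nonnegative entries, the LF critical value for the max statistic is bounded above by the independence value: $c^m(1-\alpha,\Omega)\le c^m(1-\alpha,I_k)$ for every correlation matrix $\Omega$ with nonnegative entries. Uniform validity of $\tilde\phi^m_n$ then follows by rerunning the argument of Theorem \ref{thm:LFtest-uniform-validity}. That argument shows that along any sequence $P_n\in\mathbf P_0$, the rejection probability is asymptotically bounded by $\sup_{\theta_1\in\mathbf R^k_+\cup\mathbf R^k_-}P\{T^m(\Omega^{1/2}Z+\theta_1)>c\}$, evaluated at a limit point $\Omega$ of $\Omega(P_n)$. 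We pass to subsequences along which $\Omega(P_n)\to\Omega$ and $S_n^{-1}\sqrt n\mu(P_n)$ converges in $[-\infty,\infty]^k$, using the uniform CLT and LLN from \eqref{def:uniform-integrability}. The limit $\Omega$ still has nonnegative entries because that set is closed. The bound $T^m\le$ (finite-coordinate version) handles infinite coordinates: a coordinate going to $+\infty$ only makes $\max_j(-x_j)$ irrelevant for that coordinate. So it suffices to show that, for a fixed constant critical value $c=c^m(1-\alpha,I_k)$ and every $\theta\in\mathbf R^k_+$ (by symmetry),
\[
P\{T^m(\Omega^{1/2}Z+\theta)>c\}\le\alpha .
\]

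For the key inequality, write $X=\Omega^{1/2}Z$, a centered Gaussian vector with unit variances and nonnegative correlations. Take $\theta\in\mathbf R^k_+$. The rejection event $\{\min(\max_j(X_j+\theta_j),\max_j(-X_j-\theta_j))>c\}$ is contained in $\{\max_j(-X_j-\theta_j)>c\}\subseteq\{\max_j(-X_j)>c\}$. That alone gives only the bound $1-P\{X_j\ge -c\ \forall j\}$, which corresponds to $k$ rather than $k-1$ in the exponent, so a sharper argument is needed. I would instead condition on which coordinate attains positivity, using
\[
\{T^m>c\}\subseteq\bigcup_{i}\bigl\{X_i+\theta_i>c,\ \max_{j\neq i}(-X_j-\theta_j)>c\bigr\}.
\]
I would then exploit the structure of the LF problem: for fixed $\theta$, the probability is decreasing in each $\theta_j$ for the second requirement and increasing for the first. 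The plan is to show that the supremum over $\theta\in\mathbf R^k_+$ is approached by sending one coordinate, say $\theta_i$, to $+\infty$ and setting the others to $0$. In that limiting configuration the rejection probability equals $P\{\max_{j\ne i}(-X_j)>c\}$. By Slepian's inequality (or the Gaussian correlation / Šidák inequality for nonnegatively correlated orthant events), $P\{X_j\ge -c\ \forall j\ne i\}\ge\prod_{j\ne i}\Phi(c)=(1-\alpha)$. So this configuration has rejection probability at most $\alpha$, matching the choice $c=\Phi^{-1}((1-\alpha)^{1/(k-1)})$.

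The main obstacle is reducing the general $\theta\in\mathbf R^k_+$ to these extreme configurations. Without independence, the objective is not monotone in $\theta$, as the paper itself emphasizes. My first attempt would be a direct bound that avoids optimization over $\theta$. On the rejection event, let $i^*$ be an index with $X_{i^*}+\theta_{i^*}>c$. Then some $j\neq i^*$ has $X_j<-c-\theta_j\le -c$, so the event lies in $\{\exists\, j\neq i^*: X_j<-c\}$ together with $\{X_{i^*}>c-\theta_{i^*}\}$. I would then try to show
\[
P\bigl(\{\max_j X_j+\theta_j>c\}\cap\{\min_j X_j<-c\}\bigr)\le 1-\Phi(c)^{k-1}
\]
using positive association (Harris/FKG inequality for Gaussian vectors with nonnegative correlations). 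The first event is increasing and the second is decreasing, so the probability of their intersection is at most the product $P(\max_j(X_j+\theta_j)>c)\,P(\min_j X_j<-c)$. This is bounded by $\min\{1,\ \cdot\,\}$ times $1-\Phi(c)^k$ via Šidák, which is not immediately enough. So a finer decomposition is needed. I would split on the index of the minimum and apply FKG conditionally on $X_i$ for the partial maxima, aiming at $1-\Phi(c)^{k-1}$. If this fails, the fallback is to argue via Slepian that replacing $\Omega$ by $I_k$ increases $\sup_\theta P\{T^m>c\}$, and then invoke the closed form \eqref{def:LF-cv-independent}.
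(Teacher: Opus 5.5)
Your reduction of uniform validity to the Gaussian inequality $\sup_{\theta\in\mathbf R_+^k}P\{T^m(\Omega^{1/2}Z+\theta)>c^m(1-\alpha,I_k)\}\le\alpha$ for nonnegative $\Omega$ is fine. It is essentially the paper's step of dominating $\tilde\phi^m_n$ by the oracle test with critical value $c^m(1-\alpha,\Omega(P))$.

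The gap is that this inequality, which is the whole content of the corollary, is never established. Your primary route, showing that the supremum is approached at a single divergent coordinate, is left open. Your association route cannot succeed as stated. Write $A=\{\max_j(X_j+\theta_j)>c\}$ and $B=\{\min_j(X_j+\theta_j)<-c\}$. FKG gives only $P(A\cap B)\le P(A)P(B)$, and this product already exceeds $\alpha$ at $\Omega=I_2$, $\theta=0$ when $\alpha$ is near $1/2$. For $\alpha=0.45$ we have $c=\Phi^{-1}(0.55)$ and $P(A)P(B)=(1-0.55^2)^2\approx0.487$. So no argument that passes through this product can close the proof, and the ``finer decomposition'' is not specified.

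Your fallback is the right idea, but Slepian's inequality does not apply off the shelf. The acceptance region $\{x:T^m(x+\theta)\le c\}$ is the union of the lower orthant $\{x_j\le c-\theta_j\ \forall j\}$ and the upper orthant $\{x_j\ge -c-\theta_j\ \forall j\}$. These overlap when $c\ge0$, so inclusion--exclusion leaves a box probability with a minus sign that Slepian does not control.

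What is needed, and what the paper proves, is the comparison pointwise in the nuisance value: $J^m(x,\theta_1,\Omega)\ge J^m(x,\theta_1,I_k)$ for all $x$ and all $\theta_1\in\mathbf R_+^k\cup\mathbf R_-^k$ whenever $\Omega$ has nonnegative off-diagonal entries.
\begin{itemize}
\item Lemma \ref{lem:monotonedecreasing-in-theta2j} obtains this from Plackett's identity $\partial\phi_k/\partial\theta_{2,(a,b)}=\partial^2\phi_k/\partial t_a\partial t_b$. A section-by-section evaluation then shows the mixed-derivative integral over the rejection region is negative.
\item Lemma \ref{lem:slepian-type} integrates this derivative along the segment from $I_k$ to $\Omega$, regularizing a singular $\Omega$ as $(\Omega+\varepsilon I_k)/(1+\varepsilon)$.
\end{itemize}
Quantiles then compare for each fixed $\theta_1$. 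Taking the supremum gives $c^m(1-\alpha,\Omega)\le c^m(1-\alpha,I_k)$ (Lemma \ref{lemma:monotonicity-c}) without ever locating the least favorable $\theta_1$ or using the closed form.

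An equally valid way to fill your fallback is to note that the indicator of the union of a lower orthant and an upper orthant is supermodular. You can then invoke the supermodular-order comparison of Gaussian vectors with equal means and variances and elementwise ordered covariances (M\"uller and Scarsini).
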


\section{Conditional Test} \label{sec:conditional-test}
The least favorable (LF) test achieves uniform asymptotic validity by guarding against all nuisance-parameter values consistent with the null hypothesis, but this protection may come at a cost in power because the test does not exploit information about the signs of the population means contained in the observed studentized sample means. The conditional test introduced in this section incorporates this information through a two-step conditional procedure, and aims to improve power against a broad range of alternatives while retaining uniform asymptotic validity. Section \ref{subsec:conditional-test} defines the conditional test and explains the rationale for its construction. Section \ref{subsec:conditional-test-result} presents our second main result, which establishes its uniform asymptotic validity.

\subsection{Two-step Procedure}\label{subsec:conditional-test}
The conditional test consists of two steps. Step I uses the studentized sample means to classify each coordinate as providing sufficiently strong evidence of a positive mean, sufficiently strong evidence of a negative mean, or insufficient evidence in either direction. Step II then employs a contingent testing rule determined by the screening outcome. If all coordinates are screened in the same direction, the test does not reject; if both positive and negative signs are screened, it rejects immediately. In the remaining cases, it applies either a one-sided test to the coordinates whose signs remain unresolved or, when no sign is screened, a conditional version of the LF test. The procedure is called the conditional test because, in these cases, the distribution used to construct the second-step critical value is conditional on the screening event observed in Step I. Accordingly, both the test statistic and its critical value may vary with the screening outcome. In this section, we formally introduce this procedure. 

\paragraph{Step I: Screening Signs}
Fix a tuning parameter $\tau\in(0,\alpha)$. For a correlation matrix $\theta_2\in\bar{\mathbf O}$, define a screening threshold by
\begin{align}
\kappa_\tau(\theta_2)
=
\inf\left\{
x\in\mathbf R:
P\left\{
\max_{1\le j\le k}
(\theta_2^{1/2}Z)_j
\le x
\right\}
\ge 1-\tau
\right\},
\qquad
Z\sim N(0_k,I_k).
\label{def:kappa-screening-threshold}
\end{align}
When $\theta_2=I_k$, the threshold has the closed-form expression
\[
\kappa_\tau
=
\kappa_\tau(I_k)
=
\Phi^{-1}\left((1-\tau)^{1/k}\right).
\]
Using the estimated correlation $\hat\Omega_n$, Step I classifies each coordinate in $[k]=\{1,...,k\}$ according to
\begin{align}\label{def:conditioning-events}
\begin{aligned}
\mathcal I_+
&=
\left\{
j\in[k]:
\frac{\sqrt n\,\bar W_{j,n}}{S_{j,n}}
>
\kappa_\tau(\hat\Omega_n)
\right\},\\
\mathcal I_-
&=
\left\{
j\in[k]:
\frac{\sqrt n\,\bar W_{j,n}}{S_{j,n}}
<
-\kappa_\tau(\hat\Omega_n)
\right\},\\
\mathcal I_0
&=
\left\{
j\in[k]:
\left|
\frac{\sqrt n\,\bar W_{j,n}}{S_{j,n}}
\right|
\le
\kappa_\tau(\hat\Omega_n)
\right\}.
\end{aligned}
\end{align}
A coordinate belongs to $\mathcal I_+$ when its studentized sample mean exceeds the upper screening threshold, thereby providing sufficiently strong evidence of a positive population mean. Similarly, a coordinate belongs to $\mathcal I_-$ when it falls below the lower threshold. The remaining coordinates belong to $\mathcal I_0$; for these coordinates, Step I does not resolve the sign of the population mean. Thus, ``unresolved'' refers to the screening outcome rather than to the population mean being zero. The sets $\mathcal I_+$, $\mathcal I_-$, and $\mathcal I_0$ form a partition of $[k]$.

Step I summarizes the screening outcome through the following five indicator functions:
\begin{align}\label{def:conditional-cases}
\begin{aligned}
I_{\mathrm{null}}
&=
I\{\mathcal I_+=[k]\}
+
I\{\mathcal I_-=[k]\},\\
I_{\mathrm{alt}}
&=
I\{
\mathcal I_+\neq\varnothing,\ 
\mathcal I_-\neq\varnothing
\},\\
I_{\mathrm{neg}}
&=
I\{
\mathcal I_+=\varnothing,\ 
\mathcal I_-\neq\varnothing,\ 
\mathcal I_0\neq\varnothing
\},\\
I_{\mathrm{pos}}
&=
I\{
\mathcal I_+\neq\varnothing,\ 
\mathcal I_-=\varnothing,\ 
\mathcal I_0\neq\varnothing
\},\\
I_{\mathrm{origin}}
&=
I\{\mathcal I_0=[k]\}.
\end{aligned}
\end{align}
Precisely, $I_{\mathrm{null}}=1$ if all coordinates are screened in the same direction, either positive or negative; $I_{\mathrm{alt}}=1$ if at least one coordinate is screened as positive and another is screened as negative; $I_{\mathrm{neg}}=1$ if at least one coordinate is screened as negative while all remaining signs are unresolved; $I_{\mathrm{pos}}=1$ in the symmetric case; and $I_{\mathrm{origin}}=1$ if all signs are unresolved. By construction, the screening outcome must satisfy one of these indicators, i.e., $I_{\mathrm{null}}
+
I_{\mathrm{alt}}
+
I_{\mathrm{neg}}
+
I_{\mathrm{pos}}
+
I_{\mathrm{origin}}
=
1.$

The threshold in \eqref{def:kappa-screening-threshold} controls the probability of screening a coordinate in the direction opposite to the null. For example, when $\mu(P)\in\mathbf R_-^k$, the probability that at least one coordinate is screened as positive is asymptotically no greater than $\tau$. By symmetry, the same statement holds for negative screening when $\mu(P)\in\mathbf R_+^k$. This property allows the test to reject immediately when $I_{\mathrm{alt}}=1$ while reserving the remaining rejection probability $\alpha-\tau$ for Step II.

\paragraph{Step II: Conditional Subtests}
Step II operates conditional on the screening outcome in Step I. If all coordinates are screened in the same direction and $I_{\mathrm{null}}=1$, Step II does not reject the null hypothesis. If both positive and negative signs are screened and $I_{\mathrm{alt}}=1$, Step II rejects the null hypothesis. When one of $I_{\mathrm{neg}}$, $I_{\mathrm{pos}}$, or $I_{\mathrm{origin}}$ equals one, Step II uses a statistic and critical value tailored to the corresponding screening event.

First, suppose that $I_{\mathrm{neg}}=1$. In this case, at least one coordinate has been screened as negative, no coordinate has been screened as positive, and the signs of the coordinates in $\mathcal I_0$ remain unresolved. Conditional on the negative screening being correct, we rule out the nonnegative branch of the original null hypothesis in \eqref{def:null-hypo}. The relevant reduced null is therefore
\[
H^-_{0}:
\mu_j(P)\le0
\qquad
\text{for all }j\in\mathcal I_0.
\]
Evidence that any coordinate in $\mathcal I_0$ has a positive mean is sufficient to reject this reduced null. To measure the violation of this reduced null, define test statistics
\begin{align}
T_n^{\ell,-}
=
T^{\ell,-}
\left(
\sqrt n\,S_n^{-1}\bar W_n,~
\hat\Omega_n;~
\mathcal I_0
\right),
\qquad
\ell\in\{m,s,q\},
\label{def:negative-branch-test-statistic}
\end{align}
where, for any nonempty set $\mathcal I\subseteq[k]$,
\begin{align}\label{def:conditional-negative-functions}
\begin{aligned}
T^{m,-}(x;\mathcal I)
&=
\max_{j\in\mathcal I}x_j,\\
T^{s,-}(x;\mathcal I)
&=
\sum_{j\in\mathcal I}
x_j^2I\{x_j\ge0\},\\
T^{q,-}(x,\Omega;\mathcal I)
&=
\inf_{\mu\in\mathbf R_-^{|\mathcal I|}}
(x_{\mathcal I}-\mu)'
(\Omega_{\mathcal I\times\mathcal I})^{-1}
(x_{\mathcal I}-\mu).
\end{aligned}
\end{align}
Here, the second argument is suppressed for $\ell=m,s$. When $\Omega=I_k$, $T^{q,-}=T^{s,-}$ and so $T^{q,-_n}=T^{s,-}_n$.

To construct the critical value, let
$Z_\theta\sim N(\theta_1,\theta_2)$ and define the conditional distribution function
\begin{align}
L_{\mathcal I_0}^{\ell,-}
(t,\theta_1,\theta_2)
=
P\left\{
T^{\ell,-}
(Z_\theta,\theta_2;\mathcal I_0)
\le t
\ \middle|\
\begin{array}{l}
Z_{\theta,j}<-\kappa_\tau(\theta_2)
\quad\text{for all }j\in[k]\setminus\mathcal I_0,\\
|Z_{\theta,j}|\le\kappa_\tau(\theta_2)
\quad\text{for all }j\in\mathcal I_0
\end{array}
\right\}.
\label{def:Lminus-selective}
\end{align}
This is the distribution of the reduced statistic conditional on observing the screening event associated with $I_{\mathrm{neg}}=1$ and the particular unresolved set $\mathcal I_0$. Because the nuisance parameter cannot be uniformly consistently estimated for the similar reason discuss in Section \ref{subsec:LF-approach}, Step II follows the least favorable approach and searches $\theta_1$ over the reduced null parameter space $\mathbf R^k_-$. Define, for $\gamma\in(0,\tfrac12),$
\begin{align}
c_{\mathcal I_0}^{\ell,-}
(1-\gamma,\hat\Omega_n)
=
\sup\left\{
\left(
L_{\mathcal I_0}^{\ell,-}
\right)^{-1}
(1-\gamma,\theta_1,\hat\Omega_n):
\theta_1\in\mathbf R_-^k
\right\}.
\label{def:cv-conditional-neg-general}
\end{align}
When $\hat\Omega_n=I_k$, this critical value simplifies to
\begin{align}
c_{\mathcal I_0}^{\ell,-}
(1-\gamma,I_k)
=
\begin{cases}
-\Phi^{-1}\!\left[
\Phi(\kappa_\tau)
-
\{2\Phi(\kappa_\tau)-1\}
(1-\gamma)^{1/|\mathcal I_0|}
\right],
& \ell=m,\\[6pt]
\displaystyle
\inf\left\{
t:
\textstyle\sum_{r=0}^{|\mathcal I_0|}
{|\mathcal I_0|\choose r}
2^{-|\mathcal I_0|}
F_{\kappa_\tau}^{*r}(t)
\ge1-\gamma
\right\},
& \ell=s,q.
\end{cases}
\label{def:cv-conditional-neg}
\end{align}
Here, $F_{\kappa_\tau}^{*r}$ is the $r$-fold convolution of the CDF $F_{\kappa_\tau}(t)
= P\{\chi^2_1 \leq t \mid \chi^2_1 \leq \kappa_{\tau}^2\}=
\frac{2\Phi(\sqrt t)-1}
{2\Phi(\kappa_\tau)-1}$ where $\chi^2_1$ is a chi-squared random variable of degree 1, and
$F_{\kappa_\tau}^{*0}(t)=I\{t\ge0\}$. The conditional subtest used when $I_{\mathrm{neg}}=1$ is
\begin{align}
\psi_n^{\ell,-}(\gamma)
=
I\left\{
T_n^{\ell,-}
>
c_{\mathcal I_0}^{\ell,-}
(1-\gamma,\hat\Omega_n)
\right\}.
\label{def:conditional-test-neg}
\end{align}

Second, suppose that $I_{\mathrm{pos}}=1$. In this case, at least one coordinate has been screened as positive, no coordinate has been screened as negative, and the signs of the coordinates in $\mathcal I_0$ remain unresolved. The relevant reduced null is
\[
H^+_{0}:
\mu_j(P)\ge0
\qquad
\text{for all }j\in\mathcal I_0.
\]
By symmetry, evidence of a negative mean among the unresolved coordinates is sufficient for rejection. Step II therefore uses the sign-reversed statistic
\begin{align}
T_n^{\ell,+}
=
T^{\ell,-}
\left(
-\sqrt n\,S_n^{-1}\bar W_n,
\hat\Omega_n;
\mathcal I_0
\right),
\qquad
\ell\in\{m,s,q\}.
\label{def:positive-branch-test-statistic}
\end{align}
The corresponding critical value is
\begin{align}
c_{\mathcal I_0}^{\ell,+}
(1-\gamma,\hat\Omega_n)
=
c_{\mathcal I_0}^{\ell,-}
(1-\gamma,\hat\Omega_n),
\label{def:cv-conditional-pos}
\end{align}
and the conditional subtest used when $I_{\mathrm{pos}}=1$ is
\begin{align}
\psi_n^{\ell,+}(\gamma)
=
I\left\{
T_n^{\ell,+}
>
c_{\mathcal I_0}^{\ell,+}
(1-\gamma,\hat\Omega_n)
\right\}.
\label{def:conditional-test-pos}
\end{align}

Lastly, suppose that all coordinates fall within the screening region and $I_{\mathrm{origin}}=1$. Because Step I provides no directional information in this case, neither branch of the null hypothesis can be ruled out. Step II therefore retains the original statistic $T_n^\ell$ in \eqref{def:test-statistic} but constructs its critical value conditional on the event that all coordinates lie within the screening threshold. For $Z_\theta\sim N(\theta_1,\theta_2)$, define
\begin{align}
L^{\ell,0}
(t,\theta_1,\theta_2)
=
P\left\{
T^\ell(Z_\theta,\theta_2)\le t
\ \middle|\
|Z_{\theta,j}|
\le
\kappa_\tau(\theta_2)
\quad
\text{for all }j\in[k]
\right\}.
\label{def:Lzero-selective}
\end{align}
The least favorable conditional critical value is
\begin{align}
c^{\ell,0}
(1-\gamma,\hat\Omega_n)
=
\sup\left\{
(L^{\ell,0})^{-1}
(1-\gamma,\theta_1,\hat\Omega_n):
\theta_1\in\mathbf R_+^k
\right\}.
\label{def:cv-conditional-origin}
\end{align}
where we use that the supremum over $\mathbf R_+^k\cup\mathbf R_-^k$ is equal to the supremum over
$\mathbf R_+^k$ by sign symmetry. When $\hat\Omega_n=I_k$, it simplifies to
\begin{align}
c^{\ell,0}(1-\gamma,I_k)
=
\begin{cases}
-\Phi^{-1}\!\left[
\Phi(\kappa_\tau)
-
\{2\Phi(\kappa_\tau)-1\}
(1-\gamma)^{1/(k-1)}
\right],
& \ell=m,\\[6pt]
\textstyle
\inf\left\{
t:
\min_{a=0,\ldots,k}
F_{k,\kappa_\tau}^{(a)}(t)
\ge1-\gamma
\right\},
& \ell=s,q,
\end{cases}
\label{def:cv-conditional-origin-independence}
\end{align}
where, for $a=0,\ldots,k$,
\begin{align}
F_{k,\kappa_\tau}^{(a)}(t)
=
\textstyle\sum_{r=0}^{k-a}
{k-a\choose r}
2^{-(k-a)}
P\left\{
\min\left[
a\kappa_\tau^2
+
\textstyle\sum_{i=1}^{k-a-r}U_i',
\ 
\textstyle\sum_{i=1}^{r}U_i
\right]
\le t
\right\}.
\label{def:F-origin-conditional}
\end{align}
Here, $U_i$ and $U_i'$ are i.i.d. with CDF $F_{\kappa_\tau}$. The conditional subtest used when $I_{\mathrm{origin}}=1$ is
\begin{align}
\psi_n^{\ell,0}(\gamma)
=
I\left\{
T_n^\ell
>
c^{\ell,0}
(1-\gamma,\hat\Omega_n)
\right\}.
\label{def:phizero-selective}
\end{align}

\paragraph{Conditional Test}
Given a significance level $\alpha\in(0,\tfrac12)$ and a tuning parameter $\tau\in(0,\alpha)$, the conditional test is defined by
\begin{align}
\psi_n^\ell
=
I_{\mathrm{alt}}
+
I_{\mathrm{neg}}\,
\psi_n^{\ell,-}(\alpha-\tau)
+
I_{\mathrm{pos}}\,
\psi_n^{\ell,+}(\alpha-\tau)
+
I_{\mathrm{origin}}\,
\psi_n^{\ell,0}(\alpha-\tau),
\qquad
\ell\in\{m,s,q\}.
\label{def:conditional-test}
\end{align}
If $I_{\mathrm{null}}=1$, $\psi_n^\ell$ is set to zero and the test does not reject. If $I_{\mathrm{alt}}=1$, $\psi_n^\ell$ is set to one and the test rejects. If $I_{\mathrm{neg}}$, $I_{\mathrm{pos}}$, or $I_{\mathrm{origin}}$ equals one, the test is set equal to the corresponding conditional subtest at level $\alpha-\tau$.

One contribution of our paper lies in deriving the tractable representations of the conditional critical values in \eqref{def:cv-conditional-neg} and \eqref{def:cv-conditional-origin-independence}. The simplifications under $\hat\Omega_n=I_k$ follow from independence under the conditioning event. When $I_{\mathrm{neg}}=1$ or $I_{\mathrm{pos}}=1$, the coordinates in $\mathcal I_0$ are independent Gaussian variables truncated to $[-\kappa_\tau,\kappa_\tau]$, while the screened coordinates factor out of the conditional law. Stochastic monotonicity of the truncated Gaussian family, as established by \cite{lee2016AoS-post-selection-LASSO}, then implies that the least favorable conditional distribution under the reduced one-sided null is attained at the origin; see Proposition \ref{prop:conditional-lf-pos-branch} for details. Deriving the conditional critical value for the origin case $I_{\mathrm{origin}}=1$ is more involved. For $\ell=m$, the supremum in \eqref{def:cv-conditional-origin} is approached along mean vectors $\theta_1=(r,0,...,0)$ as $r\to\infty$. The key argument establishes Schur concavity first for $k=2$ and then extends it to arbitrary $k\geq2$ by induction; see Proposition \ref{prop:conditional-lf-origin-m} and its supporting lemmas. For $\ell=s,q$, a coordinatewise endpoint argument reduces the least favorable search to mean vectors whose coordinates are either zero or diverge in a common direction. If exactly $a$ coordinates diverge, their conditional distributions concentrate at the truncation boundary $\kappa_\tau$, while the remaining $k-a$ centered truncated coordinates generate the distribution $F_{k,\kappa_\tau}^{(a)}$. Taking the minimum over $a=0,\ldots,k$ yields the lower-envelope CDF in \eqref{def:cv-conditional-origin-independence}; see Proposition \ref{prop:conditional-lf-origin-sq} for details.

To the best of our knowledge, the conditional critical values in \eqref{def:cv-conditional-neg-general} and \eqref{def:cv-conditional-origin}, as well as their simplified forms in \eqref{def:cv-conditional-neg} and \eqref{def:cv-conditional-origin-independence}, are novel. The two-step approach in the presence of a nuisance parameter has been studied in statistics \citep{BergerBoos1994JASA,silvapulle1996JASA} and have also appeared in econometrics \citep*{StaigerStock1994ECTA,Chernozhukov2013intersection,McCloskey2017JoE,RomanoShaikhWolf2014ECTA}. Furthermore, the strategy of improving power against certain alternatives by incorporating information about the nuisance parameter has been extensively exploited in one-sided moment inequality tests; examples include \cite{Hansen2005JBES}, \cite{AndrewsSoares2010ECTA}, \cite{bugni2010ECTA}, \cite{Canay2010JoE}, \cite{andrewsbarwick2012ecta}, and \cite*{chernozhukov2019RES}, among others. Our conditional test draws particular inspiration from \citet*{RomanoShaikhWolf2014ECTA}, who construct a confidence set for the population mean in the first step and apply a least favorable procedure over that confidence set in the second step. Our procedure differs in three substantive respects. First, Step I classifies signs rather than constructing a confidence set. Second, both the statistic and its matching critical value depend on the realized screening outcome. Third, and most importantly, the second-step rejection probability is controlled conditional on that screening outcome.

\begin{remark}
For a general sample correlation $\hat\Omega_n$, the conditional critical values do not admit tractable closed-form expressions and must typically be evaluated numerically. When $\hat\Omega_n$ is not positive definite, the regularization convention in Remark \ref{remark:regularized-correlation} is applied in both the statistic and the conditional critical value for $\ell=q.$
\end{remark}

\subsection{Uniform Asymptotic Validity of the Conditional Test}\label{subsec:conditional-test-result}

Our second main result establishes the uniform asymptotic validity of the conditional test in \eqref{def:conditional-test}. Specifically, the maximal rejection probability over the null class is asymptotically bounded by the nominal significance level $\alpha$. Equivalently, for every $\varepsilon>0$, the worst-case null rejection probability is no greater than $\alpha+\varepsilon$ for all sufficiently large $n$.
 
\begin{theorem}\label{thm:conditional-uniform-validity}
Let $W_i, i=1, \ldots, n$, be an i.i.d. sequence of random vectors with distribution $P \in \mathbf{P}$ on $\mathbf{R}^k$. Suppose $\mathbf{P}$ satisfies the following uniform integrability condition in \eqref{def:uniform-integrability}. Then, for $\alpha\in(0,\tfrac12)$, $\tau\in(0,\alpha)$, and $\ell\in\{m,s\}$, the conditional test $\psi^{\ell}_n$ in \eqref{def:conditional-test} satisfies the uniform asymptotic validity, i.e.,
\begin{align}\label{def:conditional-test-uniform-validity}
    \limsup_{n\to\infty} \sup_{P \in \mathbf{P}_0} \mathbb{E}_P [\psi^{\ell}_n] \leq \alpha.
\end{align}
Furthermore, if the smallest eigenvalue of $\Omega(P)$ is bounded away from zero uniformly over $\mathbf{P}$,
$\inf_{P \in \mathbf{P}} \lambda_{\min}(\Omega(P))>0$, then \eqref{def:conditional-test-uniform-validity} holds for $\ell=q.$
\end{theorem}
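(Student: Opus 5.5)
The plan is to prove the result by the usual subsequence argument. Suppose \eqref{def:conditional-test-uniform-validity} fails. Then there exist $\varepsilon>0$, a subsequence $n_r$, and null distributions $P_{n_r}\in\mathbf P_0$ with $\mathbb E_{P_{n_r}}[\psi^\ell_{n_r}]>\alpha+\varepsilon$. By symmetry of the construction under $W\mapsto -W$ (which swaps $\mathcal I_+$ and $\mathcal I_-$, and swaps $\psi^{\ell,-}$ and $\psi^{\ell,+}$), we may assume $\mu(P_{n_r})\in\mathbf R^k_-$.

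Passing to further subsequences, we may take $\Omega(P_{n_r})\to\Omega^*\in\bar{\mathbf O}$ and $\sqrt{n_r}\mu_j(P_{n_r})/\sigma_j(P_{n_r})\to h_j\in[-\infty,0]$ for each $j$. The uniform integrability condition \eqref{def:uniform-integrability} gives, via the uniform Lindeberg CLT and uniform LLN used for Theorem \ref{thm:LFtest-uniform-validity}, that $\sqrt n S_n^{-1}(\bar W_n-\mu(P_n))\Rightarrow N(0,\Omega^*)$, $S_{j,n}/\sigma_j(P_n)\to1$, and $\hat\Omega_n\to\Omega^*$ in probability. Hence $X_n:=\sqrt n S_n^{-1}\bar W_n$ converges jointly to $X=Z_h$, with coordinates $j$ having $h_j=-\infty$ diverging to $-\infty$. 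For $\ell=q$ the eigenvalue condition keeps $\Omega^*$ nonsingular, so $T^{q}$ and $T^{q,-}$ are continuous in $\Omega$. The arguments that establish existence and continuity of the LF critical values (Lemmas \ref{lem:parametric-cv-existence}--\ref{lem:argmax-critical-value-all}) should extend to the conditional quantities. Specifically, we need $\kappa_\tau(\hat\Omega_n)\to\kappa_\tau(\Omega^*)$, and $\liminf_n c^{\ell,-}_{\mathcal I}(1-\gamma,\hat\Omega_n)\ge c^{\ell,-}_{\mathcal I}(1-\gamma,\Omega^*)$ in probability, and likewise for $c^{\ell,0}$. Only this lower semicontinuity is needed, since the suprema are over $\theta_1$.

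Next, decompose the rejection probability over the partition of Step I. The screening events $\{\mathcal I_+=A,\mathcal I_-=B\}$ are rectangles whose boundaries carry zero probability in the limit whenever $h_j$ is finite. Coordinates with $h_j=-\infty$ lie eventually in $\mathcal I_-$ with probability tending to one. Taking limits gives
\[
\limsup \mathbb E_{P_n}[\psi_n^\ell]\le P\{\exists j: X_j>\kappa\}+\sum_{B\neq\varnothing,\,\mathcal I_0=[k]\setminus B}P\{E_{B}\}\,P\{T^{\ell,-}(X;\mathcal I_0)>c^{\ell,-}_{\mathcal I_0}\mid E_B\}+P\{E_\varnothing\}\,P\{T^\ell(X)>c^{\ell,0}\mid E_\varnothing\},
\]
where $\kappa=\kappa_\tau(\Omega^*)$, $E_B$ is the limiting event $\{\mathcal I_-=B,\mathcal I_0=[k]\setminus B\}$, and $E_\varnothing=\{|X_j|\le\kappa\ \forall j\}$. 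The first term covers $I_{\mathrm{alt}}$ and $I_{\mathrm{pos}}$, both of which require some coordinate to be screened positive.

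Three facts bound the right-hand side. First, since $h\le0$, the first term is at most $P\{\max_j(\Omega^{*1/2}Z)_j>\kappa\}\le\tau$. Second, when $h$ is finite, the conditional law given $E_B$ equals $L^{\ell,-}_{\mathcal I_0}(\cdot,h,\Omega^*)$ with $h\in\mathbf R^k_-$. By the definition \eqref{def:cv-conditional-neg-general} of the critical value as a supremum over $\mathbf R^k_-$, each conditional probability is then at most $\alpha-\tau$. Third, the origin case is bounded in the same way using \eqref{def:cv-conditional-origin}. When some $h_j=-\infty$, the conditional law is the limit of $L^{\ell,-}_{\mathcal I_0}(\cdot,\theta_1^{(m)},\Omega^*)$ along $\theta_1^{(m)}\in\mathbf R^k_-$ with those coordinates sent to $-\infty$. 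The bound still holds because the supremum dominates such limits, provided the quantile map is handled at continuity points. The three facts together give a total of at most $\tau+(\alpha-\tau)\sum_{B}P\{E_B\}\le\alpha$, which contradicts the assumed $\alpha+\varepsilon$.

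\textbf{Main obstacle.} The hardest step is the conditioning on events whose limiting probability may be zero or vanish. If $P\{E_B\}\to0$, the corresponding term is simply negligible, so conditioning is only needed when $P\{E_B\}>0$. That case is fine for finite $h$, where the conditioning rectangle has positive Gaussian mass. The delicate points are the following.
\begin{itemize}
\item We need continuity in distribution of $T^\ell(X_n)$ at the critical value, where atoms of $T^{s}$ at $0$ or of the truncated laws could cause trouble. I would handle this by using $\gamma<\tfrac12$, so that the quantiles are positive and the limiting CDFs are continuous there.
\item We need the lower semicontinuity of the conditional LF critical values in $\theta_2$. I expect to prove this by showing that $L^{\ell,-}$ and $L^{\ell,0}$ are jointly continuous in $(t,\theta_1,\theta_2)$ at fixed $\theta_1$ and positive $t$. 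It then suffices to evaluate the supremum at a near-maximizing fixed $\theta_1$.
\end{itemize}
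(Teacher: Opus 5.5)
Your proof is correct in outline and has the same architecture as the paper's proof:
\begin{itemize}
\item contradiction along a subsequence;
\item reduction to one orthant by $W\mapsto -W$;
\item $\Omega(P_n)\to\Omega^*$, with extended limits of the standardized drifts;
\item a $\tau$ bound on opposite-direction screening, using $h\le0$ and the definition of $\kappa_\tau(\Omega^*)$;
\item an $\alpha-\tau$ bound on each conditional subtest, summed over disjoint screening events of asymptotically positive probability.
\end{itemize}
You differ only in how you link the data-dependent critical value to the limiting conditional law. You pass through $c^{\ell,-}_{\mathcal I_0}(1-\gamma,\Omega^*)$ via lower semicontinuity in $\theta_2$, and you treat divergent coordinates by a separate limit through the supremum. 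The paper instead notes that the realized standardized drift $\delta_n$ (truncated at zero) is itself admissible in the supremum at the realized $\hat\Omega_n$. Hence $c^{\ell,-}_{\mathcal I_0}(1-\gamma,\hat\Omega_n)\ge (L^{\ell,-}_{\mathcal I_0})^{-1}(1-\gamma,\delta_n,\hat\Omega_n)$, and the paper shows this oracle quantile converges to the quantile of the limiting conditional law (Lemma \ref{lem:branchwise-conditional-approx}). That route needs no regularity of the critical value in $\theta_2$ and absorbs the divergent coordinates automatically. It also never evaluates the critical value at a possibly singular $\Omega^*$, where some conditioning rectangles are null, so your $c(\cdot,\Omega^*)$ would need care even to define. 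In fact your argument only uses the comparison at (or along a sequence approaching) $h$, which amounts to the paper's step.

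The one loose step is your claim that $\gamma<\tfrac12$ makes the relevant quantiles positive. This is not true for the conditional quantile at the drift itself. If $h_j\ll 0$ for $j\in\mathcal I_0$, the truncated coordinates concentrate near $-\kappa$, so $T^{s,-}=0$ with conditional probability close to one even though $P\{E_B\}>0$. For the supremum, positivity would need a separate proof, notably for $c^{s,0}$ under a general $\Omega^*$. You do not need it. Split, as the paper does, on whether the limiting conditional quantile is zero. If it is, rejection for $\ell\in\{s,q\}$ still requires the statistic to be strictly positive, because the critical value is nonnegative. Then $P\{T_n^{\ell,-}>0\mid E_{B,n}\}\to P\{T^{\ell,-}>0\mid E_B\}\le\gamma$, because the zero set of the statistic is an orthant-type cone whose boundary lies in $\bigcup_j\{x_j=0\}$ and is therefore null under the limit law. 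The origin case works the same way with $\mathbf R^k_+\cup\mathbf R^k_-$.
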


Theorem \ref{thm:conditional-uniform-validity} imposes the same distributional conditions as Theorem \ref{thm:LFtest-uniform-validity}. Theorems \ref{thm:LFtest-uniform-validity} and \ref{thm:conditional-uniform-validity} are the first results to establish uniform asymptotic validity for sign-agreement tests over such a broad class of nonparametric distributions. Sections \ref{subsec:LF-approach} and \ref{subsec:LF-result} discuss these assumptions in detail and compare them with those imposed by existing sign-agreement tests.

The validity argument reflects the specific structure of the conditional test. Each of the two steps of the conditional test may involve errors, and the parameter $\tau$ serves to balance the error rates between them. Explicitly, the tuning parameter $\tau$ controls the probability of a false directional screening. For example, under the nonnegative branch of the null, $\mu(P)\in\mathbf R_+^k$, any realization of $I_{\mathrm{alt}}=1$ or $I_{\mathrm{neg}}=1$ contains at least one negative screening. The probability of such an opposite-direction screening is asymptotically bounded by $\tau$. The screening-specific conditional subtests in Step II are calibrated at level $\alpha-\tau$. Because the relevant screening events are mutually exclusive, their combined contribution to the rejection probability is asymptotically bounded by $\alpha-\tau$. Thus, a Bonferroni-type decomposition gives the overall asymptotic size bound $\alpha$. 

Researchers must choose the tuning parameter $\tau$ to implement the conditional test. Although every fixed choice $\tau\in(0,\alpha)$ satisfies Theorem \ref{thm:conditional-uniform-validity}, finite-sample power may depend on this choice. Increasing $\tau$ lowers the screening threshold $\kappa_\tau(\hat\Omega_n)$ and therefore makes Step I more likely to resolve signs, but it also reduces the level $\alpha-\tau$ allocated to the conditional subtests in Step II. Decreasing $\tau$ has the opposite effects: it makes directional screening more conservative while allocating more rejection probability to Step II. Determining an optimal choice of $\tau$ is beyond the scope of this article. In the simulation designs considered below, we recommend $\tau=\alpha/10$. The simulation results indicate that larger values of $\tau$ tend to reduce power, whereas reducing $\tau$ below $\alpha/10$ produces little additional improvement.

\section{Power Comparison}\label{sec:power-comparison}
In this section, we compare the rejection probabilities of our two proposed tests, the least favorable (LF) test and the conditional test. Because both tests are consistent against every fixed alternative, we focus on their rejection probabilities under local alternatives. For tractability, we restrict attention to the case in which the population correlation matrix is known and equal to $I_k$. We compare the LF test based on the parametric least favorable critical values in \eqref{def:LF-cv-independent} with the conditional test based on the critical values in \eqref{def:cv-conditional-neg} and \eqref{def:cv-conditional-origin-independence}.

For $A,B\geq0$ and integers $h,r\geq0$ satisfying $h+r\leq k$, consider the sequence of $P_n \in \mathbf{P}$ such that $\Omega(P_n) = I_k$ and 
\begin{align}\label{eq:local-alternative}
\left(
\frac{\sqrt n\mu_1(P_n)}{\sigma_1(P_n)},\ldots,
\frac{\sqrt n\mu_k(P_n)}{\sigma_k(P_n)}
\right)
\to
\theta^{A,B,h,r}
=
(\underbrace{A,\ldots,A}_{h\text{ times}},
\underbrace{-B,\ldots,-B}_{r\text{ times}},
\underbrace{0,\ldots,0}_{k-h-r\text{ times}})
\end{align}
The most relevant alternatives have $h\geq1$ and $r\geq1$: $h$ coordinates exhibit positive local drift, $r$ coordinates exhibit negative local drift, and the remaining coordinates are locally centered. 
For convenience, rewrite the conditional critical value \eqref{def:cv-conditional-neg} for the negative case subtest as
\begin{align}\label{eq:cv-conditional-onesided-rewritten}
\begin{aligned}
c_d^\ell
&=
c_{\mathcal I}^{\ell,+}(1-\alpha+\tau,I_k)
=
c_{\mathcal I}^{\ell,-}(1-\alpha+\tau,I_k).
\end{aligned}
\end{align}
where $\mathcal I\subset\{1,...,k\}$ with $|\mathcal I|=d$. The following proposition shows that neither procedure uniformly dominates the other. Instead, their relative performance depends on the dimension and on how many coordinates carry strong directional information.

\begin{proposition}\label{prop:power-comparison}
Suppose that the uniform integrability condition in \eqref{def:uniform-integrability} holds over $\mathbf{P}$. Let $P_n\in\mathbf{P}$ be a sequence of local alternatives satisfying \(\Omega(P_n)=I_k\) and \eqref{eq:local-alternative}. For $\alpha\in(0,\tfrac12)$ and $\tau\in(0,\alpha)$, consider the least favorable test \(\phi_n^\ell\) with $c^\ell(1-\alpha,I_k)$ in \eqref{def:LF-cv-independent} and the conditional test \(\psi_n^\ell\) with $c^{\ell}_d$ in \eqref{eq:cv-conditional-onesided-rewritten} and $c^{\ell,0}(1-\alpha+\tau, I_k)$ in \eqref{def:cv-conditional-origin-independence}.
\begin{enumerate}
\item[(i)] If \(k=2\), then, for each \(\ell\in\{m,s,q\}\), the least favorable test is asymptotically at least as powerful as the conditional test
for every fixed 
\(\theta^{A,B,h,r}\in\mathbf R^2\):
\[
\lim_{n\to\infty}\mathbb E_{P_n}[\phi_n^\ell]
\geq
\lim_{n\to\infty}\mathbb E_{P_n}[\psi_n^\ell].
\]

\item[(ii)] Let \(k\ge3\) and \(h=1\). Then, for each \(\ell\in\{m,s,q\}\), there exists \(B_0^\ell>0\)
such that, for every \(r\in\{1,\ldots,k-1\}\) and every \(B\in(0,B_0^\ell]\), the least favorable test is asymptotically more powerful than the conditional test for sufficiently large \(A\):
\[
\lim_{n\to\infty}\mathbb E_{P_n}[\phi_n^\ell]
>
\lim_{n\to\infty}\mathbb E_{P_n}[\psi_n^\ell].
\]

\item[(iii)] Let \(k\ge3\) and \(h\ge2\). For \(\ell=m\), let
\(r\in\{1,\ldots,k-h\}\); for \(\ell\in\{s,q\}\), let \(r=k-h\).
Suppose that
\begin{align}\label{def:cv-comparison}
c_{k-h}^{\ell}<c^{\ell}(1-\alpha,I_k).
\end{align}
Then, for the corresponding \(\ell\), the conditional test is asymptotically more powerful than the least favorable test for sufficiently large \(A\):
\[
\lim_{n\to\infty}\mathbb E_{P_n}[\psi_n^\ell]
>
\lim_{n\to\infty}\mathbb E_{P_n}[\phi_n^\ell].
\]
\end{enumerate}
\end{proposition}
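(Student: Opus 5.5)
Throughout we work with the Gaussian limit experiment. Under the uniform integrability condition \eqref{def:uniform-integrability} and $\Omega(P_n)=I_k$, a uniform CLT and LLN give $\sqrt n S_n^{-1}\bar W_n\Rightarrow X:=Z+\theta$ with $Z\sim N(0_k,I_k)$ and $\theta=\theta^{A,B,h,r}$. All critical values here are fixed constants, because $\hat\Omega_n$ is replaced by the known $I_k$. Every rejection region, for both $\phi^\ell_n$ and the branches of $\psi^\ell_n$, is a finite union of events $\{T>c\}\cap\{\text{screening cell}\}$ whose boundaries have Lebesgue measure zero. The continuous mapping theorem then gives
\[
\lim_n\mathbb E_{P_n}[\phi^\ell_n]=P\{T^\ell(X)>c^\ell(1-\alpha,I_k)\}=: \pi_{LF}(\theta),
\]
and $\lim_n\mathbb E_{P_n}[\psi^\ell_n]=:\pi_C(\theta)$, where $\pi_C$ is computed from the screening cells of $X$ at threshold $\kappa_\tau$. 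The proposition therefore reduces to comparing two Gaussian probabilities.

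\emph{Part (iii), large $A$.} Let $A\to\infty$ with $h\ge2$. Then $P\{X_j>\kappa_\tau,\ j\le h\}\to1$. For the conditional test, with probability tending to one we have $\mathcal I_+\supseteq[h]$. The test rejects outright on $\mathcal I_-\ne\varnothing$, and otherwise applies $\psi^{\ell,+}$ on $\mathcal I_0=\{h+1,\dots,k\}$, which has size $k-h$. Hence
\[
\pi_C\to P\{\exists j>h: X_j<-\kappa_\tau\}+P\{\text{all }|X_j|\le\kappa_\tau,\ j>h,\ T^{\ell,-}(-X;\mathcal I_0)>c^\ell_{k-h}\}.
\]
For the LF test, as $A\to\infty$ the positive part of $T^\ell$ diverges ($\max_j X_j\to\infty$, and the positive sum of squares diverges). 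So $T^m(X)\to\max_{j>h}(-X_j)$ and $T^s(X)\to\sum_{j>h}X_j^2I\{X_j\le0\}$. For $\ell=q$ with identity weight, the projection onto $\mathbf R^k_+$ is the binding one, so $T^q\to T^s$ restricted to the negative parts. Thus
\[
\pi_{LF}\to P\{T^{\ell,-}(-X_{>h})>c^\ell(1-\alpha,I_k)\}.
\]
Write $V=T^{\ell,-}(-X_{>h})$. Split this event according to whether some $X_j<-\kappa_\tau$ occurs. On that event the conditional test rejects with probability one, which is at least the LF rejection probability there. On the complementary event, the conditional test uses the smaller cutoff $c^\ell_{k-h}<c^\ell(1-\alpha,I_k)$ by \eqref{def:cv-comparison}. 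The conditional test therefore weakly dominates pointwise.

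Strictness follows because $\{c^\ell_{k-h}<V\le c^\ell(1-\alpha,I_k),\ |X_j|\le\kappa_\tau\ \forall j>h\}$ has positive Gaussian probability. For $\ell=m$ this needs a negative drift somewhere or simply the full-support Gaussian, so any $r\ge1$ works. For $\ell=s,q$ we need the coordinates outside the first $h$ all to be $-B$ or $0$ consistently with $\mathcal I_0$; the statement sets $r=k-h$ for this reason. For $\ell=q$, one must also check that $T^q$ with identity weight coincides with $T^s$; this holds because the Euclidean projection onto an orthant is coordinatewise. Dominated convergence justifies passing $A\to\infty$ inside, and strict inequality of the limits yields strict inequality for $A$ sufficiently large.

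\emph{Part (ii), $h=1$.} The same $A\to\infty$ limit applies with $\mathcal I_0$ of size $k-1$. The LF limit becomes $P\{V>c^\ell(1-\alpha,I_k)\}$. By \eqref{def:LF-cv-independent}, $c^\ell(1-\alpha,I_k)$ is exactly the one-sided $(1-\alpha)$ quantile in dimension $k-1$ at the origin: the LF configuration is $(\infty,0,\dots,0)$. Let $B\downarrow0$. At $B=0$, $\pi_{LF}\to\alpha$. The conditional limit is $P\{\exists X_j<-\kappa_\tau\}+P\{\text{trunc. event},V>c^\ell_{k-1}\}$. Its value at $B=0$ is $1-(1-\tau')\ldots$; more precisely it is at most $\tau_1+(\alpha-\tau)P\{\text{all in box}\}$, where $\tau_1=1-\Phi(\kappa_\tau)^{k-1}\cdot$(box factor). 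Using $\kappa_\tau=\Phi^{-1}((1-\tau)^{1/k})$, the probability that some one of $k-1$ coordinates falls below $-\kappa_\tau$ is strictly less than $\tau$. Also, $c^\ell_{k-1}$ is calibrated conditionally at the origin (Proposition \ref{prop:conditional-lf-pos-branch}), so the second term equals $(\alpha-\tau)P\{\text{box}\}<\alpha-\tau$. Hence $\pi_C<\alpha=\pi_{LF}$ at $B=0$. Both limits are continuous in $B$, which gives a $B_0^\ell>0$ uniformly over the finitely many $r$.

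\emph{Part (i), $k=2$.} For $k=2$ the LF test rejects on the whole quadrant region at level exactly $\Phi^{-1}(1-\alpha)$ for $\ell=m$ (and its square for $s,q$). That is, it rejects iff $\min(X_1,-X_2)>z_{1-\alpha}$ or $\min(-X_1,X_2)>z_{1-\alpha}$. We will show that the conditional rejection region is a subset of this one. On $I_{\mathrm{alt}}$ we have $\min(X_1,-X_2)>\kappa_\tau$. On $I_{\mathrm{pos}}$ the cutoff is $c^\ell_1$, which from \eqref{def:cv-conditional-neg} satisfies $c^\ell_1\ge$ the needed bound. On $I_{\mathrm{origin}}$, both coordinates are at most $\kappa_\tau$.

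\emph{Main obstacle.} The hard step is comparing $\kappa_\tau$, $c^m_1$ and $c^{m,0}$ with $z_{1-\alpha}$. The inclusion may fail pointwise, since $\kappa_\tau<z_{1-\alpha}$ is possible. In that case the plan is instead to compare the integrated probabilities, by computing both limits explicitly as functions of $\theta$ in the four screening cells and showing the difference is nonnegative via a monotone-likelihood (truncated-normal stochastic monotonicity) argument. I expect this to be the delicate part.
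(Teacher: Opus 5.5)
Your part (i) is not proved. The rejection-region inclusion you sketch is the right strategy, and it is the paper's. Its entire content, however, is an ordering of critical values, which you leave unverified and even suggest may fail. It cannot fail. For $k=2$, $\Phi(\kappa_\tau)=\sqrt{1-\tau}>1-\tau>1-\alpha$, so $\kappa_\tau>\Phi^{-1}(1-\alpha)=c^m(1-\alpha,I_2)$. With $\gamma=\alpha-\tau<\tfrac12$, the closed form gives $\Phi(c^m_1)=\Phi(\kappa_\tau)(1-2\gamma)+\gamma>(1-\tau)(1-2\gamma)+\gamma=1-\alpha+2\tau\gamma$, so $c^m_1>c^m(1-\alpha,I_2)$.

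You also need the origin-case cutoff, which your remark that both coordinates are at most $\kappa_\tau$ does not address. For $\ell=m$, the closed form with $k-1=1$ gives $c^{m,0}=c^m_1$. For $\ell=s,q$, you must check that the lower envelope $\min_a F^{(a)}_{2,\kappa_\tau}$ is attained at $a=1$, i.e.\ equals $\tfrac12+\tfrac12F_{\kappa_\tau}$. This holds because $F^{(0)}_{2,\kappa_\tau}-F^{(1)}_{2,\kappa_\tau}=\tfrac12F_{\kappa_\tau}(1-F_{\kappa_\tau})\ge0$ and $F^{(2)}_{2,\kappa_\tau}=1$. Hence $c^{s,0}=c^s_1=(c^m_1)^2>(c^m(1-\alpha,I_2))^2=c^s(1-\alpha,I_2)$. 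With these facts, every conditional rejection (on $I_{\mathrm{alt}}$, $I_{\mathrm{pos}}$, $I_{\mathrm{neg}}$ or $I_{\mathrm{origin}}$) is an LF rejection. Your fallback via integrated probabilities is neither needed nor actually given.

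In part (iii) you assume that once the first $h$ coordinates are screened positive and none is screened negative, $\mathcal I_0=\{h+1,\ldots,k\}$. That is false: some remaining coordinates may exceed $\kappa_\tau$, leaving $|\mathcal I_0|=d<k-h$. Your limit formula for $\pi_C$ omits these events. Your pointwise dominance therefore does not cover $\{V>c^\ell(1-\alpha,I_k),\ X_j\ge-\kappa_\tau\ \forall j>h,\ X_j>\kappa_\tau \text{ for some } j>h\}$, which has positive probability when $k-h\ge2$.

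To close this you need two things:
\begin{enumerate}
\item[(a)] $c^\ell_d$ is weakly increasing in $d$. The paper couples the conditional laws by appending independent truncated coordinates, which can only increase the max or the nonnegative sum.
\item[(b)] On these events $T^{\ell,+}$ equals $V$. Positively screened coordinates contribute nothing to the sum, and since $c^m(1-\alpha,I_k)>0>-\kappa_\tau$, they cannot attain the max.
\end{enumerate}

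In part (ii) the same oversight is harmless. At the origin the conditional subtest has conditional level $\gamma$ on every screening configuration, so the $I_{\mathrm{pos}}$ contribution is at most $\gamma P\{I_{\mathrm{pos}}=1\}$ whatever $|\mathcal I_0|$ is. If you read ``box'' as the whole event $\{I_{\mathrm{pos}}=1\}$, your bound goes through.
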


Proposition \ref{prop:power-comparison} highlights three distinct power regimes. First, when $k=2$, conditioning does not yield a local asymptotic power gain: the LF test weakly dominates the conditional test for every fixed local direction. Second, this lack of dominance is not confined to the two-dimensional case. When $k\geq3$ but only one coordinate has a strongly positive local drift, the LF test remains strictly more powerful when the negative drifts are sufficiently small. In this regime, screening out a single strongly positive coordinate does not provide enough dimensional reduction to compensate for the smaller second-step level $\alpha-\tau$ and the effect of conditioning. 

Finally, the conditional test can become more powerful when at least two coordinates provide strong positive directional information. As $A\to\infty$, these $h$ coordinates are screened as positive with
probability approaching one, and the relevant conditional subtest is
effectively based on the remaining $k-h$ coordinates. Condition
\eqref{def:cv-comparison} requires the resulting $(k-h)$-dimensional
conditional critical value to be smaller than the LF critical value. This ordering is a convenient sufficient condition, but it is not necessary for the conditional test to be more powerful. In the proof, $c_{k-h}^{\ell}$ is used to construct a convenient subset of the conditional test's rejection event, so the resulting probability is only a lower bound on its limiting rejection probability. The conditional test may also reject under other screening configurations, including configurations in which fewer than $k-h$ coordinates remain unresolved and the applicable critical value is weakly smaller than $c_{k-h}^{\ell}$. Consequently, failure of \eqref{def:cv-comparison} does not imply that the LF test is more powerful.

Condition \eqref{def:cv-comparison} is straightforward to verify
numerically. For example, when $\alpha=0.05$ and $\tau=0.005$, it holds
for $\ell=m,s,q$ for
\[
(k,h)=(3,2),(4,2),(4,3),(5,2),(5,3),(5,4),
(6,2),(6,3),(6,4),(6,5).
\]
For the max statistic, this condition yields the desired power comparison
for every admissible $r$. For the sum of squares and quasi-likelihood ratio statistics, the critical-value ordering alone is sufficient when $r=k-h$.

\section{Simulation Study}\label{sec:MonteCarlo-Simulation}
This section examines the finite-sample performance of the least favorable (LF) test and the conditional test. We evaluate their maximum null rejection probabilities (MNRPs) and power at the significance level $\alpha=0.05$. Throughout, the sample size is $n=250$, the tuning parameter for the conditional test is $\tau=\alpha/10=0.005$, and the number of parameters is $k\in\{3,5,8\}$. For the power comparisons, we additionally report intersection--union adaptations of the one-sided moment-inequality procedures of \citet{Cox/Shi:2023} (CS) and \citet*{RomanoShaikhWolf2014ECTA} (RSW). Each procedure is applied in both directions, and sign agreement is rejected only when both one-sided null hypotheses are rejected. The RSW tuning parameter is set at $\beta=0.005$.

The simulation study uses normally distributed observations. For the MNRP exercise, observations are generated as $W_i\sim N_k(\mu,\Omega)$, $i=1,...,n$, and the tests are applied to the studentized sample mean $\sqrt n S_n^{-1}\bar W_n$ together with the sample correlation matrix $\hat\Omega_n$. We consider three correlation matrices, denoted by $\Omega_{\mathrm{Pos}}$, $\Omega_{\mathrm{Zero}}$, and $\Omega_{\mathrm{Neg}}$. The matrix $\Omega_{\mathrm{Zero}}$ is the identity matrix $I_k$. The other two matrices are Toeplitz matrices whose $(r,s)$ element is $\rho^{|r-s|}$, with $\rho=0.9$ for $\Omega_{\mathrm{Pos}}$ and $\rho=-0.9$ for $\Omega_{\mathrm{Neg}}$.

For general $\hat\Omega_n$, the LF critical values are calculated numerically. Since the nuisance-parameter space $\mathbf R_+^k$ is unbounded, we approximate it by the filled simplex
$\Theta_k
=
\{
\theta_1\in\mathbf R_+^k:
\textstyle\sum_{j=1}^k\theta_{1,j}\leq 10+k
\}.$
The candidate points are chosen to cover both concentrated and dispersed configurations of the nuisance mean. First, the origin is included. Second, along each coordinate axis, we consider $\theta_1=r e_j$ for
$r\in\{0.5,1,2,4,8,10+k\}$ and $j=1,\ldots,k$, where $e_j$ denotes the $j$th unit vector. These points allow the search to cover configurations ranging from small shifts near the origin to large shifts concentrated in a single coordinate. Third, additional points are generated from a Dirichlet distribution and rescaled to satisfy
$\sum_{j=1}^k\theta_{1,j}=10+k$. These points lie on the outer face of the simplex and allow the search to consider configurations in which the nuisance mean is distributed across several coordinates rather than concentrated along a coordinate axis. A first Monte Carlo step evaluates the critical value over all candidate points, after which the candidates producing the largest values are reevaluated using a larger Monte Carlo sample.

The general-correlation conditional critical values are computed in a similar two-step manner. For the origin case, the nuisance-parameter search uses the same type of simplex approximation. In the positive and negative cases, the unresolved coordinates are instead kept at the boundary value zero while increasingly large shifts are assigned to the screened coordinates. The Gaussian distribution is simulated conditional on the realized screening event using a weighted sequential-conditioning method. Thus, in the MNRP exercise, both the screening threshold and the applicable critical value are recomputed using the realized $\hat\Omega_n$ in each simulation repetition.


The empirical MNRPs are calculated over a collection of null mean vectors of the form
\begin{align}\label{eq:simulation-null-grid-style}
\mu_{j,M}
=
\frac{1}{\sqrt n}
(\underbrace{M,\ldots,M}_{j},0,\ldots,0)',
\qquad
j=0,\ldots,k,
\qquad
M\in\{5,10,15,20\}.
\end{align}
All of these mean vectors satisfy the nonnegative component of the null hypothesis. By sign symmetry, it is unnecessary to separately examine their nonpositive counterparts. For each correlation design and value of $k$, the empirical MNRP is obtained from the largest rejection frequency among the null configurations considered.


\begin{table}[htbp]
\centering
\caption{Empirical maximum null rejection probabilities (\%)}
\label{tab:mnrp-normal}
\begin{tabular}{lccccccccc}
\toprule
& \multicolumn{3}{c}{$k=3$}
& \multicolumn{3}{c}{$k=5$}
& \multicolumn{3}{c}{$k=8$} \\
\cmidrule(lr){2-4}
\cmidrule(lr){5-7}
\cmidrule(lr){8-10}
Test
& $\Omega_{\mathrm{Pos}}$
& $\Omega_{\mathrm{Zero}}$
& $\Omega_{\mathrm{Neg}}$
& $\Omega_{\mathrm{Pos}}$
& $\Omega_{\mathrm{Zero}}$
& $\Omega_{\mathrm{Neg}}$
& $\Omega_{\mathrm{Pos}}$
& $\Omega_{\mathrm{Zero}}$
& $\Omega_{\mathrm{Neg}}$ \\
\midrule
LF max
& 5.60 & 5.50 & 5.44
& 6.30 & 5.75 & 5.30
& 5.95 & 5.75 & 5.95 \\

LF sum
& 5.10 & 5.35 & 5.80
& 5.36 & 5.30 & 6.00
& 5.80 & 5.20 & 5.65 \\

LF QLR
& 5.70 & 5.50 & 5.68
& 6.25 & 5.24 & 5.35
& -- & -- & -- \\
\addlinespace

Conditional max
& 5.45 & 5.35 & 5.44
& 5.18 & 5.45 & 5.60
& 5.85 & 5.75 & 5.95 \\

Conditional sum
& 5.20 & 5.20 & 5.44
& 5.22 & 5.40 & 5.50
& 5.95 & 5.20 & 5.72 \\

Conditional QLR
& 5.35 & 5.25 & 5.54
& 5.05 & 5.34 & 6.00
& -- & -- & -- \\
\bottomrule
\end{tabular}
\end{table}
Table~\ref{tab:mnrp-normal} displays the empirical MNRPs of the six proposed tests. Overall, the rejection probabilities are close to the nominal significance level. Across the reported designs, they range from approximately $5.0\%$ to $6.3\%$. The largest rejection probability is observed around $6.3\%$ for $k=5$ under $\Omega_{\mathrm{Pos}}$, while the rejection probabilities for $k=8$ remain below approximately $6\%$ for all six procedures. The empirical MNRPs do not increase systematically with $k$, nor is the mild over-rejection concentrated in either the LF or the conditional test.

These results are broadly consistent with Theorems~\ref{thm:LFtest-uniform-validity} and~\ref{thm:conditional-uniform-validity}, which establish uniform asymptotic size control for the two tests. At the same time, the theorems do not imply exact finite-sample rejection probability of $5\%$, especially when the critical values are obtained numerically. Several features of the simulation can contribute to the small amount of over-rejection in Table~\ref{tab:mnrp-normal}: finite-sample studentization, estimation of the correlation matrix, Monte Carlo error in the critical values, and numerical approximation of the least favorable nuisance-parameter search. In addition, an empirical MNRP is itself a maximum over estimated rejection probabilities, so taking the maximum can produce a small upward Monte Carlo effect. Thus, the figure suggests that the finite-sample calibration is close to the nominal level, although it does not by itself identify the source of the remaining difference from $0.05$.

We next study empirical rejection probabilities under the alternative hypothesis. We restrict attention to $\Omega=I_k$ and $N(0,1)$ observations, and consider $k=3,5,$ and $8$. Each point in Figure~\ref{fig:simulation-power} is based on $10{,}000$ simulation repetitions with $n=250$. In addition to the LF and conditional tests based on the max and QLR statistics, we report the Cox--Shi test (CS QLR) and the two-step RSW procedure of based on the max and QLR statistics (RSW max and RSW QLR). For comparability, RSW is implemented using Gaussian parametric critical values rather than the nonparametric bootstrap. To relate the simulations to Proposition~\ref{prop:power-comparison}, note that the values of $A$ and $B$ reported in the figure are raw means, whereas the proposition is stated in terms of standardized local drifts. Thus, a raw mean $A$ corresponds approximately to $\sqrt n A$ in the proposition. 

We consider three designs. Panel A sets $\mu=(A,-0.1,0,\ldots,0)',$ where only one coordinate has a positive mean. Panel B instead sets $\mu=(A,\ldots,A,-0.1)',$ so that $k-1$ coordinates have positive means. Finally, Panel C connects the two cases by gradually increasing the number of positive coordinates. For $h=0,\ldots,k-2$, the mean vector alternates between
\[
(\underbrace{A,\ldots,A}_{h},A/2,-B,0,\ldots,0)'
\quad\text{and}\quad
(\underbrace{A,\ldots,A}_{h+1},-B,0,\ldots,0)'.
\]
The horizontal axis in Panel C therefore represents the effective number of positive coordinates.

\begin{figure}[htbp]
    \begin{center}
        \caption{Empirical Alternative Rejection Probabilities}
        \includegraphics[width=\textwidth]{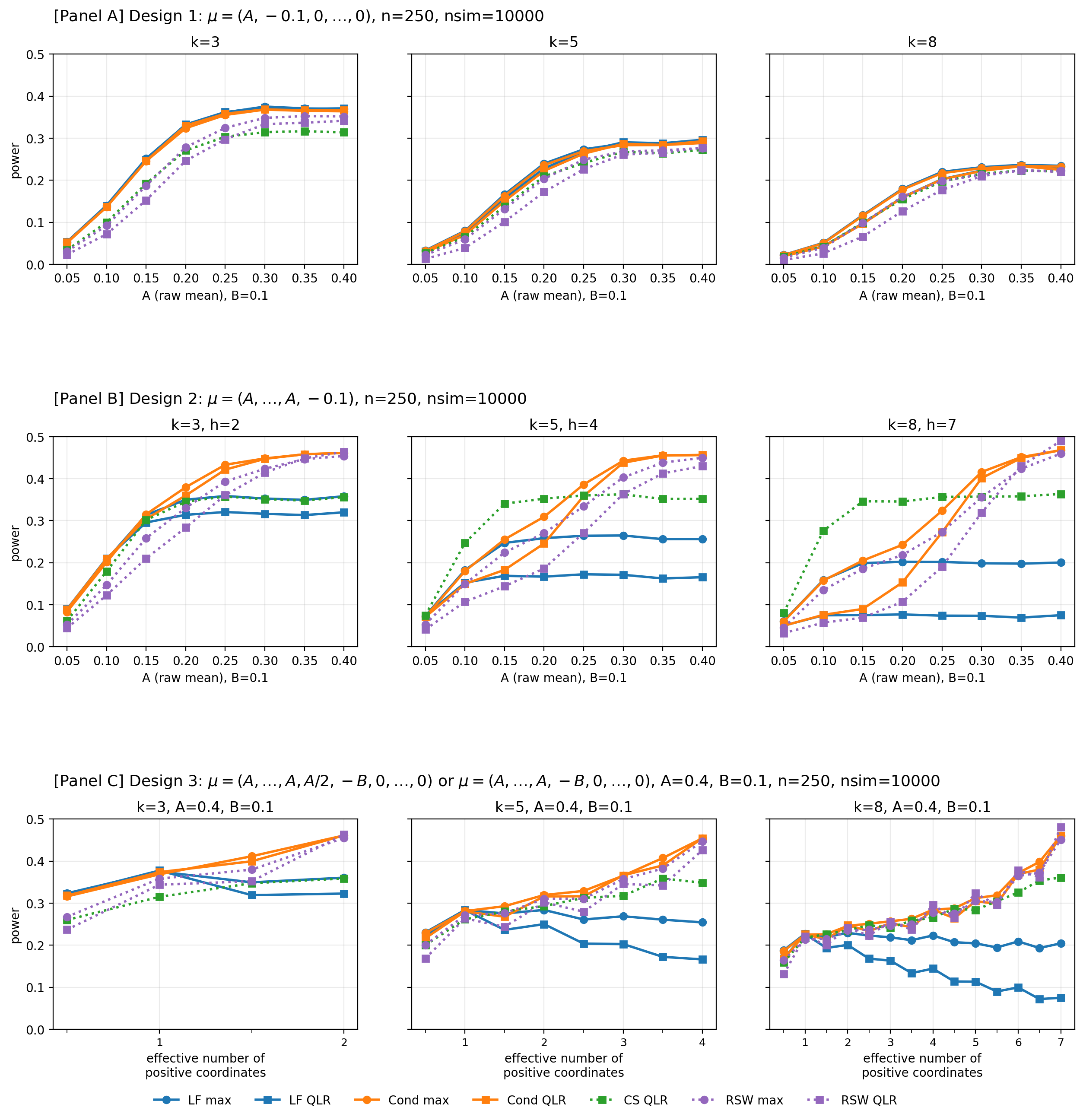}
    \end{center}
    \label{fig:simulation-power}
    \footnotesize{Note: Panel A considers $\mu=(A,-0.1,0,\ldots,0)'$, Panel B considers $\mu=(A,\ldots,A,-0.1)'$, and Panel C gradually increases the number of positive coordinates with $A=0.4$ and $B=0.1$. All observations are generated from $N(\mu,I_k)$ with $n=250$, and each point is based on $10{,}000$ simulation repetitions. Blue and orange solid lines denote the LF and conditional tests, respectively. Green and purple dotted lines denote the CS and RSW procedures. Circle markers correspond to $T^m$-statistic procedures and square markers to $T^s$-statistic procedures.}
\end{figure}

Panel A shows that the LF and conditional tests have very similar power when only one coordinate provides strong positive directional information. Power increases with $A$ but decreases with $k$, since the same two nonzero means must be detected in the presence of more coordinates centered at zero. This pattern is consistent with the result in part (ii) of Proposition~\ref{prop:power-comparison}. Screening a single positive coordinate produces little reduction in the dimension of the remaining testing problem, and consequently the conditional test gains little relative to the LF test. 

The pattern changes substantially in Panel B. When $k-1$ coordinates have positive means, the conditional test increasingly outperforms the LF test as $k$ grows. The difference is especially pronounced for the QLR statistic. For example, at $k=8$ and $A=0.4$, conditional QLR has power close to $47\%$, whereas LF QLR has power below $10\%$. This result corresponds directly to part~(iii) of Proposition~\ref{prop:power-comparison}. Here $h=k-1$ and $r=1=k-h$, so once the positive coordinates are screened, the conditional test effectively operates on a one-dimensional problem. The LF test, in contrast, continues to use a critical value that protects against nuisance configurations in the full $k$-dimensional problem. The resulting benefit from dimensional reduction becomes larger as $k$ increases.

Panel C makes this transition particularly transparent. When the effective number of positive coordinates is close to one, the LF and conditional tests have similar rejection probabilities, as in Panel A. As additional positive coordinates are introduced, the conditional tests gain power relative to the LF tests, and the difference becomes especially large for $k=5$ and $k=8$. For the max statistic, this pattern is directly consistent with part~(iii) of Proposition~\ref{prop:power-comparison}: at the integer points with at least two positive coordinates, $r=1\leq k-h$, and screening reduces the dimension of the second-step problem. For QLR, the proposition additionally requires $r=k-h$, which holds directly only at the endpoint $h=k-1$. The conditional-QLR gains at the intermediate points therefore extend beyond the sufficient condition established in the proposition. This is consistent with the discussion following Proposition~\ref{prop:power-comparison}: the critical-value ordering used there is sufficient, but not necessary, for the conditional test to be more powerful.

The comparison with the CS and RSW procedures reveals additional differences across the three designs. In Panel A, where only one coordinate has a positive mean, both the CS and RSW procedures generally have lower power than the proposed tests. The difference is particularly visible for larger $k$. Thus, when the alternative is relatively sparse, neither the active-constraint adjustment of CS nor the moment-selection adjustment of RSW provides an appreciable advantage over the LF and conditional tests.

The ranking changes in Panels B and C, where the alternative becomes increasingly dense. In Panel B, the CS test can have higher power at relatively small values of $A$, especially for $k=5$ and $k=8$. Its power then levels off, however, whereas the conditional tests continue to gain power as $A$ becomes larger and eventually overtake CS. This pattern is intuitive given the different forms of adaptation. CS can respond strongly to moderate evidence accumulated across several coordinates. The conditional test instead gains most when Step I can classify several coordinates as clearly positive, after which the second-step problem is substantially lower dimensional.

The RSW procedures display a pattern closer to that of the conditional tests. Their moment-selection step also exploits information about inequalities that appear sufficiently far from the boundary, and their power generally rises as more positive coordinates become separated from zero. In Panel B, RSW is typically below the conditional tests for intermediate values of $A$. Panel C shows the same pattern more continuously. As the effective number of positive coordinates increases, both the conditional and RSW procedures gain relative to the LF tests, particularly for $k=5$ and $k=8$. 

Overall, the comparison indicates that the proposed conditional test is particularly competitive in the region emphasized by the theoretical power analysis: alternatives in which several coordinates provide sufficiently strong directional information to reduce the effective dimension of the problem. CS can be advantageous when several coordinates contain moderate rather than strong evidence, while RSW provides a closer competitor because its moment-selection mechanism also adapts to slackness. Across the designs considered here, however, no test uniformly dominates the others.

\section{Empirical Illustration}\label{sec:Empirical-Application}
This section applies the two proposed tests in empirical settings where
the existing sign agreement tests do not apply. In the first, we test
the validity of the college-proximity instrument of \citet{card1993NBER}.
In the second, we test whether the dynamic response of divorce rates to
unilateral divorce laws \citep{Wolfers2006AER} maintains a single sign
over event time. In both settings the components of $\hat\mu$ are
correlated by construction---cell frequencies of a common multinomial in
the first, coefficients of a single event-study regression in the
second---so the correlation matrix $\Omega$ is far from the identity.

\subsection{Validity of the College-proximity Instrument}
\citet{card1993NBER} studies the return to college in the United States using
the instrumental variable strategy of Example~\ref{example:LATE-instrument-validity} in Section \ref{sec:examples}: a binary instrument, a binary treatment, and a TSLS estimand interpreted as a local average treatment effect. The instrument is an indicator for having grown up near a four-year college. A local college lowers the cost of attendance, importantly by allowing students to enroll while living at home, and the causal interpretation of the estimand requires that this cost shift move everyone it moves in a single direction. Proximity may encourage degree completion: students from low-income families who would otherwise stop at high school can afford a degree when college is local, so $D_1\geq D_0$ and defiers are absent. On the other hand, proximity may increase enrollment without increasing degree completion: students whom a local college draws into higher education may rarely continue to a degree, while some students diverted from better colleges elsewhere may stop short of a degree they would otherwise have completed (see \citealt{rouse1995JBES}; \citealt{CohodesGoodman2014AEJ}). If proximity moves students across the degree-completion threshold only in this latter direction, then $D_1\leq D_0$ and compliers are absent.
Which force operates is an empirical question, and when the two operate on different students, neither direction of monotonicity holds and the TSLS estimand aggregates offsetting complier
and defier effects.
 
We test the null hypothesis in \eqref{eq:instrument-validity-null} using the standard extract of the National Longitudinal Survey of Young Men: an i.i.d.\ sample of $3{,}010$ men with valid 1976 wages and schooling, of
whom $2{,}053$ grew up near a four-year college ($Z=1$) and $957$ did not
($Z=0$). Following \citet{Kitagawa2015ECTA}, the treatment is degree
completion, $D=1\{\text{years of schooling}\geq 16\}$. The outcome is
the log hourly wage in 1976, discretized at its pooled-sample quartiles
($5.98$, $6.29$, and $6.56$ log points), so that $k=4$ in the notation of Example~\ref{example:LATE-instrument-validity} and the moment vector $\mu=(\mu_1,\dots,\mu_8)$ of \eqref{eq:instrument-validity-parameters} is eight-dimensional. We report both the max statistic $T^m$ and sum-of-squares statistic $T^s$, under the least-favorable and conditional ($\tau=0.1\alpha$) procedures, as the multinomial covariance matrix is not full rank.
 
Table~\ref{tab:card-results} reports the results. In Panel~A the estimated moments split cleanly by treatment
block: all four degree-holder moments are positive, the top wage bin strongly so ($z=4.64$), while among men without degrees the two top wage bins are significantly negative ($z=-2.57$ and $-4.05$), so that high-wage non-degree men are substantially more prevalent among those who grew up near a college. The least-favorable procedure yields $p$-values of $0.03\%$ and $0.04\%$ across the two statistics and the conditional procedure yields $p$-values below $0.1\%$, so sign agreement is rejected at the 1\% significance level; the rejection survives coarsening to terciles ($0.002\%$ and $0.02\%$ under the least-favorable procedure; Panel~B). This sharpens the rejection reported by \citet{Kitagawa2015ECTA}: his null maintains the no-defier direction $D_1\ge D_0$, whereas \eqref{eq:instrument-validity-null} is implied by the instrument validity under either direction of monotonicity, so its rejection refutes the model without taking a stand on which subpopulation is absent. Accordingly, the TSLS estimand in this application identifies neither the
average treatment effect for compliers nor that for defiers.
 
\begin{table}[t]\label{tab:card-results}
\centering 
\caption{Sign agreement tests, \cite{card1993NBER} college-proximity instrument.} 
\begin{tabular}{lcccc}
\toprule
\multicolumn{5}{l}{\emph{Panel A: Headline specification, $D=1\{\text{years}\geq16\}$, quartile bins $(2k=8)$}}\\
Wage bin & \multicolumn{2}{c}{$D=1$ block} & \multicolumn{2}{c}{$D=0$ block} \\
 & $\hat\mu_j$ (SE) & $z$ & $\hat\mu_{4+j}$ (SE) & $z$ \\
\midrule
$(-\infty, 5.98]$ & $0.0027$ (0.0076) & $0.36$ & $\phantom{-}0.1308$ (0.0169) & $\phantom{-}7.72$ \\
$(5.98, 6.29]$    & $0.0139$ (0.0086) & $1.63$ & $\phantom{-}0.0276$ (0.0158) & $\phantom{-}1.75$ \\
$(6.29, 6.56]$    & $0.0011$ (0.0100) & $0.11$ & $-0.0373$ (0.0145) & $-2.57$ \\
$(6.56, \infty)$  & $0.0508$ (0.0109) & $4.64$ & $-0.0525$ (0.0130) & $-4.05$ \\
\midrule
\multicolumn{5}{l}{\emph{Panel B: Joint $p$-values (\%)}}\\
Specification & LF $T^m$ & LF $T^s$ & Cond $T^m$ & Cond $T^s$ \\
\midrule
Quartiles ($2k=8$) & 0.028 & 0.044  & 0.045 & 0.023 \\
Terciles ($2k=6$)  & 0.002 & 0.018  & $\leq 0.001$ & 0.012 \\
\bottomrule
\end{tabular}
\begin{minipage}{0.92\textwidth}
\vspace{2pt}
{\footnotesize\emph{Notes:} $n=3{,}010$ ($n_1=2{,}053$, $n_0=957$),
treated as i.i.d.\ with no weights. Wage-bin edges are pooled-sample
quantiles of log hourly wage; intervals are
right-closed. Both specifications use $D=1\{\text{years}\geq 16\}$. Sign agreement $p$-values, in percent, are reported for the least-favorable (LF) and conditional (Cond) procedures across the three statistics $(m,s)$.
}
\end{minipage}
\end{table}

\subsection{Dynamic Treatment Effects of Unilateral Divorce Law}
\citet{Wolfers2006AER} studies the effect of unilateral divorce laws on the divorce rate in the United States. Under a consent regime a divorce requires the agreement of both spouses, whereas under a unilateral regime either spouse may exit the marriage alone. Between 1968 and 1988 twenty-nine states switched from some variant of consent divorce to a unilateral system. Exploiting this staggered adoption, the paper estimates the dynamic impact of the policy on divorce rate using an unbalanced panel of state divorce rates for 1956--1988 ($n=1,631$). Let
\[
\mu_k \;=\; k\text{th year-bin dynamic treatment effect},
\qquad k = 1,\dots,8,
\]
denote the parameter of interest: the effect on the divorce rate (annual divorces per $1{,}000$ persons) of unilateral divorce having been in effect for the $k$th two-year bin, where $k=1$ is the first two years, $k=7$ is years 13--14, and $k=8$ collects year 15 onward. The corresponding estimates are reported in Table 2 of \citet{Wolfers2006AER} under three different specifications. 

We test the null hypothesis in \eqref{def:intro-null} with $\mu = (\mu_1,\dots,\mu_8)$. Rejection is evidence that the dynamic response genuinely crosses zero, and this distinction carries the economics. Unilateral divorce lowers the cost of exiting a marriage: either spouse may now leave alone, and a thicker remarriage market and reduced stigma lower that cost further, so more marriages dissolve and $\mu_k \ge 0$. On the other hand, couples may match more cautiously in anticipation of easier dissolution, so marriages formed after the reform may be better selected and \(\mu_k \le 0\). Which force dominates is an empirical question, and the short-run and long-run effects may differ. Sign agreement is the formal boundary between a response of a single sign and one that reverses over event time. It is also a precondition for the conventional scalar summary. When \eqref{def:intro-null} holds, the single pooled treatment coefficient is a weighted average of like-signed quantities and inherits their sign, whereas when it fails the same coefficient aggregates offsetting terms and may sit near zero even though no individual $\mu_k$ is small.

\begin{figure}[htbp]\label{fig:wolfers}
\centering
\caption{Replicated estimates of $\mu_1,\dots,\mu_8$ from Table~2 of
\citet{Wolfers2006AER}}
\includegraphics[width=0.8\textwidth]{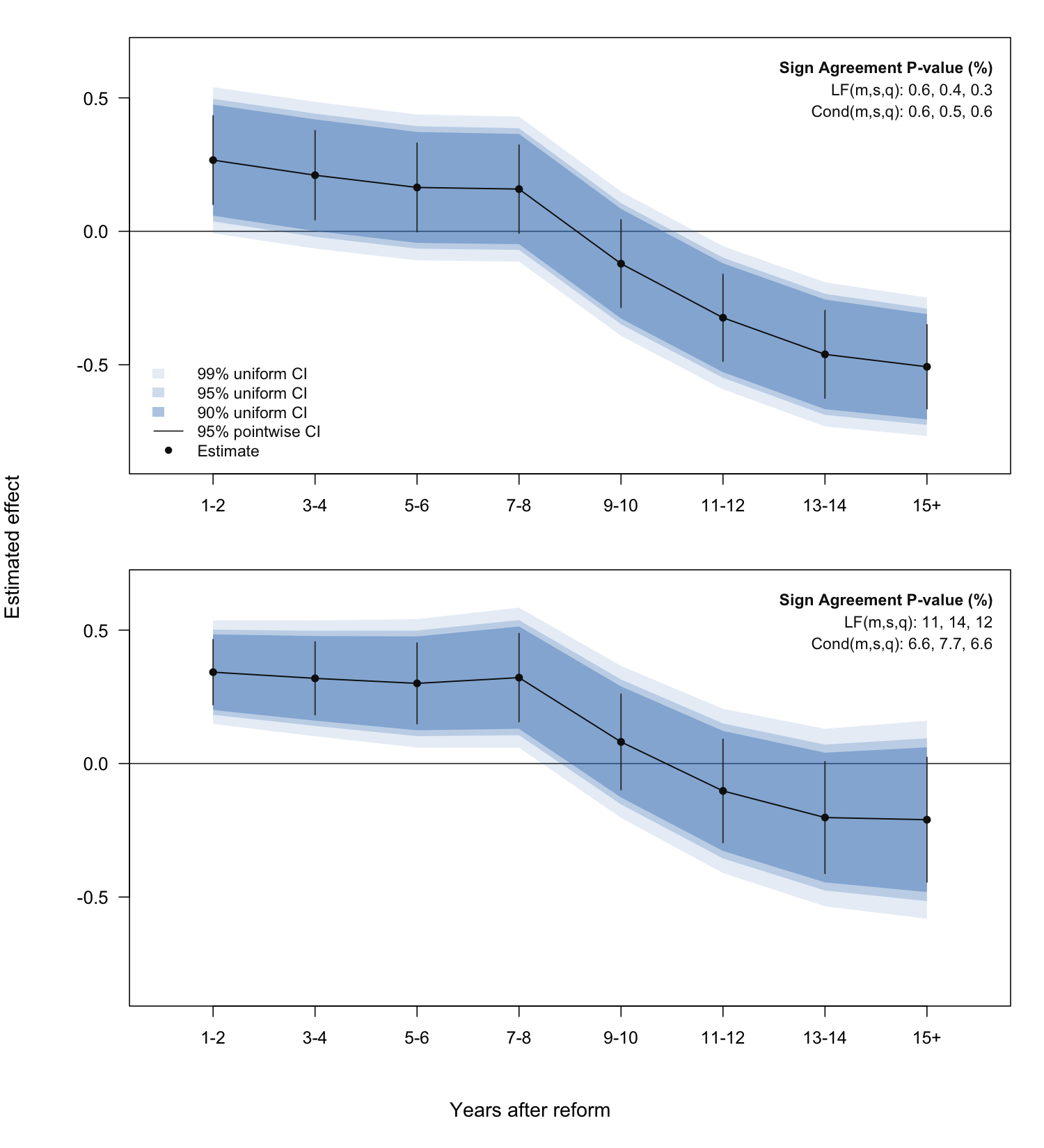}
\begin{minipage}{0.92\textwidth}
\vspace{2pt}
{\footnotesize\emph{Notes:} The top panel corresponds to column~1 (state and year fixed effects) and the bottom panel corresponds to column~2 (adding state-specific linear trends). Shaded regions are $90\%$, $95\%$ and $99\%$ uniform confidence bands; vertical segments are $95\%$ pointwise intervals. Sign agreement $p$-values, in percent, are reported for the least-favorable (LF) and conditional (Cond) procedures across the three statistics $(m,s,q)$.}
\end{minipage}
\end{figure}

Figure~\ref{fig:wolfers} displays replicated estimates of $\mu_1,...,\mu_8$ together with our sign agreement test results, for the specification with state and year fixed effects (column 1, top panel) and the specification that adds state-specific linear time trends (column 2, bottom panel) in Table 2 of \cite{Wolfers2006AER}.\footnote{We omit the third specification (column 3 in Table 2 of \cite{Wolfers2006AER}), which adds state-specific quadratic time trends, because all point estimates are non-negative and the sign agreement p-value is trivially one.} In the top panel, least-favorable procedure yields $p$-values of $0.6$, $0.4$ and $0.3$ percent across the three statistics and the conditional procedure yields $0.6$, $0.5$ and $0.6$ percent, so sign agreement is rejected at the 1\% significance level; in the bottom panel the least-favorable procedure yields $11$, $14$ and $12$ percent while the conditional procedure yields $6.6$, $7.7$ and $6.6$ percent, a rejection at the 10\% level under conditional test alone. The evidence that the divorce-rate response reverses sign is thus strong under the first specification, suggestive under the second, implying that the long-run effect of unilateral divorce could differ from its short-run effect. 
Methodologically, the conditional test yields smaller $p$-values than the least favorable in the presence of large and individually significant point estimates, consistent with the theoretical result in Proposition \ref{prop:power-comparison}. Finally, the uniform confidence band does not reveal the sign reversal at the 99\% level in the top panel, nor even at the 90\% level in the bottom panel, illustrating that our method is useful when analysis seeks to test sign agreement.

\section{Conclusion}\label{sec:Conclusion}
This article develops least favorable and conditional tests of sign agreement among a finite collection of parameters. The least favorable test uses the largest critical value over all null-consistent nuisance configurations, whereas the conditional test first screens coordinates with strong directional evidence and then tests the remaining sign-unresolved coordinates conditional on the screening outcome. Both procedures accommodate arbitrary correlation among the estimators and can be implemented using the maximum, sum-of-squares, or quasi-likelihood ratio statistic. We establish their uniform asymptotic validity and consistency against fixed alternatives. Although the main analysis is presented for studentized sample means, the results also apply to regular asymptotically linear estimators and smooth functions of moments under the stated conditions.

The theoretical and simulation results reveal a systematic power tradeoff. The LF test is competitive when directional information is sparse and screening produces little dimensional reduction. The conditional test can instead achieve substantial power gains when several coordinates are well separated from zero, because its second step operates on a smaller set of sign-unresolved coordinates. As a practical guideline, we recommend the LF test for $k=2$ and the conditional test as a default for $k\geq3$, while noting that the LF test can be preferable when directional information is sparse. The empirical maximum null rejection probabilities remain close to the nominal level across the dimensions and correlation structures considered. Comparisons with intersection--union adaptations of moment-inequality procedures show that no procedure considered dominates uniformly. The empirical illustrations demonstrate the usefulness of the proposed tests: they reject the testable implication of instrument validity in the college-proximity application and provide evidence of a sign reversal in the dynamic effect of unilateral-divorce laws.

The screening-and-conditioning principle underlying the conditional test is not specific to sign agreement and may also be useful for one-sided moment-inequality testing, where clearly slack moments could be screened out before testing the remaining, potentially binding inequalities conditional on the screening event. Developing the theory for such procedures is left for future research. Other extensions include allowing the number of parameters to increase with the sample size, or extending the framework to an infinite collection of parameters, which would permit applications such as testing sign agreement over a continuum of treatment-effect horizons or testing instrument validity with a continuous outcome without discretization. In addition, while rejection of sign agreement establishes that at least one parameter is positive and another is negative, it does not identify their signs while controlling simultaneous directional errors, suggesting the value of a post-rejection sign-classification procedure. Finally, implementation under a general correlation matrix requires numerical evaluation of least favorable and conditional critical values; further analytical characterizations of least favorable configurations, more accurate and computationally efficient numerical methods, and data-dependent choices of the screening level $\tau$ could improve practical implementation.


\phantomsection
\bibliography{draft_2026/unif2sided}
\addcontentsline{toc}{section}{References}

\appendices
\renewcommand{\theequation}{\Alph{section}-\arabic{equation}}
\setcounter{equation}{0}
\small

\addcontentsline{apc}{appendices}{\protect\numberline{\thesubsection}Title of Appendix A}

\section{Extension to Regular Estimators}\label{sec:extension-regular-estimators}

The sample-mean formulation used in the main text is convenient but is not
essential for the proposed tests. This section shows that the least favorable (LF) test with
the parametric bootstrap critical value and the conditional test remain
uniformly asymptotically valid for a general regular estimator. We also briefly
discuss familiar estimators covered by the high-level conditions below.
Throughout, $k\ge2$ is fixed, as in the main text.

Let $U_i$, $i=1,\ldots,n$, be an i.i.d. sequence with distribution
$P\in\mathbf P$, and let $\beta(P)=(\beta_1(P),\ldots,\beta_k(P))'\in
\mathbf R^k$ be the parameter of interest. The null hypothesis is
\begin{align}\label{def:null-general-estimator}
    H_0:\quad \beta(P)\in\mathbf R_+^k\cup\mathbf R_-^k,
\end{align}
and write
\[
\mathbf P_{\beta,0}
:=
\{P\in\mathbf P:\beta(P)\in\mathbf R_+^k\cup\mathbf R_-^k\}.
\]
Suppose that $\hat\beta_n$ estimates $\beta(P)$ and admits an influence
function $\zeta_P:\mathcal X\to\mathbf R^k$ satisfying
$\mathbb E_P[\zeta_P(U_i)]=0_k$. Define
\begin{align*}
    V(P)&:=\mathbb E_P[\zeta_P(U_i)\zeta_P(U_i)'],\\
    \sigma_j^2(P)&:=V_{jj}(P),\qquad
    D(P):=\operatorname{diag}(\sigma_1(P),\ldots,\sigma_k(P)),\\
    \Omega(P)&:=D(P)^{-1}V(P)D(P)^{-1},
\end{align*}
where $0<\sigma_j^2(P)<\infty$ for every $P\in\mathbf P$ and
$j\in[k]$. Let $\hat V_n$ be a symmetric positive-semidefinite estimator of
$V(P)$. On the event that all diagonal elements of $\hat V_n$ are strictly
positive, let $\hat D_n$ be the diagonal matrix formed from their square roots
and define
\[
    \hat\Omega_n:=\hat D_n^{-1}\hat V_n\hat D_n^{-1}.
\]
On the complement of this event, set $\hat D_n=I_k$ and
$\hat\Omega_n=I_k$, and define the test using these values. The consistency
condition below implies that this convention is asymptotically irrelevant. If
a covariance estimator is not positive semidefinite by construction, it may
instead be replaced by a consistent positive-semidefinite projection before
forming $\hat D_n$ and $\hat\Omega_n$.

For $\ell\in\{m,s,q\}$, define
\begin{align}\label{def:general-estimator-statistic}
    T_{n,\beta}^\ell
    :=T^\ell\left(
    \sqrt n\,\hat D_n^{-1}\hat\beta_n,
    \hat\Omega_n
    \right),
\end{align}
where the second argument is suppressed for $\ell=m,s$. The extension of
the LF test considered in this section uses the Gaussian parametric critical
value:
\begin{align}\label{def:general-estimator-parametric-LF-test}
    \phi_{n,\beta}^{\ell}
    :=I\left\{
    T_{n,\beta}^\ell>c^\ell(1-\alpha,\hat\Omega_n)
    \right\}.
\end{align}
Let $\psi_{n,\beta}^\ell$ denote the conditional test obtained from
$\psi_n^\ell$ by replacing $\sqrt n S_n^{-1}\bar W_n$ with
$\sqrt n\,\hat D_n^{-1}\hat\beta_n$ and using $\hat\Omega_n$ in the
screening rule, test statistics, and conditional critical values. Throughout,
the negative-case critical value is understood to take its supremum over
$\mathbf R_-^k$, and the positive-case critical value over $\mathbf R_+^k$,
as dictated by the corresponding reduced null hypotheses. For $\ell=q$, the
same regularization convention for a non-positive-definite $\hat\Omega_n$ as
in the main text is maintained.

\begin{assumption}
\label{ass:regular-asymptotically-linear}
The following statements hold for every sequence $\{P_n\}\subset\mathbf P$.
\begin{enumerate}[(i)]
    \item The estimator is uniformly asymptotically linear after
    studentization:
    \begin{align}\label{eq:uniform-asymptotic-linearity-general}
        \sqrt n\,D(P_n)^{-1}
        \bigl(\hat\beta_n-\beta(P_n)\bigr)
        =
        \frac{1}{\sqrt n}\sum_{i=1}^n
        \varphi_{P_n}(U_i)
        +o_{P_n}(1),
    \end{align}
    where
    \[
        \varphi_P(U_i):=D(P)^{-1}\zeta_P(U_i).
    \]

    \item For each $j\in[k]$, the squared normalized influence functions are
    uniformly integrable:
    \begin{align}\label{eq:UI-normalized-influence-function}
        \lim_{\lambda\to\infty}
        \sup_{P\in\mathbf P}
        \mathbb E_P\left[
        \varphi_{P,j}(U_i)^2
        I\{|\varphi_{P,j}(U_i)|>\lambda\}
        \right]
        =0.
    \end{align}

    \item The scale and correlation estimators are well defined with
    probability approaching one and are uniformly consistent in the
    sequential sense:
    \begin{align}\label{eq:studentization-general-well-defined}
        P_n\!\left\{\min_{j\in[k]}(\hat V_n)_{jj}>0\right\}\to1,
    \end{align}
    and
    \begin{align}\label{eq:studentization-general-consistency}
        \hat D_n^{-1}D(P_n)\overset{P_n}{\to}I_k,
        \qquad
        \|\hat\Omega_n-\Omega(P_n)\|_{\max}
        \overset{P_n}{\to}0.
    \end{align}
\end{enumerate}
\end{assumption}

Condition \eqref{eq:UI-normalized-influence-function} is the influence-function analogue of the standardized uniform integrability condition in \ref{def:uniform-integrability}.
 Assumption \ref{ass:regular-asymptotically-linear} is satisfied by many
conventional fixed-dimensional estimators under their standard regularity
conditions. These include OLS estimators under suitable moment conditions and
uniform nonsingularity of the population regressor second-moment matrix;
linear IV and two-stage least-squares estimators under suitable moment
conditions and uniform strong identification; and repeated-cross-section or
regression difference-in-differences estimators under the corresponding
overlap, full-rank, moment, and variance-estimation conditions. More generally,
the assumption applies to regular estimators that are smooth functions of
finitely many population moments when the relevant derivatives are well
behaved, the asymptotic variances are nondegenerate, and the covariance
estimator is uniformly consistent.


\begin{corollary}\label{cor:general-estimator-uniform-validity}
Suppose Assumption \ref{ass:regular-asymptotically-linear} holds. Then, for
$\alpha\in(0,1/2)$ and $\ell\in\{m,s\}$, the LF test with the Gaussian
parametric critical value satisfies
\begin{align}\label{eq:general-estimator-LF-validity}
    \limsup_{n\to\infty}
    \sup_{P\in\mathbf P_{\beta,0}}
    \mathbb E_P[\phi_{n,\beta}^{\ell}]
    \le\alpha.
\end{align}
For any $\tau\in(0,\alpha)$, the corresponding conditional test satisfies
\begin{align}\label{eq:general-estimator-conditional-validity}
    \limsup_{n\to\infty}
    \sup_{P\in\mathbf P_{\beta,0}}
    \mathbb E_P[\psi_{n,\beta}^\ell]
    \le\alpha.
\end{align}
If, in addition,
\begin{align}\label{eq:general-estimator-eigenvalue-condition}
    \inf_{P\in\mathbf P}\lambda_{\min}(\Omega(P))>0,
\end{align}
then \eqref{eq:general-estimator-LF-validity} and
\eqref{eq:general-estimator-conditional-validity} also hold for $\ell=q$.
\end{corollary}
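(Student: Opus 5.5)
The plan is to reduce Corollary \ref{cor:general-estimator-uniform-validity} to the sample-mean arguments behind Theorems \ref{thm:LFtest-uniform-validity} and \ref{thm:conditional-uniform-validity}. Those arguments are written along sequences. First I will characterize uniform validity by subsequences. Choose $\{P_n\}\subset\mathbf P_{\beta,0}$ whose rejection probabilities approach $\limsup_n\sup_{P\in\mathbf P_{\beta,0}}\mathbb E_P[\phi^\ell_{n,\beta}]$. It then suffices to show that every subsequence has a further subsequence along which the rejection probability has limit at most $\alpha$.

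Along such a subsequence, compactness of $\bar{\mathbf O}$ gives $\Omega(P_n)\to\Omega^*$. By passing to further subsequences in the extended reals, we also get $\sqrt n\,D(P_n)^{-1}\beta(P_n)\to h\in[-\infty,\infty]^k$. The null restriction forces $h$ to lie in the closure of $\mathbf R^k_+\cup\mathbf R^k_-$.

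The key step replaces the sample-mean CLT with Assumption \ref{ass:regular-asymptotically-linear}. By (i), the centered studentized estimator equals $n^{-1/2}\sum_i\varphi_{P_n}(U_i)+o_{P_n}(1)$. Condition (ii) gives the Lindeberg condition for each coordinate, so Cramér--Wold and the Lindeberg--Feller CLT for triangular arrays yield convergence in distribution to $N(0,\Omega^*)$. Condition (iii), together with Slutsky, shows that replacing $D(P_n)$ by $\hat D_n$ and $\Omega(P_n)$ by $\hat\Omega_n$ does not change the limit. The degenerate-variance convention holds only on an event of vanishing probability. Consequently,
\[
\sqrt n\,\hat D_n^{-1}\hat\beta_n=\hat D_n^{-1}D(P_n)\Big(\sqrt n D(P_n)^{-1}(\hat\beta_n-\beta(P_n))\Big)+\hat D_n^{-1}D(P_n)\sqrt n D(P_n)^{-1}\beta(P_n)
\]
converges coordinatewise to $Z^*+h$, with $Z^*\sim N(0,\Omega^*)$ and infinite coordinates diverging. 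The pair $(\sqrt n\hat D_n^{-1}\hat\beta_n,\hat\Omega_n)$ therefore has exactly the limit structure that the sample-mean proofs derive from the uniform integrability condition \eqref{def:uniform-integrability}, the uniform CLT, and the uniform LLN.

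From this point I will reuse the remaining steps of those proofs verbatim, since they depend only on this limit. For the LF test, those steps are: continuity of $T^\ell$ (and, for $\ell=q$, its continuity in $\Omega$ on matrices with $\lambda_{\min}$ bounded away from zero under \eqref{eq:general-estimator-eigenvalue-condition}); the handling of infinite coordinates of $h$; lower semicontinuity in $\theta_2$ of $c^\ell(1-\alpha,\theta_2)$ at $\Omega^*$, which gives $\liminf c^\ell(1-\alpha,\hat\Omega_n)\ge c^\ell(1-\alpha,\Omega^*)$; and the inequality $P\{T^\ell(Z^*+h)>c^\ell(1-\alpha,\Omega^*)\}\le\alpha$ at continuity points. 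For the conditional test, I will reuse the Bonferroni decomposition. Under the nonnegative branch, false negative screening has asymptotic probability at most $\tau$, by the convergence $\kappa_\tau(\hat\Omega_n)\to\kappa_\tau(\Omega^*)$ and monotonicity in the mean. On each remaining screening event, the subtest's limiting conditional rejection probability is at most $\alpha-\tau$ by the least favorable construction, and these events are disjoint.

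I expect the main obstacle to be checking that nothing in those proofs used the specific sample-mean structure beyond this joint limit. Possible trouble spots are any use of $S_n$ as the sample standard deviation, and the boundary case of the screening and conditioning events where the limit law puts mass on $|Z_j|=\kappa_\tau$. The latter has probability zero when $\Omega^*$ has unit diagonal, even if $\Omega^*$ is singular, because each marginal is $N(h_j,1)$. If the earlier proofs are stated in terms of sequences with such limits, the corollary follows immediately.
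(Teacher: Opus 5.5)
Your proposal is correct. For the conditional test it is essentially the paper's argument. The paper also first proves the Gaussian and partial-divergence limits (Lemma \ref{lem:general-estimator-gaussian-limit}, via Lindeberg--Feller and Cram\'er--Wold as you describe). It then reruns the screening bound and Lemma \ref{lem:branchwise-conditional-approx}, where the conditional critical value is bounded below by the conditional Gaussian quantile at the local drift.

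For the LF test your route differs, and the steps you list are closer to the proof of Corollary \ref{cor:LFCtest-uniform-validity-equality} than to that of Theorem \ref{thm:LFtest-uniform-validity}. The paper does not pass to a limiting critical value. Rerunning Lemma \ref{lem:conv-ip-to-Jn}, it shows that the sampling CDF of $T^\ell_{n,\beta}$ is uniformly close to $J^\ell(\cdot,\hat d_n,\hat\Omega_n)$, where $\hat d_n=\sqrt n\,\hat D_n^{-1}\beta(P_n)\in\mathbf R^k_+$. It notes that this quantile enters the supremum defining $c^\ell(1-\alpha,\hat\Omega_n)$, and concludes with Lemma~A.1(vi) of \citet{RomanoShaikh2012AoS}. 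The all-divergent case is handled separately via the lower bounds in Case~1 of the proof of Theorem \ref{thm:LFtest-uniform-validity}.

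Your version needs only three ingredients:
\begin{itemize}
\item weak convergence of the statistic;
\item lower semicontinuity of $\theta_2\mapsto c^\ell(1-\alpha,\theta_2)$, which is elementary as a supremum of lower semicontinuous quantile maps, or follows from Lemma \ref{lem:continuity-critical-value-all};
\item Lemma \ref{lem:argmax-critical-value-all}, to dominate the reduced-limit quantile when $h$ has infinite coordinates.
\end{itemize}
This avoids the uniform-in-$x$ approximation and its split-at-zero step, and it treats finite and divergent drifts together.

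The price is that it yields only
\[
\limsup_n P_n\{T^\ell_{n,\beta}>\hat c_n\}\le P\{T^\ell_\infty\ge c^\ell(1-\alpha,\Omega^*)\},
\]
where $T^\ell_\infty$ is the (reduced) limit of $T^\ell_{n,\beta}$. So $c^\ell(1-\alpha,\Omega^*)$ must itself be a continuity point of the limit law. For $\ell\in\{s,q\}$ that law has an atom at zero, so you need $c^\ell(1-\alpha,\Omega^*)>0$. This is exactly where $\alpha<1/2$ enters, through the bounds $q^*$ and $q^\dagger$ from Case~1. State this explicitly rather than appealing to ``continuity points.''

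Conversely, the paper's drift-specific template transfers directly to the conditional critical values. There, semicontinuity in $\theta_2$ at a singular $\Omega^*$ would be delicate, because conditioning events can have zero limiting probability. So make sure your conditional step uses the inclusion of the local drift in the supremum, as the paper does, rather than a limiting conditional critical value.
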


The proof is given in Subsection~\ref{subsec:supporting-regular-estimators}. The sample-mean framework in the main text is a special case. Taking
$\beta(P)=\mathbb E_P[W_i]$, $\hat\beta_n=\bar W_n$, and
$\zeta_P(W_i)=W_i-\mu(P)$ makes
\eqref{eq:uniform-asymptotic-linearity-general} an exact identity and reduces
\eqref{eq:UI-normalized-influence-function} to
\eqref{def:uniform-integrability}. Condition
\eqref{eq:studentization-general-consistency} then follows from Lemmas
\ref{alem:consistency-sample-variance} and
\ref{alem:consistency-sample-corr}.


\section{Proof of the main results}
Throughout this section, when $\ell=q$, with a slight abuse of notation,
$\hat\Omega_n$ denotes the matrix used in the test: it denotes the sample
correlation matrix when $\hat\Omega_n\succ0$, and otherwise denotes
\[
\tilde\Omega_n
:=
\frac{\hat\Omega_n+\varepsilon_n I_k}{1+\varepsilon_n},
\qquad
\varepsilon_n>0,\quad \varepsilon_n\to0.
\]
Accordingly, every occurrence of $T_n^q$, $J^q$, and $\hat c_n^q$ is
understood to use this same matrix.

\subsection{Proof of Theorem \ref{thm:LFtest-uniform-validity}}
Here we provide the proof of the uniform asymptotic validity of the least favorable test with the parametric bootstrap critical value. The proof of the nonparametric bootstrap counter part is similar and is relegated to Section \ref{subsubsec:LF-nonparametric-bootstrap-validity}.

Suppose, by way of contradiction, that the upper bound fails for 
\(\ell\in\{m,s,q\}\).
Then there exist \(\eta>0\), a subsequence \(n_l\), and a sequence
\(P_{n_l}\in\mathbf P_0\) such that
\begin{align}\label{contra-hypo:LFC-uniform-validity}
\mathbb E_{P_{n_l}}[\phi^\ell_{n_l}]
=
P_{n_l}\{T^\ell_{n_l}>\hat c^\ell_{n_l}(1-\alpha)\}
>
\alpha+\eta
\end{align}
for all \(l\). Since \(P_{n_l}\in\mathbf P_0\), after passing to a further subsequence, still denoted by
\(n_l\), we may assume that either $\mu(P_{n_l})\in\mathbf R_+^k$ for all $\ell$ or $\mu(P_{n_l})\in\mathbf R_-^k$ for all $\ell$. We prove the result for the latter. The positive case follows by replacing \(W_i\) by \(-W_i\). By compactness of \(\bar{\mathbf O}\), after passing to a further subsequence, $\Omega(P_{n_l})\to\Omega^*$ for some \(\Omega^*\in\bar{\mathbf O}\).

By extracting a further subsequence if necessary, each coordinate of $\frac{\sqrt{n_l}\mu_j(P_{n_l})}{\sigma_j(P_{n_l})}$ has an extended limit in \([-\infty,0]\). There are two cases.

\noindent
\textbf{Case 1:} Suppose $\tfrac{\sqrt{n_l} \mu_j(P_{n_l})}{ \sigma_j(P_{n_l})} \to -\infty $ for all $j.$ By Lemma \ref{lem:statistic-degenerate-case}, $T^m_{n_l}\overset{P_{n_l}}{\to} -\infty$ and $T^\ell_{n_l}\overset{P_{n_l}}{\to}0$ for $\ell=s,q.$ We show that the least favorable critical value in \eqref{def:LF-cv-normal} is bounded away from \(-\infty\) for $\ell=m$ and from 0 for $\ell=s,q$ for any realization of $\hat{\Omega}_{n_l}.$

For $\ell=m$, the lower bound of $\hat c^m_{n_{l}}(1-\alpha)$ holds by the following:
\begin{align*}
    \hat c^m_{n_{l}}(1-\alpha)\overset{(a)}{\ge} (J^m)^{-1}(1-\alpha,0_k,\hat\Omega_{n_{l}}) \overset{(b)}{\ge} \Phi^{-1}(\tfrac{1-\alpha}{2}) \qquad\text{for all }n_{l}.
\end{align*}
Here, (a) holds by the definition of $\hat c^m_{n_{l}}(1-\alpha)$. Let $\iota_{k\times k}$ is the $k\times k$ matrix whose elements are ones. For any fixed $\hat{\Omega}_{n_l}$,
\[
J^m(x,0_k,\hat\Omega_{n_{l}}) 
\le
J^m(x,0_k,\iota_{k\times k})= P(-|Z_1|\leq x)=2\Phi(x) I\{x<0\} + I\{x\geq 0\}
\quad\forall x\in\mathbf R,
\]
where the first inequality holds by Lemma \ref{lem:slepian-type} and the first equality holds because $T^m( \iota_{k\times k}^{1/2}Z)\overset{d}{=}{-|Z_1|}$ with $Z=(Z_1,...,Z_k)\sim N(0_k, I_k).$ Then (b) follows from the stochastic dominance of $J^m(x,0_k,\hat\Omega_{n_{l}}) $ above.

For $\ell=s,$ let $A:=(\hat\Omega_{n_l})_{-1,-1}$ be the subcorrelation matrix obtained by deleting the first row and the first column and let $Y\sim N(0_{k-1},A)$. Let $q^*>0$ be such that $\frac12
+
(k-1)\frac{\sqrt{q^*}}{\sqrt{2\pi}}
<
1-\alpha.$ Such a \(q^*\) exists because \(\alpha<1/2\). Then, for each fixed realized value of
\(\hat\Omega_{n_l}\),
\begin{align*}
\hat c^s_{n_l}(1-\alpha)
&\overset{(a)}{\ge}
\liminf_{M\to\infty}
(J^s)^{-1}(1-\alpha,(-M,0_{k-1})',\hat\Omega_{n_l})\\
&\overset{(b)}{\ge}
\inf\{
x\in\mathbf R:
P\{
\textstyle\sum_{j=2}^k Y_j^2 I\{Y_j\ge0\}\le x
\}
\ge 1-\alpha
\} \overset{(c)}{\ge}
q^*.
\end{align*}
Here, (a) holds by definition of the least favorable critical value. For (b), note $T^s(\hat{\Omega}_{n_l}^{1/2}Z+(-M,0_{k-1})')
\Rightarrow
\sum_{j=2}^k Y_{j}^2I\{Y_{j}\ge0\}$ as $M\to\infty$ for fixed $\hat{\Omega}_{n_l}$. Applying the usual lower semicontinuity of generalized
quantiles under weak convergence pathwise gives (b). For (c), note that for any \(t>0\),
\[
\begin{aligned}
P\{
\textstyle\sum_{j=2}^k Y_j^2I\{Y_j\ge0\}\le t
\}
&\le
P\{Y\in\mathbf R_-^{k-1}\}
+
\textstyle\sum_{j=2}^k P\{0<Y_j\le \sqrt t\}  \le
\frac12
+
(k-1)\frac{\sqrt t}{\sqrt{2\pi}}
\end{aligned}
\]
where the second inequality holds because $\phi(0)=1/\sqrt{2\pi} \geq \phi(t)$. Then the definition of $q^*$ and $\alpha\in(0,\tfrac12)$ imply $(c).$

For $\ell=q$, the additional eigenvalue condition implies \(\Omega^*\succ0\) for $\ell=q$. On the event that $\hat\Omega_{n_l}$ is positive definite $\{\hat\Omega_{n_l}\succ0\}$, define $A:=(\hat\Omega_{n_l})_{-1,-1}$ and ${Y}\sim N(0_{k-1},A)$. Then, on such fixed $\hat\Omega_{n_l}$,
\begin{align*}
\hat c^q_{n_l}(1-\alpha)
&\ge\liminf_{M\to\infty}
(J^q)^{-1}(1-\alpha,(-M,0_{k-1})',\hat\Omega_{n_l})\\
&\overset{(a)}{\ge}
\inf\{
x\in\mathbf R:
P\{
\inf_{\nu\in\mathbf R_-^{k-1}}
(Y-\nu)'A^{-1}(Y-\nu)\le x
\}
\ge 1-\alpha
\}\\
&\overset{(b)}{\ge}
\inf\{
x\in\mathbf R:
P\{
\tfrac{1}{k-1}\textstyle\sum_{j=2}^k Y_j^2I\{Y_j\ge0\}\le x
\}
\ge 1-\alpha
\} 
\overset{(c)}{\ge} q^{\dagger}>0
\end{align*}
where (a) holds because the test statistic $T^q(\hat \Omega_{n_l}^{1/2}Z+(-M,0_{k-1})', \hat\Omega_{n_l})$ reduces to $\inf_{\nu\in\mathbf R_-^{k-1}}
(Y-\nu)'A^{-1}(Y-\nu)$ as $M\to\infty$ for fixed $\hat\Omega_{n_l}$; (b) holds because 
$$\inf_{\nu\in\mathbf R_-^{k-1}}  
(Y-\nu)'A^{-1}(Y-\nu)\geq \inf_{\nu\in\mathbf R_-^{k-1}}  
(Y-\nu)'\tfrac{1}{k-1} I_{k-1}(Y-\nu)=
\tfrac{1}{k-1}\textstyle\sum_{j=2}^k Y_j^2I\{Y_j\ge0\}$$
where the inequality follows from
$A^{-1}\succeq \tfrac{1}{\lambda_{\max}(A)}I_{k-1}\succeq\tfrac{1}{\operatorname{tr}(A)}I_{k-1}=\tfrac{1}{k-1}I_{k-1}$; and finally (c) holds similarly as above by choosing $q^{\dagger}$ small enough that $\frac12
+
(k-1)\frac{\sqrt{(k-1)q^\dagger}}{\sqrt{2\pi}}
<
1-\alpha$.

Therefore, $P_{n_l}\{T^\ell_{n_l}>\hat c^\ell_{n_l}(1-\alpha)\}\to0 $ for $\ell=m,s$. For $\ell=q$, we have
\[
\begin{aligned}
P_{n_l}\{T_{n_l}^q>\hat c_{n_l}^q\}
&\le
P_{n_l}\{\hat\Omega_{n_l}\not\succ0\}+
P_{n_l}\{T_{n_l}^q>\hat c_{n_l}^q,
           \hat\Omega_{n_l}\succ0\}
\le P_{n_l}\{\hat\Omega_{n_l}\not\succ0\}+
P_{n_l}\{T_{n_l}^q>q^{\dagger}\}.
\end{aligned}
\]
Since $P_{n_l}\{\hat\Omega_{n_l}\not\succ0\}\to0$ by Lemma \ref{lem:sample-corr-positive-definite}, we have
$T_{n_l}^q\overset{P_{n_l}}{\to}0$, which contradicts \eqref{contra-hypo:LFC-uniform-validity}.

\medskip
\noindent
\textbf{Case 2:} Suppose there exists a nonempty set \(I\subset[k]\) such that
\begin{align*}
    \begin{aligned}
        \tfrac{\sqrt{n_l} \mu_j(P_{n_l})}{ \sigma_j(P_{n_l})} \to \delta_j \in(-\infty, 0] \quad\text{ for } j\in I,
        \qquad
        \tfrac{\sqrt{n_l} \mu_j(P_{n_l})}{ \sigma_j(P_{n_l})} \to -\infty 
        \quad\text{ for }j \not\in I.
    \end{aligned}
\end{align*}

For $\ell=m,s$, Lemma \ref{lem:conv-ip-to-Jn} implies that there exists \(N=N(\eta)>0\) such that for all \(n_l\geq N\),
\begin{gather*}
     P_{n_l}\{
     \sup_{x\in \mathbf{R}}
     |
     J^\ell(x, \sqrt{n_l}S^{-1}_{n_l}\mu(P_{n_l}), \hat{\Omega}_{n_l})
     -
     J^\ell_{n_l}(x)
     |
     \leq \frac{\eta}{4}
     \}
     \geq 1-\frac{\eta}{4}.
\end{gather*}
Then, for all \(n_l\geq N\),
\begin{gather*}
    P_{n_l}\{T^\ell_{n_l}>\hat{c}^\ell_{n_l}(1-\alpha)\}
    \leq
    P_{n_l}\{
    T^\ell_{n_l}
    >
    (J^\ell)^{-1}(
    1-\alpha,
    \sqrt{n_l}S^{-1}_{n_l}\mu(P_{n_l}),
    \hat{\Omega}_{n_l}
    )
    \}
    \leq
    \alpha+\frac{\eta}{2},
\end{gather*}
where the first inequality holds by definition of
\(\hat{c}^\ell_{n_l}\), and the second inequality holds by Lemma A.1.(vi) of
\citeauthor{RomanoShaikh2012AoS} \(\left(\citeyear{RomanoShaikh2012AoS},\right.\) henceforth RS\(\left.\right)\).
This is a contradiction to \eqref{contra-hypo:LFC-uniform-validity}.

For $\ell=q$, define $Q_{n_l}
:=
(J^q)^{-1}(
1-\alpha,
\sqrt{n_l}S_{n_l}^{-1}\mu(P_{n_l}),
\hat\Omega_{n_l}
)$ on the event $\{\hat\Omega_{n_l}\succ0\}$. By the definition of the least favorable critical value, $\hat c_{n_l}^q(1-\alpha)\ge Q_{n_l}$ on $\{\hat\Omega_{n_l}\succ0\}$. Therefore,
\[
\begin{aligned}
P_{n_l}\{
T_{n_l}^q>\hat c_{n_l}^q(1-\alpha)
\}
&\le
P_{n_l}\{\hat\Omega_{n_l}\not\succ0\}
+
P_{n_l}\{T_{n_l}^q>Q_{n_l},\,\hat\Omega_{n_l}\succ0\}
\le
P_{n_l}\{\hat\Omega_{n_l}\not\succ0\}+
\alpha+\frac{\eta}{2}
\end{aligned}
\]
where we obtain the second inequality as before. For all sufficiently large $n_l$,
$P_{n_l}\{\hat\Omega_{n_l}\not\succ0\}\le\eta/4$ by Lemma \ref{lem:sample-corr-positive-definite}, and hence
\[
P_{n_l}\{
T_{n_l}^q>\hat c_{n_l}^q(1-\alpha)
\}
\le
\alpha+\frac{3\eta}{4}
<
\alpha+\eta,
\]
contradicting \eqref{contra-hypo:LFC-uniform-validity}.

\subsection{Proof of Corollary \ref{cor:LFCtest-uniform-validity-equality}}

\begin{corollary} Suppose the assumptions of Theorem \ref{thm:LFtest-uniform-validity} and additionally that
there exists a distribution $Q$ on $\mathbf R^k$ satisfying
\[
\mu(Q)=0_k,
\qquad
\Omega(Q)=I_k,
\qquad
0<\sigma_j(Q)<\infty
\quad\text{for }j=1,\ldots,k,
\]
such that, writing $D_Q:=\operatorname{diag}
\bigl(\sigma_1(Q),\ldots,\sigma_k(Q)\bigr),$ the location family generated by $Q$ is contained in $\mathbf P$:
\begin{align}\label{ass:location-submodel}
    \mathcal L(X+D_Qa)\in\mathbf P
    \qquad
    \text{for every }a\in\mathbf R^k,
    \quad X\sim Q.
\end{align}
Then, for each $\ell\in\{m,s\}$,
\[
\limsup_{n\to\infty}
\sup_{P\in\mathbf P_0}
\mathbb E_P[\phi_n^\ell]
=
\alpha.
\]
The same conclusion holds for $\ell=q$ if, in addition, $\inf_{P\in\mathbf P}
\lambda_{\min}(\Omega(P))>0.$
\end{corollary}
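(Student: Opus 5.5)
The upper bound $\le\alpha$ is Theorem \ref{thm:LFtest-uniform-validity}, so only the lower bound $\liminf_n \sup_{P\in\mathbf P_0}\mathbb E_P[\phi_n^\ell]\ge\alpha$ remains. I will exhibit, for every $\varepsilon>0$, a fixed $\theta\in\mathbf R_+^k$ and the sequence $P_n:=\mathcal L(X+D_Q\theta/\sqrt n)$. By \eqref{ass:location-submodel} this sequence lies in $\mathbf P$, and since $\mu(P_n)=D_Q\theta/\sqrt n\in\mathbf R^k_+$ it lies in $\mathbf P_0$. Along this sequence I will show that $\lim_n P_n\{T_n^\ell>\hat c_n^\ell\}\ge\alpha-\varepsilon$.

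First, the limiting critical value. Location shifts leave correlations and variances unchanged, so $\Omega(P_n)=I_k$ and $\sigma_j(P_n)=\sigma_j(Q)$. By Lemmas \ref{alem:consistency-sample-variance} and \ref{alem:consistency-sample-corr}, applied under the fixed law $Q$ (the data are $X_i+$ constant), we have $\hat\Omega_n\to I_k$ and $S_n^{-1}D_Q\to I_k$ in probability. I will then show that $\hat c_n^\ell=c^\ell(1-\alpha,\hat\Omega_n)\to c^\ell(1-\alpha,I_k)=:c^*$ in probability. The natural route is continuity of $\Omega\mapsto c^\ell(1-\alpha,\Omega)$ at $I_k$, which I expect the appendix lemmas behind Lemmas \ref{lem:parametric-cv-existence}--\ref{lem:argmax-critical-value-all} to provide. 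If only one direction is available, upper semicontinuity is what I need, since a larger critical value hurts the lower bound. My plan is to use that $J^\ell(x,\theta_1,\Omega)$ is equicontinuous in $\Omega$ uniformly over $\theta_1\in\mathbf R^k_+$, because $T^\ell$ is Lipschitz for $\ell=m$ and locally Lipschitz for $s,q$ in its first argument. Combining this with continuity of the limit CDF at $c^*$ gives upper semicontinuity of the supremum of quantiles. \textbf{This step is the main obstacle.} The supremum is taken over an unbounded set, and for $\ell=q$ the matrix also enters the statistic; there I would use $\lambda_{\min}$ bounded below to control $\Omega^{-1}$ uniformly.

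Second, the statistic. By the CLT under $Q$, $\sqrt nS_n^{-1}\bar W_n\Rightarrow Z+\theta$ with $Z\sim N(0_k,I_k)$. By continuity of $T^\ell$, and using $\hat\Omega_n\to I_k$ for $\ell=q$, we get $T_n^\ell\Rightarrow T^\ell(Z+\theta,I_k)$, whose CDF is $J^\ell(\cdot,\theta,I_k)$. The continuous mapping theorem and Slutsky then give
\[
\liminf_n P_n\{T^\ell_n>\hat c^\ell_n\}\ \ge\ 1-J^\ell(c^*,\theta,I_k),
\]
and this holds even if $c^*$ is an atom, provided I bound by $P\{T>c^*+\delta\}$ and let $\delta\downarrow0$ at a continuity point.

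Third, choosing $\theta$. By definition of $c^*$ as a supremum of quantiles, for any $\delta>0$ there is $\theta\in\mathbf R^k_+$ with $(J^\ell)^{-1}(1-\alpha,\theta,I_k)>c^*-\delta$. This implies $J^\ell(c^*-\delta,\theta,I_k)<1-\alpha$. Explicitly, $\theta=(M,0,\dots,0)$ with $M$ large works by \eqref{def:LF-cv-independent}, since the limiting laws are the $(k-1)$-dimensional max of normals, or the corresponding chi-bar mixture. Continuity of these limits (no atoms at $c^*>0$) lets me pass from $c^*-\delta$ to $c^*$ at the cost of $\varepsilon$: take $M$ large and $\delta$ small so that $J^\ell(c^*,\theta,I_k)\le 1-\alpha+\varepsilon$. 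Hence the liminf is at least $\alpha-\varepsilon$. Since $\varepsilon$ is arbitrary, and combined with Theorem \ref{thm:LFtest-uniform-validity}, equality follows.
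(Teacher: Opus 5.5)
Your proposal is correct, and it shares its key ingredients with the paper. The location family supplies null laws with $\Omega(P_n)=I_k$. The convergence $\hat c_n^\ell\to c^\ell(1-\alpha,I_k)$ in probability follows from Lemma \ref{lem:continuity-critical-value-all}, which is exactly the continuity result you anticipated, so your fallback equicontinuity argument is not needed.

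Where you differ is the choice of null sequence.

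\emph{The paper's route.} It takes one drifting sequence with standardized mean $\vartheta_n=(n^{1/4},0,\ldots,0)'$. The first coordinate diverges, so $T_n^\ell$ converges directly to the least favorable limit, $\max_{2\le j\le k}(-Z_j)$ or $V_{k-1}$. This needs the partial-divergence argument for the statistic under a drifting sequence, as in Lemma \ref{lem:conv-ip-to-Jn}. In return it gives a single null sequence whose rejection probability converges exactly to $\alpha$.

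\emph{Your route.} You fix $\theta=(M,0,\ldots,0)'$, so the statistic is handled by the ordinary CLT under the fixed law $Q$ plus continuous mapping. The divergence moves into the Gaussian family. There $T^\ell(Z+(M,0,\ldots,0)',I_k)$ converges almost surely as $M\to\infty$ to the same limit laws. By \eqref{def:LF-cv-independent} and continuity of those laws at $c^*$, $J^\ell(c^*,(M,0,\ldots,0)',I_k)\to1-\alpha$.

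The cost is an $\varepsilon$ and an order of limits: $n\to\infty$ first, then $M\to\infty$. This gives $\liminf_n\sup_{P\in\mathbf P_0}\mathbb E_P[\phi_n^\ell]\ge\alpha-\varepsilon$ for every $\varepsilon$, which is enough. Your detour through $c^*-\delta$ in the third step is also unnecessary: the direct limit of $J^\ell(c^*,(M,0,\ldots,0)',I_k)$ already delivers $J^\ell(c^*,\theta,I_k)\le1-\alpha+\varepsilon$ for large $M$.
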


\begin{proof}
By Theorem
\ref{thm:LFtest-uniform-validity}, $\limsup_{n\to\infty}
\sup_{P\in\mathbf P_0}
\mathbb E_P[\phi_n^\ell]
\leq\alpha$ for $\ell\in\{m,s\}$, and also for $\ell=q$ under the additional uniform eigenvalue condition. It remains to establish the reverse
inequality.

Let $X_1,X_2,\ldots$ be i.i.d. with distribution $Q$. Let $P_n$ be the distribution of
\[
W_{i,n}
:=
X_i+\frac{D_Q\vartheta_n}{\sqrt n},\quad \vartheta_n:=(n^{1/4},0,\ldots,0)'.
\]
By \eqref{ass:location-submodel}, $P_n\in\mathbf P$ for every $n$.
Moreover, $\mu(P_n)
=
\frac{D_Q\vartheta_n}{\sqrt n}
\in\mathbf R_+^k,$ so that $P_n\in\mathbf P_0$ for every $n.$ Since translation does not affect variances or correlations, $D(P_n)=D_Q,$ and $\Omega(P_n)=I_k,$ and $\sqrt nD(P_n)^{-1}\mu(P_n)
=
\vartheta_n.$ Thus the first coordinate of the standardized local mean diverges to
$+\infty$, while the remaining coordinates are zero.

Let $\bar X_n$ denote the sample mean of $X_1,\ldots,X_n$. Because sample variances and sample correlations are invariant to location shifts, the sample standard deviation matrix $S_n$ and sample correlation matrix $\hat\Omega_n$ computed from $W_{1,n},\ldots,W_{n,n}$ are the same as those computed from $X_1,\ldots,X_n$. Therefore, by Lemmas \ref{alem:consistency-sample-variance}-\ref{alem:uniformCLT}, $S_n^{-1}D_Q\overset{P_n}{\to}I_k,$ $\hat\Omega_n\overset{P_n} {\to}I_k$, and $D_Q^{-1}\sqrt n\bar X_n
\Rightarrow
Z$ where $Z=(Z_1,\ldots,Z_k)'\sim N_k(0_k,I_k).$ Consequently, for $\sqrt nS_n^{-1}\bar W_n
=
S_n^{-1}D_Q
(
D_Q^{-1}\sqrt n\bar X_n+\vartheta_n
)$, it follows that
\[
\frac{\sqrt n\bar W_{1,n}}{S_{1,n}}
\overset{P_n}{\longrightarrow}+\infty,\qquad
\left(
\frac{\sqrt n\bar W_{2,n}}{S_{2,n}},
\ldots,
\frac{\sqrt n\bar W_{k,n}}{S_{k,n}}
\right)'
\Rightarrow
(Z_2,\ldots,Z_k)'.
\]

Define $V_{k-1}:=\sum_{j=2}^k Z_j^2I\{Z_j\leq0\}$. Then, the convergence of test statistics are below:
\begin{align}\label{location-submodel-stat-convergence}
T_n^m\Rightarrow \max_{2\leq j\leq k}(-Z_j),
    \quad 
T_n^s \Rightarrow V_{k-1},
    \quad
T_n^q
\Rightarrow
\inf_{\nu\in\mathbf R_+^{k-1}}
\left\|
(Z_2,\ldots,Z_k)'-\nu
\right\|^2
=V_{k-1}.
\end{align}
For $\ell=m$, the first branch in the definition of $T^m$ diverges to $+\infty$, while the second branch depends asymptotically only on coordinates $2,\ldots,k$. For $\ell=s$, the positive-part sum diverges because its first-coordinate term diverges, whereas the negative-part sum has a zero first-coordinate contribution with probability approaching one. For $\ell=q$, under the additional eigenvalue condition, the same partial-divergence argument as in Lemma \ref{lem:conv-ip-to-Jn} applies. The distance to $\mathbf R_-^k$ diverges, while the distance to $\mathbf R_+^k$ reduces to the squared distance of $(Z_2,\ldots,Z_k)'$ from $\mathbf R_+^{k-1}$. Since the limiting correlation matrix is $I_k$, Moreover, $\hat\Omega_n\succ0$ with probability approaching one, so the regularization convention for singular sample correlation matrices has no asymptotic effect.

By Lemma \ref{lem:continuity-critical-value-all}, we have the convergence below:
\begin{align}\label{eq:location-submodel-cv-convergence}
\hat c_n^\ell
=
c^\ell(1-\alpha,\hat\Omega_n)
\overset{P_n}{\to}
c^\ell(1-\alpha,I_k)
=\left\{ 
    \begin{matrix}
        \Phi^{-1}((1-\alpha)^{\frac{1}{k-1}}) & \ell=m\\
        (1-\alpha)\text{-quantile of }\sum^{k-1}_{r=1}{k-1 \choose r}2^{-(k-1)}\chi^2_r & \ell=s, q
    \end{matrix}\right.
\end{align}
where the critical value is from \eqref{def:LF-cv-normal}. 
Note that $c^m(1-\alpha,I_k)$ is the $1-\alpha$ qunatile of $\max_{2\leq j \leq k} (-Z_j)$. Moreover, $P\{V_{k-1}=0\}
=
P\{Z_j\geq0\text{ for all }j=2,\ldots,k\}
=
2^{-(k-1)}
\leq\frac12
<
1-\alpha.$ Hence the $(1-\alpha)$-quantile of $V_{k-1}$ is strictly positive. Since the distribution of $V_{k-1}$ is continuous on $(0,\infty)$, $P\{V_{k-1}>c^\ell(1-\alpha,I_k)\}
=
\alpha$ for $\ell\in\{s,q\}.$ Combining above results with \eqref{eq:location-submodel-cv-convergence}, and using continuity of the corresponding limiting distribution at its critical value, gives $P_n\{T_n^\ell>\hat c_n^\ell\}\to \alpha.$
Since $P_n\in\mathbf P_0$ for every $n$, $\limsup_{n\to\infty}
\sup_{P\in\mathbf P_0}
\mathbb E_P[\phi_n^\ell]
\geq
\limsup_{n\to\infty}
\mathbb E_{P_n}[\phi_n^\ell]=\alpha.$ Combining this lower bound with the upper bound from Theorem \ref{thm:LFtest-uniform-validity} proves the result.
\end{proof}

\subsection{Proof of Corollary \ref{cor:LFCtest-nonnegative-cv}}
For \(P\in\mathbf P_0\), both \(\Omega(P)\) and \(I_k\) are correlation matrices. Lemma \ref{lemma:monotonicity-c}, applied with
\(\tilde\theta_2=\Omega(P)\) and \(\theta_2=I_k\), implies $c^m(1-\alpha,\Omega(P)) \leq c^m(1-\alpha,I_k).$ Consequently, 
\begin{align}
\mathbb E_P[\phi_n^{m,M}] := P\{T_n^m>c^m(1-\alpha,I_k)\}
&\leq
P\left\{
T_n^m>c^m(1-\alpha,\Omega(P))
\right\}.
\label{eq:fixed-cv-oracle-bound}
\end{align}
Under the assumptions of Theorem
\ref{thm:LFtest-uniform-validity}, the argument in its proof also gives uniform validity of the infeasible oracle test that uses the population correlation matrix:
\[
\limsup_{n\to\infty}
\sup_{P\in\mathbf P_0}
P\left\{
T_n^m>c^m(1-\alpha,\Omega(P))
\right\}
\leq \alpha.
\]
Taking the supremum over \(P\in\mathbf P_0\) in
\eqref{eq:fixed-cv-oracle-bound} and then the limit superior yields
\[
\limsup_{n\to\infty}
\sup_{P\in\mathbf P_0}
\mathbb E_P[\phi_n^{m,M}]
\leq \alpha.
\]

\subsection{Proof of Theorem \ref{thm:conditional-uniform-validity}}

Suppose, by way of contradiction, that the conclusion fails. Then there exist
$\eta>0$, a subsequence $n_l$, and a sequence
$\{P_{n_l}\}\subset\mathbf P_0$ such that
\begin{align}\label{eq:conditional-validity-contraposition}
    \mathbb E_{P_{n_l}}[\psi_{n_l}] > \alpha+\eta
\qquad\text{for all }n_l.
\end{align}
By Bolzano-Weierstrass theorem, there exists a further subsequence, still denoted by $n_l$, such that $\Omega(P_{n_l})\to \Omega^* \in \bar{\mathbf{O}}$. By Lemma \ref{alem:consistency-sample-corr} and the triangle inequality, $\|\hat{\Omega}_{n_l} -\Omega^*\|\overset{P_{n_l}}{\to}0$. By continuity of $\kappa_\tau(\Omega)$ in $\bar{\mathbf{O}}$ by Lemma \ref{lem:kappa-continuity} and the continuous mapping theorem, 
\begin{align*}
    \hat{\kappa}_{n_l} := \kappa_\tau (\hat{\Omega}_{n_l}) \overset{P_{n_l}}{\to}\kappa_\tau ({\Omega}^*)  =:\kappa^*
\end{align*}
By symmetry of the null and of the test construction, it is enough to consider the case $\mu(P_{n_l})\in \mathbf R_+^k$ for all ${n_l}$. As in the proof for Theorem \ref{thm:LFtest-uniform-validity}, we now distinguish two cases.

\noindent
\textbf{Case 1:}
\[
\frac{\sqrt {n_l}\,\mu_j(P_{n_l})}{\sigma_j(P_{n_l})}\to +\infty
\qquad\forall j\in[k].
\]
By Lemma \ref{lem:statistic-degenerate-case}, we have $ \sqrt{n_l} \bar W_{j,n_l} /S_{j,n_l}\overset{P_{n_l}}{\to} \infty$ for all $j\in[k].$ The union bound gives
\[
P_{n_l} \{\mathcal I_+=[k]\} = P_{n_l} \{ \min_{1\le j\le k} \sqrt{{n_l}} \bar W_{j,{n_l}} /S_{j,{n_l}}>  \hat{\kappa}_{n_l} \}
\ge
1-\textstyle\sum_{j=1}^k P_{n_l} \{\sqrt{{n_l}} \bar W_{j,{n_l}} /S_{j,{n_l}}\le  \hat{\kappa}_{n_l}\}
\to 1.
\]
Since $ I_{\mathrm{alt}} + I_{\mathrm{pos}} +I_{\mathrm{neg}}+I_{\mathrm{origin}}= 1-I_{null}$ and $P_{n_l}\{I_{null} =1\} = P_{n_l}\{\mathcal{I}_+=[k] \text{ or }\mathcal{I}_-=[k]\} {\to}1$, we have
\[
\mathbb E_{P_{n_l}}[\psi_{n_l}]\le 1- P_{n_l} \{I_{null}=1\} \to 0,
\]
contradicting \eqref{eq:conditional-validity-contraposition}.

\noindent
\textbf{Case 2:}
There exists a further subsequence and a nonempty set $I\subset[k]$ such that
\[
\frac{\sqrt {n_l}\,\mu_j(P_{n_l})}{\sigma_j(P_{n_l})}\to \delta_j\in[0,\infty)\quad
\text{for}\quad j\in I\quad\text{and}\quad
\frac{\sqrt {n_l}\,\mu_j(P_{n_l})}{\sigma_j(P_{n_l})}\to +\infty
\quad
\text{for}\quad j\not\in I.
\]
Any realization with $I_{\mathrm{alt}}=1$ or
$I_{\mathrm{neg}}=1$ must contain at least one coordinate below $-\hat\kappa_{n_l}$. Hence,
with
\[
B_{-,n_l}:=\{\exists j\in[k]:  \sqrt{n_l} \bar W_{j,n_l} /S_{j,n_l}<-\hat\kappa_{n_l}\},
\]
we have $\{I_{\mathrm{alt}}=1\}\cup\{I_{\mathrm{neg}}=1\}\subseteq B_{-,n_l}.$ Therefore,
\begin{align}
\mathbb E_{P_{n_l}}[\psi_{n_l}]
&=
P_{n_l}\{I_{\mathrm{alt}}=1\}
+
P_{n_l} \{I_{\mathrm{neg}}=1,\psi_{n_l}^-(\alpha-\tau)=1\} 
\nonumber\\
&+
\mathbb E_{P_{n_l}}[ I_{\mathrm{pos}}\,\psi_{n_l}^+(\alpha-\tau)]
+
\mathbb E_{P_{n_l}}[I_{\mathrm{origin}}\,\psi_{n_l}^0(\alpha-\tau)] \nonumber \\
&\le
P_{n_l}\{B_{-,{n_l}}\}
+
\mathbb E_{P_{n_l}}[ I_{\mathrm{pos}}\,\psi_{n_l}^+(\alpha-\tau)]
+
\mathbb E_{P_{n_l}}[I_{\mathrm{origin}}\,\psi_{n_l}^0(\alpha-\tau)].\label{eq:main-decomp}
\end{align}

First we show 
\begin{align}\label{eq:B-op(1)}
    \limsup_{{n_l}\to\infty} P_{n_l}\{B_{-,{n_l}}\}\le \tau .
\end{align}
Without loss of generality, write $I=\{1,\dots,\tilde k\}$. By Lemmas \ref{alem:consistency-sample-variance} and \ref{alem:uniformCLT}, $(\sqrt{n_l}\bar W_{j,n_l}/S_{j,n_l} )_{j\in I} \Rightarrow N(\delta_I,\Omega^*_{I\times I})$ 
where $\delta_I=(\delta_j)_{j\in I}$ and $\Omega^*_{I\times I}$ is the principal
submatrix of $\Omega^*$ indexed by $I$.
Also, for every $j\notin I$, $ \sqrt{n_l} \bar W_{j,n_l} /S_{j,n_l}\overset{P_{n_l}}{\to} \infty$. Hence
\[
P_{n_l}\{B_{-,n_l}\}
\le
P_{n_l} \{\min_{j\in I}\sqrt{n_l} \bar W_{j,n_l} /S_{j,n_l}<-\hat\kappa_{n_l}\}
+
P_{n_l}\{\exists j\notin I: \sqrt{n_l} \bar W_{j,n_l} /S_{j,n_l} <-\hat\kappa_{n_l}\},
\]
and the second term converges to $0$. For the first term, we have
\[
\lim_{n_l\to\infty} P_{n_l} \{\min_{j\in I} \sqrt{n_l} \bar W_{j,n_l} /S_{j,n_l} <- \hat\kappa_{n_l}\}
\overset{(a)}{=}
P \{\min_{j\in I}(Y_j+\delta_j) < - \kappa^*\}
\overset{(b)}{\leq} 
P \{\min_{j\in I}Y_j < - \kappa^*\}
\]
with $Y\sim N(0_I,\Omega^*_{I\times I})$ where the continuous mapping theorem and Slutsky's theorem give (a), and (b) holds because $\delta_j\ge 0$ for all $j\in I$. Furthermore, 
\begin{align*}
P \{\min_{j\in I}Y_j < -\kappa^*\} 
&\overset{(a)}{\leq}
P \{\max_{j\in I} Y_j > \kappa^*\}
\overset{(b)}{\leq}
P \{\max_{j\in I} (\Omega^{*1/2}Z)_j> \kappa^*\} \\
&\overset{(c)}{\leq}
P \{\max_{j\in [k]} (\Omega^{*1/2}Z)_j> \kappa^*\}
\overset{(d)}{\leq}
\tau
\end{align*} 
where (a) by symmetry of mean-zero normal random variable, (b) follows from $\max_{j\in I} Y_j\overset{d}{=}\max_{j\in I} (\Omega^{*1/2}Z)_j$, (c) holds as $\max_{j\in [k]}(\Omega^{*1/2}Z)_j\geq \max_{j\in I}(\Omega^{*1/2}Z)_j$ a.s., and (d) follows from the definition of $\kappa^*.$ Therefore, we have \eqref{eq:B-op(1)}.

Next we show
\begin{align}\label{eq:conditional-test-validity-nondegenerate}
\limsup_{l\to\infty} \{ \mathbb E_{P_{n_l}}[ I_{\mathrm{pos}}\,\psi_{n_l}^+(\alpha-\tau)]
+
\mathbb E_{P_{n_l}}[I_{\mathrm{origin}}\,\psi_{n_l}^0(\alpha-\tau)]\} \leq \alpha-\tau 
\end{align}
To this end, define
\begin{align*}
a_{n_l} &:=
\begin{cases}
\mathbb E_{P_{n_l}}\left[\psi_{n_l}^+(\alpha-\tau)\mid I_{\mathrm{pos}}=1\right],
& \text{if } P_{n_l}\{I_{\mathrm{pos}}=1\}>0,\\[4pt]
0, & \text{if } P_{n_l}\{I_{\mathrm{pos}}=1\}=0,
\end{cases} \text{ and }\\
b_{n_l} &:=
\begin{cases}
\mathbb E_{P_{n_l}}\left[\psi_{n_l}^0(\alpha-\tau)\mid I_{\mathrm{origin}}=1\right],
& \text{if } P_{n_l}\{I_{\mathrm{origin}}=1\}>0,\\[4pt]
0, & \text{if } P_{n_l}\{I_{\mathrm{origin}}=1\}=0.
\end{cases}
\end{align*}
Then, for every \(n_l\), $\mathbb E_{P_{n_l}}[I_{\mathrm{pos}}\psi_{n_l}^+(\alpha-\tau)]
=
a_{n_l}\,P_{n_l}\{I_{\mathrm{pos}}=1\},$ and $\mathbb E_{P_{n_l}}[I_{\mathrm{origin}}\psi_{n_l}^0(\alpha-\tau)]
=
b_{n_l}\,P_{n_l}\{I_{\mathrm{origin}}=1\}$. Since \([0,1]^2\) is compact, after passing to a further subsequence if necessary, we may
assume
\[
P_{n_l}\{I_{\mathrm{pos}}=1\}\to p_+ \in[0,1],
\qquad
P_{n_l}\{I_{\mathrm{origin}}=1\}\to p_0\in[0,1].
\]
We now bound the two terms separately.
If \(p_+=0\), then $\mathbb E_{P_{n_l}}[I_{\mathrm{pos}}\psi_{n_l}^+(\alpha-\tau)]
\le P_{n_l}\{I_{\mathrm{pos}}=1\}\to 0.$ If \(p_+>0\), then \(\liminf_{n_l\to\infty}P_{n_l}\{I_{\mathrm{pos}}=1\}>0\), and
Lemma \ref{lem:branchwise-conditional-approx} implies that for any \(\varepsilon>0\),
there exists \(L_1=L_1(\varepsilon)\) such that for all \(n_l\ge L_1\), $a_{n_l}\le \alpha-\tau+\varepsilon.$ Hence, for all \(n_l\ge L_1\),
\[
\mathbb E_{P_{n_l}}[I_{\mathrm{pos}}\psi_{n_l}^+(\alpha-\tau)]
\le
(\alpha-\tau+\varepsilon)P_{n_l}\{I_{\mathrm{pos}}=1\}.
\]
Similarly, if \(p_0=0\), then $\mathbb E_{P_{n_l}}[I_{\mathrm{origin}}\psi_{n_l}^0(\alpha-\tau)]
\le P_{n_l}\{I_{\mathrm{origin}}=1\}\to 0.$ If \(p_0>0\), then necessarily \(I=[k]\).
Lemma \ref{lem:branchwise-conditional-approx} implies that for any
\(\varepsilon>0\), there exists \(L_2=L_2(\varepsilon)\) such that for all \(n_l\ge L_2\), $b_{n_l}\le \alpha-\tau+\varepsilon.$ Hence, for all \(n_l\ge L_2\),
\[
\mathbb E_{P_{n_l}}[I_{\mathrm{origin}}\psi_{n_l}^0(\alpha-\tau)]
\le
(\alpha-\tau+\varepsilon)P_{n_l}\{I_{\mathrm{origin}}=1\}.
\]
Combining the two cases, we obtain that for any \(\varepsilon>0\),
\[
\limsup_{n_l\to\infty}
\Big(
\mathbb E_{P_{n_l}}[I_{\mathrm{pos}}\psi_{n_l}^+(\alpha-\tau)]
+
\mathbb E_{P_{n_l}}[I_{\mathrm{origin}}\psi_{n_l}^0(\alpha-\tau)]
\Big)
\le
(\alpha-\tau+\varepsilon)(p_+ + p_0).
\]
Since \(I_{\mathrm{pos}}\) and \(I_{\mathrm{origin}}\) are disjoint events, \(p_+ + p_0\le 1\). Therefore we have \eqref{eq:conditional-test-validity-nondegenerate}. Combining this bound with \eqref{eq:main-decomp} and \eqref{eq:B-op(1)}, we obtain
\[
\limsup_{{n_l}\to\infty}\mathbb E_{P_{n_l}}[\psi_{n_l}]
\le
\alpha+\varepsilon.
\]
Since \(\varepsilon>0\) is arbitrary, it follows that $\limsup_{{n_l}\to\infty}\mathbb E_{P_{n_l}}[\psi_{n_l}]
\le \alpha,$ which contradicts \eqref{eq:conditional-validity-contraposition}. The case $\mu(P_n)\in\mathbf R_-^k$ is identical after exchanging the roles of the positive and negative branches. The proof is complete.

\subsection{Proof of conditional critical values}
\begin{proposition}
\label{prop:conditional-lf-pos-branch}
Fix a realization of the first-step screening sets \(\mathcal I_+\), \(\mathcal I_-\), and \(\mathcal I_0\), and let \(\kappa_\tau:=\kappa_\tau(I_k)\) be the threshold defined in \eqref{def:kappa-screening-threshold}. Then the following statements hold.
\begin{enumerate}
    \item[(i)] For $\ell=m$ and every
$t\in[-\kappa_\tau,\kappa_\tau]$, and for $\ell\in\{s,q\}$ and every $t\geq0$, the conditional CDF on the positive branch, $L^{\ell,+}(t;\mu,I_k, \mathcal{I}_0),$
    is weakly increasing in each \(\mu_j\), \(j\in\mathcal I_0\). Consequently,
\begin{align*}
    \inf_{\mu\in\mathbf R_+^k}
    L^{\ell,+}_{\mathcal I_0}(t,\mu,I_k)
    &=
    L^{\ell,+}_{\mathcal{I}_0}(t,0_k,I_k) 
    =\left\{
    \begin{matrix}
    \left(\frac{\Phi(\kappa_\tau)-\Phi(-t)}{2\Phi(\kappa_\tau)-1}\right)^{|\mathcal{I}_0|}& \text{ for }\ell=m\\
    \sum_{r=0}^{|\mathcal{I}_0|} {|\mathcal{I}_0|\choose r}2^{-|\mathcal{I}_0|} F_{\kappa_\tau}^{*r}(t) & \text{ for }\ell=s,q
    \end{matrix}\right.
\end{align*}
where \(F_{\kappa_\tau}^{*0}(t):=I\{t\ge 0\}\), \(F_{\kappa_\tau}^{*r}\) denotes the \(r\)-fold convolution CDF of \(F_{\kappa_\tau}\), and \(F_{\kappa_\tau}\) is the CDF on \([0,\kappa_\tau^2]\) given by
\begin{align}
F_{\kappa_\tau}(u)
:=
P\{
\chi_1^2\le u \mid \chi_1^2\le \kappa_\tau^2\}
=
\frac{2\Phi(\sqrt u)-1}{2\Phi(\kappa_\tau)-1},
\qquad 0\le u\le \kappa_\tau^2.
\label{eq:Fkappa-def}
\end{align}
\item[(ii)] For $\ell=m$ and every
$t\in[-\kappa_\tau,\kappa_\tau]$, and for $\ell\in\{s,q\}$ and every $t\geq0$, the conditional CDF on the negative branch,
    \(L^{\ell,-}_{\mathcal I_0}(t,\mu,I_k)\), is weakly decreasing in each
    \(\mu_j\), \(j\in\mathcal I_0\). Consequently, for all $\ell\in\{m,s,q\},$
    \begin{align*}
    \inf_{\mu\in\mathbf R_-^k}
    L^{\ell,-}_{\mathcal I_0}(t,\mu,I_k)
    =L^{\ell,-}_{\mathcal I_0}(t,0_k,I_k)
    =L^{\ell,+}_{\mathcal I_0}(t,0_k,I_k).
    \end{align*}
\end{enumerate}
\end{proposition}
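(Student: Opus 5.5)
The proof reduces the conditional law to independent truncated Gaussians and then uses stochastic monotonicity of that family in its location parameter. With $\theta_2=I_k$, $Z_\theta\sim N(\mu,I_k)$ has independent coordinates. The conditioning event in \eqref{def:Lminus-selective} (and its positive-branch analogue) is a product of coordinatewise intervals. Hence, conditional on the screening event, the coordinates remain independent. Each $Z_{\theta,j}$ with $j\in\mathcal I_0$ is distributed as $N(\mu_j,1)$ truncated to $[-\kappa_\tau,\kappa_\tau]$, and the screened coordinates are independent of these. The reduced statistic $T^{\ell,\pm}(\cdot;\mathcal I_0)$ depends only on $(Z_{\theta,j})_{j\in\mathcal I_0}$; for $\ell=q$ with identity correlation it coincides with $T^{s,\pm}$. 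So the screened coordinates drop out, and the conditional CDF is the CDF of a function of $|\mathcal I_0|$ independent truncated normals $X_j\sim TN(\mu_j;[-\kappa_\tau,\kappa_\tau])$.

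The key step is monotonicity in each $\mu_j$. The family $TN(\mu;[a,b])$ has density proportional to $e^{\mu x}\phi(x)$ on $[a,b]$. This is an exponential family in $x$, so it has monotone likelihood ratio and is stochastically increasing in $\mu$; this is the fact cited from \cite{lee2016AoS-post-selection-LASSO}. On the positive branch the statistic is $T^{\ell,+}=T^{\ell,-}(-x;\mathcal I_0)$, which equals $\max_{j}(-X_j)$ for $\ell=m$ and $\sum_j X_j^2 I\{X_j\le0\}$ for $\ell=s,q$. Both are coordinatewise nonincreasing in each $X_j$. Fix all other coordinates and couple $X_j$ monotonically, for instance through the quantile transform. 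Then $P\{T^{\ell,+}\le t\}$ is weakly increasing in $\mu_j$. Taking expectations over the other independent coordinates gives (i)'s monotonicity for every $t$; the stated ranges of $t$ are simply where the CDF is nontrivial. Since $\mathbf R_+^k$ constrains $\mu_j\ge0$, the infimum is attained at $\mu_j=0$ for $j\in\mathcal I_0$, and the other coordinates are irrelevant, so it equals the value at $0_k$. Part (ii) is symmetric: $T^{\ell,-}$ is nondecreasing in each $X_j$, so the CDF decreases in $\mu_j$, and over $\mathbf R_-^k$ the infimum is at $\mu_j=0$. The equality $L^{\ell,-}(t,0_k)=L^{\ell,+}(t,0_k)$ follows from the symmetry $X\overset{d}{=}-X$ of the centered truncated normal.

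It remains to evaluate the CDFs at $\mu=0$, with $X_j$ i.i.d.\ $TN(0;[-\kappa_\tau,\kappa_\tau])$. For $\ell=m$,
\[
P\{\max_j(-X_j)\le t\}=\prod_j P\{X_j\ge -t\}=\Bigl(\tfrac{\Phi(\kappa_\tau)-\Phi(-t)}{2\Phi(\kappa_\tau)-1}\Bigr)^{|\mathcal I_0|}
\]
for $t\in[-\kappa_\tau,\kappa_\tau]$. For $\ell=s,q$, condition on the random set of negative coordinates. Each sign is an independent fair coin by symmetry, so the number of negative coordinates $r$ is $\mathrm{Bin}(|\mathcal I_0|,1/2)$. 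Given its sign, $X_j^2$ has the law of $\chi_1^2$ conditioned on $\chi_1^2\le\kappa_\tau^2$, which is $F_{\kappa_\tau}$ in \eqref{eq:Fkappa-def}, and it is independent of the sign because $|X_j|$ and $\mathrm{sign}(X_j)$ are independent under symmetry. This yields the binomial mixture of convolutions $F_{\kappa_\tau}^{*r}$, with $F^{*0}_{\kappa_\tau}=I\{t\ge0\}$.

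The main obstacle is the careful justification that conditioning on the rectangle preserves independence and removes the screened coordinates. A second delicate point is the monotone coupling argument for the non-smooth statistics. For the coupling I would first try the quantile representation $X_j=G_{\mu_j}^{-1}(U_j)$, where $G_{\mu}$ is the truncated-normal CDF. MLR gives that $G_\mu^{-1}(u)$ is increasing in $\mu$ for each $u$, and the claim then holds pathwise.
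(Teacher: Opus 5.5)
Your proposal is correct and takes essentially the same route as the paper. You factor the conditional law into independent truncated normals on $\mathcal I_0$, combine stochastic monotonicity of the truncated Gaussian in its location with coordinatewise monotonicity of the statistic to place the infimum at the origin, and evaluate there via the product formula for $\ell=m$ and the fair-sign binomial mixture of $F_{\kappa_\tau}^{*r}$ for $\ell=s,q$. The only differences are cosmetic: you make the monotonicity pathwise with a quantile coupling, and you handle the negative branch directly rather than through the reflection $Z\mapsto -Z$, $\mu\mapsto-\mu$ that the paper uses.
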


\begin{proof}
We prove (i) first. The proof of (ii) follows by symmetry. Recall that on the positive branch, $\mathcal I_+\neq\varnothing,$ and $\mathcal I_-=\varnothing$. We first record a useful monotonicity fact. For $Z_1~N(0,1)$ and $\mu_1\in\mathbf{R}$, it is known that the truncated Gaussian random variable $Z_1+\mu_1 \mid |Z_1+\mu_1|\le \kappa_\tau$ is stochastically increasing in \(\mu_1\) by Lemma A.1 in \cite{lee2016AoS-post-selection-LASSO}. Below, we consider $\ell=m$ and $\ell=s,q$ separately. 

\medskip
\noindent\textit{Case $\ell=m$.} For fixed \(t\in[-\kappa_\tau,\kappa_\tau]\), the conditional CDF can be expressed as
\begin{align*}
L^{m,+}_{\mathcal I_0}(t,\mu,I_k)
&= P\{ \max_{j\in\mathcal{I}_0} (-Z_j-\mu_j)\leq t \mid Z_j+\mu_j>\kappa_\tau ~\forall j\in\mathcal{I}_+, ~ |Z_j+\mu_j|\leq\kappa_\tau ~\forall j\in\mathcal{I}_0\}\\
&=\textstyle\prod_{j\in \mathcal I_0} P\{Z_j+\mu_j\geq -t \mid |Z_j+\mu_j|\leq\kappa_\tau\} = \textstyle\prod_{j\in \mathcal I_0}  \ell_t(\mu_j)
\end{align*}
where $Z_1,...,Z_k \sim i.i.d.~N(0,1)$ and 
$$\ell_t(\mu_1)
:=
1-P\{
Z_1+\mu_1 \leq -t
\mid
-\kappa_\tau\le Z_1+\mu_1\le \kappa_\tau
\}.$$
Since $Z_1+\mu_1 \mid |Z_1+\mu_1|\le \kappa_\tau
$ is stochastically increasing in \(\mu_1\), \(\ell_t(\mu_1)\) is weakly increasing in \(\mu_1\). As the positive-branch null imposes \(\mu_j\ge 0\) for all \(j\in\mathcal I_0\), the infimum is attained at the boundary point \(\mu_j=0\) for all \(j\in\mathcal I_0\). Furthermore,
\begin{align*}
L^{m,+}_{\mathcal I_0}(t,0_k,I_k)
&= \ell_t(0)^{|\mathcal{I}_0|} 
= \left(\frac{\Phi(\kappa_\tau)-\Phi(-t)}{\Phi(\kappa_\tau)-\Phi(-\kappa_\tau)}\right)^{|\mathcal{I}_0|}
= \left(\frac{\Phi(\kappa_\tau)-\Phi(-t)}{2\Phi(\kappa_\tau)-1}\right)^{|\mathcal{I}_0|}.
\end{align*}

\medskip
\noindent\textit{Case $\ell=s,q$.} For fixed \(t\in[-\kappa_\tau,\kappa_\tau]\), the conditional CDFs satisfy
\begin{align*}
L^{s,+}_{\mathcal I_0}(t,\mu,I_k)
&=L^{q,+}_{\mathcal I_0}(t,\mu,I_k)\\
&= P\{ \textstyle\sum_{j\in\mathcal{I}_0} (Z_j+\mu_j)^2 I\{Z_j+\mu_j\leq 0\}\leq t 
\mid |Z_j+\mu_j|\leq\kappa_\tau ~\forall j\in\mathcal{I}_0\}.
\end{align*}
Since the mapping $x\mapsto x^2I\{x\leq 0\}$ is nonincreasing, $(Z_j+\mu_j)^2 I\{Z_j+\mu_j\leq 0\}
\mid |Z_j+\mu_j|\leq\kappa_\tau$ is stochastically nonincreasing in $\mu_j$. By independence across \(j\in\mathcal I_0\), their sum $\sum_{j\in\mathcal{I}_0}(Z_j+\mu_j)^2 I\{Z_j+\mu_j\leq 0\}$ conditional on $\{|Z_j+\mu_j|\leq\kappa_\tau ~~\forall j\in\mathcal{I}_0\}$ is also nonincreasing. Therefore, $L^{\ell,+}_{\mathcal I_0}(t,\mu,I_k)$ is weakly increasing in each $\mu_j$, $j\in\mathcal{I}_0$ for both $\ell=s, q$ and the infimum is attained at the boundary point \(\mu_j=0\) for all \(j\in\mathcal I_0\).

It remains to compute the CDF at \(\mu=0_k\). Let $Y_j:=Z^2_j I\{Z_j\leq 0\} \mid |Z_j|\le \kappa_\tau$ for $j=1,...,k.$ Note that \(Y_j\) admits the mixture representation $Y_j\overset d= B U$ where \(B\sim \mathrm{Bernoulli}(\tfrac12)\), \(U\sim \chi_1^2\mid \chi_1^2\le \kappa_\tau^2\), and \(B\) and \(U\) are independent.  The CDF of \(U\) is
\[
F_{\kappa_\tau}(u)
=
P\{\chi_1^2\le u\mid \chi_1^2\le\kappa_\tau^2\}
=
\frac{2\Phi(\sqrt u)-1}{2\Phi(\kappa_\tau)-1},
\qquad 0\le u\le\kappa_\tau^2.
\]
Therefore, if exactly \(r\) of the \(|\mathcal I_0|\) Bernoulli variables equal one, which
occurs with probability
\(
{|\mathcal I_0|\choose r}2^{-|\mathcal I_0|},
\)
then the sum \(\sum_{j\in\mathcal I_0}Y_j\) has CDF \(F_{\kappa_\tau}^{*r}\), with the
convention \(F_{\kappa_\tau}^{*0}(t)=I\{t\ge0\}\). Summing over
\(r=0,\ldots,|\mathcal I_0|\) gives
\[
L^{s,+}_{\mathcal I_0}(t, 0_k,I_k)
=
L^{q,+}_{\mathcal I_0}(t,0_k,I_k)
=
\textstyle\sum_{r=0}^{|\mathcal I_0|}
{|\mathcal I_0|\choose r}2^{-|\mathcal I_0|}
F_{\kappa_\tau}^{*r}(t).
\]

We now prove (ii). Let \(\tilde Z:=-Z\) and \(\tilde\mu:=-\mu\). Since
\(\tilde Z\sim N(0,I_k)\), the negative-branch conditioning event
$\{Z_j+\mu_j<-\kappa_\tau ~ \forall j\in\mathcal I_-,~
|Z_j+\mu_j|\le\kappa_\tau ~ \forall j\in\mathcal I_0\}$ is transformed into the positive-branch conditioning event
$\{\tilde Z_j+\tilde\mu_j>\kappa_\tau ~ \forall j\in\mathcal I_-,
~|\tilde Z_j+\tilde\mu_j|\le\kappa_\tau ~ \forall j\in\mathcal I_0\}$.
Moreover, for \(\Omega=I_k\), the negative-branch statistics transform into the corresponding
positive-branch statistics:
\(
T^{\ell,-}(Z+\mu, I_k,\mathcal I_0)
=
T^{\ell,+}(\tilde Z+\tilde\mu, I_k, \mathcal I_0),
\) for all $\ell.$ Hence
\[
L^{\ell,-}_{\mathcal I_0}(t,\mu,I_k)
=
L^{\ell,+}_{\mathcal I_0}(t,-\mu,I_k),
\qquad \ell\in\{m,s,q\}
\]
and the rest follows.
\end{proof}

\begin{proposition}\label{prop:conditional-lf-origin-m}
Suppose \(k\ge 2\) and \(\Omega=I_k\). Let $\kappa_\tau:=\kappa_\tau(I_k)$ and, for each \(r\ge 0\), define $\theta_1^{(r)}:=(r,0,\ldots,0)' \in \mathbf R^k .$ Consider the origin-branch conditional CDF $L^{m,0}(t,\mu,I_k)
$ in \eqref{def:Lzero-selective}. The following statements hold.
\begin{enumerate}
    \item[(i)] For each fixed \(t\in[0,\kappa_\tau)\), the map \(r\mapsto L^{m,0}(t,\theta_1^{(r)},I_k)\) is weakly decreasing. Consequently, for every \(p\in(2^{1-k},1)\), the quantile $(L^{m,0})^{-1}(p,\theta_1^{(r)},I_k)
    =
    \inf\{t\in\mathbf R:
    L^{m,0}(t,\theta_1^{(r)},I_k)\ge p\}$ is weakly increasing in \(r\).
    
    \item[(ii)] For $\gamma\in(0,1-2^{1-k}),$
    \begin{align}\label{eq:max-stat-lf}
    \begin{aligned}
          \sup_{\mu\in\mathbf{R}^k_+\cup\mathbf{R}^k_-}(L^{m,0})^{-1}(1-\gamma,\mu,I_k) &=\lim_{r\to\infty}(L^{m,0})^{-1}(1-\gamma,\theta_1^{(r)},I_k) \\
        &= \Phi^{-1}\left(1-\Phi(\kappa_\tau)+\bigl(2\Phi(\kappa_\tau)-1\bigr)(1-\gamma)^{1/(k-1)}
    \right).
    \end{aligned}  
    \end{align}
\end{enumerate}
\end{proposition}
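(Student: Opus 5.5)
The plan is to use independence under $\Omega=I_k$ to turn $L^{m,0}$ into an explicit product formula, and then read off monotonicity coordinate by coordinate.

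\textbf{Step 1: product formula.} Let $X_j:=Z_j+\mu_j$ conditional on $|Z_j+\mu_j|\le\kappa_\tau$. Under $\Omega=I_k$ the conditioning event is a product of coordinatewise events, so $X_1,\ldots,X_k$ are independent truncated Gaussians on $[-\kappa_\tau,\kappa_\tau]$. For $t\ge0$, the event $\{T^m\le t\}$ is the complement of $\{\exists i: X_i>t\}\cap\{\exists j: X_j<-t\}$. Write $p_j(\mu_j,t)=P\{X_j>t\}$ and $q_j(\mu_j,t)=P\{X_j<-t\}$. Inclusion--exclusion gives
\[
L^{m,0}(t,\mu,I_k)=\textstyle\prod_j(1-p_j)+\prod_j(1-q_j)-\prod_j(1-p_j-q_j).
\]
This is affine in each pair $(p_j,q_j)$ separately. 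For $j\ge2$ at $\theta_1^{(r)}$ we have $p_j=q_j=a(t):=\{\Phi(\kappa_\tau)-\Phi(t)\}/\{2\Phi(\kappa_\tau)-1\}$.

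\textbf{Step 2: part (i).} Substituting into the formula gives
\[
L^{m,0}(t,\theta_1^{(r)},I_k)=(2-p_1-q_1)(1-a)^{k-1}-(1-p_1-q_1)(1-2a)^{k-1}.
\]
Its $r$-dependence is therefore $-(p_1+q_1)\{(1-a)^{k-1}-(1-2a)^{k-1}\}$, and the bracket is nonnegative. So it suffices to show that $p_1+q_1=P\{|X_1|>t\}$ is nondecreasing in $r\ge0$. The density of $|X_1|$ on $[0,\kappa_\tau]$ is proportional to $\phi(y)\cosh(ry)$. The ratio $\cosh(r'y)/\cosh(ry)$ is increasing in $y$ for $r'>r\ge0$, which gives a monotone likelihood ratio and hence stochastic increase of $|X_1|$ in $r$.

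\textbf{Step 2 (cont.): quantile monotonicity.} I would first check that $P\{T^m<0\}=P\{\text{all }X_j\text{ share a strict sign}\}=2^{1-k}$ along the whole path $\theta_1^{(r)}$. This holds because the $k-1$ centred coordinates are symmetric. Hence for $p>2^{1-k}$ every $p$-quantile lies in $[0,\kappa_\tau)$, where the pointwise CDF ordering applies, and the quantiles are nondecreasing in $r$.

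\textbf{Step 3: the limit in (ii).} As $r\to\infty$, $X_1\to\kappa_\tau$ in probability, so $p_1\to1$ and $q_1\to0$ for every $t<\kappa_\tau$. The CDF then converges to $(1-a(t))^{k-1}$, which is continuous and strictly increasing in $t$ on the relevant range. Solving $(1-a(t))^{k-1}=1-\gamma$ gives
\[
\Phi(t)=1-\Phi(\kappa_\tau)+\{2\Phi(\kappa_\tau)-1\}(1-\gamma)^{1/(k-1)},
\]
and quantile convergence follows from strict monotonicity of the limit CDF.

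\textbf{Step 4: the supremum over all of $\mathbf R_+^k$ (main obstacle).} By sign symmetry it suffices to treat $\mathbf R_+^k$. The goal is the pointwise bound
\[
L^{m,0}(t,\mu,I_k)\ge(1-a(t))^{k-1}\qquad\text{for all }\mu\in\mathbf R_+^k,\ t\in[0,\kappa_\tau).
\]

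First I would reduce each coordinate to an endpoint. Fix all coordinates except $\mu_j$, so $L=A-Bp_j-Cq_j$ with $B,C\ge0$ depending on the others. Along $\mu_j\in[0,\infty)$ we have $p_j'>0$ and $q_j'<0$. Then $\mathrm{sign}(dL/d\mu_j)=-\mathrm{sign}\bigl(B+C\,q_j'/p_j'\bigr)$. I would try to show, using the explicit truncated-normal formulas, that $-q_j'/p_j'$ is monotone in $\mu_j$. That gives single crossing, so $L$ is quasi-concave in $\mu_j$ and its infimum is attained at $\mu_j=0$ or $\mu_j=\infty$.

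Iterating over coordinates reduces the problem to configurations in which $a$ coordinates are at $\infty$ and the rest are at $0$.
\begin{itemize}
\item For $a\ge1$, the coordinates at $\infty$ force $\prod_j(1-p_j)\to0$ and $p_j+q_j\to1$ for those $j$. The CDF becomes $\prod_{j\text{ finite}}(1-q_j)=(1-a(t))^{k-a}\ge(1-a(t))^{k-1}$.
\item The all-zero configuration is dominated by Step 2.
\end{itemize}

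If the single-crossing claim fails, I would fall back on the Schur-concavity route the paper indicates. That route proves the bound for $k=2$ by direct two-variable analysis of the map $(p_1,q_1,p_2,q_2)\mapsto L$, and then inducts on $k$ by conditioning on $X_k$. Conditioning on $X_k$ turns $L$ into a mixture of $(k-1)$-dimensional origin-type and one-sided-type CDFs, each bounded by the induction hypothesis or by Proposition~\ref{prop:conditional-lf-pos-branch}.
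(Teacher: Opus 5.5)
Your part (i) and the limit computation in Step 3 are correct. For (i) you get monotonicity of $P\{|X_1|>t\}$ from the likelihood ratio $\cosh(r'y)/\cosh(ry)$, whereas the paper uses log-concavity and evenness of $u\mapsto P\{|Z_u|\le t\mid |Z_u|\le\kappa_\tau\}$; either works.

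The gap is Step 4, which carries the entire content of the first equality in (ii). The single-crossing claim is only announced, not proved. It says $-q_j'/p_j'$ is monotone, and it would have to be nonincreasing, since it equals $1$ at $\mu_j=0$ and vanishes as $\mu_j\to\infty$ when $t>0$.

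The fallback does not rescue it. It is not the paper's argument: the paper inducts on Schur-concavity, not on the bound. Bounding the pieces separately, as you describe, fails. Conditioning on $X_k$ gives
\[
L_k=(1-p_k-q_k)L_{k-1}+p_k\textstyle\prod_{j<k}(1-q_j)+q_k\prod_{j<k}(1-p_j).
\]
Over $\mathbf R_+^{k-1}$ the last product has infimum $0$. So at $\mu_k=0$, writing $a=a(t)$, the termwise bounds yield only $(1-a)^{k-2}(1-a-a^2)<(1-a)^{k-1}$.

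The missing idea is that single crossing is unnecessary. Because $L=A'-Bp_j-Cq_j$ with $B,C\ge0$, you only need the curve $\mu\mapsto(p(\mu),q(\mu))$, $\mu\in[0,\infty]$, to satisfy two conditions:
\begin{itemize}
\item $0\le q(\mu)\le a$;
\item it lies below the chord from $(a,a)$ to $(1,0)$, i.e.\ $a\,p(\mu)+(1-a)\,q(\mu)\le a$.
\end{itemize}
These give $Bp+Cq\le\max\{(B+C)a,\,B\}$. If $Ca\le B(1-a)$, bound $Cq\le B(1-a)q/a$. Otherwise, apply $q\le a$ to the excess coefficient $C-B(1-a)/a$.

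Both conditions are elementary. Set $N(\mu)=\int_t^{\kappa_\tau}e^{\mu x}\phi(x)\,dx$ and $K(\mu)=\int_{-t}^{t}e^{\mu x}\phi(x)\,dx$. Then $p=N(\mu)/Z(\mu)$, $q=N(-\mu)/Z(\mu)$, and $Z(\mu)=N(\mu)+N(-\mu)+K(\mu)$. The chord inequality is equivalent to $K(0)N(-\mu)\le N(0)K(\mu)$. This holds for $\mu\ge0$ because $N(-\mu)\le N(0)$ and $K(\mu)=\int_{-t}^t\cosh(\mu x)\phi(x)\,dx\ge K(0)$. Likewise $q(\mu)\le a$, since also $Z(\mu)\ge Z(0)$.

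With this, your coordinatewise reduction to $\{0,\infty\}^k$ and your vertex evaluation go through. Use a different letter for the number of infinite coordinates, since $a$ already denotes $a(t)$.

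So repaired, your route differs from the paper's. The paper proves Schur-concavity of $\mu\mapsto L^{m,0}(t,\mu,I_k)$ on $\mathbf R_+^k$: a rotation and likelihood-ratio argument for $k=2$, then induction using log-concavity of the one-dimensional factors and Marshall--Olkin. It then reduces to $(\sum_j\mu_j,0,\ldots,0)$ and finishes with (i). Your argument is more elementary and parallels the paper's own endpoint argument for $\ell=s,q$. The paper's route yields the stronger majorization ordering as a by-product.
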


\begin{proof}
Fix \(t\in[0,\kappa_\tau)\). For \(u\in\mathbf R\) and $Z_u\sim N(u,1)$, define $A_t(u)
=
P\{Z_u \le t \mid |Z_u|\le \kappa_\tau\}$, $B_t(u)
=
P\{Z_u\ge -t \mid |Z_u|\le \kappa_\tau\},$ and $C_t(u)
=
P\{|Z_u|\le t \mid |Z_u|\le \kappa_\tau\}$ as in the proof of Lemma \ref{lem:origin-schur-k} with $\kappa$ is replaced with $\kappa_\tau$. By applying inclusion-exclusion, write $L^{m,0}(t,\theta_1^{(r)},I_k)$ as 
\begin{align}\label{eq:origin-branch-factorization}
\begin{aligned}
    L^{m,0}(t,\theta_1^{(r)},I_k)
&=
\textstyle\prod_{j=1}^k A_t(\theta_{1,j}^{(r)})
+
\textstyle\prod_{j=1}^k B_t(\theta_{1,j}^{(r)})
-
\textstyle\prod_{j=1}^k C_t(\theta_{1,j}^{(r)})\\
&\overset{(a)}{=}(A_t(r)+ B_t(r) )A_t(0)^{k-1}
-
C_t(r) C_t(0)^{k-1}\\
&\overset{(b)}{=} A_t(0)^{k-1} + (A_t(0)^{k-1}- C_t(0)^{k-1} )C_t(r)
\end{aligned}
\end{align}
where (a) holds because $A_t(0)= B_t(0)$ and (b) holds because $A_t(r) + B_t(r) -1 = C_t(r).$ Because $k\geq 2$ and $A_t(0)-C_t(0)=P\{Z_0<-t\mid |Z_0|\leq \kappa_\tau\}>0$, it follows that $A_t(0)^{k-1}- C_t(0)^{k-1}>0$. Therefore, it suffices to show the monotonicity of \(C_t(r)\) in $r$.

If \(t=0\), then \(C_0(r)=0\) for all \(r\ge0\), and hence $L^{m,0}(0,\theta_1^{(r)},I_k)=A_0(0)^{k-1}=2^{1-k}$ and the desired monotonicity is trivial. Fix \(0<t<\kappa_\tau\). For $0\leq r_1 < r_2$, choose $\lambda:= \tfrac{1+r_1/r_2}{2}\in(0,1),$ and $r_1=\lambda r_2 + (1-\lambda)(-r_2).$ Since the map $r\mapsto C_t(r)$ is log-concave by Lemma~\ref{lem:conditional-interval-logconcavity} and \(C_t(r)=C_t(-r)\), we obtain
$$\log C_t(r_1) \geq \lambda \log C_t(r_2) + (1-\lambda) \log C_t(-r_2)=\log C_t(r_2).$$
Therefore, \(C_t(r)\) is weakly decreasing in \(r\geq 0\).

For (ii), recall Lemma \ref{lem:origin-schur-k} implies, for any $\mu\in\mathbf{R}^k_+$ and $t\geq0$,
$$L^{m,0} (t,\mu,I_k)\geq L^{m,0} (t,(\textstyle\sum_j \mu_j,0,...,0),I_k) \geq \lim_{r\to\infty}L^{m,0} (t,(r ,0,...,0),I_k)$$
where the second inequality is from (i). The symmetry $L^{m,0} (t,\mu,I_k)=L^{m,0} (t,-\mu,I_k)$ then gives 
$$\inf_{\mu\in\mathbf{R}^k_+\cup\mathbf{R}^k_-}L^{m,0} (t,\mu,I_k) = \lim_{r\to\infty}L^{m,0} (t,(r ,0,...,0),I_k).$$ Since $L^{m,0}(0, \theta^{(r)}_1, I_k)=2^{1-k}$ and $L^{m,0}(\kappa_\tau, \theta^{(r)}_1, I_k)=1$, the first equality in \eqref{eq:max-stat-lf} follows from the lower envelope distribution. To derive the second equality in \eqref{eq:max-stat-lf}, note for any $\varepsilon>0$,
\[
C_{\kappa_\tau-\varepsilon}(r):=P\{
|Z_r|\le \kappa_\tau-\varepsilon
\mid
|Z_r|\le \kappa_\tau
\}
\to 0\quad \text{as }r\to\infty.
\]
For any $t\in[0,\kappa_\tau]$, $\lim_{r\to\infty}C_t(r)= I\{t=\kappa_\tau\}.$ The expansion in \eqref{eq:origin-branch-factorization} implies 
\begin{align*}
L^{m,0}\left(t,(\infty,0,\ldots,0)',I_k\right)
=A_t(0)^{k-1} 
=\left(
\frac{\Phi(t)-\Phi(-\kappa_\tau)}
{\Phi(\kappa_\tau)-\Phi(-\kappa_\tau)}
\right)^{k-1}
\end{align*}
because $A_{\kappa_\tau}(0)=C_{\kappa_\tau}(0).$ The quantile can be obtained by solving this with respect to $1-\gamma.$
\end{proof}

\begin{proposition}
\label{prop:conditional-lf-origin-sq}
Suppose \(\Omega=I_k\), and write \(\kappa:=\kappa_\tau(I_k)\). For
\(a=0,\ldots,k\), define
\[
F^{(a)}_{k,\kappa}(t)
:=
\sum_{r=0}^{k-a}
{k-a\choose r}2^{-(k-a)}
P\left\{
\min\left[
a\kappa^2+\sum_{i=1}^{k-a-r}U_i',
\sum_{i=1}^{r}U_i
\right]\le t
\right\},
\]
where \(U_i,U_i'\) are i.i.d. with distribution
\[
F_\kappa(u)
=
P\{\chi_1^2\le u\mid \chi_1^2\le\kappa^2\}
=
\frac{2\Phi(\sqrt u)-1}{2\Phi(\kappa)-1},
\qquad 0\le u\le \kappa^2.
\]
Then, for every \(t\ge0\),
\[
\inf_{\mu\in\mathbf R_+^k\cup\mathbf R_-^k}
L^{s,0}(t,\mu,I_k)
=
\min_{a=0,\ldots,k}F^{(a)}_{k,\kappa}(t).
\]
Since \(T^q(\cdot,I_k)=T^s(\cdot)\), the same conclusion holds for
\(\ell=q\).
\end{proposition}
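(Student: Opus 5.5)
By sign symmetry ($L^{s,0}(t,\mu,I_k)=L^{s,0}(t,-\mu,I_k)$, since the conditioning box and $T^s$ are invariant under $x\mapsto -x$), it suffices to take the infimum over $\mu\in\mathbf R_+^k$. Under $\Omega=I_k$ and conditional on $\{|Z_{\theta,j}|\le\kappa\ \forall j\}$, the coordinates $X_j:=Z_{\theta,j}$ are independent. Each $X_j$ is a $N(\mu_j,1)$ variable truncated to $[-\kappa,\kappa]$. Writing $P_j=X_j^2I\{X_j\ge0\}$ and $N_j=X_j^2I\{X_j\le0\}$, we have $T^s=\min\{\sum_jP_j,\sum_jN_j\}$, and
\[
L^{s,0}(t,\mu,I_k)=E\Big[\textstyle\prod_j\cdots\Big]
\]
is a multilinear functional of the $k$ independent laws $\mathcal L(X_j)$.

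The plan has three steps. First, a coordinatewise endpoint argument: fix all coordinates except $\mu_j$. Then $L^{s,0}$ is an integral of $g_j(X_j):=P\{\min(a+X_j^2I\{X_j\ge0\},\,b+X_j^2I\{X_j\le 0\})\le t\}$ against the truncated normal law of $X_j$, where $(a,b)$ is integrated over the remaining coordinates. I would show that $\mu_j\mapsto L^{s,0}$ over $[0,\infty)$ attains its infimum either at $\mu_j=0$ or in the limit $\mu_j\to\infty$. The limit exists: as $\mu_j\to\infty$, $X_j\Rightarrow\kappa$ (concentration at the truncation boundary, as already used in the proof of Proposition~\ref{prop:conditional-lf-origin-m}), so $P_j\to\kappa^2$ and $N_j\to0$. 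The natural tool is the monotone likelihood ratio structure of the truncated Gaussian family, which is stochastically increasing in $\mu_j$ (Lemma A.1 of \cite{lee2016AoS-post-selection-LASSO}). The difficulty is that $g_j$ is not monotone in $X_j$: it is unimodal-type, decreasing in $|X_j|$ on each side but asymmetric. I would handle this by writing $g_j(x)=h_1(x)I\{x\ge0\}+h_2(x)I\{x<0\}$, with $h_1$ nonincreasing in $x\ge 0$ and $h_2$ nondecreasing in $x<0$. One then shows that $\mu\mapsto E_\mu g_j(X)$ is quasi-concave on $[0,\infty)$ via the variation-diminishing property of the exponential family $e^{\mu x}$ (total positivity of order 2), since $g_j-c$ changes sign at most twice in the pattern $-,+,-$. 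A quasi-concave function on $[0,\infty)$ has its infimum at an endpoint. I expect this sign-change/total-positivity step to be the main obstacle, and I would try to verify the sign pattern of $g_j-c$ directly first.

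Second, applying this endpoint reduction iteratively, one coordinate at a time, gives
\[
\inf_{\mu\in\mathbf R_+^k}L^{s,0}=\min\{\text{values at }\mu\text{ with each }\mu_j\in\{0,\infty\}\}.
\]
By exchangeability under $I_k$, this value depends only on the number $a$ of coordinates sent to infinity. The pass to the limit is justified by weak convergence of the product law and continuity of the min-of-sums map. There is one caveat: at atoms, $t$ may need the limit CDF to be taken right-continuous. Since $T^s$ has an atom at $0$ and at $a\kappa^2$ shifts, I would check that the limit equals the stated $F^{(a)}$ at every $t\ge0$ by dominated convergence, because the limiting variables $U_i$ are continuous.

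Third, I would compute the value for a given $a$. The $a$ diverged coordinates contribute exactly $a\kappa^2$ to the positive sum and $0$ to the negative sum. Each of the $k-a$ centered truncated coordinates is positive or negative with probability $\tfrac12$ independently, and its square is distributed as $F_\kappa$, exactly as in the mixture representation in the proof of Proposition~\ref{prop:conditional-lf-pos-branch}. Conditioning on $r$ negatives yields
\[
\min\Big[a\kappa^2+\textstyle\sum_{i=1}^{k-a-r}U_i',\ \sum_{i=1}^rU_i\Big],
\]
and summing over $r$ gives $F^{(a)}_{k,\kappa}$. Taking the minimum over $a$ completes the argument. The claim for $\ell=q$ follows because $T^q(\cdot,I_k)=T^s$.
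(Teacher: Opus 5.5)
Your proposal is correct and shares the paper's skeleton: sign symmetry to restrict to $\mathbf R_+^k$, a coordinatewise reduction to the compactified vertex set $\{0,\infty\}^k$, exchangeability, and the Bernoulli-sign/truncated-$\chi^2_1$ mixture for the vertex values. The key endpoint step, however, is argued differently.

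You prove that $\mu_j\mapsto\mathbb E_{\mu_j}[g_j(X_j)]$ is quasi-concave by applying the variation-diminishing property to $g_j-c$ at every level $c$. Because $g_j-c$ may change sign twice ($-,+,-$), this needs the kernel $e^{\mu x}$ to be $\mathrm{TP}_3$, not $\mathrm{TP}_2$ as you wrote; $\mathrm{TP}_2$ only controls a single sign change. The step is still valid, since the exponential kernel is totally positive of every order.

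The paper instead works with the V-shaped survival function $h=1-g_j$ and tests one well-chosen level, $c=\max\{\mathbb E[h(X_1(0))],h(\kappa)\}$. Since $h$ is weakly increasing on $[0,\kappa]$ and $c\ge h(\kappa)$, one has $h-c\le0$ on all of $[0,\kappa]$. So $h-c$ changes sign at most once (from $+$ to $-$, at some $x_0$). The elementary bound $\int(h-c)e^{\mu_1x}\phi(x)\,dx\le e^{\mu_1x_0}\int(h-c)\phi(x)\,dx\le0$ for $\mu_1\ge0$ then gives $\mathbb E[h(X_1(\mu_1))]\le\max\{\mathbb E[h(X_1(0))],h(\kappa)\}$ directly. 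In your notation this is the level $c=\min\{\mathbb E_{\mu_j=0}[g_j(X_j)],g_j(\kappa)\}$, and for it your $\mathrm{TP}_2$ (monotone likelihood ratio) intuition is in fact enough.

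The paper's route is therefore more elementary and proves exactly the endpoint comparison needed. Yours yields the stronger fact that the conditional CDF is quasi-concave in each $\mu_j$, at the price of higher-order total positivity.

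One small correction to your limit step. The limit law $F^{(a)}_{k,\kappa}$ can have an atom at $a\kappa^2$ (whenever $1\le a<k/2$), so continuity of the $U_i$ is not the right justification. What gives $\mathbb E_{\mu_j}[g_j(X_j)]\to g_j(\kappa)$ at every $t\ge0$ is the following:
$X_j\le\kappa$, $X_j\to\kappa$ in probability, and $g_j$ is nonincreasing on $[0,\kappa]$ and left-continuous at $\kappa$. Left-continuity holds because, if $A$ denotes the positive-part sum of the other coordinates, the events $\{A\le t-x^2\}$ decrease to $\{A\le t-\kappa^2\}$ as $x\uparrow\kappa$.
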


\begin{proof}
Let \(Z \sim N(0_k, I_k)\) and $\mu\in\mathbf{R}^k_+$. Define the survival probability
\[
S_t(\mu):=
P\{T^s(Z+\mu)>t\mid |Z_j+\mu_j|\le\kappa,\ j\in[k]\}.
\]
It suffices to prove
\[
\sup_{\mu\in\mathbf R_+^k}S_t(\mu)
=
\max_{a=0,\ldots,k}
\{1-F^{(a)}_{k,\kappa}(t)\}.
\]

To this end, for any $\mu_1\geq0$, define a random variable $X_j$ and its density
$$X_j:=X_j(\mu_j) \overset{d}{=} Z_j+\mu_j \mid |Z_j+\mu_j| \leq \kappa \quad\forall j,\qquad f_{X(\mu_1)} (x):= \frac{e^{\mu_1x}\phi(x)I\{|x|\le\kappa\}}
{\int_{-\kappa}^{\kappa}e^{\mu_1x}\phi(x)\,dx}.$$
For \(x\in[-\kappa,\kappa]\), define
\begin{align*}
h(x)
&:=
P\{T^s(x,X_2,\ldots,X_k)>t\mid X_1=x\} = P\{T^s(x,X_2,\ldots,X_k)>t\}
\end{align*}
where the second equality holds because $X_j$'s are independent. Note that $h$ is V-shaped. Indeed, the function can be written as
\begin{align*}
    h(x)&=\left\{
    \begin{matrix}
    P\{ \textstyle\sum^k_{j=2} X^2_jI \{X_j\geq0\}>t-x^2, \ \textstyle\sum^k_{j=2} X^2_jI \{X_j\leq0\}>t\} & \text{ for }x\geq0\\
    P\{ \textstyle\sum^k_{j=2} X^2_jI \{X_j\geq0\}>t, \ \textstyle\sum^k_{j=2} X^2_jI \{X_j\leq0\}>t-x^2\} & \text{ for }x\leq0
    \end{matrix}\right..
\end{align*}
Hence \(h(x)\) is weakly increasing on \([0,\kappa]\) and weakly decreasing on
\([-\kappa,0]\). 

We claim that, for every \(\mu_1\ge0\),
\begin{align}\label{eq:end-point}
    \mathbb E[h(X_1(\mu_1))]
\le
\max\{ \mathbb E[h(X_1(0))],h(\kappa)\}.
\end{align}
Put
\[
c:=\max\{\mathbb E[h(X_1(0))],h(\kappa)\},
\qquad
g(x):=h(x)-c.
\]
Since \(h\) is increasing on \([0,\kappa]\) and \(h(\kappa)\le c\), we have
\(g(x)\le0\) for all \(x\in[0,\kappa]\). Since \(h\) is decreasing on
\([-\kappa,0]\), the function \(g\) has at most one sign change on
\([-\kappa,\kappa]\), and if the sign change occurs it is from \(+\) to \(-\). Let \(x_0\) be a sign-change point of \(g\). Then,
\[
\begin{aligned}
\int_{-\kappa}^{\kappa}g(x)e^{\mu_1 x}\phi(x)\,dx
\overset{(a)}{\le}
e^{\mu_1x_0}
\int_{-\kappa}^{x_0}g(x)\phi(x)\,dx
+
e^{\mu_1x_0}
\int_{x_0}^{\kappa}g(x)\phi(x)\,dx
&=
e^{\mu_1x_0}\int_{-\kappa}^{\kappa}g(x)\phi(x)\,dx
\end{aligned}
\]
where (a) holds because \(g(x)\ge0\) for
\(x\le x_0\) and \(g(x)\le0\) for \(x\ge x_0\) and \(e^{\mu_1 x}\) is increasing in
\(x\) for \(\mu_1\ge0\). Furthermore, note that
$$ \int_{-\kappa}^{\kappa}g(x)\phi(x)\,dx=\int_{-\kappa}^{\kappa}h(x)\phi(x)\,dx-c\int_{-\kappa}^{\kappa}\phi(x)\,dx = \{\mathbb E[h(X_1(0))] -c\} \int_{-\kappa}^{\kappa}\phi(x)\,dx\leq0.$$
Dividing by the positive normalizing constant gives the claim:
\begin{align*}
    \mathbb E[h(X_1(\mu_1))] -c=  \int_{-\kappa}^{\kappa}g(x)f_{X(\mu_1)}(x)\,dx = \frac{\int_{-\kappa}^{\kappa}g(x)e^{\mu_1 x}\phi(x)\,dx
}{\int_{-\kappa}^{\kappa}e^{\mu_1 x}\phi(x)\,dx}\leq 0.
\end{align*}

Applying \eqref{eq:end-point} to the first coordinate gives
\[
S_t(\mu_1,\mu_2,\ldots,\mu_k)
\le
\max\left\{
S_t(0,\mu_2,\ldots,\mu_k),
S_t(\infty,\mu_2,\ldots,\mu_k)
\right\},
\]
where $S_t(\infty,\mu_2,\ldots,\mu_k):=\lim_{\mu_1\to\infty} \mathbb E[h(X_1(\mu_1))]=h(\kappa)$, which follows from $X_1(\mu_1)=Z_1+\mu_1\mid |Z_1+\mu_1|\le\kappa
\Rightarrow \kappa$ as $\mu_1\to\infty$. Repeating this argument coordinate by coordinate gives
\[
S_t(\mu)
\le
\max_{\varepsilon\in\{0,\infty\}^k}
S_t(\varepsilon).
\]
Therefore the supremum over \(\mu\in\mathbf R_+^k\) is equal to the maximum over the compactified
vertex set \(\{0,\infty\}^k\). By symmetry, the value depends only on the number
\(a\) of coordinates equal to \(+\infty\). Hence
\[
\sup_{\mu\in\mathbf R_+^k}S_t(\mu)
=
\max_{a=0,\ldots,k}S_t(\underbrace{\infty,\ldots,\infty}_{a},0,\ldots,0).
\]

It remains to compute these vertex values. If \(a\) coordinates are sent to
\(+\infty\), then those coordinates converge to \(+\kappa\). The remaining
\(k-a\) coordinates are i.i.d. \(N(0,1)\) variables truncated to
\([-\kappa,\kappa]\). Conditional on their signs, their squared magnitudes are
i.i.d. with CDF \(F_\kappa\), and each sign is positive or negative with
probability \(1/2\). If exactly \(r\) of the remaining \(k-a\) coordinates are
negative, then the positive and negative squared sums are
\[
a\kappa^2+\sum_{i=1}^{k-a-r}U_i',
\qquad
\sum_{i=1}^{r}U_i.
\]
Thus the corresponding CDF is exactly \(F^{(a)}_{k,\kappa}\). Therefore
\[
\inf_{\mu\in\mathbf R_+^k}L^{s,0}(t,\mu,I_k)
=
1-\sup_{\mu\in\mathbf R_+^k}S_t(\mu)
=
\min_{a=0,\ldots,k}F^{(a)}_{k,\kappa}(t).
\]
Finally, the conclusion follows by applying sign symmetry $L^{s,0}(t,\mu,I_k)
=
L^{s,0}(t,-\mu,I_k)$.
\end{proof}

\subsection{Proof of Proposition \ref{prop:power-comparison}}
For simplicity let $\gamma=\alpha-\tau$ and write $$c^{\ell}_{L}=c^{\ell}(1-\alpha, I_k).$$ By Lemmas \ref{alem:consistency-sample-variance}-\ref{alem:uniformCLT} and Slutsky's theorem, we have 
\[
\sqrt{n}S^{-1}_n \bar{W}_n
\Rightarrow
Z+\theta^{A,B,h,r},
\qquad Z=(Z_1,\cdots,Z_k)\sim N(0_k,I_k).
\]
Therefore the asymptotic power of each test is obtained by evaluating the corresponding Gaussian rejection probability at \(Z+\theta^{A,B,h,r}\). It is enough to compare the limiting Gaussian rejection probabilities.

We first prove (i). Suppose \(k=2\). The screening threshold in the conditional test is
$\kappa_\tau:=\kappa_\tau(I_k)= \Phi^{-1}( (1-\tau)^{\frac1k})$ given in \eqref{def:kappa-screening-threshold}. Since $\sqrt{1-\tau}>1-\tau > 1-\alpha$, this threshold is larger than the LF critical value with $\ell=m$:
\[
\kappa_\tau=\Phi^{-1}(\sqrt{1-\tau})> \Phi^{-1}(1-\alpha)=c^m(1-\alpha, I_k).
\]
For $k=2$, if $|\mathcal I_0|=1$, the conditional critical value used on the positive, negative, and origin branches coincide: 
\begin{align}\label{eq:power-comparison-k2}
c^{m}_1=c^{m,-}_{\mathcal I_0}(1-\gamma,I_k)=c^{m,+}_{\mathcal I_0}(1-\gamma,I_k)=c^{m,0}(1-\gamma, I_k) .
\end{align}
Moreover, they are equal to
\[
c_1^m
=
-\Phi^{-1}(
\Phi(\kappa_\tau)
-
(2\Phi(\kappa_\tau)-1)(1-\gamma))
\overset{(a)}{=} \Phi^{-1}(\Phi(\kappa_\tau)-(2\Phi(\kappa_\tau)-1)\gamma),\]
where (a) follows from $-\Phi^{-1}(u)=\Phi^{-1}(1-u)$ for any $u\in(0,1)$. This implies
\begin{align*}
    \Phi(c_1^m)= \Phi(\kappa_\tau)-(2\Phi(\kappa_\tau)-1)\gamma 
    \overset{(a)}{>} (1-2\gamma) (1-\tau)+\gamma
    \overset{(b)}{>}1-\alpha=\Phi(c_L^m)
\end{align*}
where (a) holds because $\Phi(\kappa_\tau) = \sqrt{1-\tau}>1-\tau$ and (b) holds because $(1-2\gamma) (1-\tau)+\gamma=1-\alpha+2\tau(\alpha-\tau)$ and $\alpha>\tau>0$, which further yields $c_1^m>c_L^m. $

We now compare rejection regions. If the conditional test rejects immediately because one coordinate is screened positive and the other is screened negative,
then one coordinate exceeds \(\kappa_\tau\) and the other is below \(-\kappa_\tau\). Since \(\kappa_\tau>c_L^m\), the LF max test also rejects.
If the conditional test rejects on the positive or negative case and so $|\mathcal I_0|=1$, then one
coordinate exceeds \(\kappa_\tau\) in absolute value with one sign, and the unscreened coordinate exceeds \(c_1^m\) in the opposite direction by \eqref{eq:power-comparison-k2}. Since \(\kappa_\tau>c_L^m\) and \(c_1^m>c_L^m\), the LF max test also rejects.
Finally, if the conditional test rejects on the origin case, then the two coordinates must have opposite signs and both exceed \(c_1^m\) in absolute
value by \eqref{eq:power-comparison-k2}. Since \(c_1^m>c_L^m\), the LF max test again rejects. Hence the conclusion holds for $\ell=m$. 

For \(\ell=s\), the same argument applies after squaring the thresholds. Specifically, for $\alpha<1/2$, the LF critical value satisfies $c_L^s=(c_L^m)^2$. Likewise, because $\gamma=\alpha-\tau<1/2,$ the one-dimensional conditional critical value is $c_1^s=(c_1^m)^2$. Moreover, 
$$c^{s,0}=c_1^s.$$
To see this, note that, by the definition in \eqref{def:F-origin-conditional}, for $t\geq 0,$
\begin{align*}
    F^{(2)}_{2,\kappa_\tau} (t) &= 1\\
    F^{(1)}_{2,\kappa_\tau} (t) &= \sum^1_{r=0}{1\choose r} 2^{-1}P\{\min\{\kappa_\tau^2 + \sum^{1-r}_{i=1}U_i', \sum^{r}_{i=1}U_i\}\leq t\}
    =\tfrac12 +\tfrac12 P\{U_i\leq t\}
     =\tfrac12 +\tfrac12F_{\kappa_\tau}(t)\\
   F^{(0)}_{2,\kappa_\tau} (t) &= \sum^2_{r=0}{2\choose r} 2^{-2}P\{\min\{\ \sum^{2-r}_{i=1}U_i', \sum^{r}_{i=1}U_i\}\leq t\}=\tfrac14+\tfrac12 P\{\min\{U'_i,U_i\}\leq t\}+\tfrac14 \\
   &=\tfrac12 + F_{\kappa_\tau}(t) - \tfrac12F_{\kappa_\tau}(t)^2
\end{align*}
where $U_i', U_i\sim i.i.d. ~F_{\kappa_\tau}.$ Since $F^{(0)}_{2,\kappa_\tau} (t) - F^{(1)}_{2,\kappa_\tau} (t) = \tfrac12 F_{\kappa_\tau}(t)(1-F_{\kappa_\tau}(t))\geq0$ and $F^{(1)}_{2,\kappa_\tau} (t) \leq 1$, it holds that $\min_{a=0,1,2}F^{(a)}_{2,\kappa_\tau} (t)= F^{(1)}_{2,\kappa_\tau}(t)=\tfrac12 + \tfrac12 F_{\kappa_\tau}(t)$. Since this is the CDF where $c^{s}_1$ is computed from, we have $c^{s,0}=c_1^s$. 
Then, because \(c_1^m>c_L^m\), we have \(c_1^s>c_L^s\). Therefore the same case-by-case inclusion of rejection regions holds for \(\ell=s\). Finally, when \(\Omega=I_2\), the quasi-likelihood ratio statistic coincides with the sum-of-squares statistic: $T^q(x,I_2)=T^s(x).$ Thus the same rejection-region inclusion proves the claim for \(\ell=q\). This proves (i).

We next prove (ii). Let \(k\ge3\) and \(h=1\). We first set \(B=0\) and let \(A\to\infty\). For the LF test, the first coordinate diverges to \(+\infty\). Hence the LF
test rejects if and only if the remaining \(k-1\) coordinates generate enough evidence in the negative direction. At \(B=0\), this is exactly the least
favorable configuration as shown in Theorem 1 by \cite{PiantadosiGail1993} for $\ell=m$ and Theorem 1 by \cite{GailSimon1985Biometrics} for $\ell=s,q$. Therefore, for each \(\ell\in\{m,s,q\}\), $\lim_{A\to\infty}\lim_{n\to\infty}
\mathbb E_{P_n}[\phi_n^\ell]
=
\alpha.$

We now compute the corresponding limiting rejection probability of the
conditional test at \(B=0\). Since the first coordinate diverges to \(+\infty\),
it is screened positive with probability tending to one. The remaining
\(k-1\) coordinates are independent standard normal variables. The conditional test rejects immediately if at least one of the remaining \(k-1\) coordinates is screened negative, whose probability is 
\[
P\{ \exists Z_j  \text{ s.t. }Z_j<-\kappa_\tau \text{ for }j=2,..,k\} = 1-P\{Z_j >-\kappa_\tau ~~\forall j=2,..,k\}=1-\Phi(\kappa_\tau)^{k-1}.
\]
If no remaining coordinate is screened negative, and at least one remaining coordinate is unscreened, then $I_{\textrm{pos}}=1$. The probability of this event is 
\begin{align*}
&P\{\forall j\geq 2 ~~ Z_j\geq-\kappa_\tau,~~ \exists j\geq2~~ Z_j\leq \kappa_\tau\} 
= P\{\forall j\geq2~~ Z_j\geq-\kappa_\tau\} -P\{\forall j\geq2 ~~Z_j> \kappa_\tau\}\\
&=\Phi(\kappa_\tau)^{k-1} - (1-\Phi(\kappa_\tau))^{k-1}.
\end{align*}
The conditional subtest $\psi^{\ell,+}_n$ in \eqref{def:conditional-test-pos} has rejection probability \(\gamma\) at the least favorable value $0_k$ by Proposition \ref{prop:conditional-lf-pos-branch}. Therefore, at \(B=0\),
\begin{align*}
\lim_{A\to\infty}\lim_{n\to\infty}
\mathbb E_{P_n}[\psi_n^\ell]
&\leq
1-\Phi(\kappa_\tau)^{k-1}
+
(\alpha-\tau)
\{
\Phi(\kappa_\tau)^{k-1} - (1-\Phi(\kappa_\tau))^{k-1}
\}\\
&< 
1- (1-\alpha + \tau) \Phi(\kappa_\tau)^{k-1} 
<
1- (1-\alpha + \tau) \Phi(\kappa_\tau)^{k}\\
&\overset{(a)}{=}1- (1-\alpha + \tau)(1-\tau) = \alpha-\tau(\alpha-\tau)\\
&<\alpha=\lim_{A\to\infty}\lim_{n\to\infty}
\mathbb E_{P_n}[\phi_n^\ell]
\end{align*}
where (a) holds because $\kappa_\tau= \Phi^{-1}( (1-\tau)^{\frac1k})$ so $\Phi(\kappa_\tau)^k=1-\tau.$

It remains to extend the strict inequality to small \(B>0\). For fixed \(r\in\{1,\ldots,k-1\}\), both limiting Gaussian rejection probabilities are continuous in \(B\). Since the inequality is strict at \(B=0\), there exists
\(B_0^\ell(r)>0\) such that the same inequality holds for every \(B\in(0,B_0^\ell(r)]\). Take $B_0^\ell:=\min_{1\le r\le k-1} B_0^\ell(r)>0.$ Then the conclusion follows.

We prove (iii). As above, it is enough to compare the limiting Gaussian rejection probabilities after taking \(A\to\infty\). Since the first \(h\) coordinates diverge to \(+\infty\), they are screened positive with probability tending to one. Hence the conditional test immediately rejects whenever at least one of the remaining coordinates is screened negative. If no remaining coordinate is screened negative, then either all remaining coordinates are screened positive, in which case the conditional test does not reject, or at least one remaining coordinate is unscreened, in which case \(I_{\mathrm{pos}}=1\). The number of unscreened coordinates is at most \(k-h\). Under \(\Omega=I_k\), the conditional statistics used when \(I_{\mathrm{pos}}=1\) can be coupled across values of \(|\mathcal I_0|\) by appending independent truncated coordinates. The maximum statistic for \(\ell=m\) and the nonnegative sum statistic for \(\ell=s,q\) are then weakly increasing, so \(c_d^\ell\) is weakly increasing in \(d\).

For \(\ell=m\), let \(Z_1,\ldots,Z_{k-h}\) be independent standard normal random variables. After taking \(A\to\infty\), the event
$
\{
\min\{
\min_{1\le j\le r}(Z_j-B),
\min_{r<j\le k-h}Z_j
\}
<-c_{k-h}^m
\}
$
implies rejection by the conditional max test. Therefore,
\begin{align*}
\lim_{A\to\infty}\lim_{n\to\infty}
\mathbb E_{P_n}[\psi_n^m]
&\ge
P\left\{
\min\left\{
\min_{1\le j\le r}(Z_j-B),
\min_{r<j\le k-h}Z_j
\right\}
<-c_{k-h}^m
\right\}\\
&=
1-\Phi(c_{k-h}^m-B)^r
\Phi(c_{k-h}^m)^{k-h-r}.
\end{align*}
On the other hand, the limiting rejection probability of the least
favorable max test is exactly
\[
\lim_{A\to\infty}\lim_{n\to\infty}
\mathbb E_{P_n}[\phi_n^m]
=
1-
\Phi\bigl(c^m_L-B\bigr)^r
\Phi\bigl(c^m_L\bigr)^{k-h-r}.
\]
Under \eqref{def:cv-comparison} that $c^m_{k-h}<c^m_L$, we have
\[
\begin{aligned}
&\Phi(c_{k-h}^m-B)^r
\Phi(c_{k-h}^m)^{k-h-r}<
\Phi\bigl(c^m_L-B\bigr)^r
\Phi\bigl(c^m_L\bigr)^{k-h-r},
\end{aligned}
\]
because \(\Phi\) is strictly increasing, \(r\ge1\), and
\(k-h-r\ge0\). Hence
\[
\lim_{A\to\infty}\lim_{n\to\infty}
\mathbb E_{P_n}[\psi_n^m]
>
\lim_{A\to\infty}\lim_{n\to\infty}
\mathbb E_{P_n}[\phi_n^m].
\]

We next consider \(\ell=s\). By assumption, \(r=k-h\), so all the remaining
coordinates have local drift \(-B\). The event
\(
\{
\sum_{j=1}^{k-h}
(Z_j-B)^2I\{Z_j\le B\}
>
c_{k-h}^s
\}
\)
implies rejection by the conditional test. If some remaining coordinate is
screened negative, rejection is immediate. Otherwise, the displayed event
implies that at least one remaining coordinate is unscreened, so \(I_{\mathrm{pos}}=1\). Coordinates screened positive make
zero contribution to the sum, and every negative coordinate is
unscreened. Thus, when \(I_{\mathrm{pos}}=1\), the statistic equals $\sum_{j=1}^{k-h}
(Z_j-B)^2I\{Z_j\le B\},$ while the relevant critical value is no larger than \(c_{k-h}^s\). Therefore,
\[
\lim_{A\to\infty}\lim_{n\to\infty}
\mathbb E_{P_n}[\psi_n^s]
\ge
P\{
\textstyle\sum_{j=1}^{k-h}
(Z_j-B)^2I\{Z_j\le B\}
>
c_{k-h}^s
\}.
\]
The limiting rejection probability of the least favorable sum-of-squares
test is exactly
\[
\lim_{A\to\infty}\lim_{n\to\infty}
\mathbb E_{P_n}[\phi_n^s]
=
P\{
\textstyle\sum_{j=1}^{k-h}
(Z_j-B)^2I\{Z_j\le B\}
>
c^s(1-\alpha,I_k)
\}.
\]
The random variable
\(
\textstyle\sum_{j=1}^{k-h}
(Z_j-B)^2I\{Z_j\le B\}
\)
has an atom at zero and a strictly positive density on \((0,\infty)\).
Moreover, \(c^s_L>0\), because the defining least favorable
distribution has probability \(2^{-(k-1)}<1-\alpha\) at zero. Hence
\eqref{def:cv-comparison} implies
\[
\begin{aligned}
&P\{
\textstyle\sum_{j=1}^{k-h}
(Z_j-B)^2I\{Z_j\le B\}
>
c_{k-h}^s
\}>
P\{
\textstyle\sum_{j=1}^{k-h}
(Z_j-B)^2I\{Z_j\le B\}
>
c^s_L
\}.
\end{aligned}
\]
It follows that
\[
\lim_{A\to\infty}\lim_{n\to\infty}
\mathbb E_{P_n}[\psi_n^s]
>
\lim_{A\to\infty}\lim_{n\to\infty}
\mathbb E_{P_n}[\phi_n^s].
\]

When \(\Omega=I_k\), the quasi-likelihood ratio statistic coincides with the corresponding sum-of-squares distance to the null cone, both for the least favorable test and in the positive case of the conditional test. Therefore, the same argument applies. This proves (iii).

\section{Supporting technical results}
\subsection{Preliminary Results}
We begin by presenting three existing lemmas that are used repeatedly throughout the appendix. 

\begin{lemma}[Lemma S.6.1. in \cite{RomanoShaikh2012AoS}]\label{alem:consistency-sample-variance}
    Consider any sequence $\left\{P_n \in {\mathbf{P}}: n \geq 1\right\}$, where ${\mathbf{P}}$ satisfies \eqref{def:uniform-integrability}. Let $X_{i}, i=$ $1, \ldots, n$ be an i.i.d. sequence of real-valued random variables with distribution $P_n$. Then,
$$
\frac{S_n^2}{\sigma^2\left(P_n\right)} \stackrel{P_n}{\rightarrow} 1 .
$$
\end{lemma}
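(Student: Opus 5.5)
This is the consistency of the sample variance $S_n^2/\sigma^2(P_n)\to1$ in $P_n$-probability along an arbitrary sequence from $\mathbf P$ satisfying the uniform integrability condition \eqref{def:uniform-integrability}. The plan is to standardize first. Put $Y_{i}:=(X_i-\mu(P_n))/\sigma(P_n)$, so that under $P_n$ the $Y_i$ are i.i.d.\ with mean zero and unit variance, and the $Y_i^2$ are uniformly integrable across $n$. Since the sample variance is location- and scale-equivariant,
\[
\frac{S_n^2}{\sigma^2(P_n)}=\frac1n\sum_{i=1}^n Y_i^2-\bar Y_n^2 ,
\]
possibly up to the factor $n/(n-1)$ if $S_n^2$ uses the unbiased normalization, which is harmless. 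It therefore suffices to show two things: $\frac1n\sum_i Y_i^2\overset{P_n}{\to}1$ and $\bar Y_n\overset{P_n}{\to}0$.

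For the mean, Chebyshev's inequality is enough: $E_{P_n}[\bar Y_n^2]=1/n\to0$.

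For the second moment I would use a triangular-array weak law of large numbers via truncation. Fix $\lambda>0$ and split
\[
Y_i^2=Y_i^2I\{|Y_i|\le\lambda\}+Y_i^2I\{|Y_i|>\lambda\}.
\]
\begin{itemize}
\item \emph{Truncated part.} The summands are bounded by $\lambda^2$, so the variance of their average is at most $\lambda^4/n$. Chebyshev then shows that this average is within $o_{P_n}(1)$ of its mean $1-E_{P_n}[Y_1^2I\{|Y_1|>\lambda\}]$.
\item \emph{Tail part.} The summands are nonnegative, so Markov's inequality bounds the average by $O_{P_n}(\epsilon_\lambda)$, where
\[
\epsilon_\lambda:=\sup_{P\in\mathbf P}E_P\bigl[Y^2I\{|Y|>\lambda\}\bigr].
\]
\end{itemize}
Combining the two parts gives, for every $\eta>0$,
\[
\limsup_n P_n\Bigl\{\Bigl|\tfrac1n\textstyle\sum_i Y_i^2-1\Bigr|>\eta\Bigr\}\le \frac{2\epsilon_\lambda}{\eta/2}
\]
(up to constants). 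Letting $\lambda\to\infty$, condition \eqref{def:uniform-integrability} sends $\epsilon_\lambda\to0$, which gives the claim.

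The only delicate point is that the truncation argument must be uniform in $n$, and this is exactly what the uniform integrability hypothesis provides; everything else is routine. No higher moments are needed, which is why the truncation argument is used rather than a direct fourth-moment Chebyshev bound.
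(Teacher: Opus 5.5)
Your proof is correct. The paper gives no proof of its own and simply imports the result from Romano and Shaikh. Your argument is the standard one behind that result: write $S_n^2/\sigma^2(P_n)$ as the average of the standardized squares minus $\bar Y_n^2$, then prove a truncation-based uniform weak law of large numbers for the squares. Your bound depends on $P$ only through $\epsilon_\lambda$, so it also gives the uniform-over-$\mathbf P$ form in which the cited lemma is stated.
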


\begin{lemma}[Lemma S.7.1. in \cite{RomanoShaikh2012AoS}]\label{alem:consistency-sample-corr}
    Consider any sequence $\left\{P_n \in {\mathbf{P}}: n \geq 1\right\}$, where ${\mathbf{P}}$ satisfies \eqref{def:uniform-integrability}. Let $X_{ i}, i=1, \ldots, n$ be an i.i.d. sequence of random variables with distribution $P_n$. Then,
$$
\|\Omega(\hat{P}_n)-\Omega(P_n)\| \stackrel{P_n}{\rightarrow} 0,
$$
where the norm $\|\cdot\|$ is the component-wise maximum of the absolute value of all elements.
\end{lemma}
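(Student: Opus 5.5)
We need to prove Lemma S.7.1: uniform consistency of the sample correlation matrix along any sequence $P_n\in\mathbf P$ satisfying the uniform integrability condition \eqref{def:uniform-integrability}.

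The plan is to reduce the statement entrywise to standardized quantities and then apply a uniform weak law of large numbers for triangular arrays. Fix $j\neq j'$ and write $Y_{i,j}:=(W_{i,j}-\mu_j(P_n))/\sigma_j(P_n)$. Each $Y_{i,j}$ has mean zero and unit variance under $P_n$, and the family of laws of $Y_{1,j}^2$ is uniformly integrable by \eqref{def:uniform-integrability}. Since correlations are invariant to affine rescaling of each coordinate, the $(j,j')$ entry of $\Omega(\hat P_n)$ can be written in terms of the $Y$'s alone:
\[
\hat\rho_{jj'}=\frac{\overline{Y_jY_{j'}}-\bar Y_j\bar Y_{j'}}{\sqrt{(\overline{Y_j^2}-\bar Y_j^2)(\overline{Y_{j'}^2}-\bar Y_{j'}^2)}},
\]
while the population entry is $\rho_{jj'}(P_n)=\mathbb E_{P_n}[Y_{1,j}Y_{1,j'}]$. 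It therefore suffices to show three convergences: $\bar Y_j\overset{P_n}{\to}0$, $\overline{Y_j^2}\overset{P_n}{\to}1$, and $\overline{Y_jY_{j'}}-\mathbb E_{P_n}[Y_{1,j}Y_{1,j'}]\overset{P_n}{\to}0$. Once these hold, the continuous mapping theorem, applied to the fixed continuous function $(a,b,c,d,e)\mapsto (c-ab)/\sqrt{(d-a^2)(e-b^2)}$ near the point $(0,0,\rho,1,1)$, gives $\hat\rho_{jj'}-\rho_{jj'}(P_n)\overset{P_n}{\to}0$. The continuous mapping theorem applies here even though $\rho_{jj'}(P_n)$ varies with $n$: $\rho_{jj'}(P_n)$ lies in the compact set $[-1,1]$ and the map is uniformly continuous on a neighbourhood of that compact set. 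Since $k$ is fixed, a union bound over the finitely many pairs then yields the max-norm statement.

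The first convergence follows from Chebyshev's inequality, because $\mathrm{Var}(\bar Y_j)=1/n$. The second convergence is exactly Lemma \ref{alem:consistency-sample-variance} applied to the standardized variables, or equivalently the argument below with $j=j'$.

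The main step is the cross product. Set $X_i:=Y_{i,j}Y_{i,j'}$. These variables satisfy $|X_i|\le \tfrac12(Y_{i,j}^2+Y_{i,j'}^2)$, so the family $\{|X_1|\}$ is uniformly integrable over $\mathbf P$: on $\{|X_1|>\lambda\}$ one of the two squares exceeds $\lambda$, which bounds $\mathbb E|X_1|I\{|X_1|>\lambda\}$ by tail terms of the form appearing in \eqref{def:uniform-integrability}, and these vanish uniformly as $\lambda\to\infty$. A triangular-array weak law then holds under only uniform integrability of first moments, proved by truncation. Split $X_i=X_iI\{|X_i|\le\lambda\}+X_iI\{|X_i|>\lambda\}$. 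The truncated part has variance at most $\lambda^2/n$, so Chebyshev's inequality handles it. The remainder has $L^1$ norm, after centering, at most $2\sup_P\mathbb E_P|X_1|I\{|X_1|>\lambda\}$, so Markov's inequality handles it. Choosing first $\lambda$ large and then $n$ large gives $\overline{X}-\mathbb E_{P_n}X_1\overset{P_n}{\to}0$. This truncation argument is the only nonroutine part; the rest is bookkeeping with the continuous mapping theorem.
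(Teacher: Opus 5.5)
The paper gives no proof of this lemma and simply imports it from Romano and Shaikh (2012). Your argument is correct and is the standard proof of that result: standardize each coordinate, derive a truncation-based uniform weak law for the means, squares and cross products from the uniform integrability condition, and finish with uniform continuity of the correlation map near $\{(0,0,\rho,1,1):\rho\in[-1,1]\}$. One cosmetic point: to get tail terms literally of the form in \eqref{def:uniform-integrability}, bound $|Y_{1,j}Y_{1,j'}|$ by $\max\{Y_{1,j}^2,Y_{1,j'}^2\}$ rather than by $\tfrac12(Y_{1,j}^2+Y_{1,j'}^2)$; this gives $\mathbb E|X_1|I\{|X_1|>\lambda\}\le \mathbb E[Y_{1,j}^2I\{|Y_{1,j}|>\sqrt{\lambda}\}]+\mathbb E[Y_{1,j'}^2I\{|Y_{1,j'}|>\sqrt{\lambda}\}]$ directly and avoids cross terms such as $\mathbb E[Y_{1,j}^2I\{|Y_{1,j'}|>\sqrt{\lambda}\}]$.
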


\begin{lemma}[Lemma 3.1. in \cite{RomanoShaikh2008JSPI}]\label{alem:uniformCLT}
    Let $\mathbf{P}$ be a set of distributions on $\mathbf{R}^k$ such that the marginal distributions satisfy \eqref{def:uniform-integrability}. Let $X_i=$ $\left(X_{1, i}, \ldots, X_{k, i}\right), i=1, \ldots, n$ be an i.i.d. sequence of random variables with distribution $P \in \mathbf{P}$. Denote by $\Phi_V(\cdot)$ the probability distribution of a multivariate normal random variable with mean $0_k$ and variance $V$ and by $\Sigma(P)$ the variance of $P$. Then,
$$
\sup _{P \in \mathbf{P}} \sup _{S \in \mathcal{S}} \mid P\left\{\left(\sqrt{n}\left(\bar{X}_{1, n}-\mu_1(P), \ldots, \sqrt{n}\left(\bar{X}_{k, n}-\mu_k(P)\right) \in S\right\}-\Phi_{\Sigma(P)}(S) \mid \rightarrow 0,\right.\right.
$$
where $\mathcal{S}=\left\{S \subseteq \mathbf{R}^k: S\right.$ convex and $\Phi_V(\partial S)=0$ for all p.s.d. $V$ s.t. $V_{i, i}=1$ for $\left.1 \leqslant i \leqslant k\right\}$.
\end{lemma}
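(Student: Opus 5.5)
The statement immediately above is the uniform multivariate CLT of \cite{RomanoShaikh2008JSPI} (Lemma \ref{alem:uniformCLT}). I would prove it by contradiction combined with a Lindeberg--Feller argument along sequences. Suppose the convergence fails. Then there are $\eta>0$, a subsequence $n_l$, distributions $P_{n_l}\in\mathbf P$ and sets $S_{n_l}\in\mathcal S$ with
\[
\bigl|P_{n_l}\{\sqrt{n_l}(\bar X_{n_l}-\mu(P_{n_l}))\in S_{n_l}\}-\Phi_{\Sigma(P_{n_l})}(S_{n_l})\bigr|>\eta .
\]
I would work with the standardized vector $D(P)^{-1}\sqrt n(\bar X_n-\mu(P))$, whose covariance is the correlation matrix $\Omega(P)$. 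This is the normalization implicit in the requirement $V_{ii}=1$ in the definition of $\mathcal S$, and I read $\Sigma(P)$ accordingly, so that $S$ is applied to the standardized vector. By compactness of $\bar{\mathbf O}$, I pass to a further subsequence along which $\Omega(P_{n_l})\to\Omega^*$, with $\Omega^*$ possibly singular.

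The next step is to show $D(P_{n_l})^{-1}\sqrt{n_l}(\bar X_{n_l}-\mu(P_{n_l}))\Rightarrow N(0,\Omega^*)$. By Cram\'er--Wold it suffices to treat $a'Y_{n,i}$ for a fixed vector $a$, where $Y_{n,i}$ is the standardized summand. The triangular-array Lindeberg condition for $a'Y_{n,i}$ follows from the uniform integrability \eqref{def:uniform-integrability}. Indeed, $(a'Y)^2 I\{|a'Y|>\epsilon\sqrt n\}$ is bounded by $k\|a\|_\infty^2\max_j Y_j^2 I\{\max_j|Y_j|>\epsilon\sqrt n/(k\|a\|_\infty)\}$, and hence by a sum of $k$ coordinatewise truncated second moments, each of which vanishes uniformly in $P$. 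The variance of $a'Y_{n,i}$ is $a'\Omega(P_n)a\to a'\Omega^*a$. If this limit is zero, the sum converges to zero in $L^2$, which matches the degenerate normal limit.

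The main obstacle is upgrading weak convergence to convergence uniformly over the moving sets $S_{n_l}\in\mathcal S$. I would use the classical result (Ranga Rao) that weak convergence to a limit $\Phi$ is uniform over the class of convex sets whenever $\Phi(\partial S)=0$ for every convex $S$. That hypothesis fails when $\Omega^*$ is singular. To handle this, I would exploit the defining property of $\mathcal S$, namely $\Phi_V(\partial S)=0$ for every admissible $V$, together with a perturbation argument. Add an independent $\sqrt{\delta}\,N(0,I)$ to both sides and apply the Ranga Rao result to the smoothed limit, which is nondegenerate and satisfies the null-boundary hypothesis. Then let $\delta\to0$, using convexity and the uniform null-boundary condition to control the smoothing error through the $\epsilon$-enlargement and $\epsilon$-shrinkage of $S$. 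If this step proves too delicate, I would instead project onto the range of $\Omega^*$. On that range the limit is nondegenerate and the intersections of convex sets with it are again convex, while the orthogonal component converges in probability to $0$.

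Finally, the triangle inequality with $\Phi_{\Omega(P_{n_l})}(S)\to\Phi_{\Omega^*}(S)$ uniformly over $S\in\mathcal S$ yields a contradiction. This last convergence follows because the total-variation or Pinsker bound on Gaussians handles the nondegenerate directions, and the same smoothing argument handles the degenerate ones.
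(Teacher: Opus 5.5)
\textbf{Gap: uniformity over $\mathcal S$ when $\Omega^*$ is singular.} Your subsequence step is sound. Along $\Omega(P_{n_l})\to\Omega^*$, Lindeberg--Feller plus Cram\'er--Wold gives weak convergence of the standardized mean to $N(0_k,\Omega^*)$; your Lindeberg bound needs $k^2\|a\|_\infty^2$ rather than $k\|a\|_\infty^2$, which is immaterial. The gap is the passage to uniformity over $\mathcal S$ when $\Omega^*$ is singular. Neither smoothing nor projection can close it, because in that regime the statement as written is false.

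The defining condition of $\mathcal S$ holds set by set, not uniformly. For $k=2$, every half-plane $\{x:x_1-x_2\le -c\}$ with $c>0$ lies in $\mathcal S$: its boundary misses the origin, so it meets each degenerate support $\{x_2=\pm x_1\}$ in at most one point. Yet for $\Omega^*$ with unit off-diagonal and $0<c<\epsilon$, one has $\Phi_{\Omega^*}((\partial S)^\epsilon)=1$. Consequences for your argument:
\begin{itemize}
\item Ranga Rao's theorem and your $\epsilon$-enlargement control are unavailable.
\item The component orthogonal to the range of $\Omega^*$, which your projection discards as $o_P(1)$, is exactly what such sets resolve.
\item Your final comparison fails even for Gaussian $P_n$. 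If $\rho_n\uparrow 1$ is the off-diagonal correlation and $S_n=\{x:x_1-x_2\le-\sqrt{2(1-\rho_n)}\}\in\mathcal S$, then $\Phi_{\Omega(P_n)}(S_n)=\Phi(-1)$ while $\Phi_{\Omega^*}(S_n)=0$.
\end{itemize}

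\textbf{Counterexample to the statement.} The conclusion itself fails under marginal uniform integrability. Take $\delta_n=n^{-1/4}$ and $Y\sim N(0,1)$ independent of $W_n$, where $P\{W_n=\sqrt n\}=P\{W_n=-\sqrt n\}=1/(2n)$ and $P\{W_n=0\}=1-1/n$. Let $P_n$ be the law of $(Y,(Y+\delta_nW_n)/\sqrt{1+\delta_n^2})$.
\begin{itemize}
\item Both coordinates have mean $0$ and variance $1$, so your standardized reading and the literal one coincide, and $\rho_n=(1+\delta_n^2)^{-1/2}\to1$.
\item The family $\{P_n\}$ satisfies \eqref{def:uniform-integrability}: $\delta_nW_n\in\{0,\pm n^{1/4}\}$ has truncated second moment $n^{-1/2}I\{n^{1/4}>\lambda\}\le\lambda^{-2}$.
\item Let $S_n=\{x:\sqrt{1+\delta_n^2}\,x_2-x_1\le-\delta_n/2\}\in\mathcal S$. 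Then $\Phi_{\Omega(P_n)}(S_n)=\Phi(-1/2)\approx0.309$.
\item However, $P_n\{\sqrt n\bar X_n\in S_n\}=P\{n^{-1/2}\sum_{i=1}^nW_{n,i}\le-1/2\}\to P\{N_1<N_2\}\approx0.267$, with $N_1,N_2$ i.i.d.\ Poisson$(1/2)$.
\end{itemize}
Marginal uniform integrability says nothing about directions whose variance vanishes, and $\mathcal S$ contains sets that look along exactly those directions.

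\textbf{What survives.} The paper gives no proof of its own beyond citing Romano and Shaikh (2008). It never needs the uniform-over-$\mathcal S$ claim. Every invocation (e.g.\ in Lemmas \ref{lem:statistic-degenerate-case}, \ref{lem:conv-ip-to-Jn}, \ref{lem:branchwise-conditional-approx} and \ref{lem:bootstrap-conv-to-Jn}) uses only tightness or weak convergence to $N(0_k,\Omega^*)$ along subsequences with $\Omega(P_{n_l})\to\Omega^*$. That is precisely your Lindeberg--Feller step, and it is the route the paper itself takes in Lemma \ref{lem:general-estimator-gaussian-limit}.
\begin{itemize}
\item Under $\inf_{P\in\mathbf P}\lambda_{\min}(\Omega(P))>0$, your full argument proves the stated uniformity. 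Then $\Omega^*\succ0$, and Ranga Rao's theorem plus the total-variation bound between nonsingular Gaussians finish it.
\item Without an eigenvalue bound, one version holds with $S$ fixed in $\mathcal S$, by Portmanteau along subsequences.
\item Another holds with $\sup_S$ restricted to rectangles. For these, $\Phi_V((\partial S)^\epsilon)=O(k\epsilon)$ uniformly over unit-diagonal $V$, so the Bhattacharya--Rao uniformity-class argument also covers degenerate limits.
\end{itemize}
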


\subsection{Lemmas for the least favorable test}
\subsubsection{Lemmas for the least favorable critical values}

\begin{lemma}[Existence]\label{lem:parametric-cv-existence}
Consider the parametric bootstrap critical value in
\eqref{def:LF-cv-normal} and fix $\alpha\in(0,1)$.
For $\ell\in\{m,s\}$ and any positive semidefinitecorrelation matrix
$\theta_2\in\bar{\mathbf O}$, $c^\ell(1-\alpha,\theta_2)$ is well-defined.
Moreover,
\begin{align}\label{eq:cv-upperbound1}
    c^m(1-\alpha,\theta_2)
\le
\Phi^{-1}\left(1-\frac{\alpha}{k}\right),\quad 
c^s(1-\alpha,\theta_2)
\le
k\left\{\Phi^{-1}\left(1-\frac{\alpha}{2k}\right)\right\}^2.
\end{align}
For $\ell=q$, the same conclusion holds for any $\theta_2\in\mathbf O$. In particular,
\begin{align}\label{eq:cv-upperbound2}
c^q(1-\alpha,\theta_2)
\le
\chi^2_{k,1-\alpha},
\end{align}
where $\chi^2_{k,1-\alpha}$ denotes the $(1-\alpha)$-quantile of the
chi-squared distribution with $k$ degrees of freedom.
\end{lemma}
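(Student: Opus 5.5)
We prove well-definedness by exhibiting, for each $\ell$, a finite number $\bar c^\ell$ that bounds $(J^\ell)^{-1}(1-\alpha,\theta_1,\theta_2)$ uniformly over $\theta_1\in\mathbf R_+^k\cup\mathbf R_-^k$. Each quantile $(J^\ell)^{-1}(1-\alpha,\theta_1,\theta_2)$ is a generalized inverse of a proper distribution function, so it is finite for each fixed $\theta_1$. The set over which the supremum in \eqref{def:LF-cv-normal} is taken is therefore a nonempty subset of $\mathbf R$ that is bounded above, and its supremum is a real number. By the symmetry noted after \eqref{def:LF-cv-normal} (replace $Z$ by $-Z$ and $\theta_1$ by $-\theta_1$), it suffices to treat $\theta_1\in\mathbf R_+^k$. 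The strategy is the same for all three statistics: show that $T^\ell(\theta_2^{1/2}Z+\theta_1,\theta_2)$ is dominated pointwise by a random variable whose law does not depend on $\theta_1$. Then $J^\ell(x,\theta_1,\theta_2)\ge P\{\text{dominating variable}\le x\}$, and the $1-\alpha$ quantile of $T^\ell$ is at most that of the dominating variable.

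\textbf{Case $\ell=m$.} Write $X=\theta_2^{1/2}Z$. Each $X_j$ is $N(0,1)$ because $\theta_2$ has unit diagonal, even when $\theta_2$ is only positive semidefinite. For $\theta_1\ge 0$,
\[
T^m(X+\theta_1)\le \max_j\{-X_j-\theta_{1,j}\}\le \max_j(-X_j).
\]
A union bound gives
\[
P\{\max_j(-X_j)>\Phi^{-1}(1-\alpha/k)\}\le k\cdot\alpha/k=\alpha.
\]
Hence $J^m(\Phi^{-1}(1-\alpha/k),\theta_1,\theta_2)\ge 1-\alpha$, and the first bound in \eqref{eq:cv-upperbound1} follows.

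\textbf{Case $\ell=s$.} Similarly,
\[
T^s(X+\theta_1)\le \sum_j (X_j+\theta_{1,j})^2I\{X_j+\theta_{1,j}\le0\}\le \sum_j X_j^2,
\]
because on the event $X_j+\theta_{1,j}\le 0$ with $\theta_{1,j}\ge0$ we have $|X_j+\theta_{1,j}|\le|X_j|$. On the event $\max_j|X_j|\le \Phi^{-1}(1-\alpha/(2k))$, which has probability at least $1-\alpha$ by a union bound over the two-sided tails, the sum $\sum_j X_j^2$ is at most $k\{\Phi^{-1}(1-\alpha/(2k))\}^2$. This gives the second bound in \eqref{eq:cv-upperbound1}.

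\textbf{Case $\ell=q$.} Let $\theta_2\in\mathbf O$, so that $\theta_2^{-1}$ exists. Taking $\mu=\theta_1$, which is admissible in the infimum defining $T^q$, gives
\[
T^q(X+\theta_1,\theta_2)\le X'\theta_2^{-1}X=Z'Z\sim\chi^2_k.
\]
This yields \eqref{eq:cv-upperbound2}.

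The only delicate point is the meaning of ``well-defined'' for singular $\theta_2$ when $\ell=m,s$. Here $\theta_2^{1/2}$ is the PSD square root, and the bounds above use only the unit diagonal, so they are unaffected. One must also check that the supremum is not $-\infty$, which is immediate since the set is nonempty (it contains the quantile at $\theta_1=0_k$). We expect no real obstacle; the main care needed is in the domination step for $T^s$, where the negative-part terms must be handled correctly under nonnegative shifts.
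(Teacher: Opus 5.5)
Your proposal is correct and matches the paper's proof: the same pointwise domination by a $\theta_1$-free variable, the same union bounds using unit marginal variances (so singular $\theta_2$ is harmless), and the same feasibility of $\mu=\theta_1$ giving $T^q\le Z'Z$. The only cosmetic difference is that you reduce to $\theta_1\in\mathbf R_+^k$ and dominate through the negative branch of $T^m$ and $T^s$. The paper instead works with $\theta_1\in\mathbf R_-^k$ and the positive branch, which is the mirror image of your argument.
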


\begin{proof}
Fix $\theta_1\in\mathbf R^k_-$ and
$\theta_2\in\bar{\mathbf O}$. Since
$T^\ell(\theta_2^{1/2}Z+\theta_1,\theta_2)$ is real-valued for
$\ell \in \{m,s\}$, $J^\ell(\cdot,\theta_1,\theta_2)$ in \eqref{def:J} is a proper CDF and
$(J^\ell)^{-1}(1-\alpha,\theta_1,\theta_2)=\inf\{x\in\mathbf{R}: P\{T^{\ell}(\theta_2^{1/2}Z +\theta_1, \theta_2)\leq x\}\geq 1-\alpha\}$ is finite. Moreover, the set over
which the supremum is taken in \eqref{def:LF-cv-normal} is non-empty since
$0_k\in\mathbf R^k_+\cup\mathbf R^k_-$. It remains to show that the quantiles are uniformly bounded from above.

Consider $\ell=m$. For any
$x\in\mathbf R$,
\begin{align*}
1-J^m(x,\theta_1,\theta_2)
&=
P\{T^m(\theta_2^{1/2}Z+\theta_1)>x\}
\overset{(a)}{\le}
P\left\{\max_{1\le j\le k}(\theta_2^{1/2}Z+\theta_1)_j>x\right\}
\\
&\overset{(b)}{\le}
P\left\{\max_{1\le j\le k}(\theta_2^{1/2}Z)_j>x\right\}  
\overset{(c)}{\le}
\sum_{j=1}^k P\{(\theta_2^{1/2}Z)_j>x\}
=
k\{1-\Phi(x)\}.
\end{align*}
where (a) holds because
$T^m(y)\le \max_{1\le j\le k}y_j$ for any $y\in\mathbf{R}^k$; (b) holds because
$\theta_1\in\mathbf R_-^k$, so
$(\theta_2^{1/2}Z+\theta_1)_j\le(\theta_2^{1/2}Z)_j$ for all $j$; (c) is the union bound. The last equality holds because
$Var((\theta_2^{1/2}Z)_j)=(\theta_2)_{jj}=1$, and hence
$(\theta_2^{1/2}Z)_j\sim N(0,1)$. Taking $x=\Phi^{-1}(1-\alpha/k)$ gives $J^m(\Phi^{-1}(1-\frac{\alpha}{k}),
\theta_1,\theta_2)
\ge 1-\alpha.$
Therefore, for all $\theta_1\in\mathbf{R}^k_-$, it holds $(J^m)^{-1}(1-\alpha,\theta_1,\theta_2)
\le
\Phi^{-1}\left(1-\frac{\alpha}{k}\right)$ and the upper bound of $c^m(1-\alpha,\theta_2)$ in \eqref{eq:cv-upperbound1} holds.

Next, let $\ell=s$. For any $x\ge0$,
\begin{align*}
1-J^s(x,\theta_1,\theta_2)
&\overset{(a)}{\le}
P\left\{
\sum_{j=1}^k
(\theta_2^{1/2}Z+\theta_1)_j^2
I\{(\theta_2^{1/2}Z+\theta_1)_j\ge0\}
>x
\right\} \\
&\overset{(b)}{\le}
P\left\{
\sum_{j=1}^k
\left((\theta_2^{1/2}Z)_j\right)^2
>x
\right\} 
\overset{(c)}{\le}
P\left\{
\max_{1\le j\le k}|(\theta_2^{1/2}Z)_j|>\sqrt{x/k}
\right\} \\
&\overset{(d)}{\le}
\sum_{j=1}^k
P\left\{|(\theta_2^{1/2}Z)_j|>\sqrt{x/k}\right\} 
\overset{(e)}{=}
2k\left\{1-\Phi\left(\sqrt{x/k}\right)\right\}
\end{align*}
where (a) holds by the definition of $T^s$; (b) holds because $\theta_1\in\mathbf R_-^k$: whenever
$(\theta_2^{1/2}Z+\theta_1)_j\ge0$, we have $0\le
(\theta_2^{1/2}Z+\theta_1)_j
\le
(\theta_2^{1/2}Z)_j$; (c) holds trivially; (d) holds by the union
bound; finally, (e) again uses
$(\theta_2^{1/2}Z)_j\sim N(0,1)$. Taking $x=
k\{\Phi^{-1}(1-\frac{\alpha}{2k})\}^2$ gives $J^s(
k\{\Phi^{-1}(1-\frac{\alpha}{2k})\}^2,
\theta_1,\theta_2
)
\ge 1-\alpha.$ Therefore, for all $\theta_1\in\mathbf{R}^k_-$, $(J^s)^{-1}(1-\alpha,\theta_1,\theta_2)
\le
k\left\{\Phi^{-1}\left(1-\frac{\alpha}{2k}\right)\right\}^2$ holds, so the upper bound of $c^s(1-\alpha,\theta_2)$ in \eqref{eq:cv-upperbound1} holds.

Finally consider $\ell=q$ and suppose $\theta_2\in\mathbf O$, so that
$\theta_2^{-1}$ exists. For any
$\theta_1\in\mathbf R^k_-$, $\mu=\theta_1$ is feasible in the
definition of $T^q$. Hence
\begin{align*}
T^q(\theta_2^{1/2}Z+\theta_1,\theta_2)
&=
\inf_{\mu\in\mathbf R^k_+\cup\mathbf R^k_-}
(\theta_2^{1/2}Z+\theta_1-\mu)'
\theta_2^{-1}
(\theta_2^{1/2}Z+\theta_1-\mu) 
\le
Z'Z.
\end{align*}
where the inequality holds because $\mu=\theta_1$ belongs to
$\mathbf R^k_+\cup\mathbf R^k_-$. Therefore, for any $x\in\mathbf R$,
\[
1-J^q(x,\theta_1,\theta_2)
\le
P\{Z'Z>x\}
=
P\{\chi^2_k>x\}.
\]
Taking $x=\chi^2_{k,1-\alpha}$ gives $(J^q)^{-1}(1-\alpha,\theta_1,\theta_2)
\le
\chi^2_{k,1-\alpha}.$ Taking the supremum over
$\theta_1\in\mathbf R^k_-$ yields \eqref{eq:cv-upperbound2}.
\end{proof}

For \(\theta_1\in[0,\infty]^k\), define
\[
\mathbf I(\theta_1):=\{j\in[k]:\theta_{1,j}=+\infty\},
\qquad
\mathbf I^c(\theta_1):=[k]\setminus \mathbf I(\theta_1).
\]
For a vector \(a\), write \(a_I\) for the subvector indexed by \(I\), and for a matrix
\(A\), write \(A_{I\times I}\) for the corresponding principal submatrix. Furthermore, for brevity, we use $\left(J^{\ell}\right)^{-1}$ and $J^{\ell,-1}$ interchangeably.

\begin{lemma}\label{lem:argmax-critical-value-all}
Fix \(\alpha\in(0,\tfrac12)\). For \(\ell\in\{m,s\}\), define the extended inverse
\(\bar J^{\ell,-1}(1-\alpha,\theta_1,\theta_2)\) where 
\((\theta_1, \theta_2)\in[0,\infty]^k\times\bar{\mathbf O}\) as follows.

For \(\ell=m\),
\[
\bar J^{m,-1}(1-\alpha,\theta_1,\theta_2)
:=
\begin{cases}
J^{m,-1}(1-\alpha,\theta_1,\theta_2),
& \mathbf I(\theta_1)=\varnothing,\\
\inf\{x:
P\{\max_{j\in\mathbf I^c(\theta_1)}(-Z_j-\theta_{1,j})\le x\}\ge1-\alpha\},
& \varnothing\subsetneq\mathbf I(\theta_1)\subsetneq[k],\\
-\infty,
& \mathbf I(\theta_1)=[k],
\end{cases}
\]
where \(Z=(Z_1, ..., Z_k)\sim N(0_k,\theta_2)\).

For \(\ell=s\), let $X^s:=\sum_{j\in\mathbf I^c(\theta_1)}(Z_j+\theta_{1,j})^2
I\{Z_j+\theta_{1,j}\le0\}$ and 
\[
\bar J^{s,-1}(1-\alpha,\theta_1,\theta_2)
:=
\begin{cases}
J^{s,-1}(1-\alpha,\theta_1,\theta_2),
& \mathbf I(\theta_1)=\varnothing,\\
\inf\{x:
P\{X^s\le x\}\ge1-\alpha\},
& \varnothing\subsetneq\mathbf I(\theta_1)\subsetneq[k],\\
0,
& \mathbf I(\theta_1)=[k].
\end{cases}
\]

For \(\ell=q\), if \(\theta_2\succ0\), define $Y:=(Z+\theta_1)_{\mathbf I^c(\theta_1)}
\sim
N(\theta_{1,\mathbf I^c(\theta_1)},\theta_{2,\mathbf I^c(\theta_1)\times\mathbf I^c(\theta_1)})$ and $X^q:=\inf_{\nu\in\mathbf R_+^{|\mathbf I^c(\theta_1)|}}
(Y-\nu)' (\theta_{2,\mathbf I^c(\theta_1)\times\mathbf I^c(\theta_1)})^{-1}(Y-\nu)$ and 
\[
\bar J^{q,-1}(1-\alpha,\theta_1,\theta_2)
:=
\begin{cases}
J^{q,-1}(1-\alpha,\theta_1,\theta_2),
& \mathbf I(\theta_1)=\varnothing,\\
\inf\{x:
P\{ X^q\le x \}
\ge1-\alpha\},
& \varnothing\subsetneq\mathbf I(\theta_1)\subsetneq[k],\\
0,
& \mathbf I(\theta_1)=[k].
\end{cases}
\]
Then, \((\theta_1,\theta_2)\mapsto\bar J^{\ell,-1}(1-\alpha,\theta_1,\theta_2)\) is continuous on $[0,\infty]^k\times \bar{\mathbf{O}}$ for $\ell=m,s$ and on $[0,\infty]^k\times \mathbf{O}$ for $\ell=q$. 

Furthermore, for each \(\theta_2\in\bar{\mathbf O}\), define
\[
\mathbf S^\ell(\theta_2)
:=
\arg\max_{\theta_1\in[0,\infty]^k}
\bar J^{\ell,-1}(1-\alpha,\theta_1,\theta_2),
\qquad \ell\in\{m,s\}.
\]
Then, \(\mathbf S^\ell(\theta_2)\) is nonempty and compact, and satisfies
\[
c^\ell(1-\alpha,\theta_2)
=
\max_{\theta_1\in[0,\infty]^k}
\bar J^{\ell,-1}(1-\alpha,\theta_1,\theta_2)
=
\sup_{\theta_1\in[0,\infty)^k}
J^{\ell,-1}(1-\alpha,\theta_1,\theta_2)
\]
for \(\ell\in\{m,s\}\). The same conclusion holds for \(\ell=q\) for
\(\theta_2\in\mathbf O\).
\end{lemma}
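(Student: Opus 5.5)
The plan is to prove continuity of $(\theta_1,\theta_2)\mapsto\bar J^{\ell,-1}(1-\alpha,\theta_1,\theta_2)$ sequentially, and then read off the argmax and supremum claims from compactness. Equip $[0,\infty]^k$ with the product topology of the one-point extension of each coordinate; it is then compact, as is $\bar{\mathbf O}$. Take $(\theta_1^{(n)},\theta_2^{(n)})\to(\theta_1,\theta_2)$ and set $I:=\mathbf I(\theta_1)$. Write $X_n:=T^\ell\bigl((\theta_2^{(n)})^{1/2}Z+\theta_1^{(n)},\theta_2^{(n)}\bigr)$, where for coordinates with $\theta^{(n)}_{1,j}=\infty$ the statistic is understood through the same limiting formula. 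Let $X$ be the limiting variable in the definition of $\bar J^{\ell,-1}$. The matrix square root is continuous on positive semidefinite matrices, so $(\theta_2^{(n)})^{1/2}Z\to\theta_2^{1/2}Z$ almost surely.

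First I establish $X_n\to X$ in probability, coordinate by coordinate. Coordinates $j\in I$ satisfy $(\theta_2^{(n)})^{1/2}Z+\theta_1^{(n)}\to+\infty$ in the $j$th entry. For $\ell=m$ this forces $\max_j x_j\to\infty$, so the minimum is eventually the negative branch $\max_{j\in I^c}(-Z_j-\theta_{1,j})$. If $I=[k]$, both branches push $T^m\to-\infty$, matching the value $-\infty$ in the extended-real sense. For $\ell=s$, diverging coordinates contribute to the positive-part sum, which then diverges, and they contribute nothing to the negative-part sum. Hence $T^s\to X^s$, with $X^s\equiv0$ when $I=[k]$. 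For $\ell=q$ with $\theta_2\succ0$, I would use the partial-divergence argument invoked for Lemma~\ref{lem:conv-ip-to-Jn}. The Mahalanobis distance to $\mathbf R^k_-$ diverges. The distance to $\mathbf R^k_+$ converges to the projection distance of $Y$ onto $\mathbf R_+^{|I^c|}$ in the $(\theta_2)_{I^c\times I^c}^{-1}$ metric, because the infimum over the unbounded coordinates is attained by matching them and $\theta_2^{(n)}\to\theta_2\succ0$ keeps the inverse continuous.

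Second, I convert distributional convergence into convergence of quantiles. The standard argument gives $\liminf_n q_n\ge q$. Upper semicontinuity follows if $F_X(q+\varepsilon)>1-\alpha$ for every $\varepsilon>0$, which I would check by cases.
\begin{itemize}
\item[(a)] For $\ell=m$, every $Z_j$ has unit variance even when $\theta_2$ is singular, so $X$ is atomless. Its support is a half-line $(-\infty,\infty)$ or an unbounded interval, so $F_X$ is strictly increasing there.
\item[(b)] For $\ell=s,q$, $X$ has an atom only at $0$ and a strictly positive density on $(0,\infty)$, since a single unit-variance Gaussian coordinate can reach any negative value. If $q>0$, strict increase applies. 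If $q=0$, then $F_X(\varepsilon)>F_X(0)\ge1-\alpha$.
\item[(c)] The degenerate cases $I=[k]$ (values $-\infty$ and $0$) follow directly from the first step.
\end{itemize}
I expect this case analysis at the boundary to be the main obstacle. It covers mixed finite and infinite limits, singular $\theta_2$ (where only the unit diagonal is available), and the atom at zero for $s,q$. I would handle these by working with the explicit representation of $X$ rather than through general continuous-mapping results, since $T^\ell$ is not continuous at infinity.

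Finally, $\bar J^{\ell,-1}(1-\alpha,\cdot,\theta_2)$ is a continuous extended-real function on the compact set $[0,\infty]^k$. It is bounded above by Lemma~\ref{lem:parametric-cv-existence} and is finite at $0_k$, so its argmax set $\mathbf S^\ell(\theta_2)$ is nonempty. That set is closed, being a level set of a continuous function, and hence compact. Because $[0,\infty)^k$ is dense in $[0,\infty]^k$ and the function is continuous, the maximum equals $\sup_{\theta_1\in[0,\infty)^k}J^{\ell,-1}$. By the sign symmetry already noted in \eqref{def:LF-cv-normal}, this supremum is $c^\ell(1-\alpha,\theta_2)$. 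The same argument applies to $\ell=q$ on $\mathbf O$.
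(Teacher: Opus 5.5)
Your proposal is correct and follows essentially the same route as the paper's proof: almost-sure convergence of the (possibly reduced) statistic according to which coordinates of $\theta_1$ diverge, quantile convergence from strict monotonicity of the limiting CDF with a separate treatment of the atom at zero for $\ell\in\{s,q\}$, and then Weierstrass on the compact set $[0,\infty]^k$ plus density of $[0,\infty)^k$. Two small corrections: for $\ell=q$ the reduced limit is not attained by setting $\mu_I=x_I$, which would give the block $[\theta_2^{-1}]_{I^c\times I^c}$; it comes from minimizing freely over $\mu_I$, which produces the Schur-complement form $(\theta_{2,I^c\times I^c})^{-1}$ appearing in $X^q$. Also, for singular $\theta_2$ the limiting CDF need not have positive density on all of $(0,\infty)$ (e.g.\ $T^s\equiv0$ when all correlations equal one and $\theta_1=0_k$), so the correct claim, which suffices for your argument, is strict increase wherever the CDF is below one.
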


\begin{proof}
Fix \(\ell\in\{m,s\}\). For \(\ell=q\), restrict attention to a set of correlation matrices whose smallest eigenvalue is bounded away from zero. Consider any convergent sequence $(\theta_{1,n},\theta_{2,n})\to (\theta_{1,\infty},\theta_{2,\infty})$ in $[0,\infty]^k\times\bar{\mathbf O}.$ For \(\ell=q\), assume \(\theta_{2,\infty}\succ0\) then we can find a sequence \(\theta_{2,n}\in\mathbf{O}\) such that $\theta_{2,n}\to \theta_{2,\infty}$.

If \(\mathbf I(\theta_{1,\infty})=\varnothing\), then all coordinates of \(\theta_{1,n}\) have finite limits. Since the the principal square-root map is continuous, $\theta_{2,n}^{1/2}Z+\theta_{1,n}
\Rightarrow
\theta_{2,\infty}^{1/2}Z+\theta_{1,\infty}.$ By continuity of \(T^m\), \(T^s\), and \(T^q\) on their domains, we obtain $T^\ell(\theta_{2,n}^{1/2}Z+\theta_{1,n},\theta_{2,n})
\Rightarrow
T^\ell(\theta_{2,\infty}^{1/2}Z+\theta_{1,\infty},\theta_{2,\infty}),$ where the second argument is ignored for \(\ell=m,s\).

If \(\varnothing\subsetneq \mathbf I(\theta_{1,\infty})\subsetneq[k]\), then the coordinates indexed by \(\mathbf I(\theta_{1,\infty})\) diverge to \(+\infty\), while the coordinates indexed by \(\mathbf I(\theta_{1,\infty})^c\) have finite limits.
Therefore,
\[
T^m(\theta_{2,n}^{1/2}Z+\theta_{1,n})
\Rightarrow
\textstyle\max_{j\in \mathbf I(\theta_{1,\infty})^c}\{-(\theta_{2,\infty}^{1/2}Z)_j-\theta_{1,\infty,j}\},
\]
\[
T^s(\theta_{2,n}^{1/2}Z+\theta_{1,n})
\Rightarrow
\textstyle\sum_{j\in \mathbf I(\theta_{1,\infty})^c}
((\theta_{2,\infty}^{1/2}Z)_j+\theta_{1,\infty,j})^2
I\{(\theta_{2,\infty}^{1/2}Z)_j+\theta_{1,\infty,j}\le0\}.
\]
For \(\ell=q\), the distance to \(\mathbf R_-^k\) diverges, while the distance to
\(\mathbf R_+^k\) reduces to the finite coordinates. Thus
\[
T^q(\theta_{2,n}^{1/2}Z+\theta_{1,n},\theta_{2,n})
\Rightarrow
\textstyle\inf\{
(Y-\nu)'(\theta_{2,\infty,\mathbf I(\theta_{1,\infty})^c\times \mathbf I(\theta_{1,\infty})^c})^{-1}(Y-\nu):{\nu\in\mathbf R_+^{|\mathbf I(\theta_{1,\infty})^c|}}\},
\]
where $Y=(\theta_{2,\infty}^{1/2}Z+\theta_{1,\infty})_{\mathbf I(\theta_{1,\infty})^c}.$ If \(\mathbf I(\theta_{1,\infty})=[k]\), then all coordinates diverge to \(+\infty\). Hence
\(T^m(\theta_{2,n}^{1/2}Z+\theta_{1,n})\overset{p}{\to}-\infty\), $T^s(\theta_{2,n}^{1/2}Z+\theta_{1,n})\overset{p}{\to}0$, and $T^q(\theta_{2,n}^{1/2}Z+\theta_{1,n},\theta_{2,n})\overset{p}{\to}0.$ These weak convergence statements are exactly the distributions $\bar J(\cdot, )$ used in the definition of
\(\bar J^{\ell,-1}\).

For \(\ell=m\), the limiting distribution is continuous and strictly increasing wherever its CDF takes values in \((0,1)\) by Lemma \ref{lem:strictly_increasing_dist_T}. Therefore, Lemma 11.2.1 of \cite{LehmannRomano2005testing} implies
\begin{align}\label{eq:quantile-convergence}
\bar J^{\ell,-1}(1-\alpha,\theta_{1,n},\theta_{2,n})
\to
\bar J^{\ell,-1}(1-\alpha,\theta_{1,\infty},\theta_{2,\infty}).
\end{align}
For $\ell\in\{s,q\}$, there are two possible cases. If $\bar J^{\ell,-1}(1-\alpha,\theta_{1,\infty},\theta_{2,\infty})>0$, then since $\bar J(\cdot, \theta_{1,\infty},\theta_{2,\infty})$ is continuous and strictly increasing at its $(1-\alpha)$ quantile by Lemma \ref{lem:strictly_increasing_dist_T}, Lemma 11.2.1(i) of \cite{LehmannRomano2005testing} gives \eqref{eq:quantile-convergence} as well. Suppose instead that $\bar J^{\ell,-1}(1-\alpha,\theta_{1,\infty},\theta_{2,\infty})=0.$ Because the statistics are nonnegative, $\bar J^{\ell,-1}(1-\alpha,\theta_{1,n},\theta_{2,n})\geq0$. Choose a continuity point
$x_\varepsilon\in(0,\varepsilon)$. Since
$\bar J^{\ell}(0,\theta_{1,\infty},\theta_{2,\infty})\geq1-\alpha$ and 
$\bar J^{\ell}(x,\theta_{1,\infty},\theta_{2,\infty})>\bar J^{\ell}(0,\theta_{1,\infty},\theta_{2,\infty})$ for every $x>0$ whenever $\bar J^{\ell}(0,\theta_{1,\infty},\theta_{2,\infty})<1$, we have $\bar J^{\ell}(x_\varepsilon,\theta_{1,\infty},\theta_{2,\infty})>1-\alpha.$ Weak convergence therefore gives
$\bar J^{\ell}(x_\varepsilon,\theta_{1,n},\theta_{2,n}) (x_\varepsilon)>1-\alpha$ eventually, and hence $0\leq \bar J^{\ell,-1}(1-\alpha,\theta_{1,n},\theta_{2,n}) \leq x_\varepsilon<\varepsilon.$ Thus, $J^{\ell,-1}(1-\alpha,\theta_{1,n},\theta_{2,n}) (x_\varepsilon)\to 0=J^{\ell,-1}(1-\alpha,\theta_{1,\infty},\theta_{2,\infty})$. Therefore, \((\theta_1,\theta_2)\mapsto\bar J^{\ell,-1}(1-\alpha,\theta_1,\theta_2)\) is continuous on $[0,\infty]^k\times \bar{\mathbf{O}}$ for $\ell=m,s$ and on $[0,\infty]^k\times \mathbf{O}$ for $\ell=q$. 

Now fix \(\theta_2 \in \bar{\mathbf{O}}\) for $\ell=m,s$ and \(\theta_2 \in {\mathbf{O}}\) for $\ell=q$. Since \([0,\infty]^k\) is compact and $\theta_1\mapsto\bar J^{\ell,-1}(1-\alpha,\theta_1,\theta_2)$ is continuous, the maximum is attained by the Weierstrass theorem. Hence
\(\mathbf S^\ell(\theta_2)\) is nonempty. Since it is the argmax of a continuous function on a compact set, it is closed and therefore compact. Finally, \([0,\infty)^k\) is dense in \([0,\infty]^k\), and \(\bar J^{\ell,-1}\) is a continuous extension of \(J^{\ell,-1}\). Therefore,
\[
\max_{\theta_1\in[0,\infty]^k}
\bar J^{\ell,-1}(1-\alpha,\theta_1,\theta_2)
=
\sup_{\theta_1\in[0,\infty)^k}
J^{\ell,-1}(1-\alpha,\theta_1,\theta_2).
\]
\end{proof}

\begin{lemma}[Continuity in correlation]\label{lem:continuity-critical-value-all}
Fix \(\alpha\in(0,\tfrac12)\). For \(\ell\in\{m,s\}\), the least favorable critical value
$c^\ell(1-\alpha,\theta_2)$ in \eqref{def:LF-cv-normal} is continuous on \(\bar{\mathbf O}\). For \(\ell=q\), the same conclusion holds on $\mathbf{O}$.
\end{lemma}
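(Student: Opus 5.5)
The plan is to write the critical value as the maximum of a jointly continuous function over a compact set, and then apply Berge's maximum theorem. By Lemma \ref{lem:argmax-critical-value-all},
\[
c^\ell(1-\alpha,\theta_2)=\max_{\theta_1\in[0,\infty]^k}\bar J^{\ell,-1}(1-\alpha,\theta_1,\theta_2),
\]
where $(\theta_1,\theta_2)\mapsto \bar J^{\ell,-1}(1-\alpha,\theta_1,\theta_2)$ is jointly continuous. The domain is $[0,\infty]^k\times\bar{\mathbf O}$ for $\ell\in\{m,s\}$ and $[0,\infty]^k\times\mathbf O$ for $\ell=q$. The feasible set $[0,\infty]^k$ is compact. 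It is also the same for every $\theta_2$, so the constraint correspondence is constant and hence trivially continuous.

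Fix $\theta_2^*$ and a sequence $\theta_{2,n}\to\theta_2^*$ in the relevant domain. I will prove the upper bound $\limsup_n c^\ell(1-\alpha,\theta_{2,n})\le c^\ell(1-\alpha,\theta_2^*)$ first. Choose maximizers $\theta_{1,n}\in\mathbf S^\ell(\theta_{2,n})$; these are nonempty by Lemma \ref{lem:argmax-critical-value-all}. Pass to a subsequence that attains the limsup and along which $\theta_{1,n}\to\theta_1^*\in[0,\infty]^k$, which is possible by compactness. Joint continuity then gives $c^\ell(1-\alpha,\theta_{2,n})=\bar J^{\ell,-1}(1-\alpha,\theta_{1,n},\theta_{2,n})\to\bar J^{\ell,-1}(1-\alpha,\theta_1^*,\theta_2^*)\le c^\ell(1-\alpha,\theta_2^*)$.

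For the lower bound, take $\theta_1^\dagger\in\mathbf S^\ell(\theta_2^*)$. Then $c^\ell(1-\alpha,\theta_{2,n})\ge\bar J^{\ell,-1}(1-\alpha,\theta_1^\dagger,\theta_{2,n})$, and the right-hand side tends to $c^\ell(1-\alpha,\theta_2^*)$, again by continuity. Together the two bounds give continuity. One can also argue uniformly: a continuous function on the compact set $[0,\infty]^k\times K$, for $K$ a compact neighborhood of $\theta_2^*$, is uniformly continuous. The map $\theta_2\mapsto\max_{\theta_1}f(\theta_1,\theta_2)$ is then continuous, because $|\max f(\cdot,\theta_2)-\max f(\cdot,\theta_2')|\le\sup_{\theta_1}|f(\theta_1,\theta_2)-f(\theta_1,\theta_2')|$. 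For this version, $[0,\infty]^k$ should be metrized as a compact space, for instance through $x\mapsto x/(1+x)$ coordinatewise.

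The main obstacle is the case $\ell=q$, where $\mathbf O$ is not compact. I will handle it by working locally. Given $\theta_2^*\in\mathbf O$, take a closed ball $K\subset\mathbf O$ around it, so that $\lambda_{\min}$ is bounded away from zero on $K$. The joint continuity from Lemma \ref{lem:argmax-critical-value-all} holds on $[0,\infty]^k\times K$, and the argument above then applies verbatim.

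A second point to check is that Lemma \ref{lem:argmax-critical-value-all} treats the representation over $\mathbf R_+^k$ only. The supremum over $\mathbf R_+^k\cup\mathbf R_-^k$ in \eqref{def:LF-cv-normal} equals the supremum over $\mathbf R_+^k$ by the sign symmetry $Z\overset d=-Z$ together with $T^\ell(-x)=T^\ell(x)$, so restricting to $\mathbf R_+^k$ costs nothing.

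Finally, the values stay finite: they are bounded above by Lemma \ref{lem:parametric-cv-existence}, and the $-\infty$ value at $\mathbf I(\theta_1)=[k]$ for $\ell=m$ is never the maximum. The function can therefore be treated as real-valued near any maximizer, and the extended-real convention of Lemma \ref{lem:argmax-critical-value-all} creates no difficulty.
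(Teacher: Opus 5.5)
Your proposal is correct and follows the paper's proof. Both use the representation $c^\ell(1-\alpha,\theta_2)=\max_{\theta_1\in[0,\infty]^k}\bar J^{\ell,-1}(1-\alpha,\theta_1,\theta_2)$ from Lemma~\ref{lem:argmax-critical-value-all}, which has a constant compact constraint set and a jointly continuous objective, and then apply Berge's maximum theorem; your sequential upper- and lower-bound argument simply proves that theorem by hand. The localization to a compact neighborhood for $\ell=q$ is harmless but unnecessary, since both your sequential argument and Berge's theorem only need convergence within $\mathbf O$, not compactness of $\mathbf O$.
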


\begin{proof}
For \(\ell\in\{m,s\}\), Lemma \ref{lem:argmax-critical-value-all} gives $c^\ell(1-\alpha,\theta_2)
=
\max_{\theta_1\in[0,\infty]^k}
\bar J^{\ell,-1}(1-\alpha,\theta_1,\theta_2).$ The feasible correspondence \([0,\infty]^k\) is constant, compact-valued and does not depend on \(\theta_2\). The objective $(\theta_1,\theta_2)\mapsto \bar{J}^{\ell,-1}(1-\alpha,\theta_1, \theta_2)$ is continuous on
\([0,\infty]^k\times\bar{\mathbf O}\). Therefore, Berge's maximum theorem implies that $\theta_2\mapsto c^\ell(1-\alpha,\theta_2)$ is continuous on \(\bar{\mathbf O}\). For \(\ell=q\), \((\theta_1,\theta_2)\mapsto\bar J^{\ell,-1}(1-\alpha,\theta_1,\theta_2)\) is continuous on $[0,\infty]^k\times\mathbf O.$ Hence, by Berge's theorem, $\theta_2\mapsto c^q(1-\alpha,\theta_2)$ is continuous on \(\mathbf O\).
\end{proof}

\subsubsection{Lemmas for Theorem \ref{thm:LFtest-uniform-validity}}
\begin{lemma}\label{lem:statistic-degenerate-case}
    Consider a sequence $\{P_n\in \mathbf{P}: n \geq 1\}$ where $\mathbf{P}$ is a set of distributions on $\mathbf{R}^k$ satisfying \eqref{def:uniform-integrability}. Let $W_i$, $i=1,2,...,n$, be an i.i.d. sequence of random vectors with distribution $P_n$. Suppose
    \begin{align*}
        \frac{\sqrt{n} \mu_j(P_n)}{ \sigma_j(P_n)} \to -\infty \text{ for all }1\leq j\leq k \quad\text{or}\quad\frac{\sqrt{n} \mu_j(P_n)}{ \sigma_j(P_n)} \to \infty \text{ for all }1\leq j\leq k.
    \end{align*}
    Then, $T^m_n \overset{P_n}{\to} -\infty,$ and $T^s_n\overset{P_n}{\to}0$. Furthermore, if the smallest eigenvalue of $\Omega(P)$ is bounded away from zero uniformly over $\mathbf{P}$, $\inf_{P \in \mathbf{P}} \lambda_{\min}(\Omega(P))>0$, $T^q_n\overset{P_n}{\to}0$.
\end{lemma}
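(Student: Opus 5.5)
By symmetry (replacing $W_i$ by $-W_i$ swaps the roles of $T^m$'s two branches and of the two sums in $T^s$, and leaves $T^q$ unchanged), it suffices to treat the case $\sqrt n\mu_j(P_n)/\sigma_j(P_n)\to-\infty$ for all $j$. The plan is to show that every coordinate of the studentized mean $X_n:=\sqrt n S_n^{-1}\bar W_n$ diverges to $-\infty$ in $P_n$-probability, and then to read off the behaviour of each statistic from the structure of $T^\ell$ near $\mathbf R_-^k$.

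\textbf{Step 1: divergence of the studentized mean.} For each $j$, decompose
\[
X_{j,n}=\frac{\sigma_j(P_n)}{S_{j,n}}\left(\frac{\sqrt n(\bar W_{j,n}-\mu_j(P_n))}{\sigma_j(P_n)}+\frac{\sqrt n\mu_j(P_n)}{\sigma_j(P_n)}\right).
\]
\begin{itemize}
\item The ratio $\sigma_j(P_n)/S_{j,n}\overset{P_n}{\to}1$ by Lemma \ref{alem:consistency-sample-variance}.
\item The centered term is $O_{P_n}(1)$: it has unit variance, so Chebyshev suffices, or one can invoke Lemma \ref{alem:uniformCLT}.
\item The drift term tends to $-\infty$ by assumption.
\end{itemize}
Hence $X_{j,n}\overset{P_n}{\to}-\infty$ for each $j$, and since $k$ is finite, $P_n\{X_{j,n}<-M\ \forall j\}\to1$ for every $M$.

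\textbf{Step 2: $T^m$ and $T^s$.} On the event $\{X_n\in\mathbf R_-^k\}$, the positive-part sum in $T^s$ vanishes, so $T^s_n=0$; this event has probability tending to one, giving $T^s_n\overset{P_n}{\to}0$. For $T^m$,
\[
T^m(X_n)=\min\Bigl\{\max_j X_{j,n},\ \max_j(-X_{j,n})\Bigr\}\le\max_j X_{j,n}.
\]
The right-hand side is below $-M$ with probability tending to one for every $M$, so $T^m_n\overset{P_n}{\to}-\infty$.

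\textbf{Step 3: $T^q$.} On $\{X_n\in\mathbf R_-^k\}$, choosing $\mu=X_n$ in the infimum defining $T^q$ gives $T^q_n=0$, provided $\hat\Omega_n$ is invertible. The only issue is the regularization convention of Remark \ref{remark:regularized-correlation}: $T^q_n$ is evaluated at $\tilde\Omega_n$ when $\hat\Omega_n$ is singular. This is in fact harmless, since the argument $\mu=X_n$ gives $T^q_n=0$ for any positive definite weighting matrix, and $\tilde\Omega_n\succ0$ by construction. The eigenvalue condition is therefore used only to ensure, via Lemma \ref{alem:consistency-sample-corr}, that $\hat\Omega_n\succ0$ with probability approaching one. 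This makes the statistic well defined without appeal to the regularization and matches the way the lemma is used in the proof of Theorem \ref{thm:LFtest-uniform-validity}. Either way, $T^q_n\overset{P_n}{\to}0$.

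\textbf{Main obstacle.} The only nontrivial point is Step 1, namely the uniform tightness of the centered studentized mean along the sequence $P_n$. This is supplied by the uniform integrability condition \eqref{def:uniform-integrability} through Lemmas \ref{alem:consistency-sample-variance} and \ref{alem:uniformCLT}. Everything after that is deterministic on a high-probability event.
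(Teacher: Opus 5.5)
Your proposal is correct and follows essentially the same route as the paper. Both arguments split the studentized mean into a tight centered part, rescaled by $\sigma_j(P_n)/S_{j,n}\to 1$, plus a divergent drift, so that every coordinate diverges; both then read off $T^m_n\le\max_j X_{j,n}$ and $T^s_n=T^q_n=0$ on the event $\{X_n\in\mathbf R_-^k\}$, whose probability tends to one.

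Your two refinements are both valid: tightness via Chebyshev in place of the uniform CLT, and the observation that, given the regularized matrix $\tilde\Omega_n\succ0$, the eigenvalue condition is not needed for $T^q_n\to 0$.
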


\begin{proof}
For brevity of notation, consider
    \begin{align*}
        X_n:=\sqrt{n}S_n^{-1}\bar W_n
        &=
        S^{-1}_nD(P_n) \cdot D^{-1}(P_n) \sqrt{n}(\bar{W}_n-\mu(P_n))
        + S^{-1}_n\sqrt{n}\mu(P_n).
    \end{align*}
The first term is stochastically bounded, i.e., $        S^{-1}_nD(P_n) \cdot D^{-1}(P_n) \sqrt{n}(\bar{W}_n-\mu(P_n)) =O_{P_n}(1)$ because $S^{-1}_nD(P_n)=I_k + o_{P_n}(1)$ by Lemma \ref{alem:consistency-sample-variance}, $D^{-1}(P_n) \sqrt{n}(\bar{W}_n-\mu(P_n))  = O_{P_n}(1)$ by Lemma \ref{alem:uniformCLT}, and by Slutsky's theorem. For the second term, by Lemma \ref{alem:consistency-sample-variance},
    \begin{align*}
        \tfrac{\sqrt{n}\mu_j(P_n)}{S_{j,n}} \overset{P_n}{\to} -\infty
        \quad\text{if }
        \tfrac{\sqrt{n} \mu_j(P_n)}{ \sigma_j(P_n)} \to -\infty,
        \quad
        \tfrac{\sqrt{n}\mu_j(P_n)}{S_{j,n}} \overset{P_n}{\to} \infty
        \quad\text{if }
        \tfrac{\sqrt{n} \mu_j(P_n)}{ \sigma_j(P_n)} \to \infty
    \end{align*}
    for any $j=1,...,k.$ Combining these two terms gives 
    \begin{align}\label{eq:statistic-degenerate-case-eq2}
        X_{j,n}
        =
        \tfrac{\sqrt{n}\bar{W}_{j,n}}{S_{j,n}}
        \overset{P_n}{\to} -\infty
        \quad\text{if }
        \tfrac{\sqrt{n} \mu_j(P_n)}{ \sigma_j(P_n)} \to -\infty,
        \quad
        X_{j,n}
        =
        \tfrac{\sqrt{n}\bar{W}_{j,n}}{S_{j,n}}
        \overset{P_n}{\to} \infty
        \quad\text{if }
        \tfrac{\sqrt{n} \mu_j(P_n)}{ \sigma_j(P_n)} \to \infty
    \end{align}
    for any $j=1,...,k.$

    Next, I show that $T^m_n$ diverges to $-\infty$ in probability. Suppose first that
    $\frac{\sqrt{n} \mu_j(P_n)}{ \sigma_j(P_n)} \to -\infty$ for all $j=1,...,k.$
    By \eqref{eq:statistic-degenerate-case-eq2}, for any $\varepsilon>0$ and $M<0$, there exists $N \in \mathbf{N}$ such that $\sup_{n\geq N} P_n\{X_{j,n} \geq M\} < \frac{\varepsilon}{k}.$ Then,
    \begin{align*}
        P_n\{T^m_n > M\}
        &= P_n\{T^m_n > M, \quad\forall j\leq k~X_{j,n} < M\}
        + P_n\{T^m_n > M, \quad\exists j\leq k~X_{j,n} \geq M\}\\
        &\leq
        P_n\{T^m_n > M, \quad\forall j\leq k~X_{j,n} < M\}
        + \varepsilon\\
        &\leq
        P_n\{\textstyle\max_{1\leq j \leq k} X_{j,n}> M,
        \quad\forall j\leq k~X_{j,n} < M\}
        + \varepsilon
        =
        \varepsilon,
    \end{align*}
    where the first inequality holds because $ P_n\{\exists j\leq k~X_{j,n} \geq M\}
        \leq
        \textstyle\sum^k_{j=1} P_n\{X_{j,n} \geq M\}
        <\varepsilon,$
    and the second inequality holds because $\{T^m_n>M\}\subset\{\max_{1\leq j\leq k}X_{j,n}>M\}$. As a result, it holds that
    $T^m_n\overset{P_n}{\to}-\infty.$ A similar argument proves that
    $T^m_n\overset{P_n}{\to}-\infty$ if
    $\frac{\sqrt{n} \mu_j(P_n)}{ \sigma_j(P_n)} \to \infty$ for all $j=1,...,k.$

    The convergence $T^s_n\overset{P_n}{\to}0$ holds from the following: for any $\varepsilon>0$,
    $$P_n\{|T^s_n|>\varepsilon\} \leq P_n\{|T^s_n|>0\}\leq 1-P_n\{T^s_n=0\}=1-P_n\{X_n\in\mathbf{R}^k_+\cup\mathbf{R}^k_-\}\to0$$
    where the convergence holds by \eqref{eq:statistic-degenerate-case-eq2}. Similarly, the convergence of $T^q_n\overset{P_n}{\to}0$ holds from the following: for any $\varepsilon>0$, 
    \begin{align*}
        P_n\{|T^q_n|>\varepsilon\}
        &\overset{(a)}{=} P_n\{|T^q_n|>\varepsilon, \hat\Omega_n\succ0\} + o(1) \overset{(b)}{\le} P_n\{|T^q_n|>0, \hat\Omega_n\succ0\} + o(1)\\
        &\overset{(c)}{\le} P_n\{X_n \notin \mathbf{R}^k_+\cup\mathbf{R}^k_-, \hat\Omega_n\succ0\} + o(1) \overset{(d)}{\to}0
    \end{align*}
    where (a) holds as $\hat\Omega_n$ is positive definite with probability approaching one under the assumption on positive minimum eigenvalue; (b) holds by monotonicity; (c) holds because $T^q_n=0$ if and only if $X_n\in \mathbf{R}^k_+\cup\mathbf{R}^k_-$; and finally (d) holds by \eqref{eq:statistic-degenerate-case-eq2}. 
\end{proof}

\begin{lemma}\label{lem:conv-ip-to-Jn}
Consider a sequence $\{P_n\in \mathbf{P}: n \geq 1\}$ where $\mathbf{P}$ is a set of distributions on $\mathbf{R}^k$ satisfying \eqref{def:uniform-integrability}. Let $W_i$, $i=1,2,...,n$, be an i.i.d. sequence of random vectors with distribution $P_n$. Suppose for some non-empty set $I\subset \{1,2, ..., k\}$ 
    \begin{align}\label{eq:conv-ip-to-Jn-condition1}
        \frac{\sqrt{n} \mu_j(P_n)}{ \sigma_j(P_n)} \to \delta_j \in(-\infty, 0] \text{ for } j\in I\text{ and }\frac{\sqrt{n} \mu_j(P_n)}{ \sigma_j(P_n)} \to -\infty \text{ for }j \not\in I
    \end{align}
    or
    \begin{align}\label{eq:conv-ip-to-Jn-condition2}
        \frac{\sqrt{n} \mu_j(P_n)}{ \sigma_j(P_n)} \to \delta_j \in [0,\infty) \text{ for } j\in I\text{ and }\frac{\sqrt{n} \mu_j(P_n)}{ \sigma_j(P_n)} \to \infty \text{ for }j \not\in I.
    \end{align}
Then, for $\ell\in\{m,s\}$, $J^{\ell}$ in \eqref{def:J} and $J^{\ell}_n(x):=J^{\ell}_n(x,\sqrt{n}\mu(P_n),P_n)$ in \eqref{def:dist-test-stat} satisfy
    \begin{align}\label{eq:conv-ip-to-Jn}
        \sup_{x \in \mathbf{R}} | J^{\ell}(x, \sqrt{n}S^{-1}_n\mu(P_n), \Omega(\hat{P}_n)) - J^{\ell}_n(x)| \overset{P_n}{\to} 0 \text{ as }n\to\infty.
    \end{align}
Furthermore, if the smallest eigenvalue of $\Omega(P)$ is bounded away from zero uniformly over $\mathbf{P}$, $\inf_{P \in \mathbf{P}} \lambda_{\min}(\Omega(P))>0$, then \eqref{eq:conv-ip-to-Jn} holds for $\ell=q$.
\end{lemma}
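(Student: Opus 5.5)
We need to prove Lemma \ref{lem:conv-ip-to-Jn}: the uniform-in-$x$ convergence in probability of $J^\ell(x,\sqrt n S_n^{-1}\mu(P_n),\hat\Omega_n)$ to $J^\ell_n(x)$. By symmetry ($W_i\mapsto -W_i$), it suffices to treat \eqref{eq:conv-ip-to-Jn-condition1}. We argue by subsequences: it is enough to show that every subsequence has a further subsequence along which the convergence holds. So we pass to a subsequence with $\Omega(P_n)\to\Omega^*\in\bar{\mathbf O}$. Then $\hat\Omega_n\overset{P_n}{\to}\Omega^*$ by Lemma \ref{alem:consistency-sample-corr}. Combining Lemmas \ref{alem:consistency-sample-variance} and \ref{alem:uniformCLT} with Slutsky's theorem, the studentized means satisfy
\[
X_n:=\sqrt n S_n^{-1}\bar W_n,\qquad (X_{n,j})_{j\in I}\Rightarrow N(\delta_I,\Omega^*_{I\times I}),\qquad X_{n,j}\overset{P_n}{\to}-\infty\ \text{ for } j\notin I,
\]
the last part being \eqref{eq:statistic-degenerate-case-eq2}. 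Since $T^\ell$ is continuous, the limiting law of $T^\ell_n$ is $\bar J^{\ell}$, understood as in Lemma \ref{lem:argmax-critical-value-all} after reflection. For $\ell=m$ this is $\max_{j\in I}(Y_j)$ with $Y\sim N(\delta_I,\Omega^*_{I\times I})$, because the other branch diverges to $+\infty$. For $\ell=s$ it is $\sum_{j\in I}Y_j^2I\{Y_j\ge0\}$. For $\ell=q$ it is the analogous cone distance on the coordinates in $I$. This identifies the limit of $J^\ell_n(x)$ at continuity points.

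On the bootstrap side, the random argument $\theta_{1,n}:=\sqrt nS_n^{-1}\mu(P_n)$ satisfies $\theta_{1,n,j}\overset{P_n}{\to}\delta_j$ for $j\in I$ and $\to-\infty$ for $j\notin I$. We pass to a further subsequence along which $(\theta_{1,n},\hat\Omega_n)\to(\delta^\infty,\Omega^*)$ almost surely, in the compactified space $[-\infty,0]^k\times\bar{\mathbf O}$; this is possible by the almost-sure representation of convergence in probability. The continuity argument of Lemma \ref{lem:argmax-critical-value-all} (weak convergence of $T^\ell(\theta_{2,n}^{1/2}Z+\theta_{1,n})$ under partial divergence) then shows that, pathwise, $J^\ell(\cdot,\theta_{1,n},\hat\Omega_n)$ converges weakly to the same limit law $\bar J^\ell$. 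For $\ell=q$, this step uses $\Omega^*\succ0$, which holds by the eigenvalue condition, and the fact that $\hat\Omega_n\succ0$ with probability approaching one.

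It remains to upgrade pointwise convergence at continuity points to $\sup_x$ convergence. This is the main obstacle, because the limit distribution may have atoms. For $\ell=s,q$ there is an atom at $0$ (the event that all $Y_j\le0$). For $\ell=m$, if $\Omega^*$ is singular, the max is still continuous, by Lemma \ref{lem:strictly_increasing_dist_T}. We handle this by a Pólya-type argument that allows a single atom. We check that the mass at the atom converges on both sides: $P_n\{T^s_n=0\}=P_n\{X_n\in\mathbf R^k_+\cup\mathbf R^k_-\}$, which tends to $P\{Y\le 0\}$. The required boundary condition holds because orthants have null boundary under Gaussians with unit variances, so Lemma \ref{alem:uniformCLT} applies, and the coordinates outside $I$ are negative with probability approaching one. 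The same holds for the parametric version. Away from $0$, the limit CDF is continuous by Lemma \ref{lem:strictly_increasing_dist_T}. Pólya's theorem applied on $(0,\infty)$, together with the matching jump at $0$, then gives uniform convergence of both CDFs to $\bar J^\ell$. The triangle inequality yields \eqref{eq:conv-ip-to-Jn} along the sub-subsequence, and hence along the full sequence.
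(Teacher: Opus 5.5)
Your proposal is correct and follows essentially the paper's argument. You pass to a subsequence with $\Omega(P_n)\to\Omega^*$, identify the common (reduced) Gaussian limit of $J^\ell_n$ and of $J^\ell(\cdot,\sqrt n S_n^{-1}\mu(P_n),\hat\Omega_n)$ via the CLT/divergence and the deterministic-parameter continuity argument, and upgrade to $\sup_x$ convergence by P\'olya on $(0,\infty)$ plus a Portmanteau check of the atom at zero. Merging the cases $I=[k]$ and $I\subsetneq[k]$ through $\bar J^\ell$, and using an almost-sure representation instead of the subsequence principle, are only cosmetic differences.

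One small correction: when $I=[k]$, the atom at zero is $P\{Y\in\mathbf R^k_+\cup\mathbf R^k_-\}$ rather than $P\{Y\le 0\}$. The same null-boundary Portmanteau argument covers it.
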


\begin{proof}
I show the result by way of contradiction. Suppose that the conclusion fails for \(\ell\in\{m,s\}\), and also for \(\ell=q\) under the additional eigenvalue condition. Then, there exist a subsequence $n_l$ and a positive semidefinite correlation matrix $\Omega^*$ such that $\Omega(P_{n_l})\to \Omega^*$ and
\begin{align}\label{contra-hypo:conv-J-to-Jn}
    \sup_{x \in \mathbf{R}} \left| J^\ell(x, \sqrt{n_l}S^{-1}_{n_l}\mu(P_{n_l}), \Omega(\hat{P}_{n_l})) - J^\ell_{n_l}(x)\right| \overset{P_{n_l}}{\not\to} 0.
\end{align}
The existence of $\Omega^*$ and the convergent subsequence follows from $\bar{\mathbf O}$ in being bounded and closed in finite-dimensional Euclidean space and from the Bolzano-Weierstrass theorem.
For $\ell=q$, $\Omega^*$ is positive definite due to the uniformly positive minimum eigenvalue assumption. 

This proof handles two cases separately. The first case considers $I=\{1,2,...,k\}$; in the second case, $I$ is a non-empty proper subset of $\{1,2,...,k\}$. Furthermore, we assume \eqref{eq:conv-ip-to-Jn-condition1} as the proof under \eqref{eq:conv-ip-to-Jn-condition2} can be obtained by reversing signs. 

\medskip
\noindent
\textbf{Case 1:} Suppose that $I=\{1,2,..., k\}.$ 
Lemma \ref{alem:consistency-sample-variance} gives $S^{-1}_nD(P_n)=I_k + o_{P_n}(1)$, and hence it implies $\sqrt{n_l}S^{-1}_{n_l} \mu(P_{n_l}) \overset{P_{n_l}}{\to} \delta$. By the uniform CLT in \ref{alem:uniformCLT} and Slutsky's theorem, 
$\sqrt{n_l} S^{-1}_{n_l} (\bar{W}_{n_l} - \mu(P_{n_l})) \Rightarrow (\Omega^{*})^{1/2} Z$ for $Z\sim N(0_k, I_k)$ and therefore
\[
\sqrt {n_l} S_{n_l}^{-1}\bar W_{n_l}
=
\sqrt {n_l} S_{n_l}^{-1}(\bar W_{n_l}-\mu(P_{n_l}))
+
\sqrt {n_l} S_{n_l}^{-1}\mu(P_{n_l})
\Rightarrow
\Omega^{*1/2}Z+\delta.
\]
We now handle \(\ell=m\) and \(\ell=s,q\) separately.

\noindent
\emph{The statistic \(T^m\).} Since the map $x\mapsto T^m(x)$ is continuous on \(\mathbf R^k\) and the CDF of $T^m(\Omega^{*1/2}Z+\delta)$ is continuous, the continuous mapping theorem implies that 
\begin{align}\label{eq:case-no-div-max1}
    J_{n_l}^m(x)
=
P_{n_l}\{
T^m(\sqrt{n_l} S_{n_l}^{-1}\bar W_{n_l})\le x
\}
\to
P\{
T^m(\Omega^{*1/2}Z+\delta)\le x
\} \quad \text{at all }x\in\mathbf{R}.
\end{align}
For an arbitrary deterministic sequence $\theta_{m} \in \Theta$ in $(\mathbf R^k_+\cup \mathbf R^k_-)\times\bar{\mathbf O}$ such that $\theta_{m}  \to \theta\in \Theta$ as $m\to\infty$ and for $Z\sim N(0_k, I_k)$, continuity of $T^m$ implies that
\begin{align*}
    T^m(\theta_{2,m}^{1/2} Z + \theta_{1,m}) \overset{a.s.}{\to} T^m(\theta_{2}^{1/2} Z + \theta_{1}).
\end{align*}
Since the limit is continuously distributed, it implies the convergence of CDFs at all $x\in\mathbf{R}$:
\begin{align*}
    P\{T^m(\theta_{2,m}^{1/2} Z + \theta_{1,m}) \leq x\} \to P\{ T^m(\theta_{2}^{1/2} Z + \theta_{1}) \leq x\}.
\end{align*}
This shows that $J^m(x,\theta_1, \theta_2)=P\{ T^m(\theta_{2}^{1/2} Z + \theta_{1}) \leq x\}$ is continuous in $\theta \in \Theta$ and $x\in \mathbf{R}.$ 
Recall $\|\Omega(\hat{P}_{n_l}) - \Omega^*\|\overset{P_{n_l}}{\to} 0$ by Lemma \ref{alem:consistency-sample-corr}. The continuity of $J^m(x,\theta_1, \theta_2)$ in $(\theta_1, \theta_2)$ and the continuous mapping theorem combine to give that 
\begin{align}\label{eq:case-no-div-max2}
     |J^{m}(x, \sqrt{n_l}S^{-1}_{n_l}\mu(P_{n_l}), \Omega(\hat{P}_{n_l})) - J^m(x, \delta, \Omega^*) |\overset{P_{n_l}}{\to} 0 \quad \text{at all }x\in\mathbf{R}.
\end{align}
Poly\'a's theorem in Theorem 11.2.9 of \cite{LehmannRomano2005testing} applies to both \eqref{eq:case-no-div-max1} and \eqref{eq:case-no-div-max2} as the limit distribution is continuous \footnote{Theorem 11.2.9 states deterministic convergence of CDFs, but the same proof can be extended to show the convergence in probability of CDFs.}. Therefore, combining these yields a contradiction to \eqref{contra-hypo:conv-J-to-Jn}.

\medskip
\noindent
\emph{The statistics \(T^s\) and \(T^q\).} 
For $\ell=q$, the following argument is applied on the event $\{\hat\Omega_{n_l}\succ0\}$ so that $T^q_n$, its Gaussian analogue, and the least favorable critical value are all computed using the same matrix $\hat\Omega_{n_l}$. Note that the map $x\mapsto T^s(x)$ is continuous in $\mathbf{R}^k $ and that the map $(y,\Omega)\mapsto T^q(y,\Omega)$ is continuous in $\mathbf{R}^k \times \mathbf{O}.$ For $\ell=s,q$, the same argument above gives
\[
J_{n_l}^\ell(x)
=
P_{n_l}\{
T^\ell(\sqrt{n_l} S_{n_l}^{-1}\bar W_{n_l},\hat{\Omega}_{n_l})\le x
\}
\to
P\{
T^\ell(\Omega^{*1/2}Z+\delta, \Omega^*)\le x
\}
\]
and
\[
J^\ell(
x,\sqrt{n_l} S_{n_l}^{-1}\mu(P_{n_l}),\hat\Omega_{n_l}
)
\overset{P_{n_l}}{\to}
P\{
T^\ell(\Omega^{*1/2}Z+\delta, \Omega^*)\le x
\}
\]
for every $x\in(0,\infty)$, since the limiting CDF is continuous on
$(0,\infty)$ by Lemma \ref{lem:strictly_increasing_dist_T}, where the second argument of $T^{s}$ can be ignored, and $\Omega^*\succ0$ for $\ell=q$ due to the uniformly positive eigenvalue assumption.
Furthermore, by Portmanteau Lemma (18.9 (v) in \cite{vandervaart2000asymptotic}), 
\begin{align*}
    J_{n_l}^\ell(0)=
P_{n_l}\{
\sqrt{n_l} S_{n_l}^{-1}\bar W_{n_l}
&\in
\mathbf R_+^k\cup\mathbf R_-^k
\}\\
&\to
P\{
\Omega^{*1/2}Z+\delta
\in
\mathbf R_+^k\cup\mathbf R_-^k
\} = P\{
T^\ell(\Omega^{*1/2}Z+\delta,\Omega^{*})\le0
\}.
\end{align*}
The same argument applied to the parametric Gaussian analogue gives
\[
J^\ell(
0,\sqrt{n_l} S_{n_l}^{-1}\mu(P_{n_l}),\hat\Omega_{n_l}
)
\overset{P_{n_l}}{\to}
P\{
T^\ell(\Omega^{*1/2}Z+\delta, \Omega^{*})\le0
\}.
\]
It remains to upgrade this convergence to uniform convergence. For any $\varepsilon>0$, fix \(\eta>0\) so that $P\{
T^\ell(\Omega^{*1/2}Z+\delta, \Omega^*)\le\eta
\}
-
P\{
T^\ell(\Omega^{*1/2}Z+\delta, \Omega^*)\le0\} <\varepsilon/4$. Since the limiting CDF is continuous on \((0,\infty)\), by the usual proof of Pólya's theorem, there exists $N_0\in\mathbf{N}$ such that for all $n_l\geq N_0$
\[
\sup_{x\ge\eta}
|
J_{n_l}^\ell(x)
-
P\{
T^\ell(\Omega^{*1/2}Z+\delta, \Omega^*)\le x
\}
|
<\tfrac{\varepsilon}{4}.
\]
On \(0\le x\le\eta\), monotonicity of CDFs gives
\[
\begin{aligned}
&\sup_{0\le x\le\eta}
|
J_{n_l}^\ell(x)
-
P\{
T^\ell(\Omega^{*1/2}Z+\delta, \Omega^*)\le x
\}
|
\\
&\quad\le
|
J_{n_l}^\ell(0)
-
P\{
T^\ell(\Omega^{*1/2}Z+\delta, \Omega^*)\le0
\}
|+
|
J_{n_l}^\ell(\eta)
-
P\{
T^\ell(\Omega^{*1/2}Z+\delta, \Omega^*)\le\eta
\}
|\\
&\qquad+
[
P\{
T^\ell(\Omega^{*1/2}Z+\delta, \Omega^*)\le\eta
\}
-
P\{
T^\ell(\Omega^{*1/2}Z+\delta, \Omega^*)\le0
\}]\leq \tfrac34 \varepsilon.
\end{aligned}
\]
Therefore, 
\[
\sup_{x\in\mathbf R}
|
J_{n_l}^\ell(x)
-
P\{
T^\ell(\Omega^{*1/2}Z+\delta, \Omega^*)\le x
\}
|
\to0.
\]
The same monotonicity argument, replacing \(J_{n_l}^\ell\) by $J^\ell(
x,\sqrt{n_l} S_{n_l}^{-1}\mu(P_{n_l}),\hat\Omega_{n_l})$
gives
\[
\sup_{x\in\mathbf R}
|
J^\ell(
x,\sqrt {n_l} S_{n_l}^{-1}\mu(P_{n_l}),\hat\Omega_{n_l}
)
-
P\{
T^\ell(\Omega^{*1/2}Z+\delta, \Omega^*)\le x
\}
|
\overset{P_{n_l}}{\to}0.
\]
Therefore, combining these two yields a contradiction to \eqref{contra-hypo:conv-J-to-Jn}.

\medskip
\noindent
\textbf{Case 2: \(\varnothing\neq I\subsetneq[k]\).} Assume that $I=\{1,..., \tilde{k}\}$ for some $\tilde{k}<k$ without loss of generality. The proof strategy is analogous to that in \textbf{Case 1}. We continue under \eqref{eq:conv-ip-to-Jn-condition1}. By Lemmas \ref{alem:consistency-sample-variance} and \ref{alem:uniformCLT}, Slutsky's theorem gives
\begin{align}\label{eq:partial-div-to-neg-infty}
\frac{\sqrt{n_l}\bar W_{j,{n_l}}}{S_{j,{n_l}}}
\overset{P_{n_l}}{\to}-\infty \qquad \text{for} \quad j\notin I
\end{align}
and on the coordinates in \(I\)
\begin{align}\label{eq:partial-clt}
    (\frac{\sqrt {n_l}\bar W_{j,n_l}}{S_{j,n_l}}
)_{j\in I}
\Rightarrow
(\Omega^*_{I\times I})^{1/2}Z_I+\delta_I,
\qquad
Z_I\sim N(0_{|I|},I_{|I|})
\end{align}
where for a vector \(a\in\mathbf R^k\) we write
\(a_I\) for the subvector indexed by \(I\), and for a matrix \(A\), write
\(A_{I\times I}\) for the principal submatrix indexed by \(I\).
We now handle \(\ell=m\) and \(\ell=s,q\) separately.

\noindent
\emph{The statistic \(T^m\).} First, I derive the convergence of
\(J(x,\sqrt{n_l}S^{-1}_{n_l}\mu(P_{n_l}),\Omega(\hat P_{n_l}))\).
Consider an arbitrary deterministic sequence
\(\theta_m=(\theta_{1,m},\theta_{2,m})\in\Theta\) in $[-\infty,0]^k\times\bar{\mathbf O}$ such that $\theta_{1,j,m}\to \delta_j$ for $j\in I$, $\theta_{1,j,m}\to-\infty$ for $j\notin I$, and $\theta_{2,m}\to\Omega^*.$ Thus, for every \(Z\sim N(0_k, I_k)\),
\[
T^m(\theta_{2,m}^{1/2}Z+\theta_{1,m})
\overset{a.s.}{\to}
\max_{j\in I}
\{((\Omega^*)^{1/2}Z)_j+\delta_j\}
\]
because $(\theta_{2,m}^{1/2}Z+\theta_{1,m})_I \overset{a.s.}{\to}
((\Omega^*)^{1/2}Z)_I+\delta_I,$ and $(\theta_{2,m}^{1/2}Z+\theta_{1,m})_j \overset{a.s.}{\to} -\infty$ for all $j\notin I.$ Since the limit distribution is continuous, this implies the convergence of the associated CDFs at all $x\in\mathbf{R}$:
\[
P\{T^m(\theta_{2,m}^{\tfrac12}Z+\theta_{1,m})\leq x\}
\to
P\{\max_{j\in I}
\{((\Omega^*)^{\tfrac12}Z)_j+\delta_j\} \leq x \} 
= P\{\max_{j\in I}
\{((\Omega^*_{I\times I})^{\tfrac12}Z_I)_j+\delta_j\} \leq x \} 
\]
where the equality holds because $((\Omega^*)^{1/2}Z)_I
\overset{d}{=}
(\Omega^*_{I\times I})^{1/2}Z_I$. By Pólya's theorem,
\[
\begin{aligned}
&\textstyle\sup_{x\in\mathbf R}
|
P\{
T^m(\theta_{2,m}^{1/2}Z+\theta_{1,m})\le x
\}
-
P\{
\textstyle\max_{j\in I}
[
((\Omega^*_{I\times I})^{1/2}Z_I)_j+\delta_j
]
\le x
\}
|
\to0.
\end{aligned}
\]
Now apply this deterministic conclusion to the random sequence $\theta_{1,n_l}:=\sqrt{n_l}S_{n_l}^{-1}\mu(P_{n_l})$, $\theta_{2,n_l}:=\Omega(\hat P_{n_l}).$
By Lemmas \ref{alem:consistency-sample-variance} and \ref{alem:consistency-sample-corr}, $\theta_{1,n_l,I}\overset{P_{n_l}}{\to}\delta_I$, $\max_{j\notin I}\theta_{1,j,n_l}
\overset{P_{n_l}}{\to}-\infty$, and $\theta_{2,n_l}\overset{P_{n_l}}{\to}\Omega^*.$
Therefore, by the deterministic convergence just established and the subsequence principle,
\begin{align}
\label{eq:conv-ip-to-Jn-partialI-part1}
&\sup_{x\in\mathbf R}
|
J^m(
x,\sqrt{n_l}S_{n_l}^{-1}\mu(P_{n_l}),\Omega(\hat P_{n_l})
)
-
P\{
\max_{j\in I}
[
((\Omega^*_{I\times I})^{1/2}Z_I)_j+\delta_j
]
\le x
\}
|
\overset{P_{n_l}}{\to}0.
\end{align}

Second, I consider the convergence of \(J_{n_l}^m(x)\). Define
\[
X_{n_l}=(X_{n_l,1},..., X_{n_l,k}):=\sqrt{n_l}S_{n_l}^{-1}\bar W_{n_l}.
\]
Then the probability of the following event converges to 1,
\begin{align*}
P\{G_{n_l}\} :=P\{\max_{j\in I}X_{n_l,j}
\ge
\max_{j\notin I}X_{n_l,j},
\min_{j\notin I}X_{n_l,j}
\le
\min_{j\in I}X_{n_l,j},
\max_{j\in I}X_{n_l,j}
\le
\max_{j\notin I}\{-X_{n_l,j}\}\} \to 1
\end{align*}
because of \eqref{eq:partial-div-to-neg-infty}, \eqref{eq:partial-clt} and Slutsky's theorem. On \(G_{n_l}\), $T^m(X_{n_l})
=
\max_{j\in I} X_{n_l,j}$ and therefore $T^m(X_{n_l})
-
\max_{j\in I}(X_{n_l})_j
\overset{P_{n_l}}{\to}0$, which implies weak convergence. Since the limiting distribution is continuous, Pólya's theorem yields
\begin{align}
\label{eq:conv-ip-to-Jn-partialI-part2}
&\sup_{x\in\mathbf R}
|
J_{n_l}^m(x)
-
P\{
\max_{j\in I}
[
((\Omega^*_{I\times I})^{1/2}Z_I)_j+\delta_j
]
\le x
\}
|
\to0.
\end{align}
Finally, combining \eqref{eq:conv-ip-to-Jn-partialI-part1} and
\eqref{eq:conv-ip-to-Jn-partialI-part2}, we obtain
\[
\begin{aligned}
&\sup_{x\in\mathbf R}
|
J^m(
x,\sqrt{n_l}S_{n_l}^{-1}\mu(P_{n_l}),\Omega(\hat P_{n_l})
)
-
J_{n_l}^m(x)
|
\overset{P_{n_l}}{\to}0,
\end{aligned}
\]
which contradicts \eqref{contra-hypo:conv-J-to-Jn}.

\medskip
\noindent
\emph{The statistic \(T^s\).} First, I derive the convergence of
\(J^s(x,\sqrt{n_l}S^{-1}_{n_l}\mu(P_{n_l}),\Omega(\hat P_{n_l}))\).
Consider the deterministic sequence
\(\theta_m=(\theta_{1,m},\theta_{2,m})\in\Theta\). Then, for every \(Z\sim N(0_k,I_k)\), $(\theta_{2,m}^{1/2}Z+\theta_{1,m})_I
\overset{a.s.}{\to}
((\Omega^*)^{1/2}Z)_I+\delta_I$ and $(\theta_{2,m}^{1/2}Z+\theta_{1,m})_j
\overset{a.s.}{\to}
-\infty$ for all $j\notin I.$ 
Hence
\[
T^s(\theta_{2,m}^{1/2}Z+\theta_{1,m})
\overset{a.s.}{\to}
\sum_{j\in I}
(((\Omega^*)^{1/2}Z)_j+\delta_j)^2
I\{((\Omega^*)^{1/2}Z)_j+\delta_j\ge0\}.
\]
Since \(((\Omega^*)^{1/2}Z)_I\overset{d}{=}(\Omega^*_{I\times I})^{1/2}Z_I\),
it follows that for any $x\in(0,\infty)$
\[
P\{ T^s(\theta_{2,m}^{\tfrac12}Z+\theta_{1,m}) \leq x\}
\to P\left\{
\sum_{j\in I}
[
((\Omega^*_{I\times I})^{\tfrac12}Z_I)_j+\delta_j
]^2
I\{
((\Omega^*_{I\times I})^{\tfrac12}Z_I)_j+\delta_j\ge0
\} \leq x\right\}
\]
as limiting CDF is continuous on \((0,\infty)\). By the same argument as in the proof for \(T^m\), applied to the random sequence
\(\theta_{1,n_l}:=\sqrt{n_l}S^{-1}_{n_l}\mu(P_{n_l})\) and
\(\theta_{2,n_l}:=\Omega(\hat P_{n_l})\), we have, at every continuity point \(x>0\) of the limiting CDF,
\[
\begin{aligned}
&J^s(
x,\sqrt{n_l}S^{-1}_{n_l}\mu(P_{n_l}),\Omega(\hat P_{n_l})
)
\overset{P_{n_l}}{\to}
P\bigg\{
\sum_{j\in I}
[
((\Omega^*_{I\times I})^{\tfrac12}Z_I)_j+\delta_j
]^2
I\{
((\Omega^*_{I\times I})^{\tfrac12}Z_I)_j+\delta_j\ge0
\}
\le x
\bigg\}.
\end{aligned}
\]

We now verify convergence at zero. Since \(T^s(y)=0\) if and only if
\(y\in\mathbf R_+^k\cup\mathbf R_-^k\), and since the coordinates outside \(I\)
diverge to \(-\infty\), the positive-orthant event becomes asymptotically impossible.
Thus
\[
\begin{aligned}
J^s(
0,\sqrt{n_l}S^{-1}_{n_l}\mu(P_{n_l}),\Omega(\hat P_{n_l})
)&\overset{P_{n_l}}{\to}
P\{
(\Omega^*_{I\times I})^{\tfrac12}Z_I+\delta_I\in\mathbf R_-^{|I|}
\}\\
&=P\bigg\{
\sum_{j\in I}
[
((\Omega^*_{I\times I})^{\tfrac12}Z_I)_j+\delta_j
]^2
I\{
((\Omega^*_{I\times I})^{\tfrac12}Z_I)_j+\delta_j\ge0
\}
\le0
\bigg\}.
\end{aligned}
\]

Next, I consider the convergence of \(J^s_{n_l}(x)\). Recall
\(X_{n_l}:=\sqrt{n_l}S^{-1}_{n_l}\bar W_{n_l}\). We can show 
\[
T^s(X_{n_l})
\Rightarrow
\textstyle\sum_{j\in I}
[
((\Omega^*_{I\times I})^{1/2}Z_I)_j+\delta_j
]^2
I\{
((\Omega^*_{I\times I})^{1/2}Z_I)_j+\delta_j\ge0
\}
\]
because $T^s(X_{n_l}) -\sum_{j\in I} X^2_{n_l, j} I\{X_{n_l,j}\geq 0\}\overset{P_{n_l}}{\to} 0$ and by the continuous mapping theorem. 

At zero, we have
\begin{align*}
J^s_{n_l}(0)
&\overset{(a)}{=}
P_{n_l}\{
X_{n_l}\in\mathbf R_+^k\cup\mathbf R_-^k\}\overset{(b)}{=}
P_{n_l}\{(X_{n_l})_I\in\mathbf R_-^{|I|}\} + o(1)
\overset{(c)}{\to}
P\{
(\Omega^*_{I\times I})^{1/2}Z_I+\delta_I\in\mathbf R_-^{|I|}
\}
\end{align*}
where (a) holds because \(T^s(y)=0\) if and only if
\(y\in\mathbf R_+^k\cup\mathbf R_-^k\), (b) holds because \eqref{eq:partial-div-to-neg-infty}, and (c) follows from the Portmanteau lemma.

It remains to upgrade the convergence to uniform convergence. Repeating the argument used in \textbf{Case 1}, we can show
\[
\begin{aligned}
&\sup_{x\in\mathbf R}
|
J^s_{n_l}(x)
-
P\{
\textstyle\sum_{j\in I}
[
((\Omega^*_{I\times I})^{\tfrac12}Z_I)_j+\delta_j
]^2
I\{
((\Omega^*_{I\times I})^{\tfrac12}Z_I)_j+\delta_j\ge0
\}
\le x
\}
|
\to0,\\
&\sup_{x\in\mathbf R}
\left|
\begin{aligned}
&J^s(
x,\sqrt{n_l}S^{-1}_{n_l}\mu(P_{n_l}),\Omega(\hat P_{n_l}))
\\
&\qquad-P\{
\textstyle\sum_{j\in I}
[
((\Omega^*_{I\times I})^{\tfrac12}Z_I)_j+\delta_j
]^2
I\{
((\Omega^*_{I\times I})^{\tfrac12}Z_I)_j+\delta_j\ge0
\}
\le x
\}
\end{aligned}
\right|
\overset{P_{n_l}}{\to}0.
\end{aligned}
\]
Combining these two displays gives a contradiction to \eqref{contra-hypo:conv-J-to-Jn}.

\medskip
\noindent
\emph{The statistic \(T^q\).} Now impose the additional assumption that $\inf_{P\in\mathbf P}\lambda_{\min}(\Omega(P))>0.$ Then $\Omega^*\succ0$. Since $P_{n_l}\{\hat{\Omega}_{n_l}\not\succ0\}\to0$ by Lemma \ref{lem:sample-corr-positive-definite}, it suffices to establish the following convergences on the event $\{\hat{\Omega}_{n_l}\succ0\}$. We use the same argument used for \(T^s\). The only difference is that, when the coordinates outside \(I\) diverge to \(-\infty\), the statistic \(T^q\) reduces to the quadratic distance from the limiting \(I\)-subvector to \(\mathbf R_-^{|I|}\). More precisely, the deterministic reduction corresponding to the one used for \(T^s\) gives
\[
T^q(\theta_{2,m}^{1/2}Z+\theta_{1,m},\theta_{2,m})
\Rightarrow
\inf_{\nu\in\mathbf R_-^{|I|}}
((\Omega^*_{I\times I})^{1/2}Z_I+\delta_I-\nu)'
(\Omega^*_{I\times I})^{-1}
((\Omega^*_{I\times I})^{1/2}Z_I+\delta_I-\nu)
\]
whenever \(\theta_{1,j,m}\to\delta_j\) for \(j\in I\), \(\theta_{1,j,m}\to-\infty\) for \(j\notin I\), and \(\theta_{2,m}\to\Omega^*\). Applying this deterministic reduction to $\theta_{1,n_l}:=\sqrt{n_l}S^{-1}_{n_l}\mu(P_{n_l}),$ and $\theta_{2,n_l}:=\Omega(\hat P_{n_l}),$ gives, at every continuity point \(x>0\) of the limiting CDF,
\[
\begin{aligned}
&J^q(
x,\sqrt{n_l}S^{-1}_{n_l}\mu(P_{n_l}),\Omega(\hat P_{n_l})
)
\\
&\quad\overset{P_{n_l}}{\to}
P\bigg\{
\inf_{\nu\in\mathbf R_-^{|I|}}
\bigl((\Omega^*_{I\times I})^{1/2}Z_I+\delta_I-\nu\bigr)'
(\Omega^*_{I\times I})^{-1}
\bigl((\Omega^*_{I\times I})^{1/2}Z_I+\delta_I-\nu\bigr)
\le x
\bigg\}.
\end{aligned}
\]
The same reduction applied to \(X_{n_l}=\sqrt{n_l}S^{-1}_{n_l}\bar W_{n_l}\) gives
\[
\begin{aligned}
T^q(X_{n_l},\Omega(\hat P_{n_l}))
\Rightarrow
\textstyle\inf_{\nu\in\mathbf R_-^{|I|}}
((\Omega^*_{I\times I})^{1/2}Z_I+\delta_I-\nu\bigr)'
(\Omega^*_{I\times I})^{-1}
((\Omega^*_{I\times I})^{1/2}Z_I+\delta_I-\nu\bigr).
\end{aligned}
\]

At zero, the reduced statistic equals zero if and only if $(\Omega^*_{I\times I})^{1/2}Z_I+\delta_I\in\mathbf R_-^{|I|}.$ Thus, by the same cone-event argument used for \(T^s\),
\[
\begin{aligned}&
J^q_{n_l}(0)
\to
P\{
(\Omega^*_{I\times I})^{1/2}Z_I+\delta_I\in\mathbf R_-^{|I|}
\},\\
&J^q(
0,\sqrt{n_l}S^{-1}_{n_l}\mu(P_{n_l}),\Omega(\hat P_{n_l})
)
\overset{P_{n_l}}{\to}
P\{
(\Omega^*_{I\times I})^{1/2}Z_I+\delta_I\in\mathbf R_-^{|I|}
\}.\end{aligned}
\]
Since the limiting CDF is continuous on \((0,\infty)\), the same split-at-zero argument used for \(T^s\) yields
\[
\begin{aligned}
&\sup_{x\in\mathbf R}
|
J^q(
x,\sqrt{n_l}S^{-1}_{n_l}\mu(P_{n_l}),\Omega(\hat P_{n_l})
)
-
J^q_{n_l}(x)
|
\overset{P_{n_l}}{\to}0.
\end{aligned}
\]
This contradicts \eqref{contra-hypo:conv-J-to-Jn}. 

Finally, suppose \eqref{eq:conv-ip-to-Jn-condition2} holds. The proof is identical
after replacing \(W_i\) by \(-W_i\). In the partial-divergence case, the coordinates
outside \(I\) diverge to \(+\infty\). Accordingly, the reduced limiting statistic for
\(T^m\) becomes
\[
\max_{j\in I}
\{-((\Omega^*_{I\times I})^{1/2}Z_I)_j-\delta_j\},
\]
the reduced limiting statistic for \(T^s\) becomes
\[
\sum_{j\in I}
[
((\Omega^*_{I\times I})^{1/2}Z_I)_j+\delta_j
]^2
I\{
((\Omega^*_{I\times I})^{1/2}Z_I)_j+\delta_j\le0
\},
\]
and, under the eigenvalue condition, the reduced limiting statistic for \(T^q\) becomes
\[
\inf_{\nu\in\mathbf R_+^{|I|}}
\bigl((\Omega^*_{I\times I})^{1/2}Z_I+\delta_I-\nu\bigr)'
(\Omega^*_{I\times I})^{-1}
\bigl((\Omega^*_{I\times I})^{1/2}Z_I+\delta_I-\nu\bigr).
\]
At zero, the relevant cone event is $(\Omega^*_{I\times I})^{1/2}Z_I+\delta_I\in\mathbf R_+^{|I|}.$ The same continuity, Portmanteau, Pólya, and split-at-zero arguments then give the desired contradiction.
\end{proof}

\begin{lemma}\label{lem:sample-corr-positive-definite}
Consider any sequence $\{P_n\in\mathbf P:n\ge1\}$, where $\mathbf P$ satisfies \eqref{def:uniform-integrability}. If
$$
\lambda^*
:=
\inf_{P\in\mathbf P}\lambda_{\min}(\Omega(P))
>0,
$$
then
$$
P_n\{\hat\Omega_n\succ0\}\to1.
$$
\end{lemma}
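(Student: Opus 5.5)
The argument combines uniform consistency of the sample correlation matrix (Lemma \ref{alem:consistency-sample-corr}) with continuity of the smallest eigenvalue. Note first that, under \eqref{def:uniform-integrability}, the sample variances satisfy $S_{j,n}^2/\sigma_j^2(P_n)\overset{P_n}{\to}1$ by Lemma \ref{alem:consistency-sample-variance}. Hence $P_n\{\min_j S_{j,n}>0\}\to1$, so $\hat\Omega_n=\Omega(\hat P_n)$ is well defined with probability approaching one. On the complementary event no positive-definiteness claim is needed, since that event is asymptotically negligible. Lemma \ref{alem:consistency-sample-corr} then gives $\|\hat\Omega_n-\Omega(P_n)\|\overset{P_n}{\to}0$, where $\|\cdot\|$ is the elementwise max norm.

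Next, I would use Weyl's inequality to bound the smallest eigenvalue. For symmetric matrices $A,B$,
\[
|\lambda_{\min}(A)-\lambda_{\min}(B)|\le \|A-B\|_{op}\le k\,\|A-B\|,
\]
where the second inequality follows from bounding the operator norm by the Frobenius norm, which is at most $k$ times the max-entry norm. Applying this with $A=\hat\Omega_n$ and $B=\Omega(P_n)$ gives
\[
\lambda_{\min}(\hat\Omega_n)\ge \lambda_{\min}(\Omega(P_n))-k\|\hat\Omega_n-\Omega(P_n)\|\ge \lambda^*-k\|\hat\Omega_n-\Omega(P_n)\|.
\]

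Consequently,
\[
P_n\{\hat\Omega_n\not\succ0\}\le P_n\{\|\hat\Omega_n-\Omega(P_n)\|\ge \lambda^*/(2k)\}+P_n\{\min_j S_{j,n}=0\}\to0,
\]
which is the claim. The only potential subtlety is that Lemma \ref{alem:consistency-sample-corr} is stated along arbitrary sequences in $\mathbf P$. This is exactly the setting here, so no subsequence argument is required. The main point is that the bound $\lambda^*$ is uniform over $\mathbf P$, so it controls $\lambda_{\min}(\Omega(P_n))$ along every sequence $P_n\in\mathbf P$ and not only at a fixed $P$. No further obstacle is expected.
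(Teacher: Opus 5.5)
Your proposal is correct and follows the paper's proof essentially step for step: max-norm consistency of $\hat\Omega_n$ from Lemma~\ref{alem:consistency-sample-corr}, the bound $\|A\|_{\mathrm{op}}\le k\|A\|$, Weyl's inequality giving $\lambda_{\min}(\hat\Omega_n)\ge\lambda^*-k\|\hat\Omega_n-\Omega(P_n)\|$, and hence the inclusion $\{\hat\Omega_n\not\succ0\}\subseteq\{\|\hat\Omega_n-\Omega(P_n)\|\ge\lambda^*/(2k)\}$. Your separate treatment of the event that some $S_{j,n}=0$ is a harmless refinement that the paper leaves implicit.
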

\begin{proof}
By Lemma \ref{alem:consistency-sample-corr},
$
\|\hat\Omega_n-\Omega(P_n)\|
\overset{P_n}{\to}0,
$
where $\|\cdot\|$ is the componentwise maximum norm. Since
$\|A\|_{\mathrm{op}}\le k\|A\|$, Weyl's inequality gives $\lambda_{\min}(\hat\Omega_n)
\ge
\lambda_{\min}(\Omega(P_n))
-
\|\hat\Omega_n-\Omega(P_n)\|_{\mathrm{op}} \ge
\lambda^*
-
k\|\hat\Omega_n-\Omega(P_n)\|.$ Since $\{\lambda_{\min}(\hat\Omega_n)>\lambda^*/2\} \subseteq\{\hat\Omega_n\succ0\},$ it holds that
$$
P_n\{\hat\Omega_n\not\succ0\}
\le
P_n\{\lambda_{\min}(\hat\Omega_n)\leq \lambda^*/2\}
\le
P_n\{
\|\hat\Omega_n-\Omega(P_n)\|
\ge{\lambda^*}/{2k}
\}
\to0.
$$
\end{proof}

\begin{lemma}\label{lem:strictly_increasing_dist_T}
Let $k\ge2$ and $\theta_1\in\mathbf R^k$.
\begin{enumerate}[i)]
    \item For $\ell=m$ and $\theta_2\in\bar{\mathbf O}$,
$J^m(\cdot,\theta_1,\theta_2)$ is continuous and strictly increasing
wherever it takes values in $(0,1)$.

\item For $\ell=s$ and $\theta_2\in\bar{\mathbf O}$, and for $\ell=q$ and
$\theta_2\in\mathbf O$, the distribution of $T^\ell(\theta_2^{1/2}Z+\theta_1,\theta_2)$ may have an atom at zero, but its CDF is continuous on $(0,\infty)$ and
strictly increasing at every $x>0$ where the second argument of $T^s$ is ignored.
\end{enumerate}
\end{lemma}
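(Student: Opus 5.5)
The plan is to work with the Gaussian vector $Y:=\theta_2^{1/2}Z+\theta_1\sim N(\theta_1,\theta_2)$. I will write each $T^\ell(Y)$ as a continuous function of $Y$. I will then show two things: level sets $\{T^\ell(Y)=x\}$ have probability zero (continuity of the CDF), and sets $\{x<T^\ell(Y)\le x+\varepsilon\}$ have positive probability (strict increase). The Gaussian law is allowed to be degenerate when $\theta_2\in\bar{\mathbf O}$, so I cannot simply invoke a Lebesgue density on $\mathbf R^k$. The argument will use only the fact that each marginal $Y_j\sim N(\theta_{1,j},1)$ is nondegenerate, because $(\theta_2)_{jj}=1$.

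For i), $T^m(Y)=\min\{\max_jY_j,\max_j(-Y_j)\}$. For continuity, the event $\{T^m(Y)=x\}$ is contained in $\bigcup_j\{Y_j=x\}\cup\bigcup_j\{Y_j=-x\}$, since both the max and the min are attained at some coordinate. Each of these events is null because the marginals are nondegenerate, which gives continuity. Strict increase is the harder step. Suppose $J^m(x,\cdot)\in(0,1)$. I must show $P\{x<T^m(Y)\le x+\varepsilon\}>0$ for every $\varepsilon>0$. Here $T^m$ is a continuous function on the support of $Y$, which is the affine subspace $\theta_1+\mathrm{range}(\theta_2^{1/2})$, and this support is connected. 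Since $J^m(x)\in(0,1)$, the support contains points $a,b$ with $T^m(a)\le x<T^m(b)$. Along the segment joining them, $T^m$ takes every value in $(x,\min\{T^m(b),x+\varepsilon\})$. By continuity, each such value is attained on a relatively open subset of the support, and any nonempty relatively open subset of the support of a Gaussian has positive probability. This connectedness/open-set argument is the key step. It avoids any case analysis on the rank of $\theta_2$.

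For ii), the same structure applies. $T^s$ is continuous, and $T^q(\cdot,\theta_2)$ is continuous in its first argument when $\theta_2\succ0$. The latter holds because it is the squared distance in the $\theta_2^{-1}$-norm to a closed set. Strict increase on $(0,\infty)$ follows from the same intermediate-value and open-set argument. The only changes are that the lower endpoint $a$ can be taken in $\mathbf R^k_+\cup\mathbf R^k_-$ (where $T^\ell=0$) and that $T^\ell$ is unbounded on the support, which I still need to check. An atom at $0$ is allowed, since $P\{Y\in\mathbf R^k_+\cup\mathbf R^k_-\}$ can be positive. For continuity at a level $x>0$ with $\ell=s$, note that $T^s(Y)=\min\{\sum_jY_j^2I\{Y_j\ge0\},\sum_jY_j^2I\{Y_j\le0\}\}$. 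Hence $\{T^s=x\}$ is contained in the union over subsets $A\subseteq[k]$ of the events $\{\sum_{j\in A}Y_j^2=x\}$. Each of these events is null, because it is the zero set of a nonconstant polynomial in a nondegenerate Gaussian restricted to its support. If such a polynomial vanished on the whole support, then $\sum_{j\in A}Y_j^2$ would be a.s. constant and equal to $x$. That is impossible, since $\sum_{j\in A}Y_j^2\ge Y_{j_0}^2$ for any $j_0\in A$ and $Y_{j_0}$ has unbounded support; more directly, the polynomial restricted to the affine support is a nonzero quadratic.

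For $\ell=q$ with $\theta_2\succ0$, $Y$ has a Lebesgue density. I will argue that $\{T^q(Y,\theta_2)=x\}$ is a boundary set of Lebesgue measure zero, using the decomposition $T^q=\min\{d_+^2,d_-^2\}$. Here each $d_\pm$ is the $\theta_2^{-1}$-distance to a convex cone. The level sets of the distance to a convex set at a positive level are boundaries of convex sets, so they are null, and a union of two null sets is null. I expect the main obstacle to be stating the positive-probability-of-open-sets step cleanly in the singular case, together with checking that level sets of $T^s$ are null when the support is a lower-dimensional subspace. To handle the latter, I will restrict to the support, parametrize it as $\theta_1+\theta_2^{1/2}z$, and note that a nonzero polynomial in $z$ has a null zero set.
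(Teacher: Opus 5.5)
Your argument for i) and your continuity arguments for ii) are correct, but they reach null level sets by a different route from the paper's. The paper writes $\theta_2^{1/2}Z\overset{d}{=}A\xi$ with $\xi\sim N(0,I_r)$ and $A$ having unit-norm rows. It then asserts that $T^\ell(A\xi+\theta_1)$ is piecewise affine ($\ell=m$) or piecewise quadratic ($\ell=s,q$) with nonconstant pieces. You instead use the nondegenerate marginals for $m$, the zero set of a nonzero polynomial for $s$, and boundaries of the convex sublevel sets of $d_\pm$ for $q$. The last of these is cleaner than the paper's version, which never justifies that $T^q$ is piecewise quadratic. Your strict-increase mechanism (connected support, nonempty relatively open preimage, positive Gaussian mass) is the same as the paper's.

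The problem is the step you deferred in ii). You intend to get strict increase at \emph{every} $x>0$ from two facts: that the support of $Y$ meets $\mathbf R^k_+\cup\mathbf R^k_-$, and that $T^\ell$ is unbounded on it. For $\ell=s$ with singular $\theta_2\in\bar{\mathbf O}$, both facts can fail, and so can the conclusion.
\begin{itemize}
\item Take $\theta_2=\iota_{k\times k}$ and $\theta_1=0_k$. Then $Y=\xi(1,\ldots,1)'$, so $T^s(Y)=0$ a.s.\ and $J^s(\cdot,0_k,\iota_{k\times k})=1$ on $[0,\infty)$.
\item Take $k=3$, $\theta_2=aa'$ with $a=(1,1,-1)'$, and $\theta_1=(0,10,-5)'$. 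The support $\{(\xi,10+\xi,-5-\xi)':\xi\in\mathbf R\}$ misses both orthants, and $T^s(Y)\ge 12.5$ surely. Hence $J^s(x,\theta_1,\theta_2)=0$ for all $x<12.5$.
\end{itemize}
So the check cannot be completed, and ii) has to be read with the same qualification as i): strict increase at those $x>0$ with $0<J^s(x,\theta_1,\theta_2)<1$. This is exactly what the paper's proof establishes, since its hypothesis is $J^\ell(0,\theta_1,\theta_2)<J^\ell(x,\theta_1,\theta_2)<1$. It is also all that is used later: at a positive $(1-\alpha)$-quantile $c$, continuity forces $J^\ell(c,\theta_1,\theta_2)=1-\alpha$. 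Under this qualification your part-i) argument carries over verbatim. Take $a,b$ in the support with $T^s(a)\le x<T^s(b)$ directly from $0<J^s(x,\theta_1,\theta_2)<1$; no orthant point or unboundedness is needed. When $\theta_2\succ0$, which covers $\ell=q$ and nonsingular $\ell=s$, the support is $\mathbf R^k$, both of your facts hold, and the unqualified statement is true.
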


\begin{proof}
Let \(r=\operatorname{rank}(\theta_2)\). Choose \(A\in\mathbf R^{k\times r}\)
with rank \(r\) such that \(AA'=\theta_2\). Then
\(\theta_2^{1/2}Z\stackrel{d}{=}A\xi\), where
\(\xi\sim N(0,I_r)\). Since \(\theta_2\) is a correlation matrix, each row
\(a_j'\) of \(A\) satisfies \(\|a_j\|=1\). Define $h(\xi):=T^m(A\xi+\theta_1)$. Then $J^m(x,\theta_1,\theta_2)=P\{h(\xi)\le x\}.$ 

We first show that \(P\{h(\xi)=t\}=0\) for every \(t\in\mathbf R\). The
function \(h\) is continuous and piecewise affine. Indeed, $h(\xi)
=
\min\left\{
\max_{1\le j\le k}(a_j'\xi+\theta_{1,j}),
\max_{1\le j\le k}(-a_j'\xi-\theta_{1,j})
\right\}.$ Hence \(\mathbf R^r\) can be partitioned into finitely many polyhedra such that, on each polyhedron, \(h\) is equal to one of the affine functions \(a_j'\xi+\theta_{1,j}\) or \(-a_j'\xi-\theta_{1,j}\). Since
\(\|a_j\|=1\), none of these affine functions is constant. Therefore, on each
polyhedron, the level set \(\{\xi:h(\xi)=t\}\) is contained in an affine
hyperplane. Since there are only finitely many such polyhedra,
\(\{\xi:h(\xi)=t\}\) has Lebesgue measure zero. Because \(\xi\) has a density
on \(\mathbf R^r\), it follows that \(P\{h(\xi)=t\}=0\) for every \(t\in\mathbf R\) so \(J^m(\cdot,\theta_1,\theta_2)\) is continuous.

It remains to prove strict monotonicity on the non-degenerate part of the
support. Let $x$ satisfy $0<J(x,\theta_1,\theta_2)<1$. Since $h$ is continuous and $\mathbf{R}^r$ is connected, $h(\mathbf{R}^r)$ is an interval in $\mathbf{R}$. Also, $J(x,\theta_1,\theta_2)<1$ implies that there exists $y\in h(\mathbf{R}^r)$ with $y>x$. Hence, for every $\varepsilon\in(0,y-x)$, $h(\mathbf{R}^r)\cap(x,x+\varepsilon)\neq\varnothing.$ Therefore $h^{-1}((x,x+\varepsilon))$ is a nonempty open subset of $\mathbf{R}^r$. Since $\xi$ has a strictly positive density on $\mathbf{R}^r$, $P\{x<h(\xi)\le x+\varepsilon\}>0,$ and so $J^m(x+\varepsilon,\theta_1,\theta_2)>J^m(x,\theta_1,\theta_2).$ Thus $J^m(\cdot,\theta_1,\theta_2)$ is strictly increasing at every $x$ such that $0<J^m(x,\theta_1,\theta_2)<1$.

For $\ell\in\{s,q\}$, the proof is analogous after accounting for the
possible atom at zero. Using the representation
$\theta_2^{1/2}Z\overset{d}{=}A\xi$, the corresponding statistic is a
continuous piecewise-quadratic function of $\xi$. On each piece on which
the statistic is positive, it is a nonconstant quadratic function.
Consequently, every strictly positive level set has Lebesgue measure
zero, so the distribution has no atoms on $(0,\infty)$. Moreover, the
image of the connected set $\mathbf R^r$ under this continuous function
is an interval. Hence, whenever 
$J^\ell(0,\theta_1,\theta_2)
<
J^\ell(x,\theta_1,\theta_2)
<
1,$ the inverse image of $(x,x+\varepsilon)$ is a nonempty open set for every
sufficiently small $\varepsilon>0$. Since $\xi$ has a strictly positive
density, this set has positive probability, proving strict increase.
\end{proof}

\subsubsection{Lemmas for Corollary \ref{cor:LFCtest-nonnegative-cv}}

\begin{lemma}\label{lem:int-area}
For any \(x\in\mathbf R\) and any non-empty set \(\Lambda\subseteq[k]\), define
\[
\mathbf A(x,\Lambda)
:=
\left\{
w_\Lambda\in\mathbf R^{|\Lambda|}:
\exists i,j\in\Lambda \text{ such that } w_i>x \text{ and } w_j<-x
\right\}.
\]
For simplicity, write
\[
\mathbf A_k(x):=\mathbf A(x,[k]).
\]
Consider \(T^m\) defined in \eqref{def:T-ell-function}. If \(k\ge2\), then the following statements hold.
\begin{enumerate}[(i)]
    \item \(\mathbf A_k(x)\) is strictly decreasing in \(x\): for any \(x'>x\), $    \mathbf A_k(x')\subsetneq \mathbf A_k(x).$
    \item $    \{w\in\mathbf R^k:T^m(w)>x\}=\mathbf A_k(x).$
    \item \(\mathbf A_k(x)\) is symmetric: if \(w\in\mathbf A_k(x)\), then
    \(-w\in\mathbf A_k(x)\).
\end{enumerate}
\end{lemma}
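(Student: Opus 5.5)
The plan is to prove (ii) first, since it pins down what $\mathbf A_k(x)$ is, and then read off (i) and (iii) from it together with direct constructions. For (ii), recall that $T^m(w)=\min\{\max_j w_j,\max_j(-w_j)\}$. So $T^m(w)>x$ holds if and only if both $\max_j w_j>x$ and $\max_j(-w_j)>x$. The first says there is an index $i$ with $w_i>x$, and the second says there is an index $j$ with $w_j<-x$. This is exactly membership in $\mathbf A_k(x)$, so (ii) is immediate, with each step an equivalence.

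For (iii), I would argue directly from the definition. If $w_i>x$ and $w_j<-x$, then $(-w)_j=-w_j>x$ and $(-w)_i=-w_i<-x$, so $-w\in\mathbf A_k(x)$ with the roles of $i$ and $j$ swapped. Alternatively, (iii) follows from (ii) together with the identity $T^m(-w)=T^m(w)$, which holds because the two arguments of the min are exchanged.

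For (i), weak inclusion follows from (ii): if $x'>x$, then $T^m(w)>x'$ implies $T^m(w)>x$, so $\mathbf A_k(x')\subseteq\mathbf A_k(x)$. For strictness I would exhibit an explicit witness, and this is the only place needing care because $x$ may be negative. The choice is $w=(y,-y,0,\ldots,0)$ with $y$ strictly between the two levels. Since $T^m(w)=\min\{\max(y,0),\max(y,0)\}=\max(y,0)$ here, pick $y\in(x,x']$ with $y\ge0$. This requires $x'\geq 0$.

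The main obstacle is the case $x'<0$, where that witness does not work. There I would instead use $w=(y,y,\ldots,y)$ with $x< -|y|$ handled differently. Specifically, take $w=(x',x',\ldots,x')$ with $x'<0$. Then $\max_j w_j=x'$ and $\max_j(-w_j)=-x'$, so $T^m(w)=\min(x',-x')=x'$. Hence $T^m(w)>x$ but $T^m(w)\not> x'$. Since $k\ge2$ is only needed so that $T^m$ has a genuine two-branch form, this witness shows $w\in\mathbf A_k(x)\setminus\mathbf A_k(x')$.

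Combining the two witnesses covers all $x<x'$: use $(x',\ldots,x')$ when $x'<0$, and $(y,-y,0,\ldots,0)$ with $y=\max(x',0)\in(x,x']$ when $x'\ge0$. This establishes strict decrease and completes the proof.
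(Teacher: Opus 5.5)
Your proof is correct. For (ii), (iii), and the weak inclusion in (i), it is the same direct unpacking of the definitions that the paper uses.

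The one genuine difference is the strictness witness in (i), and there your version is the better one. The paper takes $w_1=(x+x')/2$, $w_2=-(x+x')/2$ with the remaining coordinates arbitrary. Since $T^m(w)\ge |x+x'|/2$ for such $w$, it lies outside $\mathbf A_k(x')$ only if $|x+x'|\le 2x'$. Even then, when $k\ge4$, the remaining coordinates must not include one with $w_i>x'$ and another with $w_j<-x'$. For instance, $x=-4$, $x'=-1$ gives $w=(-5/2,5/2)$. This $w$ satisfies $w_2>-1$ and $w_1<1$, so it belongs to $\mathbf A_2(-1)$ and is not a witness.

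Your case split handles this: take $(x',\ldots,x')$ when $x'<0$ and $(x',-x',0,\ldots,0)$ when $x'\ge0$. Each gives $T^m(w)=x'\in(x,x']$. It rests on the observation that $T^m$ attains every real value, and it covers the negative-threshold case that the paper's single witness misses.

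Three small slips to clean up:
\begin{itemize}
\item For $w=(y,-y,0,\ldots,0)$ one has $T^m(w)=|y|$, not $\max(y,0)$. This is harmless since you take $y\ge0$.
\item The hypothesis $k\ge2$ is what makes the $x'\ge0$ witness available, not the constant one. For $k=1$, $T^m\le0$, so strictness genuinely fails when $0\le x<x'$.
\item The sentence about ``$w=(y,y,\ldots,y)$ with $x<-|y|$ handled differently'' is garbled. Replace it with the witness you actually use.
\end{itemize}
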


\begin{proof}
\textit{(i)} If \(w\in\mathbf A_k(x')\), then there exist \(i,j\in[k]\) such that
\(w_i>x'\) and \(w_j<-x'\). Since \(x'>x\), this implies \(w_i>x\) and
\(w_j<-x\). Hence \(\mathbf A_k(x')\subseteq \mathbf A_k(x)\). The inclusion is strict. For example, take $w_1=\frac{x+x'}{2},$ $w_2=-\frac{x+x'}{2}$, and set the remaining coordinates arbitrarily. Then \(w_1>x\) and \(w_2<-x\),
so \(w\in\mathbf A_k(x)\). However, \(w_1\le x'\) and \(w_2\ge -x'\), so
\(w\notin\mathbf A_k(x')\). Therefore
\(\mathbf A_k(x')\subsetneq \mathbf A_k(x)\).

\textit{(ii)} By definition of \(T^m\),
\begin{align*}
\{w\in\mathbf R^k:T^m(w)>x\}
&=
\{
w\in\mathbf R^k:
\min\{
\max_{1\le j\le k}w_j,
\max_{1\le j\le k}(-w_j)
\}>x
\} \\
&=
\{
w\in\mathbf R^k:
\max_{1\le j\le k}w_j>x
\text{ and }
\max_{1\le j\le k}(-w_j)>x
\} \\
&=
\{
w\in\mathbf R^k:
\exists i,j\in[k]\text{ such that }w_i>x
\text{ and }w_j<-x
\} \\
&=\mathbf A_k(x).
\end{align*}
\textit{(iii)} Let \(w\in\mathbf A_k(x)\). Then there exist \(i,j\in[k]\) such
that \(w_i>x\) and \(w_j<-x\). Hence \(-w_i<-x\) and \(-w_j>x\), so
\(-w\in\mathbf A_k(x)\).
\end{proof}

\begin{lemma}\label{lem:monotonedecreasing-in-theta2j}
Let $\theta_1\in \mathbf R_+^k$ and $\theta_2\in \mathbf O$. Consider $J^m$ in \eqref{def:J}. Then for any fixed $x\in \mathbf R$ and any
$j\in \{(a,b)\in [k]\times [k]:a<b\}$,
\[
\frac{\partial}{\partial \theta_{2,j}} J^m(x,\theta_1,\theta_2)>0.
\]
\end{lemma}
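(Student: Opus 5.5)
The plan is to write $J^m$ as a signed combination of rectangle probabilities of a Gaussian vector and then differentiate each rectangle probability in the off-diagonal entry $\theta_{2,ab}$ using Plackett's identity. Let $W:=\theta_2^{1/2}Z+\theta_1\sim N(\theta_1,\theta_2)$ with density $f_{\theta_2}$, where $\theta_2\in\mathbf O$ is nonsingular. By Lemma \ref{lem:int-area}(ii),
\[
\{T^m(W)\le x\}=\{W_j\le x\ \forall j\}\cup\{W_j\ge -x\ \forall j\}.
\]
For $x<0$ the two sets are disjoint. For $x\ge0$ they intersect in the box $\{|W_j|\le x\ \forall j\}$. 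Hence, by inclusion–exclusion,
\[
J^m(x,\theta_1,\theta_2)=P\{W\le x1_k\}+P\{W\ge -x1_k\}-P\{|W_j|\le x\ \forall j\}\,I\{x\ge 0\}.
\]

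The key tool is Plackett's identity, $\partial f_{\theta_2}/\partial\theta_{2,ab}=\partial^2 f_{\theta_2}/\partial w_a\partial w_b$ for $a<b$ (the symmetric entry is perturbed jointly, so that $\theta_2$ stays in $\mathbf O$). Differentiation under the integral is justified because $\theta_2$ ranges over a neighbourhood in $\mathbf O$ where the density and its derivatives are dominated by an integrable Gaussian envelope. Integrating the second derivative over $w_a,w_b$ reduces each rectangle derivative to boundary evaluations. Write $\varphi_{ab}$ for the bivariate $N(\cdot,\cdot)$ density of $(W_a,W_b)$, write $R$ for the remaining coordinates, and take $x\ge0$. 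Then:
\begin{itemize}
\item the lower orthant contributes $\varphi_{ab}(x,x)\,P\{R\le x \mid W_a=W_b=x\}$;
\item the upper orthant contributes $\varphi_{ab}(-x,-x)\,P\{R\ge -x\mid W_a=W_b=-x\}$;
\item the box contributes
\[
\sum_{s,t\in\{\pm1\}} st\,\varphi_{ab}(sx,tx)\,P\{|R|\le x\mid W_a=sx,W_b=tx\}.
\]
\end{itemize}

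Collecting terms, $\partial J^m/\partial\theta_{2,ab}$ equals
\[
\varphi_{ab}(x,x)\bigl[P\{R\le x\mid\cdot\}-P\{|R|\le x\mid\cdot\}\bigr]
+\varphi_{ab}(-x,-x)\bigl[P\{R\ge -x\mid\cdot\}-P\{|R|\le x\mid\cdot\}\bigr]
+\sum_{s=-t}\varphi_{ab}(sx,tx)\,P\{|R|\le x\mid\cdot\}.
\]
Each bracket is nonnegative because the box is contained in the corresponding orthant, and the mixed terms enter with a plus sign. For $x<0$ only the two orthant terms remain, and both are strictly positive. This is because the conditional law of $R$ is a nondegenerate Gaussian, which has full support since $\theta_2\succ0$, so every orthant probability is positive. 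For $k=2$ the set $R$ is empty and the conditional probabilities are all equal to one.

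The step that needs the most care is strictness in the remaining cases. For $x>0$, the mixed terms $\varphi_{ab}(x,-x)P\{|R|\le x\mid\cdot\}$ are strictly positive, since the density is positive and a nondegenerate Gaussian charges every open box; this covers $k=2$ as well. For $x=0$, the box has empty interior, so its contribution vanishes, while the orthant terms are strictly positive. In the resulting boundary expansion, I would handle the atom-free issue by noting that all the boundaries involved have Gaussian measure zero, so the rectangle representation is exact. The sign of $\theta_1\in\mathbf R^k_+$ plays no role in this argument. The main obstacle is carrying out the Plackett differentiation rigorously, in particular the interchange of derivative and integral and the conditional-density bookkeeping at the box corners. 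I would cite the standard form of Plackett's identity (also the basis of Slepian's inequality used in Lemma \ref{lem:slepian-type}) rather than rederive it.
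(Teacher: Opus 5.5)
Your proof is correct. For $x<0$ it coincides with the paper's: the two orthants are disjoint, and Plackett's identity reduces each orthant derivative to a strictly positive boundary integral.

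For $x\ge 0$ you take a different route.

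\emph{The paper's route.} It passes to the complement and writes $1-J^m$ as the centered Gaussian mass of $\mathbf A_k(x)+\theta_1$, using the symmetry of $\mathbf A_k(x)$ from Lemma~\ref{lem:int-area}. It then fixes $t_{3:k}$ and sorts the two-dimensional section into four shapes: the whole plane, the complement of a lower quadrant, the complement of an upper quadrant, or two opposite quadrants. Each section integral is evaluated separately, and strictness is argued in separate sub-cases for $k=2$, $x>0$ and $x=0$.

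\emph{Your route.} The inclusion--exclusion $J^m=P\{W\le x1_k\}+P\{W\ge -x1_k\}-P\{|W_j|\le x\ \forall j\}$ replaces that case analysis with three rectangle derivatives. It yields a single closed-form expression for $\partial J^m/\partial\theta_{2,ab}$. After the reflection $t=\theta_1-w$ implicit in the paper, your two orthant brackets and the mixed corner terms are exactly the paper's case (ii)/(iii) and case (iv) section contributions. The paper's case (i) sections, which integrate to zero, correspond to the cancellations built into inclusion--exclusion.

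\emph{Comparison.} Your version is shorter and treats $k=2$ and $k\ge 3$ uniformly, via the convention that conditional probabilities over an empty $R$ equal one. Its strictness argument is also transparent. For $x>0$ the corner terms $\varphi_{ab}(\pm x,\mp x)P\{|R|\le x\mid\cdot\}$ are positive. For $x\le 0$ the box is empty or a null set, so its derivative vanishes and the orthant derivatives carry the positivity. The paper's version works directly with the rejection region $\mathbf A_k(x)$, at the cost of the section bookkeeping.

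Both arguments rest on the same analytic input: Plackett's identity, plus dominated convergence to differentiate under the integral. You are right that neither uses the restriction $\theta_1\in\mathbf R_+^k$.
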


\begin{proof}
Without loss of generality, it suffices to prove the claim for
$j=(1,2)$. We consider two cases where $x<0$ and $x\geq 0.$ 

\noindent
\textit{Case 1: $x<0$.} 
For simplicity, let $Y:=\theta_2^{1/2}Z+\theta_1$ where $Z \sim N(0_k, I_k)$. Since 
\begin{align*}
    \{T^m(y)\le x\}
&=
\{\max_{1\le m\le k} y_m\le x\}
\cup
\{\max_{1\le m\le k} (-y_m)\le x\}\\
&=
\{y_m\le x \text{ for all }m\}
\cup
\{y_m\ge -x \text{ for all }m\}
\end{align*}
and the two unioning sets are exclusive, we have
\begin{align*}
J^m(x,\theta_1,\theta_2)
&=
P\{Y_m\le x \text{ for all }m\}
+
P\{Y_m\ge -x \text{ for all }m\}.
\end{align*}
Further, write $u_m:=x-\theta_{1,m},$ and $v_m:=-x-\theta_{1,m}$ for $m=1,\dots,k.$
Then,
\begin{align*}
\frac{\partial}{\partial \theta_{2,(1,2)}}
P\{Y_m\le x \text{ for all }m\}
&\overset{(a)}{=}
\frac{\partial}{\partial \theta_{2,(1,2)}}\int_{-\infty}^{u_k}... \int_{-\infty}^{u_3}
\int_{-\infty}^{u_2}\int_{-\infty}^{u_1} \phi_k(t;\theta_2)\,
dt_1dt_2dt_3... dt_k\\
&\overset{(b)}{=}
\int_{-\infty}^{u_k}... \int_{-\infty}^{u_3}
\int_{-\infty}^{u_2}\int_{-\infty}^{u_1} \frac{\partial}{\partial \theta_{2,(1,2)}}\phi_k(t;\theta_2)\,
dt_1dt_2dt_3... dt_k\\
&\overset{(c)}{=}
\int_{-\infty}^{u_k}... \int_{-\infty}^{u_3}
\int_{-\infty}^{u_2}\int_{-\infty}^{u_1}
\frac{\partial^2}{\partial t_2\partial t_1}\phi_k(t;\theta_2)\,
dt_1dt_2dt_3... dt_k\\
&\overset{(d)}{=}
\int_{-\infty}^{u_k}... \int_{-\infty}^{u_3}
\phi_k(u_1,u_2,t_3,\dots,t_k;\theta_2)\,dt_3... dt_k>0
\end{align*}
where (a) holds by definition; (b) holds by interchanging differentiation and integration, which is justified by the dominated convergence theorem, since $\theta_2\in \mathbf O$ is positive definite, the map $(t,\Sigma)\mapsto \phi_k(t;\Sigma)$ is continuously differentiable in $\Sigma$ on a neighborhood of $\theta_2$, and the derivative $\partial \phi_k(t;\Sigma)/\partial \Sigma_{(1,2)}$ is dominated on that neighborhood by an integrable function of $t$; (c) holds by applying Plackett's identity $\frac{\partial}{\partial \theta_{2,(1,2)}}\phi_k(t;\theta_2) = \frac{\partial^2}{\partial t_2\partial t_1}\phi_k(t;\theta_2)$; and (d) follows from the fundamental theorem of calculus. 
A similar argument gives $\frac{\partial}{\partial \theta_{2,(1,2)}}
P\{Y_m\ge -x \text{ for all }m\}>0$. Therefore 
\[
\frac{\partial}{\partial \theta_{2,(1,2)}}J^m(x,\theta_1,\theta_2)>0
\qquad\text{for all }x<0.
\]
\noindent
\textit{Case 2: $x\ge 0$.} Let $b_m^-:=-x+\theta_{1,m}$ and $b_m^+:=x+\theta_{1,m},$ for $m=1,\dots,k.$ Since $x\ge 0$, we have $b_m^-\le b_m^+$ for every $m$.
 By Lemma \ref{lem:int-area},
\[
\{y: T^m(y)>x\}=\mathbf A_k(x)-\theta_1.
\]
Because \(\mathbf A_k(x)\) is symmetric,
\(\mathbf A_k(x)=-\mathbf A_k(x)\), and because the centered Gaussian
density is even, \(\phi_k(-t;\theta_2)=\phi_k(t;\theta_2)\), the change
of variables \(s=-t\) gives
\[
\begin{aligned}
1-J^m(x,\theta_1,\theta_2)
&=
\int_{\mathbf A_k(x)-\theta_1}
\phi_k(t;\theta_2)\,dt
=
\int_{\mathbf A_k(x)+\theta_1}
\phi_k(s;\theta_2)\,ds.
\end{aligned}
\]
Moreover,
\[
\mathbf A_k(x)+\theta_1
=
\bigcup_{i\ne j}
\{t\in\mathbf R^k:t_i<b_i^-,\ t_j>b_j^+\}.
\]
As before, by interchanging differentiation and integration and applying Plackett's identity, we have 
\begin{align}
\frac{\partial}{\partial \theta_{2,(1,2)}}(1-J^m(x,\theta_1,\theta_2))
=
\int_{t\in \mathbf A_k(x)+\theta_1}
\frac{\partial}{\partial \theta_{2,(1,2)}}\phi_k(t;\theta_2)\,dt 
=
\int_{t\in \mathbf A_k(x)+\theta_1}
\frac{\partial^2}{\partial t_2\partial t_1}\phi_k(t;\theta_2)\,dt.
\label{eq:main-derivative-proof}
\end{align}

\noindent
For fixed $(t_2,\dots,t_k)\in \mathbf R^{k-1}$, the fundamental theorem of calculus implies
\begin{align}
\int_{\mathbf R}\frac{\partial^2}{\partial t_2\partial t_1}\phi_k(t;\theta_2)\,dt_1
&=
\lim_{t_1\to\infty}\frac{\partial}{\partial t_2}\phi_k(t;\theta_2)
-
\lim_{t_1\to-\infty}\frac{\partial}{\partial t_2}\phi_k(t;\theta_2)
=0,
\label{eq:prelim1}
\end{align}
because
\[
\frac{\partial}{\partial t_i}\phi_k(t;\theta_2)
=
-\phi_k(t;\theta_2)\sum_{m=1}^k \theta^{-1}_{2,(i,m)}t_m,
\]
and the Gaussian density and its first derivatives vanish at infinity. Likewise, for any $a<b$,
\begin{align}
\int_{\mathbf R}\int_a^b
\frac{\partial^2}{\partial t_2\partial t_1}\phi_k(t;\theta_2)\,dt_1dt_2
&=
\int_{\mathbf R}
\left[
\frac{\partial}{\partial t_2}\phi_k((b,t_2,...,t_k);\theta_2)
-
\frac{\partial}{\partial t_2}\phi_k((a,t_2,...,t_k);\theta_2)
\right]dt_2 =0.
\label{eq:prelim2}
\end{align}

\noindent
\textit{Case 2(a): $k=2$.} If $k=2$, $\mathbf A_2(x)+\theta_1
=
(-\infty,b_1^-)\times (b_2^+,\infty)
\;\cup\;
(b_1^+,\infty)\times (-\infty,b_2^-).$ Therefore,
\begin{align*}
\frac{\partial}{\partial \theta_{2,(1,2)}}\bigl(1-J^m(x,\theta_1,\theta_2)\bigr)
&=
\int_{b_2^+}^{\infty}\int_{-\infty}^{b_1^-}
\frac{\partial^2}{\partial t_2\partial t_1}\phi_2(t;\theta_2)\,dt_1dt_2+
\int_{-\infty}^{b_2^-}\int_{b_1^+}^{\infty}
\frac{\partial^2}{\partial t_2\partial t_1}\phi_2(t;\theta_2)\,dt_1dt_2\\
&=
-\phi_2(b_1^-,b_2^+;\theta_2)-\phi_2(b_1^+,b_2^-;\theta_2)<0.
\end{align*}
Hence
\[
\frac{\partial}{\partial \theta_{2,(1,2)}}J^m(x,\theta_1,\theta_2)>0
\qquad\text{for }x\ge 0,\ k=2.
\]

\noindent
\textit{Case 2(b): $k\ge 3$.} Fix $t_{3:k}:=(t_3,\dots,t_k)\in \mathbf R^{k-2}$ and define the section
\[
B(t_{3:k})
:=
\{(t_1,t_2)\in \mathbf R^2:(t_1,t_2,t_{3:k})\in \mathbf A_k(x)+\theta_1\}.
\]
For each fixed $t_{3:k}$, exactly one of the following four cases occurs.
\begin{enumerate}[i)]
    \item There exist $m,\ell\in\{3,...,k\}$ such that
$t_m<b_m^-$ and $t_\ell>b_\ell^+$. Then $B(t_{3:k})=\mathbf R^2. $
    \item There exists $m\in\{3,...,k\}$ such that
$t_m<b_m^-$ and $t_\ell\le b_\ell^+$ for all $\ell=3,...,k$. Then, $B(t_{3:k})=\{t_1>b_1^+\}\cup\{t_2>b_2^+\}.$
    \item There exists $m\in\{3,...,k\}$ such that
$t_m>b_m^+$ and $t_\ell\ge b_\ell^-$ for all $\ell=3,...,k$. Then, $B(t_{3:k})=\{t_1<b_1^-\}\cup\{t_2<b_2^-\}.$ 
    \item For all $\ell=3,...,k$, $b_\ell^-\le t_\ell\le b_\ell^+.$ Then, $B(t_{3:k})
=
\{t_1<b_1^-,\,t_2>b_2^+\}
\cup
\{t_1>b_1^+,\,t_2<b_2^-\}.$
\end{enumerate}
For simplicity of notation, let $g_{t_{3:k}}(t_1,t_2)
:=
\phi_k(t_1,t_2,t_3,\dots,t_k;\theta_2).$ We compute, case by case, the section integral
\[
I(t_{3:k})
:=
\iint_{B(t_{3:k})}
\frac{\partial^2}{\partial t_2\partial t_1}
g_{t_{3:k}}(t_1,t_2)\,dt_1dt_2.
\]

\noindent
In case i), by \eqref{eq:prelim1}--\eqref{eq:prelim2},
\[
I(t_{3:k})
=
\int_{\mathbf R}\int_{\mathbf R}
\frac{\partial^2}{\partial t_2\partial t_1}
g_{t_{3:k}}(t_1,t_2)\,dt_1dt_2
=0.
\]
\noindent
In case ii), $B(t_{3:k})=\mathbf R^2\setminus \bigl((-\infty,b_1^+]\times (-\infty,b_2^+]\bigr),$ so
\begin{align*}
I(t_{3:k})
&=
-\int_{-\infty}^{b_2^+}\int_{-\infty}^{b_1^+}
\frac{\partial^2}{\partial t_2\partial t_1}
g_{t_{3:k}}(t_1,t_2)\,dt_1dt_2
=
-g_{t_{3:k}}(b_1^+,b_2^+)
=
-\phi_k(b_1^+,b_2^+,t_{3:k};\theta_2)<0.
\end{align*}
\noindent
In case iii), $B(t_{3:k})=\mathbf R^2\setminus \bigl([b_1^-,\infty)\times [b_2^-,\infty)\bigr),$ so 
\begin{align*}
I(t_{3:k})
&=
-\int_{b_2^-}^{\infty}\int_{b_1^-}^{\infty}
\frac{\partial^2}{\partial t_2\partial t_1}
g_{t_{3:k}}(t_1,t_2)\,dt_1dt_2
=
-g_{t_{3:k}}(b_1^-,b_2^-)
=
-\phi_k(b_1^-,b_2^-,t_{3:k};\theta_2)<0.
\end{align*}
\noindent
In case iv),
\begin{align*}
I(t_{3:k})
&=
\int_{b_2^+}^{\infty}\int_{-\infty}^{b_1^-}
\frac{\partial^2}{\partial t_2\partial t_1}
g_{t_{3:k}}(t_1,t_2)\,dt_1dt+
\int_{-\infty}^{b_2^-}\int_{b_1^+}^{\infty}
\frac{\partial^2}{\partial t_2\partial t_1}
g_{t_{3:k}}(t_1,t_2)\,dt_1dt_2\\
&=
-g_{t_{3:k}}(b_1^-,b_2^+)-g_{t_{3:k}}(b_1^+,b_2^-)=
-\phi_k(b_1^-,b_2^+,t_{3:k};\theta_2)
-\phi_k(b_1^+,b_2^-,t_{3:k};\theta_2)
<0.
\end{align*}
Thus, for every fixed $t_{3:k}$, $I(t_{3:k})\le 0, $ and moreover $I(t_{3:k})<0$ unless case (i) occurs. Integrating the section integrals over $t_{3:k}$ and using
\eqref{eq:main-derivative-proof}, I obtain
\[
\frac{\partial}{\partial \theta_{2,(1,2)}}\bigl(1-J^m(x,\theta_1,\theta_2)\bigr)
=
\int_{\mathbf R^{k-2}} I(t_{3:k})\,dt_{3:k}\le 0.
\]

It remains to prove strict inequality. If $x>0$, then the set $\prod_{\ell=3}^k [b_\ell^-,b_\ell^+]$ has positive Lebesgue measure, and for every $t_{3:k}$ in this set, case iv) occurs. Hence $\int_{\mathbf R^{k-2}} I(t_{3:k})\,dt_{3:k}<0.$ If $x=0$, then $b_\ell^-=b_\ell^+=\theta_{1,\ell}$, so case iv) occurs only on
a null set. However, the set 
\[
\{t_{3:k}: t_\ell\le \theta_{1,\ell}\ \forall \ell=3,\dots,k,\ \text{and }t_m<\theta_{1,m}\text{ for some }m\}
\]
has positive Lebesgue measure, and on this set case ii) occurs; likewise, the
set
\[
\{t_{3:k}: t_\ell\ge \theta_{1,\ell}\ \forall \ell=3,\dots,k,\ \text{and }t_m>\theta_{1,m}\text{ for some }m\}
\]
has positive Lebesgue measure, and on this set case iii) occurs. Hence again $\int_{\mathbf R^{k-2}} I(t_{3:k})\,dt_{3:k}<0.$ Therefore,
\[
\frac{\partial}{\partial \theta_{2,(1,2)}}\bigl(1-J^m(x,\theta_1,\theta_2)\bigr)<0
\qquad\text{for all }x\ge 0.
\]
Combining Step 1 and Step 2 proves the result.
\end{proof}

\begin{lemma}[Monotonicity of $J(x, \theta_1, \theta_2)$ in $\theta_2$]\label{lem:slepian-type}
    For $\theta, \tilde{\theta} \in \Theta$ in $(\mathbf{R}^k_+\cup\mathbf{R}^k_-)\times \bar{\mathbf O}$, if it holds that
    \begin{align*}
        \tilde{\theta}_1= \theta_1 \text{ and }\tilde{\theta}_2 \geq \theta_2
    \end{align*}
     where the inequality holds element-wise, then $J^m$ in \eqref{def:J} satisfies that 
    \begin{align*}
        J^m(x, {\theta}_1, {\theta}_2 ) \leq J^m(x, \tilde{\theta}_1, \tilde{\theta}_2)
    \end{align*}
    for any $x \in \mathbf{R}$. Furthermore, the inequality is strict if the two correlation matrices $\Tilde{\theta}_2$ and $\theta_2$ are positive definite and the strict inequality $\tilde{\theta}_{2,j}>{\theta}_{2,j}$ holds for some $j\in \{(a,b)\in[k]\times[k]:a<b\}$.
\end{lemma}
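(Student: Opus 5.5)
The plan is to interpolate between $\theta_2$ and $\tilde\theta_2$ linearly and integrate the derivative bound of Lemma~\ref{lem:monotonedecreasing-in-theta2j} along the path. By the symmetry $J^m(x,\theta_1,\theta_2)=J^m(x,-\theta_1,\theta_2)$ (which holds because $T^m(-y)=T^m(y)$ and $Z\overset{d}{=}-Z$), it suffices to treat $\theta_1\in\mathbf R_+^k$. For $\lambda\in[0,1]$ define $\theta_2(\lambda):=(1-\lambda)\theta_2+\lambda\tilde\theta_2$. Each $\theta_2(\lambda)$ is a correlation matrix in $\bar{\mathbf O}$, since it is a convex combination of positive semidefinite matrices with unit diagonal. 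Its off-diagonal entries $\theta_{2,j}(\lambda)$ are nondecreasing in $\lambda$, because $\tilde\theta_2\ge\theta_2$ elementwise.

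First take $\theta_2,\tilde\theta_2\in\mathbf O$. Then $\theta_2(\lambda)\in\mathbf O$ for all $\lambda$, by convexity of the positive definite cone. The map $g(\lambda):=J^m(x,\theta_1,\theta_2(\lambda))$ is continuously differentiable: the Gaussian CDF over the polyhedral sets $\{T^m\le x\}$ is smooth in the covariance on $\mathbf O$, and this is the same dominated-convergence interchange used in the proof of Lemma~\ref{lem:monotonedecreasing-in-theta2j}. The chain rule gives
\[
g'(\lambda)=\sum_{a<b}\frac{\partial J^m}{\partial\theta_{2,(a,b)}}(x,\theta_1,\theta_2(\lambda))\,(\tilde\theta_{2,(a,b)}-\theta_{2,(a,b)}).
\]
Every summand is nonnegative by Lemma~\ref{lem:monotonedecreasing-in-theta2j}. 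If some pair has $\tilde\theta_{2,j}>\theta_{2,j}$, that summand is strictly positive for every $\lambda$. Integrating $g'$ over $[0,1]$ then yields the weak inequality, and the strict inequality under the stated extra hypotheses.

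For general $\theta_2,\tilde\theta_2\in\bar{\mathbf O}$, possibly singular, I would approximate. Set $\theta_2^\varepsilon:=(\theta_2+\varepsilon I_k)/(1+\varepsilon)$, and define $\tilde\theta_2^\varepsilon$ in the same way. Both lie in $\mathbf O$. Their off-diagonal entries are the original entries scaled by the common factor $1/(1+\varepsilon)>0$, so the elementwise ordering is preserved. The previous step gives $J^m(x,\theta_1,\theta_2^\varepsilon)\le J^m(x,\theta_1,\tilde\theta_2^\varepsilon)$. Let $\varepsilon\downarrow0$. The principal square root is continuous, so $(\theta_2^\varepsilon)^{1/2}Z+\theta_1\to\theta_2^{1/2}Z+\theta_1$ almost surely. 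By Lemma~\ref{lem:strictly_increasing_dist_T}(i), the limit $T^m(\theta_2^{1/2}Z+\theta_1)$ has a continuous distribution on all of $\bar{\mathbf O}$. Hence the CDFs converge at every $x$, and the weak inequality passes to the limit.

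The main obstacle is justifying differentiability of $g$ and the chain-rule expansion. Lemma~\ref{lem:monotonedecreasing-in-theta2j} gives only the partial derivatives, so I would also check that those partials are continuous in $\theta_2$ on $\mathbf O$, so that $J^m$ is $C^1$ along the segment. The Plackett representation used in that lemma expresses each partial as an integral of a Gaussian density over lower-dimensional faces, and these integrals are continuous in $\theta_2$ when $\theta_2\succ0$. If this becomes delicate, I would instead change one off-diagonal entry at a time: apply the mean value theorem coordinatewise along a path of matrices with monotone entries. Keeping the intermediate matrices positive definite along such a path requires some care, which is why the linear path is my first choice.
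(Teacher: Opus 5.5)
Your proposal is correct and follows the paper's proof step for step. It uses the linear interpolation $\theta_2(\lambda)$ between positive definite endpoints, then the chain rule with the strictly positive partial derivatives from Lemma~\ref{lem:monotonedecreasing-in-theta2j}, which also gives strictness when some off-diagonal entry strictly increases, and finally the regularization $(\theta_2+\varepsilon I_k)/(1+\varepsilon)$ with $\varepsilon\downarrow0$ for singular matrices. Your additions fill in points the paper leaves implicit: the sign-symmetry reduction to $\theta_1\in\mathbf R_+^k$ (where the derivative lemma is stated), the explicit continuity of $J^m$ in $\theta_2$ via Lemma~\ref{lem:strictly_increasing_dist_T}(i), and the differentiability check. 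That check is most easily settled by differentiating the Gaussian density in $\lambda$ under the integral before applying Plackett's identity.
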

\begin{proof}
The result follows by applying the techniques in the proof of Theorem 2.1.1. in \cite{Tong1980probability} which is based on \cite{Slepian1962Bell}. I present the proof for completeness. 

First suppose that both $\theta_2$ and $\tilde\theta_2$ are positive definite.
For $\lambda\in[0,1]$, define
\[
\bar\theta_1(\lambda):=\theta_1=\tilde\theta_1,
\qquad
\bar\theta_2(\lambda):=(1-\lambda)\theta_2+\lambda\tilde\theta_2.
\]
Since $\theta_2$ and $\tilde\theta_2$ are positive definite correlation
matrices, $\bar\theta_2(\lambda)$ is also a positive definite correlation
matrix for every $\lambda\in[0,1]$. By the chain rule,
\[
\frac{d}{d\lambda}J^m(x,\bar\theta_1(\lambda),\bar\theta_2(\lambda))
=
\sum_{j\in\{(a,b)\in[k]\times[k]:a<b\}}
\frac{\partial}{\partial \bar\theta_{2,j}(\lambda)}
J^m(x,\bar\theta_1(\lambda),\bar\theta_2(\lambda))
\cdot
(\tilde\theta_{2,j}-\theta_{2,j}).
\]
By Lemma \ref{lem:monotonedecreasing-in-theta2j}, $\frac{\partial}{\partial \bar\theta_{2,j}(\lambda)}
J^m(x,\bar\theta_1(\lambda),\bar\theta_2(\lambda))>0$ for every $j$, while by assumption $\tilde\theta_{2,j}-\theta_{2,j}\ge 0$ for every $j$. Hence
\[
\frac{d}{d\lambda} J^m(x,\bar\theta_1(\lambda),\bar\theta_2(\lambda))\ge 0
\qquad\text{for all }\lambda\in(0,1).
\]
Therefore the map $\lambda\mapsto J^m(x,\bar\theta_1(\lambda),\bar\theta_2(\lambda))$ is weakly increasing, and so
\[
J^m(x,\theta_1,\theta_2)
=
J^m(x,\bar\theta_1(0),\bar\theta_2(0))
\le
J^m(x,\bar\theta_1(1),\bar\theta_2(1))
=
J^m(x,\tilde\theta_1,\tilde\theta_2).
\]
If, in addition, $\tilde\theta_{2,j}>\theta_{2,j}$ for some $j$, then
\[
\frac{d}{d\lambda}J^m(x,\bar\theta_1(\lambda),\bar\theta_2(\lambda))>0
\qquad\text{for all }\lambda\in(0,1),
\]
and therefore $J^m(x,\theta_1,\theta_2)<J^m(x,\tilde\theta_1,\tilde\theta_2).$

Now suppose that at least one of $\theta_2$ and $\tilde\theta_2$ is singular.
For $\varepsilon>0$, define
\[
\theta_{2,\varepsilon}:=\frac{\theta_2+\varepsilon I_k}{1+\varepsilon},
\qquad
\tilde\theta_{2,\varepsilon}:=\frac{\tilde\theta_2+\varepsilon I_k}{1+\varepsilon}.
\]
Then $\theta_{2,\varepsilon}$ and $\tilde\theta_{2,\varepsilon}$ are positive
definite correlation matrices, and $\tilde\theta_{2,\varepsilon}\ge \theta_{2,\varepsilon}$ elementwise. Hence, by the positive definite case, $J^m(x,\theta_1,\theta_{2,\varepsilon})
\le
J^m(x,\theta_1,\tilde\theta_{2,\varepsilon})$ for every $\varepsilon>0.$ Since $J^m(x,\theta_1,\theta_2)$ is continuous in $\theta_2$, letting
$\varepsilon\downarrow 0$ yields
\[
J^m(x,\theta_1,\theta_2)\le J^m(x,\theta_1,\tilde\theta_2).
\]
This proves the weak inequality in general.
\end{proof}

\begin{lemma}\label{lemma:monotonicity-c}
Let $\theta_2,\tilde\theta_2\in \bar{\mathbf O}$ be such that
\[
\tilde\theta_2\ge \theta_2
\]
elementwise. Then for any $\alpha\in(0,1)$
\[
c^m(1-\alpha,\tilde\theta_2)\le c^m(1-\alpha,\theta_2).
\]
\end{lemma}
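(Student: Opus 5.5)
The plan is to derive this from Lemma \ref{lem:slepian-type}: a pointwise ordering of the CDFs gives an ordering of the quantiles, which passes through the supremum over $\theta_1$. The definition in \eqref{def:LF-cv-normal} lets me restrict the supremum to $\theta_1\in\mathbf R_+^k$ by sign symmetry. That symmetry holds for any $\theta_2\in\bar{\mathbf O}$ because $T^m(-y)=T^m(y)$ and $\theta_2^{1/2}Z\overset{d}{=}-\theta_2^{1/2}Z$. So both $c^m(1-\alpha,\tilde\theta_2)$ and $c^m(1-\alpha,\theta_2)$ are suprema of $(J^m)^{-1}(1-\alpha,\theta_1,\cdot)$ over the same index set $\theta_1\in\mathbf R_+^k$.

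For each fixed $\theta_1\in\mathbf R_+^k$, set $\tilde\theta=(\theta_1,\tilde\theta_2)$ and $\theta=(\theta_1,\theta_2)$. Both lie in $(\mathbf R_+^k\cup\mathbf R_-^k)\times\bar{\mathbf O}$, have the same first component, and satisfy $\tilde\theta_2\ge\theta_2$ elementwise. Lemma \ref{lem:slepian-type} then gives $J^m(x,\theta_1,\theta_2)\le J^m(x,\theta_1,\tilde\theta_2)$ for every $x\in\mathbf R$. Hence
\[
\{x:J^m(x,\theta_1,\theta_2)\ge 1-\alpha\}\subseteq\{x:J^m(x,\theta_1,\tilde\theta_2)\ge1-\alpha\}.
\]
Taking infima yields $(J^m)^{-1}(1-\alpha,\theta_1,\tilde\theta_2)\le (J^m)^{-1}(1-\alpha,\theta_1,\theta_2)$. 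Both quantiles are finite: the CDFs are proper by Lemma \ref{lem:parametric-cv-existence}, and that lemma's argument applies verbatim to $\theta_1\in\mathbf R_+^k$ by symmetry.

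Finally, I take the supremum over $\theta_1\in\mathbf R_+^k$ on both sides. A pointwise inequality between two families is preserved under suprema over a common index set, so $c^m(1-\alpha,\tilde\theta_2)\le c^m(1-\alpha,\theta_2)$. Both suprema are finite by the upper bound \eqref{eq:cv-upperbound1} of Lemma \ref{lem:parametric-cv-existence}.

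The only delicate point is making sure Lemma \ref{lem:slepian-type} applies when $\theta_2$ or $\tilde\theta_2$ is singular. That lemma already covers this case through the $\varepsilon$-regularization and continuity argument, so nothing further is needed. I therefore expect no real obstacle; the main check is the symmetry reduction that makes the two suprema range over the same set.
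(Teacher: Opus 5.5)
Your proposal is correct and matches the paper's proof: you use Lemma \ref{lem:slepian-type} to order the CDFs pointwise, which reverses the order of the $(1-\alpha)$-quantiles for each fixed $\theta_1$, and that order is preserved under the supremum. Your preliminary reduction to $\theta_1\in\mathbf R_+^k$ by sign symmetry is harmless but unnecessary, because the paper applies Lemma \ref{lem:slepian-type} directly over $\mathbf R_+^k\cup\mathbf R_-^k$.
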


\begin{proof}
By Lemma \ref{lem:slepian-type}, for every
$\theta_1\in \mathbf R_+^k\cup \mathbf R_-^k$ and every $x\in \mathbf R$, $J^m(x,\theta_1,\theta_2)\le J^m(x,\theta_1,\tilde\theta_2).$ Therefore, for every $\theta_1\in \mathbf R_+^k\cup \mathbf R_-^k$, $J^{m,-1}(1-\alpha,\theta_1,\tilde\theta_2)
\le
J^{m,-1}(1-\alpha,\theta_1,\theta_2).$ Taking the supremum over $\theta_1\in \mathbf R_+^k\cup \mathbf R_-^k$ on both
sides yields
\[
\sup_{\theta_1\in \mathbf R_+^k\cup \mathbf R_-^k}
J^{m,-1}(1-\alpha,\theta_1,\tilde\theta_2)
\le
\sup_{\theta_1\in \mathbf R_+^k\cup \mathbf R_-^k}
J^{m,-1}(1-\alpha,\theta_1,\theta_2).
\]
By the definition of $c^m(1-\alpha,\theta_2)$, this is exactly $c^m(1-\alpha,\tilde\theta_2)\le c^m(1-\alpha,\theta_2).$
\end{proof}

\subsubsection{Proof of Theorem \ref{thm:LFtest-uniform-validity} with nonparametric bootstrap critical value}
\label{subsubsec:LF-nonparametric-bootstrap-validity}
We now prove Theorem \ref{thm:LFtest-uniform-validity} when the LF test uses the nonparametric bootstrap critical value in \eqref{def:LFC-bootstrap-critical-value}. Throughout this subsubsection, let $\hat P_n^*$ denote the conditional law of a nonparametric bootstrap sample $W_1^*,\ldots,W_n^*$ given the original sample. Let $\bar W_n^*$, $S_n^*$, and $\hat\Omega_n^*$ denote the bootstrap analogues of $\bar W_n$, $S_n$, and $\hat\Omega_n$, respectively. Conditional on the original sample, the bootstrap CDF in \eqref{def:LFC-bootstrap-critical-value} can be written as
\begin{align*}
J_n^\ell(x,\theta_1,\hat P_n)
=
\hat P_n^*\left\{
T^\ell\left(
\sqrt n\,S_n^{*-1}(\bar W_n^*-\bar W_n)
+S_n^{*-1}\theta_1,
\hat\Omega_n^*
\right)
\le x
\right\},
\end{align*}
where the second argument of $T^\ell$ is suppressed for $\ell=m,s$. For $\ell=q$, the regularization convention in Remark \ref{remark:regularized-correlation} is applied to the bootstrap correlation matrix as well. Bootstrap quantities may be defined arbitrarily on the event that some bootstrap marginal variance is zero; the argument below implies that the conditional probability of this event converges to zero.

The proof uses the bootstrap argument of \citet{RomanoShaikh2012AoS} for the studentized multivariate mean. We first state the sequential version needed here. The additional work relative to their Theorems~3.7--3.8 is that the nuisance value $\sqrt n\mu(P_n)$ may vary with $n$ and some of its standardized coordinates may diverge.

\begin{lemma}\label{lem:bootstrap-conv-to-Jn}
Consider a sequence $\{P_n\in\mathbf P:n\ge1\}$ and suppose \eqref{def:uniform-integrability} holds. Suppose that, for some nonempty set $I\subseteq[k]$, either
\begin{align}
\frac{\sqrt n\,\mu_j(P_n)}{\sigma_j(P_n)}&\to\delta_j\in(-\infty,0], &&j\in I,
&\frac{\sqrt n\,\mu_j(P_n)}{\sigma_j(P_n)}&\to-\infty, &&j\notin I,
\label{eq:bootstrap-drift-negative}
\end{align}
or the sign-reversed version of \eqref{eq:bootstrap-drift-negative} holds. Then, for $\ell\in\{m,s\}$,
\begin{align}
\sup_{x\in\mathbf R}
\left|
J_n^\ell(x,\sqrt n\,\mu(P_n),\hat P_n)
-
P_n\{T_n^\ell\le x\}
\right|
\overset{P_n}{\longrightarrow}0.
\label{eq:bootstrap-conv-to-Jn}
\end{align}
If, in addition,
$\inf_{P\in\mathbf P}\lambda_{\min}(\Omega(P))>0$, then \eqref{eq:bootstrap-conv-to-Jn} also holds for $\ell=q$.
\end{lemma}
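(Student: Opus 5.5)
Both CDFs will be compared with a common Gaussian limit, in the same way as Lemma \ref{lem:conv-ip-to-Jn}. The argument is by contradiction along subsequences. If \eqref{eq:bootstrap-conv-to-Jn} fails, there is a subsequence $n_l$ along which the supremum is not $o_{P_{n_l}}(1)$. Passing to a further subsequence, $\Omega(P_{n_l})\to\Omega^*\in\bar{\mathbf O}$, and $\Omega^*\succ0$ when $\ell=q$. Lemma \ref{lem:conv-ip-to-Jn}, combined with the proof of Case 2 of Theorem \ref{thm:LFtest-uniform-validity}, already shows that $\sup_x|P_{n_l}\{T^\ell_{n_l}\le x\}-G^\ell(x)|\to0$. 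Here $G^\ell$ is the CDF of the reduced statistic built from $(\Omega^*_{I\times I})^{1/2}Z_I+\delta_I$: the maximum over $I$ for $\ell=m$, the positive-part sum of squares over $I$ for $\ell=s$, and the quadratic distance to $\mathbf R_-^{|I|}$ for $\ell=q$, with coordinates outside $I$ dropped. It therefore suffices to show that $\sup_x|J^\ell_{n_l}(x,\sqrt{n_l}\mu(P_{n_l}),\hat P_{n_l})-G^\ell(x)|\overset{P_{n_l}}{\to}0$.

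\textbf{Step 1 (bootstrap CLT).} Following the argument behind Theorems 3.7--3.8 of \citet{RomanoShaikh2012AoS}, I will show that with $P_{n_l}$-probability tending to one the empirical distribution $\hat P_{n_l}$ satisfies the uniform integrability condition \eqref{def:uniform-integrability} at a slightly weakened level. I will also show that $S_{n_l}^{-1}D(P_{n_l})\to I_k$ and $\hat\Omega_{n_l}\to\Omega^*$ in probability, using Lemmas \ref{alem:consistency-sample-variance}--\ref{alem:consistency-sample-corr}. Applying Lemma \ref{alem:uniformCLT} conditionally, and Lemmas \ref{alem:consistency-sample-variance}--\ref{alem:consistency-sample-corr} to the bootstrap sample, then gives $\sqrt{n_l}S^{*-1}_{n_l}(\bar W^*_{n_l}-\bar W_{n_l})\Rightarrow(\Omega^*)^{1/2}Z$, $S^*_{n_l}S^{-1}_{n_l}\to I_k$ and $\hat\Omega^*_{n_l}\to\Omega^*$ in $\hat P^*_{n_l}$-probability, all in $P_{n_l}$-probability. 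The technical device is the subsequence principle: every subsequence has a further subsequence along which these conditional statements hold almost surely. This step also controls the event that some bootstrap variance is zero.

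\textbf{Step 2 (drift term).} The shift is $S^{*-1}_{n_l}\sqrt{n_l}\mu(P_{n_l})=(S^{*-1}_{n_l}S_{n_l})(S^{-1}_{n_l}D(P_{n_l}))D(P_{n_l})^{-1}\sqrt{n_l}\mu(P_{n_l})$. By Step 1, its coordinates in $I$ converge to $\delta_I$ and its coordinates outside $I$ diverge to $-\infty$ (or to $+\infty$ in the sign-reversed case), conditionally in probability. From here the deterministic reduction in Case 2 of Lemma \ref{lem:conv-ip-to-Jn} applies pathwise along the almost-sure subsequence. Coordinates outside $I$ are eventually the most negative, so on an event of conditional probability tending to one the statistic equals the reduced statistic on $I$ plus $o(1)$. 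For $\ell=q$ this uses $\hat\Omega^*\succ0$ with high probability, which follows from the bootstrap analogue of Lemma \ref{lem:sample-corr-positive-definite}. The conclusion is weak convergence of the bootstrap law of the statistic to $G^\ell$.

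\textbf{Step 3 (uniformity in $x$).} For $\ell=m$, $G^\ell$ is continuous by Lemma \ref{lem:strictly_increasing_dist_T}, so Pólya's theorem gives uniform convergence. For $\ell=s,q$, $G^\ell$ has an atom at zero, and I will use the split-at-zero argument from Lemma \ref{lem:conv-ip-to-Jn}. The bootstrap probability that the statistic equals zero is the probability of the cone event $\{$bootstrap vector $\in\mathbf R^k_+\cup\mathbf R^k_-\}$. By the Portmanteau lemma this converges to $P\{(\Omega^*_{I\times I})^{1/2}Z_I+\delta_I\in\mathbf R_-^{|I|}\}$, because the cone has a boundary that is Gaussian-null when $\Omega^*$ has unit diagonal. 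On $(0,\infty)$ the convergence is handled by continuity and monotonicity. Combining this with the convergence of $P_{n_l}\{T^\ell_{n_l}\le x\}$ to $G^\ell$ contradicts the failure of \eqref{eq:bootstrap-conv-to-Jn}.

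I expect Step 1 to be the main obstacle: establishing a conditional uniform CLT for the bootstrap under only the uniform-integrability condition, along an arbitrary sequence $P_n$. The plan is to reuse the Lindeberg-type verification of Romano--Shaikh applied to $\hat P_n$, rather than re-derive it. The cone-event limit at zero in Step 3 needs some care when $\Omega^*$ is singular, but the boundary of the orthant still has Gaussian measure zero because each marginal is nondegenerate.
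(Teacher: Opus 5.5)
Your proposal follows essentially the same route as the paper's proof. The sampling-side convergence to the common Gaussian limit is taken from the proof of Lemma~\ref{lem:conv-ip-to-Jn}. The conditional bootstrap CLT, studentization and correlation consistency come from the Romano--Shaikh device; the paper makes your ``weakened uniform integrability'' precise via $\rho(\hat P_n,P_n)\overset{P_n}{\to}0$, the deterministic-sequence argument of their Lemma~S.12.1, and almost-sure subsequences. The drift is handled through $\sqrt n\,\mu_j(P_n)/S^*_{j,n}$, and uniformity in $x$ follows from P\'olya's theorem plus the split-at-zero argument.

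One small correction concerns the case $I=[k]$. There the common limit is $T^\ell(Z+\delta,\Omega^*)$ with $Z\sim N(0_k,\Omega^*)$, not the reduced statistic on $I$. Its atom at zero is $P\{Z+\delta\in\mathbf R_+^k\cup\mathbf R_-^k\}$, not the orthant event $\mathbf R_-^{|I|}$. The same argument still goes through, now by plain continuous mapping.
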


\begin{proof}
It suffices to establish the result along an arbitrary subsequence. By compactness of $\bar{\mathbf O}$, pass to a further subsequence, without relabeling, such that
\begin{align}
\Omega(P_n)\to\Omega^*\in\bar{\mathbf O}.
\label{eq:bootstrap-Omega-limit}
\end{align}
For $\ell=q$, the uniform lower bound on the smallest eigenvalue implies $\Omega^*\succ0$.

We first derive the conditional bootstrap central limit theorem and studentization results. Let $\mathbf P'$ denote the set of distributions on $\mathbf R^k$ with finite and strictly positive marginal variances. For $(Q,P)\in\mathbf P'\times\mathbf P$, define, as in Lemma~S.12.1 of the supplement to \citet{RomanoShaikh2012AoS},
\begin{align*}
r_j(\lambda,Q)
&:=
\mathbb E_Q\left[
\left(\frac{W_j-\mu_j(Q)}{\sigma_j(Q)}\right)^2
I\left\{
\left|\frac{W_j-\mu_j(Q)}{\sigma_j(Q)}\right|>\lambda
\right\}
\right],\\
\rho(Q,P)
&:=
\max\left\{
\max_{1\le j\le k}
\int_0^\infty
|r_j(\lambda,Q)-r_j(\lambda,P)|e^{-\lambda}\,d\lambda,
\ \|\Omega(Q)-\Omega(P)\|_{\max}
\right\}.
\end{align*}
Consider first arbitrary deterministic sequences $Q_n\in\mathbf P'$ and $P_n\in\mathbf P$ satisfying
\begin{align}
\rho(Q_n,P_n)&\to0,
\label{eq:Qn-rho-close}\\
\max_{1\le j\le k}
\left|
\frac{\sigma_j(Q_n)}{\sigma_j(P_n)}-1
\right|&\to0.
\label{eq:Qn-scale-close}
\end{align}
The first paragraph of the proof of Lemma~S.12.1 of \citet{RomanoShaikh2012AoS} shows that \eqref{eq:Qn-rho-close}, together with the standardized uniform integrability of $P_n$, implies
\begin{align}
\lim_{\lambda\to\infty}\limsup_{n\to\infty}r_j(\lambda,Q_n)=0,
\qquad j=1,\ldots,k.
\label{eq:Qn-sequential-UI}
\end{align}
Thus the sequence $Q_n$ satisfies the same sequential standardized uniform integrability condition. We record explicitly that this is enough to apply Lemmas \ref{alem:consistency-sample-corr}--\ref{alem:uniformCLT} to the class $\{Q_n:n\ge1\}$. Because $r_j(\lambda,Q_n)$ is nonincreasing in $\lambda$, \eqref{eq:Qn-sequential-UI} implies that, for every $\varepsilon>0$, there exist $\lambda_0<\infty$ and $N<\infty$ such that
\[
\sup_{n\ge N}r_j(\lambda,Q_n)\le 2\varepsilon
\qquad\text{for all }\lambda\ge\lambda_0.
\]
For each of the finitely many $n<N$, finite second moments imply $r_j(\lambda,Q_n)\to0$ as $\lambda\to\infty$. Hence
\begin{align}
\lim_{\lambda\to\infty}
\sup_{n\ge1}r_j(\lambda,Q_n)=0,
\qquad j=1,\ldots,k,
\label{eq:Qn-class-UI}
\end{align}
so the class $\{Q_n:n\ge1\}$ satisfies \eqref{def:uniform-integrability}. Moreover, by the definition of $\rho$ and \eqref{eq:bootstrap-Omega-limit},
\begin{align}
\Omega(Q_n)\to\Omega^*.
\label{eq:Qn-Omega-limit}
\end{align}

Let $\bar W_n(Q_n)$, $S_n(Q_n)$, and $\hat\Omega_n(Q_n)$ denote the sample mean, diagonal matrix of sample standard deviations, and sample correlation matrix computed from an i.i.d. sample of size $n$ from $Q_n$, and write
\[
D(Q_n):=\operatorname{diag}\bigl(\sigma_1(Q_n),\ldots,\sigma_k(Q_n)\bigr).
\]
Lemma \ref{alem:uniformCLT}, applied to the class $\{Q_n:n\ge1\}$ and the standardized variables $D(Q_n)^{-1}\{W_i-\mu(Q_n)\}$, together with \eqref{eq:Qn-Omega-limit}, gives
\begin{align}
D(Q_n)^{-1}\sqrt n\{\bar W_n(Q_n)-\mu(Q_n)\}
\Rightarrow N(0_k,\Omega^*)
\qquad\text{under }Q_n.
\label{eq:Qn-unstudentized-CLT}
\end{align}
By Lemma \ref{alem:consistency-sample-variance}, applied coordinatewise to the class $\{Q_n:n\ge1\}$,
\begin{align}
S_n(Q_n)^{-1}D(Q_n)
\overset{Q_n}{\longrightarrow}I_k.
\label{eq:Qn-own-scale}
\end{align}
Combining \eqref{eq:Qn-own-scale} with \eqref{eq:Qn-scale-close} yields
\begin{align}
S_n(Q_n)^{-1}D(P_n)
&=
S_n(Q_n)^{-1}D(Q_n)D(Q_n)^{-1}D(P_n)
\overset{Q_n}{\longrightarrow}I_k.
\label{eq:Qn-Pn-scale}
\end{align}
Therefore, by Slutsky's theorem,
\begin{align}
\sqrt n\,S_n(Q_n)^{-1}\{\bar W_n(Q_n)-\mu(Q_n)\}
\Rightarrow N(0_k,\Omega^*)
\qquad\text{under }Q_n.
\label{eq:Qn-studentized-CLT}
\end{align}
Finally, Lemma \ref{alem:consistency-sample-corr}, applied to the class $\{Q_n:n\ge1\}$, together with \eqref{eq:Qn-Omega-limit}, gives
\begin{align}
\hat\Omega_n(Q_n)
\overset{Q_n}{\longrightarrow}\Omega^*.
\label{eq:Qn-corr-consistency}
\end{align}

We now replace $Q_n$ by the empirical distribution $\hat P_n$. First, Lemma \ref{alem:consistency-sample-variance} and $\sigma_j(P_n)>0$ give
\begin{align}
P_n\{\min_{1\le j\le k}S_{j,n}>0\}\to1.
\label{eq:Phat-positive-variances}
\end{align}
Hence $\hat P_n\in\mathbf P'$ with probability approaching one, so $\rho(\hat P_n,P_n)$ is well defined on an event whose probability tends to one; on the complementary event, it may be defined arbitrarily. Lemma~S.12.2 of the supplement to \citet{RomanoShaikh2012AoS} establishes
\[
\int_0^\infty
|r_j(\lambda,\hat P_n)-r_j(\lambda,P_n)|e^{-\lambda}\,d\lambda
\overset{P_n}{\longrightarrow}0,
\qquad j=1,\ldots,k,
\]
and Lemma \ref{alem:consistency-sample-corr}, which is Lemma~S.7.1 of the same supplement, gives
$\|\Omega(\hat P_n)-\Omega(P_n)\|_{\max}\overset{P_n}{\to}0$.
Consequently,
\begin{align}
\rho(\hat P_n,P_n)\overset{P_n}{\longrightarrow}0.
\label{eq:Phat-rho-close}
\end{align}
Furthermore, since $\sigma_j(\hat P_n)=S_{j,n}$ on the event in \eqref{eq:Phat-positive-variances}, Lemma \ref{alem:consistency-sample-variance}, which is Lemma~S.6.1 of the supplement to \citet{RomanoShaikh2012AoS}, implies
\begin{align}
\max_{1\le j\le k}
\left|
\frac{\sigma_j(\hat P_n)}{\sigma_j(P_n)}-1
\right|
=\max_{1\le j\le k}
\left|
\frac{S_{j,n}}{\sigma_j(P_n)}-1
\right|
\overset{P_n}{\longrightarrow}0.
\label{eq:Phat-scale-close}
\end{align}

To pass from the deterministic $Q_n$ argument to the conditional bootstrap law, consider an arbitrary subsequence. By \eqref{eq:Phat-positive-variances}--\eqref{eq:Phat-scale-close}, there is a further subsequence along which \eqref{eq:Phat-rho-close} and \eqref{eq:Phat-scale-close} hold almost surely and $\hat P_n\in\mathbf P'$ eventually almost surely. For almost every realization of the original sample along that further subsequence, $Q_n=\hat P_n$ is therefore eventually a deterministic sequence satisfying \eqref{eq:Qn-rho-close}--\eqref{eq:Qn-scale-close}. Applying \eqref{eq:Qn-Pn-scale}--\eqref{eq:Qn-corr-consistency} pathwise and using
\[
\mu(\hat P_n)=\bar W_n,
\qquad
\bar W_n(\hat P_n)=\bar W_n^*,
\qquad
S_n(\hat P_n)=S_n^*,
\qquad
\hat\Omega_n(\hat P_n)=\hat\Omega_n^*,
\]
gives
\begin{align}
\sqrt n\,S_n^{*-1}(\bar W_n^*-\bar W_n)
&\rightsquigarrow_{\hat P_n^*}N(0_k,\Omega^*),
\label{eq:bootstrap-centered-clt}\\
S_n^{*-1}D(P_n)
&\overset{\hat P_n^*}{\longrightarrow}I_k,
\qquad
\hat\Omega_n^*
\overset{\hat P_n^*}{\longrightarrow}\Omega^*,
\label{eq:bootstrap-studentization}
\end{align}
in $P_n$ probability where $\rightsquigarrow_{\hat P_n^*}$ denotes conditional weak convergence under the bootstrap law. Since the original subsequence was arbitrary, \eqref{eq:bootstrap-centered-clt}--\eqref{eq:bootstrap-studentization} hold along the full sequence. This is the same deterministic-sequence-to-empirical-distribution device used in Theorem~2.4 and in the proofs of Theorems~3.7--3.8 of \citet{RomanoShaikh2012AoS}.

For comparison, the three lemmas stated above give for the original sample
\begin{align}
\sqrt n\,S_n^{-1}(\bar W_n-\mu(P_n))
&\Rightarrow N(0_k,\Omega^*),
\label{eq:sample-centered-clt-bootstrap-proof}\\
S_n^{-1}D(P_n)
&\overset{P_n}{\longrightarrow}I_k,
\qquad
\hat\Omega_n
\overset{P_n}{\longrightarrow}\Omega^*.
\label{eq:sample-studentization-bootstrap-proof}
\end{align}

We now incorporate the drifting nuisance sequence directly through $\sqrt n\,\mu(P_n)$. We prove the result under \eqref{eq:bootstrap-drift-negative}; the sign-reversed case is identical. By \eqref{eq:bootstrap-studentization}, for each $j\in I$,
\begin{align*}
\frac{\sqrt n\,\mu_j(P_n)}{S_{j,n}^*}
=
\frac{\sigma_j(P_n)}{S_{j,n}^*}
\frac{\sqrt n\,\mu_j(P_n)}{\sigma_j(P_n)}
\overset{\hat P_n^*}{\longrightarrow}\delta_j
\end{align*}
in $P_n$-probability, whereas the same expression converges to $-\infty$ conditionally for $j\notin I$. Equations \eqref{eq:sample-studentization-bootstrap-proof} give the corresponding limits for $\sqrt n\,\mu_j(P_n)/S_{j,n}$. Combining these results with \eqref{eq:bootstrap-centered-clt} and \eqref{eq:sample-centered-clt-bootstrap-proof} shows that the bootstrap and sampling studentized vectors have the same finite-coordinate weak limits and the same divergent-coordinate behavior.

The corresponding limits of the test statistics have already been derived
in the proof of Lemma \ref{lem:conv-ip-to-Jn}. In particular, if
$I=[k]$, the common limit is
\begin{align}
V_I^\ell
:=
T^\ell(Z+\delta,\Omega^*),
\qquad
Z\sim N(0_k,\Omega^*),
\label{eq:bootstrap-limit-full-I}
\end{align}
where the second argument is suppressed for $\ell=m,s$. If
$I\subsetneq[k]$, the common limits are
\begin{align}
V_I^m
&:=
\max_{j\in I}(Z_j+\delta_j),
\label{eq:bootstrap-limit-m}\\
V_I^s
&:=
\sum_{j\in I}
(Z_j+\delta_j)^2 I\{Z_j+\delta_j\ge0\},
\label{eq:bootstrap-limit-s}
\end{align}
where
$Z_I\sim N(0_{|I|},\Omega^*_{I\times I})$. Under the uniform
eigenvalue condition, the corresponding limit for $\ell=q$ is
\begin{align}
V_I^q
:=
\inf_{\nu\in\mathbf R_-^{|I|}}
(Z_I+\delta_I-\nu)'
(\Omega^*_{I\times I})^{-1}
(Z_I+\delta_I-\nu).
\label{eq:bootstrap-limit-q}
\end{align}
For $\ell=q$, this is the same partial-divergence reduction established
in the proof of Lemma \ref{lem:conv-ip-to-Jn}; the uniform eigenvalue
condition ensures $\Omega^*\succ0$ and makes the regularization in
Remark \ref{remark:regularized-correlation} asymptotically irrelevant.

Let $F_I^\ell$ denote the CDF of $V_I^\ell$. The proof of
Lemma \ref{lem:conv-ip-to-Jn} also establishes that
\begin{align}
\sup_{x\in\mathbf R}
\left|
P_n\{T_n^\ell\le x\}-F_I^\ell(x)
\right|
\to0.
\label{eq:sampling-to-common-limit-bootstrap-proof}
\end{align}
For $\ell=m$, this follows from continuity of the limiting CDF and
P\'olya's theorem. For $\ell\in\{s,q\}$, the limiting CDF may have an
atom at zero, and the proof of Lemma \ref{lem:conv-ip-to-Jn} establishes
uniform convergence by separately verifying convergence at zero and
then applying the usual P\'olya argument on $(0,\infty)$.

The same argument applies to the conditional bootstrap distribution.
Indeed, \eqref{eq:bootstrap-centered-clt} and
\eqref{eq:bootstrap-studentization} give the conditional analogues of
the centered CLT and studentization results used in the proof of
Lemma \ref{lem:conv-ip-to-Jn}. Moreover,
\[
\frac{\sqrt n\,\mu_j(P_n)}{S_{j,n}^*}
\overset{\hat P_n^*}{\longrightarrow}\delta_j
\quad\text{for }j\in I,
\qquad
\frac{\sqrt n\,\mu_j(P_n)}{S_{j,n}^*}
\overset{\hat P_n^*}{\longrightarrow}-\infty
\quad\text{for }j\notin I,
\]
in $P_n$-probability. Therefore, repeating the deterministic
partial-divergence and split-at-zero arguments from the proof of
Lemma \ref{lem:conv-ip-to-Jn} conditionally yields
\begin{align}
\sup_{x\in\mathbf R}
\left|
J_n^\ell(x,\sqrt n\,\mu(P_n),\hat P_n)
-
F_I^\ell(x)
\right|
\overset{P_n}{\longrightarrow}0.
\label{eq:bootstrap-to-common-limit-bootstrap-proof}
\end{align}
Combining
\eqref{eq:sampling-to-common-limit-bootstrap-proof} and
\eqref{eq:bootstrap-to-common-limit-bootstrap-proof} by the triangle
inequality proves \eqref{eq:bootstrap-conv-to-Jn}.
\end{proof}

We now prove uniform asymptotic validity when
\[
\hat c_n^\ell=c_n^{\ell*}(1-\alpha,\hat P_n).
\]
Suppose, by way of contradiction, that the conclusion of Theorem
\ref{thm:LFtest-uniform-validity} fails. Then there exist $\eta>0$, a
subsequence $n_l$, and $P_{n_l}\in\mathbf P_0$ such that
\begin{align}
P_{n_l}\left\{
T_{n_l}^\ell>
c_{n_l}^{\ell*}(1-\alpha,\hat P_{n_l})
\right\}
>
\alpha+\eta
\qquad\text{for every }l.
\label{eq:bootstrap-validity-contradiction}
\end{align}
The subsequence reduction is identical to that in the preceding proof for the
parametric bootstrap critical value. Thus, after passing to further subsequences
if necessary and using sign symmetry, we may assume
$\mu(P_{n_l})\in\mathbf R_-^k$ for every $l$,
\[
\Omega(P_{n_l})\to\Omega^*\in\bar{\mathbf O},
\]
and each standardized mean
\[
\frac{\sqrt{n_l}\mu_j(P_{n_l})}{\sigma_j(P_{n_l})},
\qquad j\in[k],
\]
has an extended limit in $[-\infty,0]$. For $\ell=q$, the uniform eigenvalue
condition implies $\Omega^*\succ0$. As in the preceding proof, there are two
cases.

\medskip
\noindent
\textbf{Case 1: all standardized means diverge to $-\infty$.}
Suppose
\[
\frac{\sqrt{n_l}\mu_j(P_{n_l})}{\sigma_j(P_{n_l})}\to-\infty
\qquad\text{for every }j\in[k].
\]
By Lemma \ref{lem:statistic-degenerate-case}, $T_{n_l}^m\overset{P_{n_l}}{\longrightarrow}-\infty,$ $T_{n_l}^s\overset{P_{n_l}}{\longrightarrow}0,$ and, under the uniform eigenvalue condition, $T_{n_l}^q\overset{P_{n_l}}{\to}0$.

It remains only to verify that the nonparametric bootstrap LF critical value
has the same qualitative lower bound used in Case 1 of the preceding
parametric-bootstrap proof. For every fixed $a\in\mathbf R^k$, equations
\eqref{eq:bootstrap-centered-clt}--\eqref{eq:bootstrap-studentization} imply,
at every continuity point $x$ of the limiting CDF,
\begin{align}
J_{n_l}^\ell
(x,D(P_{n_l})a,\hat P_{n_l})
\overset{P_{n_l}}{\longrightarrow}
J^\ell(x,a,\Omega^*),
\label{eq:bootstrap-fixed-shift-convergence}
\end{align}
where the second argument of $T^\ell$ is suppressed for $\ell=m,s$.

For $\ell=m$, the lower-bound argument in Case 1 of the preceding proof
implies that there exists a finite $b\in\mathbf R$ such that
\[
J^m(b,0_k,\Omega^*)<1-\alpha.
\]
Since $0_k$ belongs to the LF parameter space,
\eqref{eq:bootstrap-fixed-shift-convergence} yields
\[
P_{n_l}\{
c_{n_l}^{m*}(1-\alpha,\hat P_{n_l})\le b
\}\to0.
\]
Together with $T_{n_l}^m\to-\infty$ in probability, this gives
\[
P_{n_l}\{
T_{n_l}^m>
c_{n_l}^{m*}(1-\alpha,\hat P_{n_l})
\}\to0.
\]

For $\ell=s$, and for $\ell=q$ under the uniform eigenvalue condition, the
corresponding argument in Case 1 of the preceding proof shows that there exist
$t_\ell>0$ and a finite $M_\ell>0$ such that, with $a_\ell:=(-M_\ell,0,\ldots,0)',$ 
\[
J^\ell(t_\ell,a_\ell,\Omega^*)<1-\alpha.
\]
Because $D(P_{n_l})a_\ell\in\mathbf R_-^k$, it is an admissible nuisance value
in the bootstrap LF supremum. Hence
\eqref{eq:bootstrap-fixed-shift-convergence} gives
\[
P_{n_l}\{
c_{n_l}^{\ell*}(1-\alpha,\hat P_{n_l})\le t_\ell
\}\to0.
\]
Since $T_{n_l}^\ell\to0$ in probability and $t_\ell>0$,
\[
P_{n_l}\{
T_{n_l}^\ell>
c_{n_l}^{\ell*}(1-\alpha,\hat P_{n_l})
\}\to0.
\]
Thus Case 1 contradicts \eqref{eq:bootstrap-validity-contradiction}.

\medskip
\noindent
\textbf{Case 2: at least one standardized mean has a finite limit.}
There exists a nonempty set $I\subseteq[k]$ such that
\[
\frac{\sqrt{n_l}\mu_j(P_{n_l})}{\sigma_j(P_{n_l})}
\to\delta_j\in(-\infty,0]
\quad\text{for }j\in I,
\qquad
\frac{\sqrt{n_l}\mu_j(P_{n_l})}{\sigma_j(P_{n_l})}
\to-\infty
\quad\text{for }j\notin I.
\]
This case is the bootstrap analogue of Case 2 in the preceding proof. The only
change is that Lemma \ref{lem:conv-ip-to-Jn}, which approximates the sampling
CDF by its Gaussian analogue, is replaced by Lemma
\ref{lem:bootstrap-conv-to-Jn}. In particular,
\begin{align}
\sup_{x\in\mathbf R}
\left|
J_{n_l}^\ell
\left(x,\sqrt{n_l}\mu(P_{n_l}),\hat P_{n_l}\right)
-
P_{n_l}\{T_{n_l}^\ell\le x\}
\right|
\overset{P_{n_l}}{\longrightarrow}0.
\label{eq:bootstrap-case2-cdf-close}
\end{align}
Moreover, because
$\sqrt{n_l}\mu(P_{n_l})\in\mathbf R_-^k$, the definition of the
nonparametric bootstrap LF critical value gives
\begin{align}
c_{n_l}^{\ell*}(1-\alpha,\hat P_{n_l})
\ge
(J_{n_l}^\ell)^{-1}
\left(
1-\alpha,
\sqrt{n_l}\mu(P_{n_l}),
\hat P_{n_l}
\right).
\label{eq:bootstrap-LF-dominates-oracle-quantile}
\end{align}

Fix $\varepsilon>0$, and let
\[
\delta_l
:=
P_{n_l}\left\{
\sup_{x\in\mathbf R}
\left|
J_{n_l}^\ell
\left(x,\sqrt{n_l}\mu(P_{n_l}),\hat P_{n_l}\right)
-
P_{n_l}\{T_{n_l}^\ell\le x\}
\right|
>\varepsilon
\right\}.
\]
By Lemma \ref{lem:bootstrap-conv-to-Jn}, $\delta_l\to0$. Part (vi) of
Lemma A.1 of \citet{RomanoShaikh2012AoS}, applied to the sampling CDF of
$T_{n_l}^\ell$ and the random bootstrap CDF, therefore implies
\[
P_{n_l}\left\{
T_{n_l}^\ell>
(J_{n_l}^\ell)^{-1}
\left(
1-\alpha,
\sqrt{n_l}\mu(P_{n_l}),
\hat P_{n_l}
\right)
\right\}
\le
\alpha+\varepsilon+\delta_l.
\]
Combining this inequality with
\eqref{eq:bootstrap-LF-dominates-oracle-quantile} gives
\[
\limsup_{l\to\infty}
P_{n_l}\left\{
T_{n_l}^\ell>
c_{n_l}^{\ell*}(1-\alpha,\hat P_{n_l})
\right\}
\le
\alpha+\varepsilon.
\]
Since $\varepsilon>0$ is arbitrary, the limit superior is at most $\alpha$,
contradicting \eqref{eq:bootstrap-validity-contradiction}.

The case $\mu(P_{n_l})\in\mathbf R_+^k$ follows by reversing signs. Therefore,
\[
\limsup_{n\to\infty}
\sup_{P\in\mathbf P_0}
P\left\{
T_n^\ell>
c_n^{\ell*}(1-\alpha,\hat P_n)
\right\}
\le\alpha
\]
for $\ell\in\{m,s\}$, and also for $\ell=q$ under
$\inf_{P\in\mathbf P}\lambda_{\min}(\Omega(P))>0$.
This proves Theorem \ref{thm:LFtest-uniform-validity} for the
nonparametric bootstrap critical value.

\subsection{Lemmas for the conditional test}

\begin{lemma}\label{lem:branchwise-conditional-approx}
Consider a sequence $\{P_n\in \mathbf{P}: n \geq 1\}$ where $\mathbf{P}$ is a set of distributions on $\mathbf{R}^k$ satisfying \eqref{def:uniform-integrability}. Let $W_i$, $i=1,2,...,n$, be an i.i.d. sequence of random vectors with distribution $P_n$. Suppose for some non-empty set $I\subset \{1,2, ..., k\}$ 
    \begin{align}\label{eq:neg-branch-regime}
        \frac{\sqrt{n} \mu_j(P_n)}{ \sigma_j(P_n)} \to \delta_j \in(-\infty, 0] \text{ for } j\in I\text{ and }\frac{\sqrt{n} \mu_j(P_n)}{ \sigma_j(P_n)} \to -\infty \text{ for }j \not\in I
    \end{align}
    or
    \begin{align}\label{eq:pos-branch-regime}
        \frac{\sqrt{n} \mu_j(P_n)}{ \sigma_j(P_n)} \to \delta_j \in [0,\infty) \text{ for } j\in I\text{ and }\frac{\sqrt{n} \mu_j(P_n)}{ \sigma_j(P_n)} \to \infty \text{ for }j \not\in I.
    \end{align}
Let $\gamma \in (0,\tfrac12)$ and $\ell\in\{m,s\}$. Then,
\begin{align*}
    \limsup_{n\to\infty} P_n\{\psi_n^{\ell,-}(\gamma)=1\mid I_{\mathrm{neg}}=1\}\le \gamma
    \quad\text{if}\quad
    \liminf_{n\to\infty} P_n\{I_{\mathrm{neg}}=1\}>0
\quad \text{under \eqref{eq:neg-branch-regime},}\\
 \limsup_{n\to\infty} P_n\{\psi_n^{\ell,+}(\gamma)=1\mid I_{\mathrm{pos}}=1\}\le \gamma
    \quad\text{if}\quad
    \liminf_{n\to\infty} P_n\{I_{\mathrm{pos}}=1\}>0
\quad \text{under \eqref{eq:pos-branch-regime}}.
\end{align*}
In addition, if $I=[k]$, then
\[
\limsup_{n\to\infty} P_n\{\psi_n^{\ell,0}(\gamma)=1\mid I_{\mathrm{origin}}=1\}\le \gamma 
\quad\text{if}\quad
\liminf_{n\to\infty} P_n\{I_{\mathrm{origin}}=1\}>0.
\]
Furthermore, if the smallest eigenvalue of $\Omega(P)$ is bounded away from zero uniformly over $\mathbf{P}$,
$\inf_{P \in \mathbf{P}} \lambda_{\min}(\Omega(P))>0$, then the same conclusion holds for $\ell=q.$
\end{lemma}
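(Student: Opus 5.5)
The plan is a subsequence argument combined with a sequential conditional weak convergence. We treat the negative case under \eqref{eq:neg-branch-regime}; the positive case follows by replacing $W_i$ with $-W_i$, and the origin case is analogous. Suppose the claim fails. Then along a subsequence $P_n\{\psi_n^{\ell,-}(\gamma)=1\mid I_{\mathrm{neg}}=1\}>\gamma+\eta$ while $P_n\{I_{\mathrm{neg}}=1\}$ stays bounded away from zero. Pass to a further subsequence with $\Omega(P_n)\to\Omega^*$. By Lemma \ref{lem:kappa-continuity}, $\hat\kappa_n\to\kappa^*$ in probability. Since $\mathcal I_0$ takes finitely many values, pass to a further subsequence on which some fixed set $\mathcal J$ satisfies $\liminf P_n\{I_{\mathrm{neg}}=1,\mathcal I_0=\mathcal J\}>0$ and the conditional rejection probability given this event exceeds $\gamma+\eta/2$. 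This is possible because the unconditional contradiction hypothesis is a mixture over finitely many $\mathcal J$.

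Write $X_n=\sqrt nS_n^{-1}\bar W_n$. Coordinates $j\notin I$ diverge to $-\infty$ in probability, so with probability tending to one they lie in $\mathcal I_-$. A positive-probability event therefore forces $\mathcal J\subseteq I$. For $j\in I$, Lemmas \ref{alem:consistency-sample-variance}--\ref{alem:uniformCLT} give $(X_{n,j})_{j\in I}\Rightarrow Y_I\sim N(\delta_I,\Omega^*_{I\times I})$. Let $\theta_1^n:=\sqrt nS_n^{-1}\mu(P_n)$. Its $I$-coordinates converge to $\delta_I\in\mathbf R_-^{|I|}$ and its other coordinates go to $-\infty$, so $\theta_1^n$ is admissible for the supremum in \eqref{def:cv-conditional-neg-general}. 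It follows that $c^{\ell,-}_{\mathcal J}(1-\gamma,\hat\Omega_n)\ge (L^{\ell,-}_{\mathcal J})^{-1}(1-\gamma,\theta_1^n,\hat\Omega_n)$.

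The key step is to show that the sampling conditional law of $T_n^{\ell,-}$, given the screening event $E_n=\{X_{n,j}<-\hat\kappa_n\ \forall j\notin\mathcal J,\ |X_{n,j}|\le\hat\kappa_n\ \forall j\in\mathcal J\}$, and the Gaussian conditional law $L^{\ell,-}_{\mathcal J}(\cdot,\theta_1^n,\hat\Omega_n)$ converge uniformly to the same limit. That limit is the law of $T^{\ell,-}(Y,\Omega^*;\mathcal J)$ conditional on $E^*=\{Y_j<-\kappa^*\ (j\in I\setminus\mathcal J),\ |Y_j|\le\kappa^*\ (j\in\mathcal J)\}$. The divergent coordinates drop out of both conditioning events because their constraint holds with probability tending to one. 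The denominator $P(E^*)$ is positive since $\liminf P_n(E_n)>0$. The boundary of $E^*$ has Gaussian measure zero when $\Omega^*$ has unit diagonal, since each boundary is a hyperplane $\{Y_j=\pm\kappa^*\}$ on a nondegenerate coordinate. By the Portmanteau lemma, the joint probabilities $P_n\{T_n^{\ell,-}\le x, E_n\}\to P\{T^{\ell,-}(Y)\le x, E^*\}$ at continuity points. The Gaussian analogue follows from the deterministic-sequence continuity argument in the proof of Lemma \ref{lem:conv-ip-to-Jn}, applied with $(\theta_1^n,\hat\Omega_n)$.

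Uniformity in $x$ comes from Pólya's theorem. For $\ell=m$ the limit is continuous, which we argue as in Lemma \ref{lem:strictly_increasing_dist_T}, since conditioning on a positive-probability set preserves the absence of atoms. For $\ell=s,q$ there may be an atom at zero, which we handle by the split-at-zero argument of Lemma \ref{lem:conv-ip-to-Jn}, checking convergence at $0$ separately as a cone event intersected with $E^*$. Lemma A.1(vi) of \citet{RomanoShaikh2012AoS}, applied to the conditional laws, then bounds the conditional rejection probability by $\gamma+o(1)$, which is a contradiction. For $\ell=q$, we restrict to $\{\hat\Omega_n\succ0\}$ using Lemma \ref{lem:sample-corr-positive-definite}.

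For the origin case, $I=[k]$ and no coordinate diverges. The same argument applies with $E^*=\{|Y_j|\le\kappa^*\ \forall j\}$ and the sign of $\theta_1^n$ determining admissibility in \eqref{def:cv-conditional-origin}.

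The main obstacle is the conditional convergence step, specifically controlling the random threshold $\hat\kappa_n$ inside the conditioning event. The plan is to sandwich $E_n$ between the events with fixed thresholds $\kappa^*\pm\epsilon$ and let $\epsilon\downarrow0$, using zero boundary mass.
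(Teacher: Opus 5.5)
Your argument is essentially the paper's. The paper likewise splits $\{I_{\mathrm{neg}}=1\}$ into the finitely many events $\{\mathcal I_0=A,\ \mathcal I_-=[k]\setminus A\}$; it bounds every component rather than extracting one bad $\mathcal J$, which is equivalent. It then discards $A\not\subseteq I$ and configurations with vanishing limit probability, proves numerator/denominator convergence of the conditional CDF, and compares with the Gaussian conditional quantile at an admissible local drift.

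The only difference is the finish. The paper uses pointwise convergence plus a one-sided quantile comparison, and treats $c^{\ell,-}_A=0$ for $\ell=s,q$ separately. You instead upgrade to sup-norm convergence (P\'olya plus split-at-zero) and apply Lemma~A.1(vi) of \citet{RomanoShaikh2012AoS}, as in Case~2 of the proof of Theorem~\ref{thm:LFtest-uniform-validity}. Your route avoids that case split.

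One step needs repair. Convergence of $\theta_1^n=\sqrt n S_n^{-1}\mu(P_n)$ to limits in $(-\infty,0]$ and $-\infty$ does not place $\theta_1^n$ in $\mathbf R_-^k$. The lemma does not assume $\mu(P_n)\in\mathbf R_-^k$: for example, $\sqrt n\mu_j(P_n)/\sigma_j(P_n)=n^{-1/2}$ has limit $\delta_j=0$ but is positive. So the inequality $c^{\ell,-}_{\mathcal J}(1-\gamma,\hat\Omega_n)\ge (L^{\ell,-}_{\mathcal J})^{-1}(1-\gamma,\theta_1^n,\hat\Omega_n)$ is not justified as written.

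Use the truncation $\theta_1^n\wedge 0$ instead; the paper uses the deterministic $\delta_n=(\sqrt n\mu_j(P_n)/\sigma_j(P_n)\wedge 0)_{j\in[k]}$. It is admissible, has the same limits and hence the same limiting conditional law, and the rest of your argument is unchanged. The same fix is needed for admissibility in the origin case.
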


\begin{proof}
We prove the negative-branch statement under \eqref{eq:neg-branch-regime}. The positive-branch statement follows by the same argument after reversing signs, and the origin branch can be shown similarly. To prove via contradiction, suppose that there exists a subsequence ${n_l}$ and $\eta>0$ such that 
\begin{align}\label{eq:contra-neg-branch}
  P_{n_l}\{\psi_{n_l}^{\ell,-}(\gamma)=1\mid I_{\mathrm{neg}}=1\}> \gamma+\eta \qquad \text{ for all }{n_l}.
\end{align}
By Bolzano-Weierstrass theorem there exists a further subsequence (which we still denote with $n_l$ for simplicity) $\Omega(P_{n_l})\to \Omega^\ast\in \bar{\mathbf O}$. By Lemma \ref{alem:consistency-sample-corr} and the triangle inequality, $\|\hat{\Omega}_n -\Omega^*\|\overset{P_{n_l}}{\to}0$. By continuity of $\kappa_\tau(\Omega)$, 
\begin{align}\label{eq:kappa-convergence}
    \hat{\kappa}_{n_l} := \kappa_\tau (\hat{\Omega}_{n_l}) \overset{P_{n_l}}{\to}\kappa_\tau ({\Omega}^*)  =:\kappa^*
\end{align}
By Lemma \ref{alem:consistency-sample-variance} and by the techniques used in Lemma \ref{lem:statistic-degenerate-case}, $ \frac{\sqrt n\,\mu_j(P_{n_l})}{S_{j,{n_l}}} \overset{P_{n_l}}{\to}\delta_j \in (-\infty,0] $ for $j\in I$ and $\frac{\sqrt {n_l}\,\mu_j(P_{n_l})}{S_{j,{n_l}}}\overset{P_{n_l}}{\to}-\infty $ for $j\notin I$. Moreover, let
\begin{align*}
    X_{n_l} := (X_{1,n_l},..., X_{k,n_l}) = \left(
\tfrac{\sqrt {n_l}\,\bar W_{1,{n_l}}}{S_{1,{n_l}}}, ..., \tfrac{\sqrt {n_l}\,\bar W_{k,{n_l}}}{S_{k,{n_l}}}
\right).
\end{align*}
By Lemma \ref{alem:uniformCLT} and Slutsky's theorem,
\begin{align}\label{eq:conditional-test-weak-convergence}
    \left(
X_{j,n_l}
\right)_{j\in I}
\Rightarrow Y_I^\ast,
\qquad
Y_I^\ast\sim N(\delta_I,\Omega^\ast_{I\times I}),
\end{align}
where \(\delta_I=(\delta_j)_{j\in I}\).

Define a class of nonempty proper subsets of $[k]$ as 
\[\mathcal A:=\{A\subset[k]:A\neq\varnothing,\ [k]\setminus A\neq\varnothing\}\]
and define the event indexed by $A \in \mathcal{A}$ as 
\[
D_{n_l}^-(A):=\{\mathcal I_0=A,\ \mathcal I_-=[k]\setminus A\}.
\]
Since the events \(\{D_{n_l}^-(A): A\in\mathcal{A}\}\) are disjoint and partition \(\{I_{\mathrm{neg}}=1\}\), we have
\begin{align}\label{eq:conditional-test-decomposition}
P_{n_l}\{\psi_{n_l}^{\ell,-}(\gamma)=1\mid I_{\mathrm{neg}}=1\}
=
\textstyle\sum_{A\in\mathcal A}
P_{n_l}\{D_{n_l}^-(A)\mid I_{\mathrm{neg}}=1\}\,
P_{n_l}\{\psi_{n_l}^{\ell,-}(\gamma)=1\mid D_{n_l}^-(A)\}
\end{align}
where $P_{n_l}\{\psi_{n_l}^{\ell,-}(\gamma)=1\mid D_{n_l}^-(A)\}$ is set to zero if $P_{n_l}\{D_{n_l}^-(A)\}=0.$ 

Our goal is to show that the limit probability of $P_{n_l}\{\psi_{n_l}^{\ell,-}(\gamma)=1\mid D_{n_l}^-(A)\}$ is bounded above by $\gamma$ for asymptotically relevant cases where $\lim_{n_l\to\infty}P_{n_l}\{D_{n_l}^-(A)\}>0$. If \(A\not\subset I\), then there exists \(j\in A\setminus I\) such that $X_{j,n_l}\overset{P_{n_l}}{\to}-\infty$, so
$P_{n_l}\{D_{n_l}^-(A)\} \leq P_{n_l}\{|X_{j,n_l}|\leq \hat\kappa_{n_{l}}\} \to0$, which implies $P_{n_l}\{D_{n_l}^-(A)|I_{\mathrm{neg}}=1\}\to0$. Hence only the cases with \(A\subset I\) are asymptotically relevant. For \(A\subset I\), the limit can be written as below: 
\begin{align*}
\lim_{n_l\to\infty} P_{n_l}\{D_{n_l}^-(A)\}
&=
\lim_{n_l\to\infty}  P_{n_l}\{
X_{j,n_l}<- \hat{\kappa}_{n_l} ~~ \forall j\in [k]\setminus A,~~ 
|X_{j,n_l}|\leq \hat{\kappa}_{n_l} ~~ \forall j\in A\}\\
&\overset{(a)}{=}
\lim_{ {n_l}\to\infty}  P_{n_l}\{
X_{j,n_l}<-\hat{\kappa}_{n_l} ~~ \forall j\in I\setminus A,~~ 
|X_{j,n_l}|\leq \hat{\kappa}_{n_l} ~~ \forall j\in A\}\\
&\overset{(b)}{=} \lim_{ {n_l} \to\infty}  P\{
Y_j^*<-\kappa ^*~~ \forall j\in I\setminus A,~~ 
|Y_j^* |\le \kappa^* ~~ \forall j\in A\}
=: D^{-}(A)
\end{align*}
where (a) holds because $P_{n_l}\{X_{j,n_l}<-\hat\kappa_{n_l}\} \to 1$ for $j \not\in I$ and (b) holds by \eqref{eq:kappa-convergence}-\eqref{eq:conditional-test-weak-convergence} and Slutsky's theorem. The limit probability $D^-(A)>0$ if $\Omega^*\succ0$ and it can be zero if $\Omega^*$ is not full-rank. If \(D^-(A)=0\), then \(P_{n_l}\{D_{n_l}^-(A)\}\to0\), so $P_{n_l}\{D_{n_l}^-(A)\mid I_{\mathrm{neg}}=1\}\to0.$ Thus this branch is asymptotically negligible in \eqref{eq:conditional-test-decomposition}. 

It suffices to consider \(A\subset I\) such that \(D^-(A)>0\). For such $A$ and at the continuity point $t$ of the limit $N^{\ell,-}(\cdot;A)$, we have
\begin{align*}
&\lim_{n_l\to\infty} P_{n_l}\{
T_{n_l}^{\ell,-}\le t,\ D_{n_l}^-(A)\}\\
&\overset{(a)}{=}
\lim_{n_l\to\infty}  P_{n_l}\{
T^{\ell,-}(X_{n_l}, \hat{\Omega}_{n_l}, A) \leq t,~
X_{j,n_l}<- \hat{\kappa}_{n_l} ~ \forall j\in I\setminus A,~
|X_{j,n_l}|\leq \hat{\kappa}_{n_l} ~ \forall j\in A\}\\
&\overset{(b)}{=} P\{
T^{\ell,-}(Y^*, \Omega^*, A) \leq t,~
Y_j^*<-\kappa ^*~~ \forall j\in I\setminus A,~~ 
|Y_j^* |\le \kappa^* ~~ \forall j\in A\}
=: N^{\ell, -}(t;A)
\end{align*}
where (a) holds because $\mathcal{I}_0=A$ on $ D_{n_l}^-(A)$ and $P_{n_l}\{X_{j,n_l}<-\hat\kappa_{n_l}\} \to 1$ for $j \not\in I$, and (b) holds by \eqref{eq:kappa-convergence}-\eqref{eq:conditional-test-weak-convergence} and because the map $x\mapsto T^{\ell,-}(x,\Omega,A)$ is continuous for $\ell=m,s$. For $\ell=q$, $(x,\Omega)\mapsto T^{q,-}(x,\Omega,A)$ is continuous in $\mathbf{R}\times \mathbf{O}$. Under $\inf_{P\in\mathbf P}\lambda_{\min}(\Omega(P))>0$, we have $\Omega^*\succ0$ and $\lambda_{\min}(\hat\Omega_{n_l})>\underline\lambda/2$ with probability approaching one for some $\underline\lambda>0$. Therefore (b) holds for $\ell=q$ as well. 

Combining these two results, we obtain
\begin{align}\label{eq:neg-branch-pointwise-convergence}
F^{\ell,-}_{n_l}(t;A)
:=
\frac{
P_{n_l}\{T^{\ell,-}_{n_l}\le t,\ D^-_{n_l}(A)\}
}{
P_{n_l}\{D^-_{n_l}(A)\}
}
\to
\frac{N^{\ell,-}(t;A)}{D^-(A)}=:F^{\ell,-}(t;A)
\end{align}
for every continuity point \(t\) of \(F^{\ell,-}(\cdot;A)\). Let
\[
c_A^{\ell,-}
:=
\inf\{t\in\mathbf R:F^{\ell,-}(t;A)\ge 1-\gamma\}.
\]For \(\ell=m\), the distribution \(F^{m,-}(\cdot;A)\) is continuous on\(\mathbf R\), so at its quantile \(c_A^{m,-}\) as well. For \(\ell=s,q\), \(F^{\ell,-}(\cdot;A)\) is continuous on
\((0,\infty)\), but it may be discontinuous at zero. We therefore distinguish two cases for \(\ell=s,q\): \(c_A^{\ell,-}>0\) and \(c_A^{\ell,-}=0\). The argument used for \(c_A^{\ell,-}>0\) also covers the case \(\ell=m\).

\medskip \noindent
\textit{Case 1: Either \(\ell=m\), or \(\ell\in\{s,q\}\) and \(c_A^{\ell,-}>0\).} Let
\(
\delta_{n_l}
:=
(
\frac{\sqrt{n_l}\mu_1(P_{n_l})}{\sigma_1(P_{n_l})} \wedge0,
\ldots,
\frac{\sqrt{n_l}\mu_k(P_{n_l})}{\sigma_k(P_{n_l})}  \wedge0
).
\)
Then $\delta_{n_l} \in \mathbf{R}^k_-$ and $\delta_{n_l}\to \delta$ in \eqref{eq:neg-branch-regime}. Let
\[
q_A^{\ell,-}(\delta_{n_l},\hat\Omega_{n_l})
:=
(L_A^{\ell,-})^{-1}(1-\gamma,\delta_{n_l},\hat\Omega_{n_l})
\]
be the Gaussian conditional \((1-\gamma)\)-quantile associated with the local sequence
\(\delta_{n_l}\). Repeating the same numerator-denominator argument used above for the
Gaussian conditional law gives
\begin{align}\label{eq:neg-branch-quantile-convergence}
q_A^{\ell,-}(\delta_{n_l},\hat\Omega_{n_l})
\overset{P_{n_l}}{\to}
c_A^{\ell,-}.
\end{align}

For any small \(\varepsilon>0\) such that $c^{\ell,-}_A-\varepsilon$ is a continuity point of $F^{\ell,-}(\cdot;A),$ we have
\begin{align*}
\limsup_{n_l\to \infty}P_{n_l}\{\psi_{n_l}^{\ell,-}(\gamma)=1\mid D^-_{n_l}(A)\}
&\overset{(a)}{\leq}\limsup_{n_l\to \infty}
P_{n_l}\{
T^{\ell,-}_{n_l}>
q_A^{\ell,-}(\delta_{n_l},\hat\Omega_{n_l})
\mid D^-_{n_l}(A)
\} \\
&\overset{(b)}{\leq}\limsup_{n_l\to \infty}
P_{n_l}\{
T^{\ell,-}_{n_l}>c_A^{\ell,-}-\varepsilon
\mid D^-_{n_l}(A)
\}\\
&\overset{(c)}{=}1-F^{\ell,-}(c_A^{\ell,-}-\varepsilon ;A)
\end{align*}
where (a) holds because $q^{\ell, -}_A(\delta_{n_l}, \hat{\Omega}_{n_l})\leq c^{\ell,-}_{\mathcal{I}_0} (1-\gamma, \hat{\Omega}_{n_l})$ conditional on $D^{-}_{n_l} (A)$ by definition of the critical value; (b) holds 
by \eqref{eq:neg-branch-quantile-convergence} and
\(P_{n_l}\{D^-_{n_l}(A)\}\to D^-(A)>0\); and (c) follows from \eqref{eq:neg-branch-pointwise-convergence}. Since $F^{\ell,-}$ can have at most countably many discontinuities, we can let $\varepsilon\to0$ along continuity points. Then,
\begin{align*}
\limsup_{n_l\to\infty}
P_{n_l}\{\psi_{n_l}^{\ell,-}(\gamma)=1\mid D^-_{n_l}(A)\}
&\le
1-F^{\ell,-}(c_A^{\ell,-};A)
\le \gamma 
\end{align*}
for $\ell\in \{m,s\}$ and $\gamma\in(0,\tfrac12)$. If $\inf_{P\in \mathbf{P}} \lambda_{\min}(\Omega(P))>0$, the same conclusion holds for $\ell=q$.

\medskip \noindent
\textit{Case 2: \(\ell\in\{s,q\}\) and \(c_A^{\ell,-}=0\). }Since $\{T^{\ell,-}(x,\Omega,A)=0\}
=
\{x_j\le0 \text{ for all } j\in A\},$ by the same numerator--denominator argument used above,
\begin{align}\label{eq:neg-branch-pointwise-conv-at-zero}
F_{n_l}^{\ell,-}(0;A)\to F^{\ell,-}(0;A).    
\end{align}
Then, we have
\begin{align*}
&\limsup_{n_l\to\infty} P_{n_l}\{\psi_{n_l}^{\ell,-}(\gamma)=1\mid D^-_{n_l}(A)\}
\overset{(a)}{\le}
\limsup_{n_l\to\infty}P_{n_l}\{T_{n_l}^{\ell,-}>0\mid D^-_{n_l}(A)\}
\overset{(b)}{=} 1-F^{\ell,-}(0;A) \overset{(c)}{\leq} \gamma
\end{align*}
where (a) holds because the mapping $x \mapsto T^{\ell,-}(x,A)$ is non-negative so $c^{\ell,-}_A(1-\gamma, \hat{\Omega}_{n_l})\geq0$; (b) holds by \eqref{eq:neg-branch-pointwise-conv-at-zero}; and (c) follows by the definition of the quantile as \(c_A^{\ell,-}=0\). 

\medskip\noindent
Combining Case 1 and Case 2, for every asymptotically relevant \(A\subset I\), we have 
$$\limsup_{n_l\to\infty} P_{n_l}\{\psi_{n_l}^{\ell,-}(\gamma)=1\mid D^-_{n_l}(A)\}\leq \gamma.$$
Since all other \(A\)'s are asymptotically negligible in \eqref{eq:conditional-test-decomposition} and $\mathcal{A}$ is finite, we obtain
\[
\limsup_{l\to\infty}
P_{n_l}\{\psi_{n_l}^{\ell,-}(\gamma)=1\mid I_{\mathrm{neg}}=1\}
\le \gamma
\]
which contradicts \eqref{eq:contra-neg-branch}.

\end{proof}

\begin{lemma}\label{lem:kappa-continuity}
Fix \(\tau\in(0,1/2)\). The map $\theta_2\mapsto\kappa_\tau(\theta_2)$ defined in \eqref{def:kappa-screening-threshold} is continuous on \(\bar{\mathbf O}\).
\end{lemma}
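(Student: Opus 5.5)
The plan is to write $\kappa_\tau(\theta_2)$ as the $(1-\tau)$-quantile of $M(\theta_2):=\max_{1\le j\le k}(\theta_2^{1/2}Z)_j$, $Z\sim N(0_k,I_k)$, and to obtain continuity from two facts. First, the distribution of $M(\theta_2)$ varies weakly continuously in $\theta_2$. Second, the CDF of the limit law is continuous and strictly increasing at the relevant quantile. Lemma 11.2.1 of \cite{LehmannRomano2005testing} then gives convergence of quantiles, which is the same device used in the proof of Lemma \ref{lem:argmax-critical-value-all}.

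\emph{Weak continuity.} Take $\theta_{2,n}\to\theta_2$ in $\bar{\mathbf O}$. The principal square-root map is continuous on positive semidefinite matrices, so $\theta_{2,n}^{1/2}Z\to\theta_2^{1/2}Z$ almost surely. The map $x\mapsto\max_j x_j$ is continuous, hence $M(\theta_{2,n})\to M(\theta_2)$ almost surely and therefore in distribution.

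\emph{Regularity of the limit CDF.} Let $F_{\theta_2}$ denote the CDF of $M(\theta_2)$. To show it is continuous, write $\theta_2^{1/2}Z\overset{d}{=}A\xi$ with $\xi\sim N(0,I_r)$ and rows $\|a_j\|=1$, as in the proof of Lemma \ref{lem:strictly_increasing_dist_T}. Then $\{M=t\}\subseteq\bigcup_j\{a_j'\xi=t\}$. Each set in this union is a hyperplane, so it has probability zero, and $F_{\theta_2}$ has no atoms. The same argument gives strict monotonicity. The support of $M$ is $\mathbf R$, because $M\ge (A\xi)_1\sim N(0,1)$ and $M\le\max_j|(A\xi)_j|$. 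Hence the image of the continuous map $\xi\mapsto\max_j a_j'\xi$ is a connected unbounded interval. For any $x$, the preimage of $(x,x+\varepsilon)$ is therefore a nonempty open set, which has positive Gaussian probability. Consequently $F_{\theta_2}$ is strictly increasing everywhere and $F_{\theta_2}^{-1}(1-\tau)$ is a well-defined point of strict increase.

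\emph{Conclusion.} Weak convergence $F_{\theta_{2,n}}\Rightarrow F_{\theta_2}$, together with continuity and strict increase of $F_{\theta_2}$ at its $(1-\tau)$-quantile, gives $\kappa_\tau(\theta_{2,n})\to\kappa_\tau(\theta_2)$ by Lemma 11.2.1 of \cite{LehmannRomano2005testing}. The one step requiring care is strict increase at the quantile for singular $\theta_2$, for example $\theta_2=\iota_{k\times k}$. This is handled by the rank-$r$ representation above, since $r\ge1$ always and every row of $A$ has unit norm. The restriction $\tau<1/2$ is not needed for this argument, apart from keeping the quantile interior, and we only use it for that purpose.
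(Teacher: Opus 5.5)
Your route is the same as the paper's: weak convergence of $M(\theta_{2,n})$, no atoms in the limit law, strict increase at the $(1-\tau)$-quantile, then convergence of quantiles. The strict-monotonicity step, however, rests on a false claim. $M(\theta_2)$ does not have support $\mathbf R$ for every $\theta_2\in\bar{\mathbf O}$. Take $k=2$ and $\theta_2=\left(\begin{smallmatrix}1&-1\\-1&1\end{smallmatrix}\right)$. Then $r=1$, $a_1=1$, $a_2=-1$, and $M=\max\{\xi,-\xi\}=|\xi|$, whose CDF vanishes on $(-\infty,0)$. Your two bounds do not yield the claim either. The bound $M\ge(A\xi)_1$ shows only that $M$ is unbounded above, and $M\le\max_j|(A\xi)_j|$ says nothing about the lower end. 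So the assertions that the preimage of $(x,x+\varepsilon)$ is nonempty for every $x$, and that $F_{\theta_2}$ is strictly increasing everywhere, fail exactly for singular matrices with negative correlations. The example you flagged, $\iota_{k\times k}$, is harmless, since there $M=\xi_1\sim N(0,1)$.

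The repair is short and is what the paper does. Let $h(\xi):=\max_j a_j'\xi$. It satisfies $h(0)=0$ and $h(sa_1)\ge s$, so its image is an interval containing $[0,\infty)$. Your open-preimage argument then gives strict increase of $F_{\theta_2}$ on $[0,\infty)$. It remains to place the quantile in that range: $F_{\theta_2}(0)\le P\{(A\xi)_1\le0\}=\tfrac12<1-\tau$, so $\kappa_\tau(\theta_2)>0$. This is the role the hypothesis $\tau<1/2$ plays in the paper's proof.

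Your remark that $\tau<1/2$ is unnecessary can be rescued, but only by a further step. Since $h$ is positively homogeneous, its image is either $\mathbf R$ or $[0,\infty)$. In the latter case $F_{\theta_2}(0)=P\{M=0\}=0$ by your hyperplane no-atom argument, so the quantile lies in the region of strict increase for every $\tau\in(0,1)$. With either correction, Lemma 11.2.1 of \cite{LehmannRomano2005testing} completes the proof. Equivalently, so does the paper's direct sandwich $F_{\theta_{2,n}}(\kappa_\tau(\theta_2)-\varepsilon)<1-\tau<F_{\theta_{2,n}}(\kappa_\tau(\theta_2)+\varepsilon)$ for large $n$.
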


\begin{proof}
Let \(\theta_{2,n}\to\theta_2\), and write $X_n\sim N(0,\theta_{2,n})$, $X\sim N(0,\theta_2)$, $M_n:=\max_j X_{n,j}$ and $M:=\max_j X_j$. Convergence of the covariance matrices implies \(X_n\Rightarrow X\), and hence \(M_n\Rightarrow M\) by the continuous mapping theorem. Let \(F_{\theta_2}\) denote the CDF of \(M\). Since \(F_{\theta_2}\) is continuous, $F_{\theta_{2,n}}(x)\to F_{\theta_2}(x)$ for every $x\in\mathbf R$. Moreover, \(F_{\theta_2}\) is strictly increasing on \([0,\infty)\). Since $F_{\theta_2}(0)
=P(X_j\leq0\ \forall j)
\leq P(X_1\leq0)=\tfrac12<1-\tau,$ we have \(\kappa_\tau(\theta_2)>0\). Thus, for every sufficiently small \(\varepsilon>0\),
continuity and strict monotonicity give $F_{\theta_2}(\kappa_\tau(\theta_2)-\varepsilon)<1-\tau
<
F_{\theta_2}(\kappa_\tau(\theta_2)+\varepsilon).$
The pointwise convergence above implies that, for all sufficiently large
\(n\), $F_{\theta_{2,n}}(\kappa_\tau(\theta_2)-\varepsilon)< 1-\tau
<
F_{\theta_{2,n}}(\kappa_\tau(\theta_2)+\varepsilon).$ By the definition of the left quantile, $\kappa_\tau(\theta_2)-\varepsilon
<
\kappa_\tau(\theta_{2,n})
\leq \kappa_\tau(\theta_2)+\varepsilon.$
Hence
\(\kappa_\tau(\theta_{2,n})\to\kappa_\tau(\theta_2)\), proving the result.
\end{proof}

For $d\ge2$, $\kappa>0$, $t\in\mathbf R$, and
$\mu\in\mathbf R_+^d$, define
\begin{align}\label{def:Lzero-selective-general}
    \mathcal L_{d,\kappa}(t,\mu)
:=
P\left\{
T_d^m(Z_\mu)\le t
\ \middle|\
|Z_{\mu,j}|\le\kappa,\ j=1,\ldots,d
\right\},
\qquad
Z_\mu\sim N(\mu,I_d),
\end{align}
where $T_d^m(x)
:=
\min\{
\max_{1\le j\le d}x_j,\
\max_{1\le j\le d}(-x_j)
\}.$

\begin{lemma}
\label{lem:k2-origin-schur}
Fix $\kappa>0$ and $t\in[0,\kappa)$. For every $s\ge0$, the map $r\mapsto
\mathcal L_{2,\kappa}\bigl(t,(s-r,r)\bigr)$ in \eqref{def:Lzero-selective-general} is weakly increasing on $[0,s/2]$. Consequently, $\mu\mapsto\mathcal L_{2,\kappa}(t,\mu)$ is Schur-concave on $\mathbf R_+^2$; that is, for any $\mu,\widetilde\mu\in\mathbf R_+^2$ such that $\mu_1+\mu_2=\widetilde\mu_1+\widetilde\mu_2$ and $\max\{\mu_1,\mu_2\}\geq \max\{\tilde\mu_1,\tilde\mu_2\}$, $\mathcal L_{2,\kappa}(t,\mu)
\le
\mathcal L_{2,\kappa}(t,\widetilde\mu)$.
\end{lemma}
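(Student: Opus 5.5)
The plan is to work with the complementary probability and reduce the claim to a one-parameter monotonicity in an exponential family. Write $X_j:=Z_{\mu,j}\mid\{|Z_{\mu,j}|\le\kappa\}$. The $X_j$ are independent under the product conditioning, and each has density proportional to $e^{\mu_j x}\phi(x)I\{|x|\le\kappa\}$, the same truncated-Gaussian family used in the proof of Proposition~\ref{prop:conditional-lf-origin-sq}.

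For $d=2$ and $t\ge0$, Lemma~\ref{lem:int-area} gives $\{T_2^m(x)>t\}=\{x_1>t,\,x_2<-t\}\cup\{x_1<-t,\,x_2>t\}$, and this union is disjoint. Hence
\[
1-\mathcal L_{2,\kappa}(t,\mu)=g(\mu_1)h(\mu_2)+h(\mu_1)g(\mu_2),
\]
where $g(u):=P\{X_u>t\}$ and $h(u):=P\{X_u<-t\}=g(-u)$. It therefore suffices to show that $G(r):=1-\mathcal L_{2,\kappa}(t,(s-r,r))$ is weakly decreasing on $[0,s/2]$. The Schur-concavity statement then follows, because for a fixed sum $\mu_1+\mu_2=s$ and $\mu\in\mathbf R_+^2$, the ordering $\max\{\mu_1,\mu_2\}\ge\max\{\widetilde\mu_1,\widetilde\mu_2\}$ means exactly that $\mu$ is farther from the balanced point $(s/2,s/2)$ along the segment, and $\mathcal L_{2,\kappa}$ is symmetric under swapping coordinates.

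The key step is to write $G$ as a ratio of integrals over $[-\kappa,\kappa]^2$ against the base measure $\phi(x_1)\phi(x_2)\,dx$, and to symmetrize both the numerator and the denominator under $x_1\leftrightarrow x_2$. With $a:=s/2-r\in[0,s/2]$, $S:=x_1+x_2$, $D:=|x_1-x_2|$ and weight $w(x):=e^{sS/2}\phi(x_1)\phi(x_2)$, I expect
\[
G(r)=\frac{\int_{R}\cosh(aD)\,w\,dx}{\int_{[-\kappa,\kappa]^2}\cosh(aD)\,w\,dx},
\qquad R:=\{T_2^m(x)>t\}\cap[-\kappa,\kappa]^2 .
\]
This holds because $e^{(s-r)x_1+rx_2}+e^{rx_1+(s-r)x_2}=2e^{sS/2}\cosh(aD)$, and both $R$ and the square are swap-invariant. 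Showing that $G$ decreases in $r$ is then the same as showing that $G$ increases in $a$. Differentiating $\log G$ in $a$ gives
\[
\frac{d}{da}\log G=E_a[f(D)\mid R]-E_a[f(D)],
\]
where $f(D):=D\tanh(aD)$ is nondecreasing in $D$, and $E_a$ is expectation under the tilted probability proportional to $\cosh(aD)\,w$ on the square. So the claim reduces to $\operatorname{Cov}_a\bigl(f(D),1_R\bigr)\ge0$.

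The main obstacle is this covariance inequality. My first attempt is to condition on $S$. For fixed $S$, the region $R$ is the upper set $\{D>2t+|S|\}$ in $D$, intersected with the $S$-section of the square. Both $f(D)$ and $1_R$ are then nondecreasing in $D$, so the one-dimensional Chebyshev (Harris) inequality gives $E_a[f1_R\mid S]\ge E_a[f\mid S]\,P_a(R\mid S)$. What remains is the between-$S$ term $\operatorname{Cov}_a\bigl(E_a[f\mid S],P_a(R\mid S)\bigr)\ge0$. For this I would try to show that both conditional quantities are nonincreasing in $|S|$, since the available $D$-range on the square shrinks as $|S|$ grows and the threshold $2t+|S|$ rises. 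I would then use symmetry of the law of $S$ about a center after pairing $S$ with $-S$, or simply Chebyshev again in the variable $|S|$.

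If that monotonicity in $|S|$ fails for the tilted law, I would fall back on a direct computation of $G'(r)$. I would express it through $g,h$ and the exponential-family identities $g'(u)=\operatorname{Cov}_u(X,1\{X>t\})\ge0$ and $h'(u)\le0$, and then compare the two resulting terms at $u=s-r$ and $u=r$ for $r\le s/2$. That comparison would use the log-concavity facts already used in Lemma~\ref{lem:conditional-interval-logconcavity}.
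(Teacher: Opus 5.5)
Your proposal is correct. It starts as the paper does: both use sum/difference coordinates and symmetrize under $x_1\leftrightarrow x_2$, so the parameter enters only through a $\cosh$ tilt of the difference coordinate. Your $a$ is the paper's $\delta$. The key monotonicity step, however, is organized differently.

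\textbf{The paper's route.} The paper conditions on the \emph{difference} coordinate $|V_2|=D/2$. Because the parameter tilts only the law of $|V_2|$, the conditional probability $g_a(v_2)$ of the rejection region given $|V_2|=v_2$ does not depend on the parameter. It is also nondecreasing in $v_2$, since the numerator interval in $V_1$ expands while the denominator interval shrinks. The marginal of $|V_2|$, with density proportional to $d_a(v_2)\cosh(2\delta v_2)$, has monotone likelihood ratio in $\delta$. Stochastic dominance then finishes the argument in one step, with no differentiation.

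\textbf{Your route.} You differentiate in the tilt, reduce the claim to $\operatorname{Cov}_a(f(D),1_R)\ge0$, and then condition on the \emph{sum} $S$. This costs two Chebyshev steps, but your main line does go through, so the fallback is unnecessary.

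The fact you should state explicitly, because it is what justifies ``the available $D$-range shrinks,'' is the factorization $\phi(x_1)\phi(x_2)\propto e^{-S^2/4}e^{-D^2/4}$. It implies that the tilted conditional law of $D$ given $S$ has density proportional to $\cosh(aD)e^{-D^2/4}$ on $[0,2\kappa-|S|]$. That is a fixed density truncated at a point that decreases in $|S|$. Consequently:
\begin{itemize}
\item $E_a[f(D)\mid S]$ is nonincreasing in $|S|$;
\item $P_a(R\mid S)=P_a(D>2t+|S|\mid S)$ is nonincreasing in $|S|$.
\end{itemize}
Chebyshev's inequality in the single variable $|S|$ then handles the between-$S$ term, with no symmetry of the law of $S$ needed.

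If you condition your covariance on $D$ instead of $S$, you get $\operatorname{Cov}_a\bigl(f(D),P_a(R\mid D)\bigr)\ge0$ in a single step. Here $P_a(R\mid D)$ is exactly the paper's parameter-free, nondecreasing $g_a$, so this is essentially the paper's proof in differential form.

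Your version uses a generic exponential-family device: the derivative of a log-probability in the tilt is a covariance. The paper's version is shorter and avoids differentiating under the integral.
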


\begin{proof}
We start by transforming the normal vector into $(V_1, V_2)$:
\[
V=(V_1, V_2):= (\tfrac{Z_{1}+Z_{2}}{2},~
\tfrac{Z_{1}-Z_{2}}{2})
\sim N((a,\delta), \tfrac12 I_2).
\]
where $a= s/2$ and $\delta=s/2-r \in[0,a].$ The major advantage of this transformation is that we can write the conditional CDF in terms of $V_2$: 
\begin{align*}
    1-\mathcal L_{2,\kappa}(t,(s-r,r))
    =P\{ |V_2| > t+ |V_1| \mid |V_1|+ |V_2|\leq \kappa\}=:\bar{\mathcal{L}}_{2,\kappa}(t,(a,\delta)),
\end{align*}
where it follows from 
\begin{gather*}
\{T^{m}_2(Z)>t\}
=\{|V_2|>t+|V_1|\},\quad 
\{|Z_{1}|\le \kappa,\ |Z_{2}|\le \kappa\}
=\{|V_1|+|V_2|\le \kappa\}.
\end{gather*}
Using the change of variable $v_2 = -v_2'$, write $\bar{\mathcal{L}}_{2,\kappa}(t,(a,\delta))$ as
\begin{align*}
\bar{\mathcal{L}}_{2,\kappa}(t,(a,\delta))
=
\frac{
\iint_{\{|v_1|+|v_2|\le\kappa,\ |v_2|>t+|v_1|\}}
e^{-v_1^2-v_2^2+2av_1+2\delta v_2}\,dv_1dv_2
}{
\iint_{\{|v_1|+|v_2|\le\kappa\}}
e^{-v_1^2-v_2^2+2av_1+2\delta v_2}\,dv_1dv_2
}
=
\frac{
\int_0^{\kappa} n_a(v_2)\cosh(2\delta v_2)\,dv_2
}{
\int_0^{\kappa} d_a(v_2)\cosh(2\delta v_2)\,dv_2
},
\end{align*}
where $\cosh(2\delta v_2) = (e^{2\delta v_2} +e^{-2\delta v_2})/2$ and 
\begin{align*}
d_a(v_2)
:=
2e^{-v_2^2}\int_{-(\kappa-v_2)}^{\kappa-v_2} e^{-v_1^2+2av_1}\,dv_1,
\quad
n_a(v_2)
:=
2e^{-v_2^2} I\{v_2>t\}\int_{-m(v_2)}^{m(v_2)} e^{-v_1^2+2av_1}\,dv_1,
\end{align*}
with $m(v_2):= \min\{v_2-t,\kappa-v_2\}$. Furthermore, write
\begin{align}
\bar{\mathcal{L}}_{2,\kappa}(t,(a,\delta))
=
\int_0^{\kappa} g_a(v_2)\,\nu_\delta(dv_2),
\label{eq:Ka-delta}
\end{align}
by defining 
\[
g_a(v_2):= \frac{n_a(v_2)}{d_a(v_2)}
\quad \text{for }v_2\in[0,\kappa), \qquad \nu_\delta(dv_2)
=
\frac{d_a(v_2)\cosh(2\delta v_2)\,dv_2}
{\int_0^{\kappa} d_a(u)\cosh(2\delta u)\,du}.
\]

We show that the family \(\{\nu_\delta:\delta\ge 0\}\) is stochastically
increasing in \(\delta\). Fix \(0\le \delta_1<\delta_2\). Then
\[
\frac{d\nu_{\delta_2}}{d\nu_{\delta_1}}(v_2)
\propto
\frac{\cosh(2\delta_2 v_2)}{\cosh(2\delta_1 v_2)}.
\]
Its logarithmic derivative is positive for all $v_2>0$:
\[
\frac{d}{dv_2}
\log\frac{\cosh(2\delta_2 v_2)}{\cosh(2\delta_1 v_2)}
=
2\delta_2\tanh(2\delta_2 v_2)-2\delta_1\tanh(2\delta_1 v_2)>0
\]
because for any $v_2>0$ the map \(u\mapsto u\tanh(uv_2)\) is strictly increasing on
\([0,\infty)\) as 
\[
\frac{d}{du}\{u\tanh(uv_2)\}
=
\tanh(uv_2)+uv_2\,\mathrm{sech}^2(uv_2)>0.
\]
Therefore \(d\nu_{\delta_2}/d\nu_{\delta_1}\) is increasing in \(v_2\). That is, the family
\(\{\nu_\delta:\delta\ge 0\}\) has monotone likelihood ratio in \(v_2\), and is
stochastically increasing in \(\delta\) by Theorem 1.C.1 in \cite{shaked2007stochastic-orders}. 

We now show that \(g_a\) is nondecreasing on \([0,\kappa]\). First, if \(v_2\in[0,t]\), then \(n_a(v_2)=0\) and $g_a(v_2)=0.$ Second, if \(v_2\in[(\kappa+t)/2,\kappa]\), then \(n_a(v_2)=d_a(v_2)\) and so $g_a(v_2)=1.$ Finally, if \(v_2\in[t,(\kappa + t)/2]\), then \(m(v_2)=v_2-t\), and therefore
\[
g_a(v_2)
=
\frac{
 \int_{-(v_2-t)}^{v_2-t} e^{-v_1^2+2av_1}\,dv_1
}{
 \int_{-(\kappa-v_2)}^{\kappa-v_2} e^{-v_1^2+2av_1}\,dv_1
}.
\]
As \(v_2\) increases on \([t,(\kappa+t)/2]\), the numerator interval
\([-(v_2-t),v_2-t]\) expands, while the denominator interval
\([-(\kappa-v_2),\kappa-v_2]\) shrinks. Since the integrand
\(e^{-v_1^2+2av_1}\) is strictly positive, the numerator is increasing and the denominator is decreasing in \(v_2\). Hence \(g_a(v_2)\) is increasing on \([t,(\kappa+t)/2]\). Combining the three regions, \(g_a\) is nondecreasing on all of \([0,\kappa]\).

Since \(g_a\) is nondecreasing and \(\nu_\delta\) is stochastically increasing in
\(\delta\), $\bar{\mathcal{L}}_{2,\kappa}(t,(a,\delta))$ is increasing in \(\delta\). Consequently, 
$\mathcal L_{2,\kappa}(t,(s-r,r))=1- \bar{\mathcal{L}}_{2,\kappa}(t,(a,\delta))$ is increasing in \(r\). Finally, $\mathcal L_{2,\kappa}(t,\cdot)$ is symmetric, and for every fixed coordinate sum it increases as the two coordinates become more balanced. Therefore it is Schur-concave on $\mathbf R_+^2$.
\end{proof}

\begin{lemma}\label{lem:origin-schur-k}
Fix an integer $d\ge2$, a constant $\kappa>0$, and
$t\in[0,\kappa)$. Then the map $\mu\mapsto\mathcal L_{d,\kappa}(t,\mu)
$ defined in \eqref{def:Lzero-selective-general} is Schur-concave on $\mathbf R_+^d$. In particular,
\begin{align}\label{eq:general-k-boundary-final}
\mathcal L_{d,\kappa}(t,\mu)
\geq
\mathcal L_{d,\kappa}
(
t,
(\textstyle\sum_{j=1}^d\mu_j,0,\ldots,0)'
)
\qquad
\forall\mu=(\mu_1,...,\mu_d)\in\mathbf R_+^d.\end{align}
\end{lemma}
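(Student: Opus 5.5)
The proof reduces $d$-dimensional Schur-concavity to the two-dimensional result of Lemma \ref{lem:k2-origin-schur}, which is what the paper announces in describing ``Schur concavity first for $k=2$ and then extends it to arbitrary $k\ge2$''. A symmetric function on $\mathbf R_+^d$ is Schur-concave if and only if it is weakly increased by every pairwise Robin Hood transfer. That is, for each pair $(i,j)$, moving mass from the larger of $\mu_i,\mu_j$ to the smaller while keeping $\mu_i+\mu_j$ fixed must not decrease the function. Symmetry of $\mathcal L_{d,\kappa}(t,\cdot)$ under coordinate permutations is immediate, because $T^m_d$ and the box $\{|x_j|\le\kappa\}$ are both permutation invariant. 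So it suffices to show that for fixed $(\mu_3,\ldots,\mu_d)$ the map $(\mu_1,\mu_2)\mapsto \mathcal L_{d,\kappa}(t,\mu)$ is Schur-concave on $\mathbf R_+^2$. Inequality \eqref{eq:general-k-boundary-final} then follows, since $(\sum_j\mu_j,0,\ldots,0)$ majorizes every $\mu\in\mathbf R_+^d$ with the same sum.

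For the pairwise step, I condition on $(X_3,\ldots,X_d)$, where the $X_j$ are independent $N(\mu_j,1)$ variables truncated to $[-\kappa,\kappa]$; independence under $\Omega=I_d$ makes the conditional law a product. Write $M^+=\max_{j\ge3}X_j$ and $M^-=\max_{j\ge3}(-X_j)$. The event $\{T^m_d\le t\}$ is the union of $\{\max_{j\le 2}X_j\le t,\ M^+\le t\}$ and $\{\max_{j\le2}(-X_j)\le t,\ M^-\le t\}$. Because $t\ge0$, the intersection of these two events is $\{|X_1|,|X_2|\le t,\ |X_j|\le t\ \forall j\}$. Given $(X_3,\ldots,X_d)$, the conditional probability therefore takes one of four forms:
\begin{itemize}
\item if $M^+\le t$ and $M^-\le t$, it is $\mathcal L_{2,\kappa}(t,(\mu_1,\mu_2))$;
\item if $M^+\le t<M^-$, it is $A_t(\mu_1)A_t(\mu_2)$;
\item if $M^-\le t<M^+$, it is $B_t(\mu_1)B_t(\mu_2)$;
\item otherwise it is $0$.
\end{itemize}
Here $A_t$ and $B_t$ are the truncated-normal one-sided probabilities from the proof of Proposition \ref{prop:conditional-lf-origin-m}. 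The first form is Schur-concave in $(\mu_1,\mu_2)$ by Lemma \ref{lem:k2-origin-schur}.

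For the product forms, Schur-concavity of $A_t(\mu_1)A_t(\mu_2)$ at a fixed sum follows if $\log A_t$ is concave in $u$. Now $A_t(u)=P\{Z_u\le t\mid |Z_u|\le\kappa\}$ is the ratio of $\int_{-\kappa}^t e^{ux}\phi(x)dx$ to $\int_{-\kappa}^\kappa e^{ux}\phi(x)dx$. Its log-derivative is the difference of the means of the exponential tilts of $\phi$ restricted to $[-\kappa,t]$ and to $[-\kappa,\kappa]$. Its second derivative is the corresponding difference of variances, so log-concavity amounts to the restricted tilt having the smaller variance. For a tilted log-concave density, truncating to a sub-interval sharing the left endpoint does reduce the variance; this is the standard fact behind Lemma \ref{lem:conditional-interval-logconcavity}, and I would cite or adapt that lemma here. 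The same argument handles $B_t$ by reflection. The conditional probability is thus Schur-concave in $(\mu_1,\mu_2)$ for every value of $(X_3,\ldots,X_d)$. Since the law of $(X_3,\ldots,X_d)$ does not depend on $(\mu_1,\mu_2)$, integrating preserves the pairwise Schur-concavity of $\mathcal L_{d,\kappa}(t,\cdot)$.

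I expect the main obstacle to be the log-concavity of $A_t$ and $B_t$ in the location parameter. If the variance comparison for truncated tilted Gaussians is not available from Lemma \ref{lem:conditional-interval-logconcavity}, I would prove it directly. The tilted density is itself a truncated normal $N(u,1)$ on $[-\kappa,\kappa]$. Its variance is non-increasing under shrinking the interval from one side, which follows from the Brascamp--Lieb or Prékopa argument for log-concave measures restricted to convex subsets.
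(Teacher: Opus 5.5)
Your proof is correct, but it is organized differently from the paper's. The paper inducts on $d$ and peels off one coordinate at a time using the identity $\mathcal L_{d+1,\kappa}(t,\mu,u)=A(u)\mathcal L_{d,\kappa}(t,\mu)+(B(u)-A(u))\prod_{j\le d}B(\mu_j)+(A(u)-C(u))\prod_{j\le d}C(\mu_j)$. For fixed $u\ge0$ it reads this as a nonnegative combination of Schur-concave functions of $\mu$, which requires:
\begin{itemize}
\item log-concavity of both $B$ and $C$;
\item the truncated-Gaussian stochastic monotonicity, to get $B(u)-A(u)\ge0$;
\item a separate step for joint Schur-concavity in all $d+1$ coordinates, done by choosing a third index $h\notin\{i,j\}$ and applying the Schur--Ostrowski derivative criterion.
\end{itemize}

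You instead condition on all $d-2$ remaining coordinates at once. Integrating your four cases gives $\mathcal L_{d,\kappa}(t,\mu)=\prod_{j\ge3}C(\mu_j)\,\mathcal L_{2,\kappa}(t,(\mu_1,\mu_2))+\{\prod_{j\ge3}A(\mu_j)-\prod_{j\ge3}C(\mu_j)\}A(\mu_1)A(\mu_2)+\{\prod_{j\ge3}B(\mu_j)-\prod_{j\ge3}C(\mu_j)\}B(\mu_1)B(\mu_2)$. The weights here are probabilities of events in $(X_3,\ldots,X_d)$, so they are automatically nonnegative.

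This buys a shorter, induction-free proof. You need only Lemma~\ref{lem:k2-origin-schur} and log-concavity of $A_t$ and $B_t$, not of $C_t$. You need no sign argument for $B-A$, and no differentiability, because the pairwise-transfer (T-transform) characterization of majorization replaces the derivative criterion.

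The obstacle you anticipate does not arise. Lemma~\ref{lem:conditional-interval-logconcavity} is stated for an arbitrary subinterval $[a,b]\subseteq[-\kappa,\kappa]$ with $a<b$. So $A_t=G_{[-\kappa,t]}$ and $B_t=G_{[-t,\kappa]}$ are covered directly, and both intervals are nondegenerate because $0\le t<\kappa$.
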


\begin{proof}
We proceed by induction on $d$, holding $\kappa$ and $t$ fixed. For $d=2$, the conclusion follows from
Lemma \ref{lem:k2-origin-schur}. Suppose that $\mu\mapsto\mathcal L_{d,\kappa}(t,\mu)$ is Schur-concave on $\mathbf R_+^d$. We need to show that $\mu\mapsto\mathcal L_{d+1,\kappa}(t,\mu)$ is Schur-concave as well.

The conditional distribution can be written as below: 
\begin{align*}
\mathcal L_{d,\kappa}(t,\mu)&= P\{ \max_{j}W_j\leq t \text{ or }\min_{j}W_j \geq -t\mid |W_j|\le \kappa ~ \forall j\in[k]\} \quad\text{for }W\sim N(\mu,I_k) \\
&=\textstyle \prod_{j=1}^d A(\mu_j)
+
\textstyle\prod_{j=1}^d B(\mu_j)
-
\textstyle\prod_{j=1}^d C(\mu_j)
\end{align*}
where for $Z_u\sim N(u,1)$
\begin{align*}
A(u)
&=
P\{Z_u \le t \mid |Z_u|\le \kappa\},~
B(u)
=
P\{Z_u\ge -t \mid |Z_u|\le \kappa\}, ~
C(u)
=
P\{|Z_u|\le t \mid |Z_u|\le \kappa\}.
\end{align*}
A direct algebraic rearrangement gives
\begin{align}
\begin{aligned}
\mathcal L_{d+1,\kappa}(t,\mu_1,\ldots,\mu_d,u)
&=
A(u)\,L_{d,\kappa}(t,\mu_1,\ldots,\mu_d)\\
&+
(B(u)-A(u))\textstyle\prod_{j=1}^d B(\mu_j)
+
(A(u)-C(u))\textstyle\prod_{j=1}^d C(\mu_j).
\end{aligned}
\label{eq:induction-identity}
\end{align}

We argue that \eqref{eq:induction-identity} is a nonnegative linear combination of Schur-concave functions of $\mu.$ Trivially, $A(u)\geq 0$ and $A(u)-C(u)=  P\{Z+u <-t \mid |Z+u|\le \kappa\}\geq 0$. Also,
\begin{align*}
    B(u)-A(u) &= P\{Z+u>t \mid |Z+u|\le \kappa\}-P\{Z+u<-t \mid |Z+u|\le \kappa\}
\end{align*}
is zero at $u=0$ and increasing in $u\geq0$ because $P\{Z+u>t \mid |Z+u|\le \kappa\}$ is increasing and $P\{Z+u<-t \mid |Z+u|\le \kappa\}$ is decreasing in $u$ by Lemma A.1 in \cite{lee2016AoS-post-selection-LASSO}. Furthermore, to see Schur-concavity of the mapping $\mu\mapsto \textstyle\prod_{j=1}^d B(\mu_j)$, consider
\begin{align*}
    \textstyle\prod_{j=1}^d B(\mu_j)= \exp\{ \{\sum^d_{j=1} \log B(\mu_j)\}\}.
\end{align*}
By Lemma \ref{lem:conditional-interval-logconcavity} $\log B(\mu_j)$ is concave. Then, $\mu\mapsto\sum^d_{j=1} \log B(\mu_j)$ is symmetric and concave, so it is Schur-concave by Proposition C.2.f of \cite{MarshallOlkin1979Springer}. As the exponential map is strictly increasing, it preserves the Schur-concavity. If $t>0$, the same argument shows that  $\textstyle\prod_{j=1}^d C(\mu_j)$ is also Schur-concave. If $t=0$, then $C(u)=0$ for every $u$, so this
product is identically zero and is trivially Schur-concave. It follows that, for each fixed $u\ge0$, $(\mu_1,\ldots,\mu_d)
\mapsto
\mathcal L_{d+1,\kappa}(t,\mu_1,\ldots,\mu_d,u)$ is Schur-concave.

It remains to deduce joint Schur-concavity of \eqref{eq:induction-identity} in all $d+1$ coordinates. Let
\(i\neq j\in\{1,\ldots,d+1\}\). We use Theorem 3.A.4 of \cite{MarshallOlkin1979Springer}, an equivalent representation of Schur-concavity: $L_{d+1,\kappa}(t,\mu_1,\ldots,\mu_d,\mu_{d+1}) $ is Schur-concave on $(0,\infty)^{d+1}$ if and only if 
$$ (\mu_i - \mu_j) (\tfrac{\partial}{\partial \mu_i} \mathcal L_{d+1,\kappa}(t,\mu) - \tfrac{\partial}{\partial \mu_j} \mathcal L_{d+1,\kappa}(t,\mu)) \leq 0\quad \text{ for all }i\neq j.$$
Since \(d+1\ge 3\), choose an index
\(h\notin\{i,j\}\). By symmetry of \( \mathcal L_{d+1,\kappa}(t,\mu)\), we may permute coordinates so that
\(h\) becomes the \((d+1)\)-st coordinate and both \(i\) and \(j\) lie among the first \(d\) coordinates. For that permuted representation, the last coordinate is fixed,
and we have already shown that the function is Schur-concave in the first \(d\) coordinates. Thus the above pairwise inequality holds for the pair \((i,j)\). Since the pair was arbitrary, \(\mathcal L_{d+1,\kappa}(t,\mu)\) is Schur-concave on \(\mathbf R_+^{d+1}\). This completes the induction and proves \eqref{eq:general-k-boundary-final}.
\end{proof}

\begin{lemma}\label{lem:conditional-interval-logconcavity}
Fix \(\kappa>0\). For any interval \([a,b]\subseteq[-\kappa,\kappa]\) with \(a<b\), define
\[
G_{[a,b]}(u)
:=
\frac{\Phi(b-u)-\Phi(a-u)}
{\Phi(\kappa-u)-\Phi(-\kappa-u)}
=
P\{a\le W\le b\mid |W|\le \kappa\},
\qquad W\sim N(u,1).
\]
Then \(u\mapsto G_{[a,b]}(u)\) is log-concave on \(\mathbf R\).
\end{lemma}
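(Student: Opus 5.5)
We prove that $u\mapsto G_{[a,b]}(u)$ is log-concave by splitting its logarithm into a numerator term and a denominator term:
\[
\log G_{[a,b]}(u)=\log\{\Phi(b-u)-\Phi(a-u)\}-\log\{\Phi(\kappa-u)-\Phi(-\kappa-u)\}.
\]
The numerator term is concave, because $\Phi(b-u)-\Phi(a-u)=\int \phi(x-u)I\{a\le x\le b\}\,dx$ is the convolution of the log-concave Gaussian density with the log-concave indicator of $[a,b]$. By Prékopa's theorem this convolution is log-concave in $u$. The denominator term, however, enters with a minus sign. Its own log-concavity would make $-\log(\cdot)$ convex, so a naive split does not close the argument. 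We therefore treat $G$ as a ratio and study the second derivative directly.

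Write $W\sim N(u,1)$ conditioned on $|W|\le\kappa$. This conditional law is an exponential family on $[-\kappa,\kappa]$ with base measure $\phi(x)\,dx$ and natural parameter $u$. Hence $G_{[a,b]}(u)=E_u[I\{a\le W\le b\}]$, with density $f_u(x)\propto e^{ux}\phi(x)I\{|x|\le\kappa\}$. Exponential-family differentiation gives
\[
\frac{d}{du}\log G(u)=E_u[W\mid a\le W\le b]-E_u[W],
\]
where the conditional expectation is taken under the same tilt. Differentiating once more,
\[
\frac{d^2}{du^2}\log G(u)=\operatorname{Var}_u(W\mid a\le W\le b)-\operatorname{Var}_u(W\mid |W|\le\kappa).
\]
These are the variances of two tilted Gaussians, each truncated to an interval, with $[a,b]\subseteq[-\kappa,\kappa]$. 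Log-concavity of $G$ therefore reduces to the following claim: for the same $N(u,1)$ law, truncating to a subinterval does not increase the variance.

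The key step, and the main obstacle, is this variance monotonicity under nested truncation of a Gaussian. We would prove it by showing that the variance of $N(u,1)$ truncated to $[a,b]$ is nondecreasing in $b$ and nonincreasing in $a$. The two facts can then be applied in succession, first enlarging $b$ to $\kappa$ and then lowering $a$ to $-\kappa$.

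To get the monotonicity in $b$, we would differentiate the truncated variance with respect to $b$. The derivative equals $h(b)\{(b-m)^2-v\}$, where $h$ is the normalized density at $b$ and $m,v$ are the mean and variance of the truncated law. So it suffices to show $(b-m)^2\ge v$, i.e. the endpoint lies at least one standard deviation from the mean. Our plan is to establish this by exploiting log-concavity of the density of the truncated law. Alternatively, we would invoke the known result that the variance of a log-concave density truncated to an interval is monotone in the interval (e.g., Burridge's theorem / Karlin's results on truncated Pólya frequency densities), with the Gaussian as a special case. If the direct endpoint inequality proves delicate, the fallback is to cite this monotone-variance property of truncated log-concave densities and then assemble the displayed second-derivative identity, which gives $(\log G)''\le 0$ on $\mathbf R$.
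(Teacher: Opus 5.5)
Your proposal is correct and follows essentially the paper's route: write $\log G_{[a,b]}$ as the difference of the log-partition functions of the $N(u,1)$ law truncated to $[a,b]$ and to $[-\kappa,\kappa]$, so that $(\log G_{[a,b]})''(u)$ is the difference of the two truncated variances, and conclude because truncating a Gaussian to a nested subinterval cannot increase its variance. The paper cites this variance monotonicity (Theorem 2 of Chen, 2013), whereas your endpoint-derivative route also closes: $(b-m)^2\ge v$ is the standard bound that a log-concave variable supported on a half-line, here $b-W$ under the truncated law, has coefficient of variation at most one, and the same bound at the lower endpoint gives monotonicity in $a$.
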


\begin{proof}For a fixed interval \(I=[c,d]\), consider the conditional law of \(W\mid W\in I\sim N(u,1)\). Its density is
\[
f_{u,I}(w)
=
\frac{\phi(w-u) I\{w\in I\}}{\Phi(d-u)-\Phi(c-u)}
=
\exp\!\{uw-A_I(u)\}\phi(w) I\{w\in I\},
\]
where
\[
A_I(u)
:=
\frac{u^2}{2}+\log (\Phi(d-u)-\Phi(c-u)).
\]
Thus \(\{f_{u,I}:u\in\mathbf R\}\) is a one-parameter exponential family with
canonical statistic \(t(w)=w\). By Theorem 1.6.2 in \cite{bickel-doksum2015book},
\[
A_I''(u)= 1+ \frac{d^2}{du^2} \log(\Phi(d-u)-\Phi(c-u))=\mathbb{V}[W\mid W\in I].
\]

Since $\log G_{[a,b]}(u)=A_{[a,b]}(u)-A_{[-\kappa,\kappa]}(u)$, the identity above implies
\[
\frac{d^2}{du^2}\log G_{[a,b]}(u)
=
A_{[a,b]}''(u)-A_{[-\kappa,\kappa]}''(u)
=
\mathbb{V}[W\mid W\in [a,b]]-\mathbb{V}[W\mid W\in [-\kappa, \kappa]]\leq0
\]
where the inequality holds by Theorem 2 of \cite{chen2013partial} for any $u\in \mathbf R.$ This proves log-concavity.
\end{proof}

\subsection{Proof of Corollary \ref{cor:general-estimator-uniform-validity}}
\label{subsec:supporting-regular-estimators}

This subsection states the Gaussian approximation result and proves
Corollary~\ref{cor:general-estimator-uniform-validity}.

\begin{lemma}\label{lem:general-estimator-gaussian-limit}
Suppose Assumption \ref{ass:regular-asymptotically-linear} holds. Let
$\{P_n\}\subset\mathbf P$ satisfy
$\Omega(P_n)\to\Omega^*\in\bar{\mathbf O}$. Then
\begin{align}\label{eq:general-estimator-centered-limit}
    \sqrt n\,\hat D_n^{-1}
    \bigl(\hat\beta_n-\beta(P_n)\bigr)
    \Rightarrow N(0_k,\Omega^*)
\end{align}
and $\hat\Omega_n\overset{P_n}{\to}\Omega^*$.

Moreover, let $I\subseteq[k]$ be nonempty and suppose either
\begin{align}\label{eq:general-estimator-positive-regime}
    \frac{\sqrt n\,\beta_j(P_n)}{\sigma_j(P_n)}
    &\to\delta_j\in[0,\infty),
    &&j\in I,\nonumber\\
    \frac{\sqrt n\,\beta_j(P_n)}{\sigma_j(P_n)}
    &\to+\infty,
    &&j\notin I,
\end{align}
or
\begin{align}\label{eq:general-estimator-negative-regime}
    \frac{\sqrt n\,\beta_j(P_n)}{\sigma_j(P_n)}
    &\to\delta_j\in(-\infty,0],
    &&j\in I,\nonumber\\
    \frac{\sqrt n\,\beta_j(P_n)}{\sigma_j(P_n)}
    &\to-\infty,
    &&j\notin I.
\end{align}
Then, under either regime,
\begin{align}\label{eq:general-estimator-partial-limit}
    \left(\sqrt n\,\hat D_n^{-1}\hat\beta_n\right)_I
    \Rightarrow
    N(\delta_I,\Omega^*_{I\times I}).
\end{align}
In addition,
$\left(\sqrt n\,\hat D_n^{-1}\hat\beta_n\right)_j
\overset{P_n}{\to}+\infty$ for every $j\notin I$ under
\eqref{eq:general-estimator-positive-regime}, whereas
$\left(\sqrt n\,\hat D_n^{-1}\hat\beta_n\right)_j
\overset{P_n}{\to}-\infty$ for every $j\notin I$ under
\eqref{eq:general-estimator-negative-regime}.
\end{lemma}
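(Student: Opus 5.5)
The plan has three steps. First, reduce $\sqrt n\,\hat D_n^{-1}(\hat\beta_n-\beta(P_n))$ to a normalized sum of influence functions. Second, apply the uniform CLT of Lemma \ref{alem:uniformCLT} to that sum. Third, add back the drift term coordinatewise. Write
\[
\sqrt n\,\hat D_n^{-1}\bigl(\hat\beta_n-\beta(P_n)\bigr)
=\hat D_n^{-1}D(P_n)\cdot \sqrt n\,D(P_n)^{-1}\bigl(\hat\beta_n-\beta(P_n)\bigr).
\]
By \eqref{eq:uniform-asymptotic-linearity-general}, the second factor equals $n^{-1/2}\sum_{i=1}^n\varphi_{P_n}(U_i)+o_{P_n}(1)$. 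The variables $\varphi_{P_n}(U_i)$ are i.i.d. with mean zero and unit coordinate variances, and their covariance is $\Omega(P_n)$. They also satisfy the uniform integrability condition \eqref{eq:UI-normalized-influence-function}. This is exactly the standardized condition \eqref{def:uniform-integrability}, since $\mu_j=0$ and $\sigma_j=1$ for the transformed variables.

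Hence Lemma \ref{alem:uniformCLT} applies to the class of laws of $\varphi_P(U_i)$, $P\in\mathbf P$. The Gaussian approximation is uniform over convex sets whose boundary has probability zero under every unit-diagonal Gaussian, and such sets form a convergence-determining class. Combined with $\Omega(P_n)\to\Omega^*$ and continuity of $V\mapsto N(0_k,V)$, this gives $n^{-1/2}\sum_i\varphi_{P_n}(U_i)\Rightarrow N(0_k,\Omega^*)$. The same conclusion can be reached through Cramér–Wold and the Lindeberg condition, which the uniform integrability yields directly. By \eqref{eq:studentization-general-consistency}, $\hat D_n^{-1}D(P_n)\overset{P_n}{\to}I_k$. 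Slutsky's theorem then gives \eqref{eq:general-estimator-centered-limit}. The convention on the event $\{\min_j(\hat V_n)_{jj}\le0\}$ is irrelevant by \eqref{eq:studentization-general-well-defined}. Finally, $\hat\Omega_n\overset{P_n}{\to}\Omega^*$ follows from $\|\hat\Omega_n-\Omega(P_n)\|_{\max}\overset{P_n}{\to}0$ and the triangle inequality.

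For the drift statement, decompose
\[
\sqrt n\,\hat D_n^{-1}\hat\beta_n=\sqrt n\,\hat D_n^{-1}\bigl(\hat\beta_n-\beta(P_n)\bigr)+\hat D_n^{-1}D(P_n)\,\sqrt n\,D(P_n)^{-1}\beta(P_n).
\]
Take the $j$th coordinate of the second term, $(\sigma_j(P_n)/\hat\sigma_{j,n})\cdot\sqrt n\,\beta_j(P_n)/\sigma_j(P_n)$. For $j\in I$ it converges in probability to $\delta_j$, since the ratio tends to $1$ in probability and the standardized drift tends to $\delta_j$. Restricting the first term to the coordinates in $I$ gives $N(0_{|I|},\Omega^*_{I\times I})$ by the continuous mapping theorem. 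Slutsky's theorem then yields \eqref{eq:general-estimator-partial-limit}.

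For $j\notin I$, the first term is $O_{P_n}(1)$ by the weak convergence just shown. The second term is the product of a factor tending to $1$ in probability and a deterministic sequence diverging to $+\infty$ under \eqref{eq:general-estimator-positive-regime}, or to $-\infty$ under \eqref{eq:general-estimator-negative-regime}. It therefore diverges in probability with that sign. This is the argument of Lemma \ref{lem:statistic-degenerate-case}, and the two regimes are symmetric.

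I expect the only delicate step to be the application of Lemma \ref{alem:uniformCLT} to a drifting sequence $P_n$ rather than a fixed $P$. The key point is that the lemma's conclusion is a supremum over $P\in\mathbf P$, so it holds along any sequence. The limiting Gaussian may also be degenerate when $\Omega^*$ is singular. This causes no difficulty, because the class $\mathcal S$ in the lemma is defined through all positive semidefinite unit-diagonal $V$. Weak convergence to a possibly singular $N(0_k,\Omega^*)$ is then immediate from the uniform approximation by $N(0_k,\Omega(P_n))$ together with $N(0_k,\Omega(P_n))\Rightarrow N(0_k,\Omega^*)$.
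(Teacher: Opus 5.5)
Your proposal is correct and follows the paper's proof: you use the same factorization $\hat D_n^{-1}D(P_n)\cdot\sqrt n\,D(P_n)^{-1}(\hat\beta_n-\beta(P_n))$, a CLT for the normalized influence-function sum, Slutsky's theorem, and a coordinatewise treatment of the drift term $\hat D_n^{-1}D(P_n)\,\sqrt n\,D(P_n)^{-1}\beta(P_n)$. The only difference is your main CLT tool: you apply Lemma \ref{alem:uniformCLT} to the laws of $\varphi_P(U_i)$, which is valid because lower orthants belong to $\mathcal S$ and are convergence-determining for a limit with continuous marginals, whereas the paper uses the Lindeberg--Feller/Cram\'er--Wold route you mention as an alternative.
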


\begin{proof}
For brevity in this proof, write
\[
    X_n^\beta:=\sqrt n\,\hat D_n^{-1}\hat\beta_n.
\]
Write $\varphi_n:=\varphi_{P_n}$. Then
\[
\mathbb E_{P_n}[\varphi_n(U_i)]=0_k,
\qquad
\mathbb E_{P_n}[\varphi_n(U_i)\varphi_n(U_i)']
=\Omega(P_n)\to\Omega^*.
\]
For any fixed $a\in\mathbf R^k$, condition
\eqref{eq:UI-normalized-influence-function}, together with fixed $k$, implies
the Lindeberg condition for the triangular array
$\{a'\varphi_n(U_i)/\sqrt n:i=1,\ldots,n\}$. The triangular-array
Lindeberg--Feller theorem, in the form used in
\citet[Lemma~11.4.1]{LehmannRomano2005testing}, and the Cram\'er--Wold
device therefore give
\[
    \frac{1}{\sqrt n}\sum_{i=1}^n\varphi_n(U_i)
    \Rightarrow N(0_k,\Omega^*).
\]
Combining this result with \eqref{eq:uniform-asymptotic-linearity-general} and
\eqref{eq:studentization-general-consistency} proves
\eqref{eq:general-estimator-centered-limit}. The convergence of
$\hat\Omega_n$ follows from the second part of
\eqref{eq:studentization-general-consistency} and
$\Omega(P_n)\to\Omega^*$.

Next, define
\[
    d_n:=\sqrt n\,D(P_n)^{-1}\beta(P_n).
\]
By \eqref{eq:studentization-general-consistency},
\[
    \sqrt n\,\hat D_n^{-1}\beta(P_n)
    =\hat D_n^{-1}D(P_n)d_n.
\]
Hence its coordinates have the limits stated in
\eqref{eq:general-estimator-positive-regime} or
\eqref{eq:general-estimator-negative-regime}. Since
\[
    X_n^\beta
    =
    \sqrt n\,\hat D_n^{-1}
    \bigl(\hat\beta_n-\beta(P_n)\bigr)
    +
    \sqrt n\,\hat D_n^{-1}\beta(P_n),
\]
\eqref{eq:general-estimator-partial-limit} follows from
\eqref{eq:general-estimator-centered-limit} and Slutsky's theorem.
\end{proof}

For brevity in this proof, write
\[
    X_n^\beta:=\sqrt n\,\hat D_n^{-1}\hat\beta_n,
    \qquad
    d_n:=\sqrt n\,D(P_n)^{-1}\beta(P_n).
\]
Consider an arbitrary sequence
$\{P_n\}\subset\mathbf P_{\beta,0}$. From any subsequence, pass to a
further subsequence, still indexed by $n$, such that
$\Omega(P_n)\to\Omega^*\in\bar{\mathbf O}$, $\beta(P_n)$ belongs to the
same null orthant for every $n$, and every coordinate of $d_n$ has an
extended-real limit. By symmetry, it is enough to consider
$\beta(P_n)\in\mathbf R_+^k$.

Lemma \ref{lem:general-estimator-gaussian-limit} gives
\[
    \sqrt n\,\hat D_n^{-1}\{\hat\beta_n-\beta(P_n)\}
    \Rightarrow N(0_k,\Omega^*),
    \qquad
    \hat\Omega_n\overset{P_n}{\to}\Omega^*,
\]
and, whenever $I\subseteq[k]$ indexes the coordinates of $d_n$ having
finite limits,
\[
    (X_n^\beta)_I\Rightarrow
    N(\delta_I,\Omega^*_{I\times I}),
    \qquad
    X_{j,n}^\beta\overset{P_n}{\to}+\infty
    \quad(j\notin I).
\]
These are the counterparts of the sample-mean convergence, studentization,
and partial-divergence results used in the proofs of Lemmas
\ref{lem:statistic-degenerate-case}, \ref{lem:conv-ip-to-Jn}, and
\ref{lem:branchwise-conditional-approx}.

We first consider the LF test. If $d_{j,n}\to+\infty$ for every $j\in[k]$,
the argument in Case 1 of the proof of Theorem
\ref{thm:LFtest-uniform-validity} applies after replacing
$\sqrt nS_n^{-1}\bar W_n$ by $X_n^\beta$. Thus the test statistic is
asymptotically degenerate in the null direction, while the same deterministic
lower bounds for the Gaussian LF critical value continue to apply, and hence
\[
    \mathbb E_{P_n}[\phi_{n,\beta}^{\ell}]\to0.
\]

Suppose instead that the set $I$ of coordinates with finite limits is
nonempty. Define
\[
    \hat d_n:=\sqrt n\,\hat D_n^{-1}\beta(P_n),
\]
\[
    J_{n,\beta}^\ell(x)
    :=P_n\{T^\ell(X_n^\beta,\hat\Omega_n)\le x\},
    \qquad
    \widetilde J_{n,\beta}^\ell(x)
    :=J^\ell(x,\hat d_n,\hat\Omega_n).
\]
Applying the proof of Lemma \ref{lem:conv-ip-to-Jn}, with Lemma
\ref{lem:general-estimator-gaussian-limit} replacing the corresponding
sample-mean lemmas, gives
\[
    \sup_{x\in\mathbf R}
    |J_{n,\beta}^\ell(x)-\widetilde J_{n,\beta}^\ell(x)|
    \overset{P_n}{\to}0
\]
for $\ell\in\{m,s\}$, and also for $\ell=q$ under
\eqref{eq:general-estimator-eigenvalue-condition}. This application includes
the same reduced-limit and atom-at-zero arguments used in that lemma. Since
$\hat d_n\in\mathbf R_+^k$, its Gaussian quantile is included in the
supremum defining $c^\ell(1-\alpha,\hat\Omega_n)$. Lemma~A.1(vi) of
\citet{RomanoShaikh2012AoS}, applied exactly as in the proof of Theorem
\ref{thm:LFtest-uniform-validity}, therefore yields
\[
    \limsup_{n\to\infty}
    \mathbb E_{P_n}[\phi_{n,\beta}^{\ell}]
    \le\alpha.
\]

We next consider the conditional test. If all coordinates of $d_n$ diverge
to $+\infty$, then $P_n\{I_{\mathrm{null}}=1\}\to1$ and the rejection
probability converges to zero. Otherwise, the argument establishing the
opposite-direction screening bound in the proof of Theorem
\ref{thm:conditional-uniform-validity} applies with $X_n^\beta$ in place of
$\sqrt nS_n^{-1}\bar W_n$, and gives
\[
    \limsup_{n\to\infty}
    P_n\{\exists j\in[k]:X_{j,n}^\beta<
    -\kappa_\tau(\hat\Omega_n)\}
    \le\tau.
\]
Moreover, the proof of Lemma \ref{lem:branchwise-conditional-approx}
applies after the same replacement. Indeed, that proof uses only the
finite-coordinate Gaussian convergence, divergence of the remaining
coordinates, consistency of $\hat\Omega_n$ and
$\kappa_\tau(\hat\Omega_n)$, the zero Gaussian probability of the screening
boundaries, and inclusion of the relevant local drift in the least favorable
conditional critical value. All of these properties follow from Lemma
\ref{lem:general-estimator-gaussian-limit}, Assumption
\ref{ass:regular-asymptotically-linear}, and Lemma
\ref{lem:kappa-continuity}. The conditional quantile step again follows from
Lemma~A.1(vi) of \citet{RomanoShaikh2012AoS}. Consequently, with
$\gamma:=\alpha-\tau$, the conditional rejection probabilities in the
positive and origin cases are asymptotically bounded by $\gamma$ whenever
the corresponding conditioning event has asymptotically positive
probability; events whose probabilities vanish make a negligible
unconditional contribution.

The probability decomposition in the proof of Theorem
\ref{thm:conditional-uniform-validity} now gives
\[
    \limsup_{n\to\infty}
    \mathbb E_{P_n}[\psi_{n,\beta}^\ell]
    \le\tau+\gamma=\alpha.
\]
The case $\beta(P_n)\in\mathbf R_-^k$ follows by reversing signs and
exchanging the positive and negative cases. Under
\eqref{eq:general-estimator-eigenvalue-condition}, the same arguments apply
to $\ell=q$ because every subsequential limit $\Omega^*$ is positive
definite. Since the original sequence and subsequence were arbitrary, the
subsequence principle proves the result.

\end{document}